\newtheorem{theorem}{Theorem}
\newtheorem{lemma}[theorem]{Lemma}
\newtheorem{assumption}[theorem]{Assumption}
\newtheorem{remark}[theorem]{Remark}
\theoremstyle{definition}
\newtheorem{definition}[theorem]{Definition}
\newtheorem{convention}[theorem]{Convention}
\newtheorem{proposition}[theorem]{Proposition}
\begin{document}                                                 
\title[On Compressible Fluid Flows on an Evolving Surface]{On Derivation of Compressible Fluid Systems on an Evolving Surface}



\author{Hajime Koba}
\address{ Graduate School of Engineering Science, Osaka University\\
1-3 Machikaneyamacho, Toyonaka, Osaka, 560-8531, Japan, Japan}
\email{iti@sigmath.es.osaka-u.ac.jp}
\thanks{This work was partly supported by the Japan Society for the Promotion of Science (JSPS) KAKENHI Grant Numbers JP25887048 and JP15K17580.}

\date{}
                      
\keywords{Compressible fluid, Surface flow, Energetic variational approach, Mathematical modeling, Boussinesq-Scriven law}                                   
\subjclass[2010]{Primary; 37E35, Secondary; 49S05, 49Q20, 37D35}

\begin{abstract}
We consider the governing equations for the motion of compressible fluid on an evolving surface from both energetic and thermodynamic points of view. We employ our energetic variational approaches to derive the momentum equation of our compressible fluid systems on the evolving surface. Applying the first law of thermodynamics and the Gibbs equation, we investigate the internal energy, enthalpy, entropy, and free energy of the fluid on the evolving surface. We also study conservative forms and conservation laws of our compressible fluid systems on the evolving surface. Moreover, we derive the generalized heat and diffusion systems on an evolving surface from an energetic point of view. This paper gives a mathematical validity of the surface stress tensor determined by the Boussinesq-Scriven law. Using a flow map on an evolving surface and applying the Riemannian metric induced by the flow map are key ideas to analyze fluid flow on the evolving surface.
\end{abstract}

\maketitle


\section{Introduction}\label{sect1}

\subsection{The Purposes and Key Ideas of this Paper}

Interface flow and surface flow play an important role in fluid dynamics such as soap bubbles in air, the atmosphere and ocean on the earth, and phase transition in complex fluids. One can consider surface flow as fluid flow on an evolving surface. An evolving surface means that the surface is moving or that the shape of the surface is changing with along the time. In this paper we consider the thickness of evolving surfaces as zero. 

The aim of this paper is to derive several governing equations for the motion of compressible fluid on an evolving surface from both energetic and thermodynamic points of view. We apply both our energetic variational approach and the first law of thermodynamics to derive several compressible fluid systems on the evolving surface, and we investigate the entropy and free energy of the fluid on the evolving surface by making use of the second law of thermodynamics. This paper derives the surface stress tensor determined by the Boussinesq-Scriven law (Boussinesq \cite{Bou13}, Scriven \cite{Scr60}) from both the energy dissipation due to the viscosities and the work done by the pressure of the fluid on an evolving surface to give a mathematical validity of the Boussinesq-Scriven law. Of course, this paper provides one possibility of the dominant equations for the motion of compressible fluid on an evolving surface. However, employing a similar technique of this paper, we can derive several compressible Navier-Stokes systems in domains and make a mathematical model of two-phase flow with surface flow and surface tension.

We first introduce fundamental notation. Let $t \geq 0$ be the time variable, $ x (= { }^t ( x_1,x_2,x_3) )$, $\xi (= { }^t ( \xi_1, \xi_2 ,\xi_3)) \in \mathbb{R}^3$ the spatial variables, and $X (= { }^t (X_1 , X_2) ) \in \mathbb{R}^2$ the spatial variables. Let $T \in (0, \infty]$ and $\Gamma (t) ( = \{ \Gamma (t) \}_{0 \leq t < T})$ be a smoothly evolving $2$-dimensional surface in $\mathbb{R}^3$ depending on time $t$. The notation $u = u (x,t) = { }^t (u_1,u_2,u_3)$ and $w = w (x,t) ={ }^t (w_1,w_2,w_3)$ represent the \emph{relative fluid velocity} of a fluid particle at a point $x = { }^t (x_1,x_2,x_3)$ of the evolving surface $\Gamma (t)$ and the \emph{motion velocity} at a point $x$ of $\Gamma (t)$ which determines the motion of the evolving surface $\Gamma (t)$, respectively. We often call $u$ a \emph{surface flow (surface velocity)} on the evolving surface and $w$ the \emph{speed} of the evolving surface. Assume that $u$ is a tangential vector on $\Gamma (t)$. Remark that $w$ is not a necessary tangential vector on $\Gamma ( t )$. We notice that by introducing the surface flow $u$ and the motion velocity $w$, then there is no exchange of particles between the surface and the environment. The velocity
\begin{equation*}
v = v (x,t) = { }^t(v_1,v_2,v_3) := u + w
\end{equation*}
is defined as the \emph{total velocity} of a fluid particle at a point $x$ of $\Gamma (t)$. In this paper we focus on the total velocity $v$.

The notation $\rho = \rho (x,t)$, $\sigma = \sigma (x,t)$, $e = e (x,t)$, $\theta = \theta (x,t)$, $e_A = e_A (x,t)$, $h = h ( x , t)$, $s = s (x,t)$, and $e_F = e_F (x,t)$ represent the \emph{density}, the (\emph{total}) \emph{pressure}, the \emph{internal energy}, the \emph{temperature}, the \emph{total energy}, the \emph{enthalpy}, the \emph{entropy}, and the \emph{free energy} of the fluid on the evolving surface $\Gamma (t)$, respectively. Note that the total pressure $\sigma$ includes surface tension and surface pressure in general. The symbol $C = C (x,t)$ denotes a \emph{concentration} of a substance in the fluid on the evolving surface. Assume that $u , w ,v , \rho , \sigma, e, \theta, e_A, h, s, e_F , C$ are smooth functions.

The symbols $\mu = \mu (x,t), \lambda = \lambda (x,t)$ are two \emph{viscosity coefficients} of the fluid, $\kappa = \kappa (x , t)$ is the \emph{thermal conductivity} of the fluid, $\nu = \nu (x,t)$ is the \emph{diffusion coefficient} of the concentration $C$, $F = F (x,t) = { }^t (F_1 ,F_2,F_3)$ is the \emph{external force} or \emph{gravity vector}, $Q_\theta = Q_\theta (x,t)$ is the \emph{heat source}, and $Q_C = Q_C (x,t)$ is the \emph{source on the concentration} $C$. We often call $\mu$ the \emph{surface share viscosity} and $\mu + \lambda$ the \emph{surface dilatational viscosity}. Suppose that $\mu, \lambda, \kappa, \nu, F, Q_\theta, Q_C$ are smooth functions.

This paper has six purposes. The first one is to derive and study the following \emph{full compressible fluid system} on the evolving surface $\Gamma (t)$:
\begin{equation}\label{eq11}
\begin{cases}
D_t \rho + ({\rm{div}}_\Gamma v) \rho = 0 & \text{ on } \mathcal{S}_T,\\
\rho D_t v = {\rm{div}}_\Gamma S_{\Gamma} (v, \sigma , \mu , \lambda ) + \rho F& \text{ on } \mathcal{S}_T,\\
\rho D_t e + ({\rm{div}}_\Gamma v) \sigma = {\rm{div}}_\Gamma q_\theta + \rho Q_\theta + \tilde{e}_D& \text{ on } \mathcal{S}_T,\\
D_t C + ({\rm{div}}_\Gamma v ) C = {\rm{div}}_\Gamma q_C + Q_C& \text{ on } \mathcal{S}_T,
\end{cases}
\end{equation}
where
\begin{equation*}
\mathcal{S}_T = \left\{ (x,t) \in \mathbb{R}^4; {  \ } (x,t) \in \bigcup_{0<t<T}\{ \Gamma (t) \times \{ t \} \} \right\},
\end{equation*}
$D_t$ is the \emph{material derivative} defined by $D_t f = \partial_t f + (v , \nabla )f$, $q_\theta = \kappa {\rm{grad}}_\Gamma \theta$, $\tilde{e}_D = 2 \mu | D_\Gamma (v) |^2 + \lambda |{\rm{div}}_\Gamma v |^2$, and $q_C = \nu {\rm{grad}}_\Gamma C$. Here $D_\Gamma (v) = P_\Gamma D (v) P_\Gamma$, $| D_\Gamma (v) |^2 = D_\Gamma (v) : D_\Gamma (v)$, $\partial_t = \partial/\partial t$, $\partial_i = \partial/\partial x_i$, $\nabla = {  }^t ( \partial_1, \partial_2, \partial_3 )$, $(v,\nabla )f = v_1 \partial_1 f + v_2 \partial_2 f + v_3 \partial_3 f$, $D (v) = \{ ( \nabla v) + { }^t (\nabla v) \}/2$, and $P_\Gamma = I_{3 \times 3 } - n \otimes n$, where $\otimes $ denotes the \emph{tensor product} and $n = n(x,t)  = { }^t( n_1 , n_2 , n_3)$ is the \emph{unit outer normal vector} at $x \in \Gamma (t)$.
 The symbols $q_\theta$, $\tilde{e}_D$, and $q_C$ are often called a \emph{heat flux} on the evolving surface, the \emph{density for the energy dissipation due to the viscosities} $\mu,\lambda$, and a \emph{surface flux} on the evolving surface, respectively. We call $D_\Gamma (v)$, $D (v)$, and $P_\Gamma$, the \emph{surface strain rate tensor}, the \emph{strain rate tensor}, and an \emph{orthogonal projection to a tangent space}, respectively. The notation ${\rm{div}}_\Gamma$ denotes \emph{surface divergence}, ${\rm{grad}}_\Gamma$ denotes \emph{surface divergence}, and $S_\Gamma (v , \sigma , \mu , \lambda )$ is the \emph{surface stress tensor} defined by
\begin{equation*}
S_\Gamma ( v , \sigma , \mu , \lambda )= 2 \mu D_\Gamma (v) +  \lambda  P_\Gamma ( {\rm{div}}_\Gamma v) - P_\Gamma \sigma .
\end{equation*}
The surface stress tensor $S_\Gamma (v , \sigma , \mu , \lambda )$ was introduced by Scriven \cite{Scr60}. See Section \ref{sect2} for the definitions of ${\rm{div}}_\Gamma$ and ${\rm{grad}}_\Gamma$.

We can write the system \eqref{eq11} as the following conservative form:
\begin{equation}\label{eq12}
\begin{cases}
D_t^N \rho + {\rm{div}}_\Gamma (\rho v) =0& \text{ on } \mathcal{S}_T,\\
D_t^N (\rho v) + {\rm{div}}_\Gamma (\rho v \otimes v - S_\Gamma (v,\sigma , \mu , \lambda ) ) = \rho F& \text{ on } \mathcal{S}_T,\\
D_t^N e_A + {\rm{div}}_\Gamma ( e_A v - q_\theta - S_\Gamma (v, \sigma , \mu , \lambda ) v )  = \rho Q_\theta + \rho F \cdot v& \text{ on } \mathcal{S}_T,\\
D_t^N C + {\rm{div}}_\Gamma (C v - q_C) = Q_C& \text{ on } \mathcal{S}_T. 
\end{cases}
\end{equation}
Here $e_A$ is the \emph{total energy} defined by $e_A = \rho |v|^2/2 + \rho e$ and $D_t^N$ is the \emph{time derivative with the normal derivative} defined by
\begin{equation*}
D_t^N f = \partial_t f + ( v \cdot n ) (n , \nabla ) f.
\end{equation*}
Note that $v \cdot n = w \cdot n$ since $u \cdot n = 0$. Note also that
\begin{align*}
\tilde{e}_D - ({\rm{div}}_\Gamma v ) \sigma = & {\rm{div}}_\Gamma (S_\Gamma (v , \sigma , \mu , \lambda ) v ) - {\rm{div}}_\Gamma (S_\Gamma (v , \sigma , \mu , \lambda )) \cdot v,\\
D_\Gamma (v) : D_\Gamma (v) = & D_\Gamma (v) : \mathbb{D}_\Gamma (v), 
\end{align*}
where $\mathbb{D}_{\Gamma} (v) = \{ (\nabla_{\Gamma} v) + { }^t (\nabla_{\Gamma} v ) \}/2$. We often call $D_\Gamma (v)$ a \emph{projected strain rate} and $\mathbb{D}_\Gamma (v)$ a \emph{tangential strain rate}. See Section \ref{sect2} for the notation $\nabla_\Gamma$.

Now we consider conservation laws of the system \eqref{eq11}. Assume that $\Gamma (t)$ is flowed by the velocity $v$ and that $F \equiv 0$, $Q_\theta \equiv 0$, $Q_C \equiv 0$. Then we observe that \eqref{eq11} satisfies that for $0< t_1 < t_2 < T$,
\begin{align}
\int_{\Gamma (t_2)} \rho (x,t_2) { \ }d \mathcal{H}^2_x = & \int_{\Gamma (t_1)} \rho (x,t_1) { \ }d \mathcal{H}^2_x ,\label{eq13}\\
\int_{\Gamma (t_2)} \rho (x,t_2) v (x,t_2) { \ }d \mathcal{H}^2_x = & \int_{\Gamma (t_1)} \rho (x,t_1) v (x,t_1) { \ }d \mathcal{H}^2_x ,\label{eq14}\\
\int_{\Gamma (t_2)} e_A (x,t_2) { \ }d \mathcal{H}^2_x = & \int_{\Gamma (t_1)} e_A (x,t_1) { \ }d \mathcal{H}^2_x ,\label{eq15}\\
\int_{\Gamma (t_2)} C (x,t_2) { \ }d \mathcal{H}^2_x = & \int_{\Gamma (t_1)} C (x,t_1) { \ }d \mathcal{H}^2_x \label{eq16},\\
\int_{\Gamma (t_2)} x \times (\rho v ) { \ }d \mathcal{H}^2_x = & \int_{\Gamma (t_1)} x \times ( \rho v ) { \ }d \mathcal{H}^2_x \label{eq17}.
\end{align}
Here $d \mathcal{H}^2_x$ denotes the \emph{$2$-dimensional Hausdorff measure}. We often call \eqref{eq13}, \eqref{eq14}, \eqref{eq15}, and \eqref{eq17}, the law of conservation of mass, the law of conservation of momentum, the law of conservation of the total energy, and the law of conservation of angular momentum, respectively. See Theorem \ref{thm18} for details.

The second one is to investigate the enthalpy, entropy, and free energy of the fluid on the evolving surface $\Gamma (t)$. We now assume that $\mu, \lambda , \kappa ,\rho , \theta $ are positive functions. We set the \emph{enthalpy} $h = h (x,t) = e + {\sigma}/{\rho}$. Then
\begin{equation}\label{eq18}
\rho D_t h = {\rm{div}}_\Gamma q_\theta + \rho Q_\theta + \tilde{e}_D + D_t \sigma \text{ on } \mathcal{S}_T. 
\end{equation}
This is equivalent to
\begin{equation*}
D_t^N (\rho h) + {\rm{div}}_\Gamma (\rho h v - q_\theta ) - \rho Q_\theta = \tilde{e}_D + D_t \sigma \text{ on } \mathcal{S}_T.
\end{equation*}
Suppose that the following \emph{Gibbs equation} holds:
\begin{equation*}
D_t e = \theta D_t s - \sigma D_t \left( \frac{1}{\rho } \right) \text{ on } \mathcal{S}_T .
\end{equation*}
Then we check that the \emph{entropy} $s = s ( x , t )$ fulfills
\begin{equation}\label{eq19}
\rho D_t s = \frac{1}{\theta} \{ {\rm{div}}_\Gamma q_\theta+ \rho Q_\theta + \tilde{e}_D \} \text{ on } \mathcal{S}_T.
\end{equation}
Moreover, we obtain the \emph{Clausius-Duhem inequality}:
\begin{equation*}
D_t^N (\rho s ) + {\rm{div}}_\Gamma \left( \rho s v - \frac{ q_\theta}{\theta} \right) - \frac{ \rho Q_\theta }{\theta}= \frac{ \tilde{e}_D}{\theta} + \frac{ q_\theta\cdot {\rm{grad}}_\Gamma \theta }{\theta^2} \geq 0 \text{ on } \mathcal{S}_T.
\end{equation*}
We also observe that
\begin{equation}\label{eq110}
\rho D_t e_F + \rho s D_t \theta - S_\Gamma ( v , \sigma , \mu , \lambda ): D_\Gamma (v)  = - \tilde{e}_D \leq 0 \text{ on } \mathcal{S}_T. 
\end{equation}
Here $e_F$ is the \emph{free energy} defined by $e_F = e - \theta s$. Remark that $\tilde{e}_D = 2 \mu | D_\Gamma (v) |^2 + \lambda | {\rm{div}}_\Gamma v |^2 \geq 0$ and that $q_\theta \cdot {\rm{grad}}_\Gamma \theta = \kappa | {\rm{grad}}_\Gamma \theta |^2 \geq 0$.

The third one is to derive compressible fluid systems on an evolving surface from a variational point of view. We easily have the momentum equation of the system \eqref{eq11} if we assume
\begin{equation*}
\rho D_t v = {\rm{div}}_\Gamma S_\Gamma (v , \sigma , \mu , \lambda ) + \rho F .
\end{equation*}
However, applying our variational methods, we can derive other type of systems of compressible fluid flow on the evolving surface. For example, we use our methods to derive the following \emph{tangential compressible} and \emph{non-canonical compressible fluid systems} on an evolving surface:
\begin{equation}\label{eq111}
\begin{cases}
D_t \rho + ({\rm{div}}_\Gamma v) \rho = 0 & \text{ on } \mathcal{S}_T,\\
P_\Gamma \rho D_t v = P_\Gamma {\rm{div}}_\Gamma S_\Gamma (v , \sigma , \mu , \lambda ) + P_\Gamma \rho F& \text{ on } \mathcal{S}_T, \\
v \cdot n = 0 & \text{ on } \mathcal{S}_T,\\
\rho D_t e + ({\rm{div}}_\Gamma v) \sigma = {\rm{div}}_\Gamma q_\theta+ \rho Q_\theta + \tilde{e}_D& \text{ on } \mathcal{S}_T,\\
D_t C + ({\rm{div}}_\Gamma v ) C = {\rm{div}}_\Gamma q_C + Q_C& \text{ on } \mathcal{S}_T,
\end{cases}
\end{equation}
and
\begin{equation}\label{eq112}
\begin{cases}
D_t \rho + ({\rm{div}}_\Gamma v) \rho = 0 & \text{ on } \mathcal{S}_T,\\
\rho D_t v + {\rm{grad}}_\Gamma \sigma + \sigma H_\Gamma n = P_\Gamma {\rm{div}}_\Gamma S_\Gamma ( u , 0 , \mu , \lambda ) + \rho F& \text{ on } \mathcal{S}_T.
\end{cases}
\end{equation}
Here $H_\Gamma = H_\Gamma (x,t)$ denotes the mean curvature of $\Gamma (t)$ in the direction of the unit normal outer vector $n$. See Section \ref{sect5} for our energetic variational approaches for the two systems \eqref{eq111} and \eqref{eq112}.

The fourth one is to give a mathematical validity of the \emph{surface stress tensor determined by the Boussinesq-Scriven law} (Boussinesq \cite{Bou13}, Scriven \cite{Scr60}):
\begin{equation*}
S_\Gamma (v , \sigma , \mu , \lambda ) = 2 \mu D_\Gamma (v) + \lambda P_\Gamma ({\rm{div}}_\Gamma v) - P_\Gamma \sigma .
\end{equation*}
This paper derives the surface stress tensor from both the energy dissipation due to the viscosities $\mu, \lambda$ and the work done by the pressure $\sigma$ of the fluid on an evolving surface. More precisely, from Theorem \ref{thm15}, Propositions \ref{prop312}, and \cite[Theorem 2.1]{KS17}, we see a mathematical validity of the Boussinesq-Scriven law.

The fifth one is to derive the following two \emph{barotropic compressible fluid systems} on an evolving surface:
\begin{equation}\label{eq113}
\begin{cases}
D_t \rho + ({\rm{div}}_\Gamma v) \rho = 0& \text{ on } \mathcal{S}_T,\\
\rho D_t v + {\rm{grad}}_\Gamma \mathfrak{p} + \mathfrak{p} H_\Gamma n =0& \text{ on } \mathcal{S}_T,\\
\mathfrak{p} = \mathfrak{p} (\rho )= \rho p' (\rho ) - p (\rho ), 
\end{cases}
\end{equation}
and
\begin{equation}\label{eq114}
\begin{cases}
D_t \rho + ({\rm{div}}_\Gamma v) \rho = 0& \text{ on } \mathcal{S}_T,\\
P_\Gamma \rho D_t v + {\rm{grad}}_\Gamma \mathfrak{p} =0& \text{ on } \mathcal{S}_T,\\
\mathfrak{p} = \mathfrak{p} (\rho ) = \rho p' (\rho ) - p ( \rho ), 
\end{cases}
\end{equation}
where $p ( \cdot )$ is a smooth function. We often call $p ( \rho )$ the \emph{chemical potential}. See Theorem \ref{thm19} for details.

The sixth one is to derive the following \emph{generalized heat and diffusion systems} on an evolving surface:
\begin{equation}\label{eq115}
\begin{cases}
D_t \rho + ({\rm{div}}_\Gamma v ) \rho = 0& \text{ on } \mathcal{S}_T,\\
\rho D_t ( C_\theta \theta ) = {\rm{div}}_\Gamma q_{\mathcal{J}_1} + \rho Q_\theta + \mathcal{F}_1 & \text{ on } \mathcal{S}_T,
\end{cases}
\end{equation}
and
\begin{equation}\label{eq116}
D_t C + ({\rm{div}}_\Gamma v ) C = {\rm{div}}_\Gamma q_{\mathcal{J}_2} + Q_C + \mathcal{F}_2 { \ }\text{ on } \mathcal{S}_T.
\end{equation}
Here $q_{\mathcal{J}_1}$ is the \emph{generalized heat flux} defined by $q_{ \mathcal{J}_1} = e_{\mathcal{J}_1}' ( | {\rm{grad}}_\Gamma \theta |^2 ) {\rm{grad}}_\Gamma \theta$, $q_{\mathcal{J}_2}$ is the \emph{generalized surface flux} defined by $q_{ \mathcal{J}_2} = e_{\mathcal{J}_2}' ( | {\rm{grad}}_\Gamma C |^2 ){\rm{grad}}_\Gamma C$, where $e_{\mathcal{J}_1}$ and $e_{\mathcal{J}_2}$ are $C^1$-functions, $\mathcal{F}_1 = \mathcal{F}_1 (x,t), \mathcal{F}_2 = \mathcal{F}_2 (x,t)$ are two smooth functions depending on the situation and environment, and $C_\theta = C_\theta (x,t)$ is a smooth function which is called the \emph{specific heat} of the fluid. See Section \ref{sect5} for our derivation of the two systems \eqref{eq115} and \eqref{eq116}.

\begin{remark}\label{rem11}
$(\rm{i})$ We do not use the assumption that $\sigma = \sigma (\rho, e)$ when we derive the systems \eqref{eq11} and \eqref{eq111} by applying our methods. While, we assume the Gibbs equation $D_t e = \theta D_t s - \sigma D_t \left( 1/\rho \right)$ when we study the entropy and free energy of the fluid on an evolving surface.\\
$(\rm{ii})$ If we assume that $\sigma = \sigma (\rho , e)$ and $e = e (\rho , \theta )$, then in general the system \eqref{eq11} is an overdetermined system for its initial value problem when the motion of $\Gamma (t)$ is given and we consider $(\rho, v , \theta , C)$ as unknown functions, while the system \eqref{eq111} is not an overdetermined system for its initial value problem when the motion of $\Gamma (t)$ is given. Because the second expression of the system \eqref{eq11} has six equations including the tangential and normal parts of the total velocity. For the same reason, the system \eqref{eq113} is an overdetermined system for its initial value problem when the motion of $\Gamma (t)$ is given, and the system \eqref{eq114} is not an overdetermined system for its initial value problem when the motion of $\Gamma (t)$ is given.\\
$(\rm{iii})$ When we consider $( \rho , \theta )$ as unknown functions, the system \eqref{eq115} is not an overdetermined system for its initial value problem when the motion of $\Gamma (t)$, surface flow $u$, and specific heat $C_\theta$ are prescribed. The system \eqref{eq116} is not an overdetermined system for its initial valued problem when the motion of $\Gamma (t)$ and surface flow $u$ are prescribed and we consider $C$ as an unknown function. It is easy to check that
\begin{equation*}
{\rm{div}}_\Gamma \{ \kappa {\rm{grad}}_\Gamma \theta \}= \kappa \Delta_\Gamma \theta \text{ and } {\rm{div}}_\Gamma \{ \nu {\rm{grad}}_\Gamma C \} = \nu \Delta_\Gamma C
\end{equation*}
if $\kappa, \nu $ are constants, where $\Delta_\Gamma$ is the Laplace-Beltrami operator.\\
$(\rm{iv})$ There exists at least three mathematical methods for deriving the pressure of the fluid on an evolving surface. This paper provides two of them. Koba-Liu-Giga \cite{KLG} gave the last one. The paper \cite{KLG} studied incompressible fluid flow on an evolving surface from  an energetic point of view. They applied the potential theory (the Helmholtz-Weyl decomposition of a function on an evolving surface) to derive the pressure of incompressible fluid on the evolving surface. While, this paper derives the pressure of compressible fluid on an evolving surface from the power density for the work done by the pressure or a chemical potential.
\end{remark}

Let us explain four key ideas of deriving the compressible fluid systems \eqref{eq11} and \eqref{eq111}. The first point is to focus our attention on the following energy densities for compressible fluid on an evolving surface:
\begin{assumption}[Energy densities for compressible fluid]\label{ass12}
\begin{multline*}
e_K = \frac{1}{2} \rho |v|^2,{ \ }e_D = \frac{1}{2}\{2 \mu |D_\Gamma (v)|^2 + \lambda |{\rm{div}}_\Gamma v |^2 \},{ \ } e_{W} = ( {\rm{div}}_\Gamma v ) \sigma + \rho F \cdot v,\\
e_A = \frac{1}{2} \rho |v|^2 + \rho e, { \ }e_{TD} = \frac{1}{2} \kappa | {\rm{grad}}_\Gamma \theta |^2, { \ }e_{SD} = \frac{1}{2} \nu |{\rm{grad}}_\Gamma C|^2 .
\end{multline*}
We call $e_K$ the \emph{kinetic energy}, $e_D$, the \emph{energy density for the energy dissipation due to the viscosities $\mu,\lambda$}, $e_W$ the \emph{power density for the work done by both the pressure $\sigma$ and exterior force $F$}, $e_A$ the \emph{total energy}, $e_{TD}$ the \emph{energy density for the energy dissipation due to thermal diffusion}, and $e_{SD}$ the \emph{energy density for the energy dissipation due to surface diffusion}, of compressible fluid on an evolving surface. Remark that $e_D \neq \tilde{e}_D$ in general, where $\tilde{e}_D$ is the density for the energy dissipation due to the viscosities $\mu, \lambda$. See subsection \ref{subsec51} and \cite{KS17} for the reason.
\end{assumption}
\noindent Combining Propositions \ref{prop311}, \ref{prop312} and \cite[Theorem 2.1 and Section 3]{KS17} gives a mathematical validity of our energy densities for compressible fluid on the evolving surface. Making use of these energy densities, we derive our compressible fluid systems on an evolving surface. Note that, from the system \eqref{eq11}, we have the following energy equality:
\begin{multline*}
\int_{\Gamma (t_2)} \left\{ \frac{1}{2} \rho |v|^2 \right\} (x,t_2) { \ }d \mathcal{H}_x^2 + \int_{t_1}^{t_2} \int_{\Gamma (\tau)} \{ 2 \mu | D_\Gamma (v) |^2 + \lambda |{\rm{div}}_\Gamma v |^2 \} (x,\tau ){ \ }d \mathcal{H}_x^2 d \tau\\
= \int_{\Gamma (t_1)} \left\{ \frac{1}{2} \rho |v|^2 \right\} (x,t_1) { \ }d \mathcal{H}_x^2 + \int_{t_1}^{t_2} \int_{\Gamma (\tau)} \{ ({\rm{div}}_\Gamma v) \sigma + \rho F \cdot v \} (x,\tau ) { \ } d\mathcal{H}^2_x d \tau.
\end{multline*}
Note also that we set the total energy $e_B$ of the barotropic compressible fluid on the evolving surface as follows:
\begin{equation*}
e_B = \frac{1}{2} \rho |v|^2 - p (\rho).
\end{equation*}
See Theorem \ref{thm19} for the barotropic compressible fluid systems on the evolving surface.

The second point is to apply an energetic variational approach. In order to derive our compressible fluid systems, this paper uses forces derived from variations of the action integral determined by the kinetic energy, dissipation energies determined by our energy densities, and work for compressible fluid on an evolving surface. For example, we obtain forces from the following variations: 
\begin{multline*}
\frac{d}{d \varepsilon} \int_0^T \int_{\Gamma^\varepsilon (t)} \frac{1}{2} \rho^\varepsilon |v^\varepsilon |^2 { \ } d \mathcal{H}^2_x d t, \frac{d}{d \varepsilon} \int_{\Gamma (t)} \frac{1}{2}(2 \mu | D_\Gamma (\tilde{v}^\varepsilon ) |^2 + \lambda |{\rm{div}}_\Gamma \tilde{v}^\varepsilon |^2) d \mathcal{H}_x^2,\\
\text{ and } \frac{d}{d \varepsilon} \int_{\Gamma (t)} \{ ({\rm{div}}_\Gamma \tilde{v}^\varepsilon) \sigma +  \rho F \cdot \tilde{v}^\varepsilon \} d \mathcal{H}_x^2,
\end{multline*}
where $\Gamma^\varepsilon (t)$, $\rho^\varepsilon$, $v^\varepsilon$, and $\tilde{v}^\varepsilon$ are variations. See Sections \ref{sect2}-\ref{sect4} for details.

The third point is to make use of the first and second laws of thermodynamics. To derive the dominant equations for the internal energy, entropy, and free energy of the fluid on an evolving surface, we assume the \emph{first law of thermodynamics}: for $\Omega (t) \subset \Gamma (t)$
\begin{equation*}
\frac{d}{d t} \int_{\Omega (t)} \{ \rho e \} (x,t) { \ }d \mathcal{H}_x^2 = \int_{\Omega (t)} \{ {\rm{div}}_\Gamma q_\theta+ \rho Q_\theta + \tilde{e}_D - ({\rm{div}}_\Gamma v) \sigma \} (x,t)  { \ } d \mathcal{H}_x^2,
\end{equation*}
and the Gibbs equation:
\begin{equation*}
D_t e = \theta D_t s - \sigma D_t \left( \frac{1}{\rho } \right) \text{ on } \mathcal{S}_T.
\end{equation*}

The fourth point is to apply a flow map on an evolving surface and the Riemannian metric induced by the flow map. Using the flow map and the Riemannian metric, we analyze the fluid on the evolving surface in this paper. Next subsection describes the flow maps in detail. See Section \ref{sect3} for the Riemannian metric induced by a flow map.

Finally we refer the reader to some references for basic notation and terminologies for fluid dynamics and thermodynamics in this paper. The basic notation and technical terms for physics in this paper are based on Serrin \cite{Ser59} and  Gurtin-Fried-Anand \cite{GFA10}. Serrin \cite{Ser59} studied mathematical derivations of fluid systems. They applied variational principles and thermodynamics to derive compressible and incompressible fluid systems in a domain. Gurtin-Fried-Anand \cite{GFA10} is a textbook for students and graduate students in physics and mathematics. The book \cite{GFA10} showed the fundamental rules of the mechanics and thermodynamics of continua. For fluid interfaces and interface models, we refer the reader to Gatignol-Prud'homme \cite{GP01} and Slattery-Sagis-Oh \cite{SSO07}.

The outline of this paper is as follows: In subsection \ref{subsec12}, we first introduce a flow map on an evolving surface, the velocity determined by the flow map, and variations of both the flow map and the evolving surface. Then we state the main results of this paper. In Section \ref{sect2} we prepare useful tools to analyze fluid flow on an evolving surface. We give the definitions of evolving surfaces and differential operators on an evolving surface, and we study basic properties of surface divergence, surface gradient, an orthogonal projection to a tangent space, and surface strain rate tensor. In Section \ref{sect3} we introduce the Riemannian metric induced by a flow map. Using the Riemannian metric, we investigate the representation of the kinetic energy, the dissipation energies, and the work for the fluid on an evolving surface to provide a mathematical validity of these energies. In Section \ref{sect4} we calculate variations of the action integral determined by the kinetic energy, the dissipation energies, and the work for the fluid on an evolving surface by using a flow map on the evolving surface and the Riemannian metric induced by the flow map. In Section \ref{sect5} we derive various compressible fluid systems on an evolving surface by applying our energetic variational approaches and the thermodynamic theory. Moreover, we derive the generalized heat and diffusion systems on an evolving surface.

\subsection{Main Results}\label{subsec12}

Let us first introduce a flow map on an evolving surface and the velocity determined by the flow map. Next we describe variations of both the flow map and the evolving surface. Then we state the main results of this paper.

Let $T \in (0 , \infty]$, and let $\Gamma (t) ( = \{ \Gamma (t) \}_{0 \leq t < T})$ be a smoothly evolving surface in $\mathbb{R}^3$ depending on time $t$. Assume that $\Gamma (t)$ is a closed Riemannian $2$-dimensional manifold for each $t \in [0, T)$. 

We say that $\Omega (t) \subset \Gamma ( t )$ is \emph{flowed by the velocity field} $V = V (x,t) = $ \\$ { }^t ( V_1 (x,t) , V_2 (x,t) , V_3 (x,t) )$ if there exists a smooth function $\hat{x} = \hat{x}( \xi , t ) $ \\$= { }^t (\hat{x}_1 ( \xi , t ), \hat{x}_2 ( \xi , t ), \hat{x}_3 ( \xi , t ) ) $ such that for every $\xi \in \Gamma (0)$,
\begin{equation*}
\begin{cases}
\frac{d \hat{x}}{d t} (\xi , t) = V ( \hat{x} (\xi , t), t ), { \ \ \ }t \in (0,T),\\
\hat{x} (\xi , 0) = \xi ,
\end{cases}
\end{equation*}
and $\Omega ( t )$ is expressed by
\begin{equation*}
\Omega ( t )= \{ x= { }^t (x_1,x_2,x_3) \in \mathbb{R}^3;{ \ }x = \hat{x}( \xi , t ), { \ }\xi \in \Omega_0, { \ } \Omega_0 \subset \Gamma (0) \}.
\end{equation*}
The mapping $\xi \mapsto \hat{x} ( \xi , t )$ is called a \emph{flow map} on $\Gamma (t)$, the mapping $t \mapsto \hat{x} ( \xi , t )$ is called an \emph{orbit} starting from $\xi$, and $V = V (x,t)$ is called the \emph{velocity determined by the flow map} $\hat{x} ( \xi , t)$. For simplicity we call $\hat{x} ( \xi , t)$ a flow map. We assume that $\hat{x} (\cdot , t) : \Gamma (0) \to \Gamma (t)$ is bijective for each $0 < t <T$ and that $V$ is the total velocity $v$.

Next we introduce a variation $\hat{x}^\varepsilon (\xi,t)$ of a flow map $\hat{x} ( \xi , t )$ and the velocity $v^\varepsilon$ determined by the flow map $\hat{x}^\varepsilon$. Let $\hat{x} ( \xi , t )$ be a flow map on $\Gamma (t)$, and let $v$ be the velocity determined by the flow map $\hat{x} ( \xi , t )$ on $\Gamma ( t )$, i.e. for $\xi \in \Gamma (0)$ and $0<t<T$,
\begin{equation*}
\begin{cases}
v = v (x,t)= { }^t ( v_1 (x,t) , v_2 ( x,t) , v_3 (x,t) ),\\ 
\hat{x} = \hat{x} ( \xi , t ) = { }^t ( \hat{x}_1 ( \xi , t ) , \hat{x}_2 ( \xi , t ) , \hat{x}_3 ( \xi , t )),\\
\frac{d \hat{x}}{d t} (\xi , t) = v (\hat{x} ( \xi , t ) , t ),\\
\hat{x} (\xi , 0) = \xi .
\end{cases}
\end{equation*}
Write
\begin{align*}
\Gamma (t) & := \{ x = { }^t (x_1 , x_2 , x_3 ) \in \mathbb{R}^3; { \ } x = \hat{x}( \xi , t ), { \ }\xi  \in \Gamma (0) \} ,\\
\mathcal{S}_T & := \bigg\{ (x,t) \in \mathbb{R}^4;{ \ } (x,t ) \in \bigcup_{0<t <T} \{ \Gamma (t) \times \{ t \} \} \bigg\}.
\end{align*}
For $-1 < \varepsilon <1$ let $\Gamma^\varepsilon (t) ( = \{ \Gamma^\varepsilon (t) \}_{0 \leq t < T})$ be a smoothly evolving $2$-dimensional surface in $\mathbb{R}^3$ depending on time $t$. We say that $\Gamma^\varepsilon (t)$ is a variation of $\Gamma (t)$ if $\Gamma^\varepsilon (0) = \Gamma (0)$ and $\Gamma^{\varepsilon}(t)|_{\varepsilon = 0} = \Gamma (t) $. Set
\begin{equation*}
\mathcal{S}_T^\varepsilon := \bigg\{ (x,t) \in \mathbb{R}^4;{ \ } (x,t ) \in \bigcup_{0<t <T} \{ \Gamma^\varepsilon (t) \times \{ t \} \} \bigg\}.
\end{equation*}
Let $\hat{x}^\varepsilon (\xi,t)$ be a flow map on $\Gamma^\varepsilon (t)$, and $v^\varepsilon$ be the velocity determined by the flow map $\hat{x}^\varepsilon$, i.e. for $\xi \in \Gamma (0)$ and $0<t<T$,
\begin{equation*}
\begin{cases}
v^\varepsilon = v^\varepsilon (x,t)= { }^t ( v^\varepsilon_1 (x,t) , v^\varepsilon_2 ( x,t) , v^\varepsilon_3 (x,t) ),\\ 
\hat{x}^\varepsilon = \hat{x}^\varepsilon ( \xi ,t ) = { }^t ( \hat{x}^\varepsilon_1 (\xi,t ) , \hat{x}^\varepsilon_2 (\xi,t ) , \hat{x}^\varepsilon_3 (\xi,t ) ),\\
\frac{d \hat{x}^\varepsilon}{d t} (\xi , t) = v^\varepsilon (\hat{x}^\varepsilon(\xi , t ) , t ),\\
\hat{x}^\varepsilon (\xi , 0) = \xi .
\end{cases}
\end{equation*}
We say that $( \hat{x}^\varepsilon (\xi ,t) , \mathcal{S}_T^\varepsilon)$ is a variation of $(\hat{x} ( \xi , t ) , \mathcal{S}_T)$ if $\hat{x}^\varepsilon (\xi , t)$ is smooth as a function of $( \varepsilon , \xi , t ) \in (-1 ,1) \times \Gamma (0) \times [0,T)$ and $\hat{x}^\varepsilon ( \xi ,t ) |_{ \varepsilon = 0} = \hat{x} ( \xi , t )$. Assume that $\Gamma^\varepsilon (t)$ is expressed by
\begin{equation*}
\Gamma^\varepsilon (t) = \{ x = { }^t (x_1 , x_2 , x_3 ) \in \mathbb{R}^3; { \ } x = \hat{x}^\varepsilon (\xi , t ) , { \ }\xi  \in \Gamma (0) \} .
\end{equation*}
For smooth functions $f = f (x,t)$, $D_t f := \partial_t f + ( v , \nabla ) f $ and $D_t^\varepsilon f := \partial_t f + (v^\varepsilon , \nabla  )f $.

To consider compressible fluid flow on an evolving surface, we first study the density of the fluid on the evolving surface. To this end, we need the continuity equation of the fluid on the evolving surface. We also need the continuity equation of the fluid on a variation of the evolving surface. For $- 1 < \varepsilon <1$, assume that $\rho = \rho (x,t)$ and $\rho^\varepsilon= \rho^\varepsilon (x,t)$ are smooth functions.
\begin{proposition}[Continuity equation of fluid on evolving surfaces]\label{prop13}{ \ }\\
\noindent $(\mathrm{i})$ For each $0<t<T$ and $\Omega (t) \subset \Gamma (t)$ flowed by the velocity $v$, assume that
\begin{equation*}
\frac{d }{d t}\int_{\Omega (t)} \rho (x,t) { \ }d \mathcal{H}_{x}^2 = 0.
\end{equation*}
Then $\rho$ satisfies
\begin{equation*}
D_t \rho + ({\rm{div}}_\Gamma v) \rho = 0 \text{ on }\mathcal{S}_T.
\end{equation*}
\noindent $(\mathrm{ii})$ For each $0<t<T$ and $\Omega^\varepsilon (t) \subset \Gamma^\varepsilon (t)$ flowed by the velocity $v^\varepsilon$, assume that
\begin{equation*}
\frac{d }{d t}\int_{\Omega^\varepsilon (t)} \rho^\varepsilon (x,t) { \ }d \mathcal{H}_{x}^2 = 0.
\end{equation*}
Then $\rho^\varepsilon$ satisfies
\begin{equation*}
D_t^\varepsilon \rho^\varepsilon + ({\rm{div}}_{\Gamma^\varepsilon} v^\varepsilon) \rho^\varepsilon = 0 \text{ on } \mathcal{S}^\varepsilon_T. 
\end{equation*}
\end{proposition}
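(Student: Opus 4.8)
The plan is to derive the pointwise equation from the integral hypothesis by a transport (Reynolds-type) argument on the moving surface, and to note at the outset that part $(\mathrm{ii})$ is proved by repeating the proof of $(\mathrm{i})$ verbatim with every object replaced by its $\varepsilon$-decorated counterpart ($\hat{x}^\varepsilon$, $v^\varepsilon$, $\Gamma^\varepsilon(t)$, $\Omega^\varepsilon(t)$, $D_t^\varepsilon$, ${\rm{div}}_{\Gamma^\varepsilon}$), since the hypothesis and the meaning of ``flowed by the velocity'' have exactly the same form (and the reference configuration is still $\Gamma(0) = \Gamma^\varepsilon(0)$). So I focus on $(\mathrm{i})$.

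First I would pull the integral over the moving region $\Omega(t)$ back to the fixed reference region $\Omega_0 \subset \Gamma(0)$ through the flow map. Choosing local coordinates on $\Gamma(0)$ and writing $g_{ij}(\xi,t) = \partial_{\xi_i}\hat{x}(\xi,t) \cdot \partial_{\xi_j}\hat{x}(\xi,t)$ for the metric on $\Gamma(t)$ pulled back by the flow map, the area element satisfies $d\mathcal{H}^2_x = J(\xi,t)\,d\mathcal{H}^2_\xi$ along $x = \hat{x}(\xi,t)$, where $J(\xi,t) = \sqrt{\det(g_{ij}(\xi,t))}/\sqrt{\det(g_{ij}(\xi,0))}$. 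Differentiating under the integral sign (justified by smoothness) gives
\begin{equation*}
\frac{d}{dt}\int_{\Omega(t)}\rho(x,t)\,d\mathcal{H}^2_x = \int_{\Omega_0}\left\{\frac{d}{dt}\big[\rho(\hat{x}(\xi,t),t)\big]J(\xi,t) + \rho(\hat{x}(\xi,t),t)\frac{dJ}{dt}(\xi,t)\right\}d\mathcal{H}^2_\xi .
\end{equation*}
Since $\frac{d\hat{x}}{dt}(\xi,t) = v(\hat{x}(\xi,t),t)$, the chain rule gives $\frac{d}{dt}[\rho(\hat{x}(\xi,t),t)] = (\partial_t\rho + (v,\nabla)\rho)(\hat{x}(\xi,t),t) = (D_t\rho)(\hat{x}(\xi,t),t)$. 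The crucial ingredient is the Euler-type expansion formula $\frac{dJ}{dt}(\xi,t) = ({\rm{div}}_\Gamma v)(\hat{x}(\xi,t),t)\,J(\xi,t)$, which I would obtain from Jacobi's formula $\frac{d}{dt}\log\sqrt{\det(g_{ij})} = \tfrac12 g^{ij}\partial_t g_{ij}$ (summation over repeated indices), together with $\partial_t\partial_{\xi_i}\hat{x} = \partial_{\xi_i}[v(\hat{x}(\xi,t),t)]$; one then checks that $\tfrac12 g^{ij}\partial_t g_{ij}$ equals the intrinsic surface divergence ${\rm{div}}_\Gamma v$ evaluated along the orbit, which is the content of the Riemannian-metric computations of Section~\ref{sect3}. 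Substituting these two identities and changing variables back to $\Omega(t)$ yields
\begin{equation*}
0 = \frac{d}{dt}\int_{\Omega(t)}\rho\,d\mathcal{H}^2_x = \int_{\Omega(t)}\big(D_t\rho + ({\rm{div}}_\Gamma v)\rho\big)(x,t)\,d\mathcal{H}^2_x .
\end{equation*}

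Finally, because this vanishing holds for every subregion $\Omega(t) \subset \Gamma(t)$ flowed by $v$ — in particular for arbitrarily small ones shrinking to any prescribed point of $\Gamma(t)$ — and the integrand $D_t\rho + ({\rm{div}}_\Gamma v)\rho$ is continuous on $\Gamma(t)$, the usual localization argument forces it to vanish at every $(x,t) \in \mathcal{S}_T$, which is the asserted equation. I expect the main obstacle to be the verification of the expansion formula for $dJ/dt$: one must carefully distinguish the tangential surface divergence from the full Euclidean divergence of the velocity $v$ (which is \emph{not} tangential, since $w$ has a normal component), and track how the normal part enters through the second fundamental form of $\Gamma(t)$ — this is precisely where the induced Riemannian metric of Section~\ref{sect3} does the work.
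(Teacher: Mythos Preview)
Your proposal is correct and follows essentially the same route as the paper: pull the integral back to the fixed reference patch via the flow map, identify $\tfrac{d}{dt}[\rho\circ\hat x]=D_t\rho$ by the chain rule, establish the Euler expansion $\partial_t\sqrt{J}=({\rm div}_\Gamma v)\sqrt{J}$ from the Riemannian-metric computations of Section~\ref{sect3} (the paper packages this as Lemma~\ref{lem37}(ii) and the resulting transport identity as Lemma~\ref{lem39}), and then localize. Your closing worry about the normal component is handled automatically by those lemmas, since ${\rm div}_\Gamma v=g^\alpha\cdot\partial_{X_\alpha}v$ already accounts for the full (not merely tangential) velocity.
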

\noindent We often call Proposition \ref{prop13} the \emph{surface transport theorem}. The proof of Proposition \ref{prop13} can be found in \cite{Bet86}, \cite{GSW89}, \cite{DE07} and \cite{KLG} (see also \cite{SSO07}). For our purposes, we give the proof of Proposition \ref{prop13} in Section \ref{sect3}. See Section \ref{sect2} for the definitions of surface divergences ${\rm{div}}_\Gamma$ and ${\rm{div}}_{ \Gamma^\varepsilon}$.

Next we study variations of the flow map to the action integral determined by the kinetic energy. Let $\rho_0 = \rho_0 (x)$ be a smooth function. We call $\rho$ the density of the fluid on $\Gamma (t)$ if $\rho$ satisfies
\begin{equation*}
\begin{cases}
D_t \rho + ({\rm{div}}_\Gamma v) \rho =0 & \text{ on } \mathcal{S}_T,\\
\rho |_{t = 0} = \rho_0 & \text{ on }\Gamma (0) .
\end{cases}
\end{equation*}
We call $\rho^\varepsilon$ the density of the fluid on $\Gamma^\varepsilon (t)$ if $\rho^\varepsilon$ satisfies
\begin{equation*}
\begin{cases}
D_t^\varepsilon \rho^\varepsilon + ({\rm{div}}_{\Gamma^\varepsilon} v^\varepsilon ) \rho^\varepsilon =0 & \text{ on } \mathcal{S}^\varepsilon_T,\\
\rho^\varepsilon |_{t = 0} = \rho_0 & \text{ on } \Gamma (0).
\end{cases}
\end{equation*}
For each variation $\hat{x}^\varepsilon$ we define the \emph{action integral} as
\begin{equation*}
A [ \hat{x}^\varepsilon ] = - \int_0^T \int_{\Gamma^\varepsilon (t)} \frac{1}{2} \rho^\varepsilon (x,t) | v^\varepsilon (x ,t)|^2 { \ } d \mathcal{H}^2_x d t,
\end{equation*}
where $\rho^\varepsilon$ is the density of the fluid on $\Gamma^\varepsilon (t)$ and $v^\varepsilon$ is the velocity determined by the flow map $\hat{x}^\varepsilon = \hat{x}^\varepsilon (\xi ,t)$. Note that $\hat{x}^\varepsilon (\xi ,t)$ is a flow map on $\Gamma^\varepsilon (t)$.

We now assume that there are $\hat{y} \in [ C^\infty ( \mathbb{R}^3 \times [0,T))]^3$ and $z \in [C^\infty (\mathcal{S}_T)]^3$ such that for $\xi \in \Gamma (0)$ and $0 \leq t < T$,
\begin{align*} 
  \hat{x}^\varepsilon ( \xi ,t) \bigg|_{\varepsilon = 0} &= \hat{x} ( \xi , t ),\\
 v^\varepsilon (\hat{x}^\varepsilon ( \xi ,t ) ,t )\bigg|_{\varepsilon = 0} & = v (\hat{x} ( \xi , t ),t),\\
 \frac{d}{d \varepsilon} \bigg|_{\varepsilon = 0} \hat{x}^\varepsilon ( \xi ,t) & = \hat{y} ( \xi , t ),\\
 z (\hat{x} ( \xi , t ) , t) & = \hat{y} ( \xi, t ).
\end{align*}
Here $z$ is the variation vector field. See Section \ref{sect2} for function spaces on an evolving surface. We also assume that for every $\xi \in \Gamma (0)$ and $0 \leq t <T$,
\begin{equation*}
\rho^\varepsilon (\hat{x}^\varepsilon ( \xi ,t ) ,t )\bigg|_{\varepsilon = 0} = \rho (\hat{x} ( \xi , t ),t).
\end{equation*}

\begin{theorem}[Variation of the flow map to the action integral]\label{thm14}{ \ }\\
Assume that $(\hat{x}^\varepsilon (\xi ,t ), \mathcal{S}_T^\varepsilon)$ is a variation of $(\hat{x} ( \xi , t ), \mathcal{S}_T)$ with $\Gamma^\varepsilon (0) = \Gamma (0)$. Suppose that $\rho$ is the density of the fluid on $\mathcal{S}_T$ and that $\rho^\varepsilon$ is the density of the fluid on $\mathcal{S}_T^\varepsilon$. Then
\begin{equation*}
\frac{d}{d \varepsilon}\bigg|_{\varepsilon = 0}A [\hat{x}^\varepsilon ] = \int_0^T \int_{\Gamma (t)} \{ \rho D_t v \} (x,t) \cdot z ( x ,  t ) { \ } d \mathcal{H}^2_x d t,
\end{equation*}
where $D_t v = \partial_t v + ( v , \nabla ) v$. Moreover, assume in addition that $z \cdot n =0$. Then
\begin{equation*}
\frac{d}{d \varepsilon}\bigg|_{\varepsilon = 0}A [\hat{x}^\varepsilon ] = \int_0^T \int_{\Gamma (t)} \{ P_\Gamma \rho D_t v \} (x,t) \cdot z ( x ,  t ) { \ } d \mathcal{H}^2_x d t.
\end{equation*}
\end{theorem}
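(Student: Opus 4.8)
The plan is to pull the action integral back to the fixed reference surface $\Gamma(0)$ through the flow map, where conservation of mass makes the integration measure independent of both $\varepsilon$ and $t$, then differentiate under the integral sign, integrate by parts in time, and transport the result back to $\Gamma(t)$.

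First I would record the key consequence of the continuity equation. Let $J^\varepsilon = J^\varepsilon(\xi,t)$ denote the Jacobian of the flow map $\hat{x}^\varepsilon(\cdot,t)\colon\Gamma(0)\to\Gamma^\varepsilon(t)$ arising from the Riemannian metric induced by the flow map (Section \ref{sect3}), so that $d\mathcal{H}^2_x = J^\varepsilon(\xi,t)\,d\mathcal{H}^2_\xi$ on $\Gamma^\varepsilon(t)$, together with the standard evolution identity $\partial_t J^\varepsilon = ({\rm{div}}_{\Gamma^\varepsilon}v^\varepsilon)(\hat{x}^\varepsilon(\xi,t),t)\,J^\varepsilon$ to be established in Section \ref{sect3}. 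Setting $g^\varepsilon(\xi,t) := \rho^\varepsilon(\hat{x}^\varepsilon(\xi,t),t)\,J^\varepsilon(\xi,t)$ and using $D_t^\varepsilon\rho^\varepsilon + ({\rm{div}}_{\Gamma^\varepsilon}v^\varepsilon)\rho^\varepsilon = 0$ with $\partial_t\hat{x}^\varepsilon = v^\varepsilon(\hat{x}^\varepsilon,t)$, one finds $\partial_t g^\varepsilon \equiv 0$, hence $g^\varepsilon(\xi,t)=g^\varepsilon(\xi,0)=\rho_0(\xi)$ for all $\varepsilon, t$ (this is the surface transport theorem underlying Proposition \ref{prop13}, in the direction from the equation to the conservation law). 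Since $v^\varepsilon(\hat{x}^\varepsilon(\xi,t),t)=\partial_t\hat{x}^\varepsilon(\xi,t)$, a change of variables then gives
\[
A[\hat{x}^\varepsilon] = -\int_0^T\!\!\int_{\Gamma(0)}\tfrac{1}{2}\rho_0(\xi)\,|\partial_t\hat{x}^\varepsilon(\xi,t)|^2\,d\mathcal{H}^2_\xi\,dt,
\]
in which neither $\Gamma(0)$ nor $\rho_0$ depends on $\varepsilon$.

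Next I would differentiate. Since $\hat{x}^\varepsilon$ is smooth in $(\varepsilon,\xi,t)$ and $\Gamma(0)$ is compact (a closed manifold), one may differentiate under the integral and interchange $\partial_\varepsilon|_{\varepsilon=0}$ with $\partial_t$; with $\partial_\varepsilon|_{\varepsilon=0}\hat{x}^\varepsilon = \hat{y}$ and $\partial_t\hat{x}(\xi,t)=v(\hat{x}(\xi,t),t)$ this yields
\[
\frac{d}{d\varepsilon}\bigg|_{\varepsilon=0}A[\hat{x}^\varepsilon] = -\int_0^T\!\!\int_{\Gamma(0)}\rho_0(\xi)\,\partial_t\hat{x}(\xi,t)\cdot\partial_t\hat{y}(\xi,t)\,d\mathcal{H}^2_\xi\,dt.
\]
Integrating by parts in $t$, the boundary term at $t=0$ vanishes because $\hat{x}^\varepsilon(\xi,0)=\xi$ forces $\hat{y}(\xi,0)=0$, and the one at $t=T$ vanishes by the endpoint condition on the variation; moreover $\partial_t^2\hat{x}(\xi,t) = \frac{d}{dt}v(\hat{x}(\xi,t),t) = (D_t v)(\hat{x}(\xi,t),t)$, so
\[
\frac{d}{d\varepsilon}\bigg|_{\varepsilon=0}A[\hat{x}^\varepsilon] = \int_0^T\!\!\int_{\Gamma(0)}\rho_0(\xi)\,(D_t v)(\hat{x}(\xi,t),t)\cdot\hat{y}(\xi,t)\,d\mathcal{H}^2_\xi\,dt.
\]
Transporting back to $\Gamma(t)$ via $\rho(\hat{x}(\xi,t),t)J(\xi,t)=\rho_0(\xi)$ and $\hat{y}(\xi,t)=z(\hat{x}(\xi,t),t)$ (so that $\rho_0\,d\mathcal{H}^2_\xi = \rho\,d\mathcal{H}^2_x$ on $\Gamma(t)$) gives the first formula. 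For the second, $z\cdot n=0$ means $z=P_\Gamma z$, and since $P_\Gamma$ is a symmetric matrix, $\{\rho D_t v\}\cdot z = \{\rho D_t v\}\cdot P_\Gamma z = \{P_\Gamma\rho D_t v\}\cdot z$, which is the claimed identity.

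I expect the main obstacle to be the first step — rigorously justifying $\rho^\varepsilon(\hat{x}^\varepsilon(\xi,t),t)\,J^\varepsilon(\xi,t)=\rho_0(\xi)$ — since it rests on the explicit description of $J^\varepsilon$ and of its time derivative in terms of ${\rm{div}}_{\Gamma^\varepsilon}v^\varepsilon$ coming from the flow-map metric of Section \ref{sect3}. Once that pullback identity is in hand, the remaining steps (differentiation under the integral, the $\partial_\varepsilon$--$\partial_t$ interchange, and the integration by parts in time) are routine consequences of the standing smoothness hypotheses and the endpoint conditions on $\hat{x}^\varepsilon$.
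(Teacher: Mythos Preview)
Your proposal is correct and follows essentially the same route as the paper's proof (Proposition~\ref{prop42}): pull the action back via the flow map so that conservation of mass kills the $\varepsilon$- and $t$-dependence of the measure (the paper records this as Lemma~\ref{lem314}/\eqref{eq324}), differentiate, integrate by parts in $t$ using $\hat{y}(\xi,0)=0$ (the paper's Lemma~\ref{lem41}), and push forward to $\Gamma(t)$. The only cosmetic difference is that the paper carries out the computation in local charts $X\in U_m\subset\mathbb{R}^2$ with a partition of unity and the area factor $\sqrt{J^\varepsilon}$, whereas you work intrinsically on $\Gamma(0)$ with $d\mathcal{H}^2_\xi$; your presentation is in fact slightly cleaner, and you make explicit the endpoint condition at $t=T$ that the paper leaves implicit.
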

Remark: In the former part of Theorem \ref{thm14} we consider variations with respect to the flow map, including the motion of $\Gamma (t)$, while we consider variations with respect to the tangential part of the flow map on $\Gamma (t)$ in the latter part.

We now study variations of the dissipation energy $E_D [ V]$ and the work $E_W [V]$ for the velocity field $V = { }^t ( V_1 (x,t) , V_2 (x,t) , V_3 (x,t) )$ at each fixed time $t$. Let $\mu = \mu (x,t)$, $\lambda = \lambda (x,t)$, $\sigma = \sigma (x,t)$, and $F = F (t,x) ={ }^t (F_1 , F_2, F_3)$ be smooth functions. For each smooth function $V = V (x , t ) = { }^t ( V_1, V_2, V_3)$, let
\begin{align*}
E_D [V] (t) := & - \int_{\Gamma (t)} \bigg\{  \frac{1}{2} \left( 2 \mu | D_\Gamma (V) |^2  + \lambda |{\rm{div}}_\Gamma V|^2 \right)  \bigg\} (x,t) { \ }d \mathcal{H}^2_x,\\
E_{W_1} [V] (t) := & \int_{\Gamma (t)} \{ ({\rm{div}}_\Gamma V ) \sigma \} (x,t) { \ }d \mathcal{H}^2_x,\\
E_{W_2} [V]  (t):= & \int_{\Gamma (t)} \{ \rho F \cdot V \} (x,t) { \ }d \mathcal{H}^2_x.
\end{align*}
Here $D_\Gamma (V) = P_\Gamma D (V) P_\Gamma$, where $P_\Gamma$ is the orthogonal projection to a tangent space defined by \eqref{eq22} in Section \ref{sect2} and $D (V) = \{ (\nabla V) + { }^t (\nabla V) \} /2$. We often call $E_D$, $E_{W_1}$, and $E_{W_2}$ the dissipation energy determined by the energy density $e_D$, the work done by the pressure $\sigma$, and the work done by the exterior force $F$, of the fluid on an evolving surface, respectively. Moreover, we set
\begin{equation*}
E_{D + W}[V] = E_{D}[V] + E_{W_1} [V] + E_{W_2}[V].
\end{equation*}
We shall study their variations.
\begin{theorem}[Variation of the velocity to dissipation energy/work]\label{thm15}{ \ }\\
Fix $t \in (0,T)$. Then for every vector field $\varphi \in [ C_0^\infty (\Gamma (t))]^3$,
\begin{multline*}
\frac{d}{d \varepsilon} \bigg|_{\varepsilon = 0} E_{D + W} [v + \varepsilon \varphi ] (t) \\
= \int_{\Gamma (t)} \bigg\{ {\rm{div}}_\Gamma \bigg( 2 \mu D_\Gamma (v) + \lambda P_\Gamma ({\rm{div}}_\Gamma v) - P_\Gamma \sigma \bigg) + \rho F \bigg\}(x,t) \cdot \varphi (x) { \ }d \mathcal{H}^2_x.
\end{multline*}
Moreover, if $\varphi \cdot n =0$, then
\begin{multline*}
\frac{d}{d \varepsilon} \bigg|_{\varepsilon = 0} E_{D + W} [v + \varepsilon \varphi ] (t) \\
= \int_{\Gamma (t)} \bigg\{  P_\Gamma {\rm{div}}_\Gamma \bigg( 2 \mu D_\Gamma (v) + \lambda P_\Gamma ({\rm{div}}_\Gamma v) - P_\Gamma \sigma \bigg) + P_\Gamma \rho F \bigg\} \cdot \varphi (x) { \ }d \mathcal{H}^2_x.
\end{multline*}
\end{theorem}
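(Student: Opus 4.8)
The plan is to differentiate each of the three functionals $E_D$, $E_{W_1}$, $E_{W_2}$ in $\varepsilon$ at $\varepsilon=0$ and then rewrite the resulting expressions in the stated strong form by surface integration by parts. Since $t$ is fixed, neither the surface $\Gamma(t)$ nor the coefficients $\mu,\lambda,\sigma,\rho,F$ depend on $\varepsilon$, and each integrand is a polynomial of degree at most two in $\varepsilon$, so differentiation under the integral sign is legitimate; and because $\varphi$ is compactly supported and $\Gamma(t)$ is closed, no boundary terms arise when we integrate by parts.

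The two work terms are immediate. As $E_{W_2}$ is linear in its argument, $\frac{d}{d\varepsilon}\big|_{\varepsilon=0}E_{W_2}[v+\varepsilon\varphi](t)=\int_{\Gamma(t)}\rho F\cdot\varphi\,d\mathcal{H}^2_x$. Since ${\rm{div}}_\Gamma$ is linear, $E_{W_1}[v+\varepsilon\varphi](t)=E_{W_1}[v](t)+\varepsilon\int_{\Gamma(t)}({\rm{div}}_\Gamma\varphi)\sigma\,d\mathcal{H}^2_x$, whose $\varepsilon$-derivative is $\int_{\Gamma(t)}\sigma\,{\rm{div}}_\Gamma\varphi\,d\mathcal{H}^2_x$. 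I would then use the identity $P_\Gamma:\nabla_\Gamma\varphi={\rm{div}}_\Gamma\varphi$, the matrix product rule ${\rm{div}}_\Gamma(\sigma P_\Gamma\varphi)={\rm{div}}_\Gamma(\sigma P_\Gamma)\cdot\varphi+\sigma\,{\rm{div}}_\Gamma\varphi$, and the surface divergence theorem on $\Gamma(t)$: since $\sigma P_\Gamma\varphi$ lies in the range of $P_\Gamma$ and is hence tangential, the integral of its surface divergence vanishes, so $\int_{\Gamma(t)}\sigma\,{\rm{div}}_\Gamma\varphi=-\int_{\Gamma(t)}{\rm{div}}_\Gamma(\sigma P_\Gamma)\cdot\varphi=\int_{\Gamma(t)}{\rm{div}}_\Gamma(-P_\Gamma\sigma)\cdot\varphi$.

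For $E_D$, linearity of $D_\Gamma(\cdot)$ and ${\rm{div}}_\Gamma(\cdot)$ gives $|D_\Gamma(v+\varepsilon\varphi)|^2=|D_\Gamma(v)|^2+2\varepsilon\,D_\Gamma(v):D_\Gamma(\varphi)+\varepsilon^2|D_\Gamma(\varphi)|^2$ and likewise for $|{\rm{div}}_\Gamma(v+\varepsilon\varphi)|^2$, so
\[
\frac{d}{d\varepsilon}\bigg|_{\varepsilon=0}E_D[v+\varepsilon\varphi](t)=-\int_{\Gamma(t)}\bigl\{\,2\mu\,D_\Gamma(v):D_\Gamma(\varphi)+\lambda({\rm{div}}_\Gamma v)({\rm{div}}_\Gamma\varphi)\,\bigr\}\,d\mathcal{H}^2_x.
\]
Here I would use that $D_\Gamma(v)$ is symmetric and satisfies $P_\Gamma D_\Gamma(v)P_\Gamma=D_\Gamma(v)$ to reduce $D_\Gamma(v):D_\Gamma(\varphi)=D_\Gamma(v):\nabla_\Gamma\varphi$, so that only a single surface gradient of $\varphi$ need be moved. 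Then the product rules ${\rm{div}}_\Gamma\bigl(2\mu D_\Gamma(v)\varphi\bigr)={\rm{div}}_\Gamma\bigl(2\mu D_\Gamma(v)\bigr)\cdot\varphi+2\mu\,D_\Gamma(v):\nabla_\Gamma\varphi$ and ${\rm{div}}_\Gamma\bigl(\lambda({\rm{div}}_\Gamma v)P_\Gamma\varphi\bigr)={\rm{div}}_\Gamma\bigl(\lambda({\rm{div}}_\Gamma v)P_\Gamma\bigr)\cdot\varphi+\lambda({\rm{div}}_\Gamma v)\,{\rm{div}}_\Gamma\varphi$, together with the surface divergence theorem applied to the tangential fields $2\mu D_\Gamma(v)\varphi$ and $\lambda({\rm{div}}_\Gamma v)P_\Gamma\varphi$ (both in the range of $P_\Gamma$), give $\frac{d}{d\varepsilon}\big|_{\varepsilon=0}E_D[v+\varepsilon\varphi](t)=\int_{\Gamma(t)}{\rm{div}}_\Gamma\bigl(2\mu D_\Gamma(v)+\lambda P_\Gamma({\rm{div}}_\Gamma v)\bigr)\cdot\varphi\,d\mathcal{H}^2_x$.

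Adding the three derivatives and recognising $2\mu D_\Gamma(v)+\lambda P_\Gamma({\rm{div}}_\Gamma v)-P_\Gamma\sigma=S_\Gamma(v,\sigma,\mu,\lambda)$ gives the first identity. For the second, when $\varphi\cdot n=0$ I would write $\varphi=P_\Gamma\varphi$ and transfer the symmetric projection $P_\Gamma$ onto the other factor, turning ${\rm{div}}_\Gamma S_\Gamma(v,\sigma,\mu,\lambda)\cdot\varphi$ into $\bigl(P_\Gamma\,{\rm{div}}_\Gamma S_\Gamma(v,\sigma,\mu,\lambda)\bigr)\cdot\varphi$ and $\rho F\cdot\varphi$ into $(P_\Gamma\rho F)\cdot\varphi$. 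I expect the main obstacle to be bookkeeping rather than conceptual: verifying the surface integration-by-parts identities in exactly the needed form — the matrix-valued product rule for ${\rm{div}}_\Gamma$ and the surface divergence theorem on the closed $\Gamma(t)$ — and exploiting the symmetry and tangency of $P_\Gamma$ and $D_\Gamma(v)$ so that the contraction $D_\Gamma(v):D_\Gamma(\varphi)$ collapses to $D_\Gamma(v):\nabla_\Gamma\varphi$ and every would-be boundary or curvature contribution disappears. These are exactly the facts collected in Section~\ref{sect2}.
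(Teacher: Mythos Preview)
Your proposal is correct and follows essentially the same route as the paper. The paper's proof (Lemma~\ref{lem43}) differentiates each of the four pieces separately, applies the surface integration-by-parts formula of Lemma~\ref{lem28}, and uses the identity $D_\Gamma(v):\mathbb{D}_\Gamma(\varphi)=D_\Gamma(v):D_\Gamma(\varphi)$ from Lemma~\ref{lem26} together with the explicit computation $n_j[D_\Gamma(v)]_{ij}=0$ to kill the curvature term; your packaging of the latter as ``$D_\Gamma(v)\varphi$ is tangential, so the surface divergence theorem gives no boundary or curvature contribution'' is exactly the same observation in slightly more conceptual language.
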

\noindent Remark: In the former part of Theorem \ref{thm15} we consider variations with respect to the total velocity, including the motion of $\Gamma (t)$, while we consider variations with respect to the tangential part of the total velocity on $\Gamma (t)$ in the latter part.

Combining Theorem \ref{thm15}, Proposition \ref{prop312}, and \cite[Theorem 2.1 and Section 3]{KS17} gives a mathematical validity of the Boussinesq-Scriven law. Indeed, we obtain the surface stress tensor determined by the Boussinesq-Scriven law from Theorem \ref{thm15} when $F \equiv 0$.

In the next step, we study Fourier's and Fick's laws of surface diffusion. Let $\kappa = \kappa (x,t)$, $\nu = \nu (x,t)$, $\theta = \theta (x,t)$, and $C = C (x,t)$ be four smooth functions. Fix $t$. For each smooth function $f = f (x,t)$, let
\begin{align*}
E_{TD} [f](t) := & - \int_{\Gamma (t)} \frac{1}{2} \kappa (x,t) |{\rm{grad}}_\Gamma f (x,t)|^2  { \ }d \mathcal{H}^2_x,\\
E_{SD} [f](t) := & - \int_{\Gamma (t)} \frac{1}{2} \nu (x,t) |{\rm{grad}}_\Gamma f (x,t)|^2  { \ }d \mathcal{H}^2_x .
\end{align*}
We often call $E_{TD}$ and $E_{SD}$ the dissipation energies determined by $e_{TD}$ and $e_{SD}$, respectively. See Proposition \ref{prop312} for the representation of these energies.

\begin{theorem}[Fourier' laws and Fick's laws of surface diffusion]\label{thm16}{ \ }\\
$(\mathrm{i})$ Fix $t \in (0,T)$. Then for every $\varphi \in C_0^\infty (\Gamma (t))$,
\begin{align*}
\frac{d}{d \varepsilon} \bigg|_{\varepsilon = 0} E_{TD} [\theta + \varepsilon \varphi ](t) = \int_{\Gamma (t)} {\rm{div}}_\Gamma \bigg( \kappa (x,t) {\rm{grad}}_\Gamma \theta (x,t)  \bigg) \varphi (x) { \ }d \mathcal{H}^2_x,\\
\frac{d}{d \varepsilon} \bigg|_{\varepsilon = 0} E_{SD} [C + \varepsilon \varphi ](t) = \int_{\Gamma (t)} {\rm{div}}_\Gamma \bigg( \nu (x,t) {\rm{grad}}_\Gamma C (x,t)  \bigg) \varphi (x) { \ }d \mathcal{H}^2_x.
\end{align*}
$(\mathrm{ii})$ Fix $t\in (0,T)$ and $x \in \Gamma (t)$. For each $\vartheta_1 , \vartheta_2 , \vartheta_3 \in \mathbb{R}$,
\begin{align*}
\mathcal{E}_{TD} [ \vartheta_1 , \vartheta_2 , \vartheta_3 ] : = - \frac{\kappa (x,t)}{2} (\vartheta_1^2 + \vartheta_2^2 + \vartheta_3^2 ),\\
\mathcal{E}_{SD} [ \vartheta_1 , \vartheta_2 , \vartheta_3 ] : =  -\frac{\nu (x,t)}{2} (\vartheta_1^2 + \vartheta_2^2 + \vartheta_3^2 ).
\end{align*}
Then
\begin{align*}{ }^t
\begin{pmatrix}
\frac{\partial \mathcal{E}_{TD} }{\partial \vartheta_1}, \frac{\partial \mathcal{E}_{TD}}{\partial \vartheta_2}, \frac{\partial \mathcal{E}_{TD}}{\partial \vartheta_3}
\end{pmatrix}\Bigg|_{(\vartheta_1, \vartheta_2 , \vartheta_3 ) = (\partial_1^\Gamma \theta , \partial_2^\Gamma \theta , \partial_3^\Gamma \theta)} = - \kappa {\rm{grad}}_\Gamma \theta,\\
{ }^t
\begin{pmatrix}
\frac{\partial \mathcal{E}_{SD} }{\partial \vartheta_1}, \frac{\partial \mathcal{E}_{SD}}{\partial \vartheta_2}, \frac{\partial \mathcal{E}_{SD}}{\partial \vartheta_3}
\end{pmatrix}\Bigg|_{(\vartheta_1, \vartheta_2 , \vartheta_3 ) = (\partial_1^\Gamma C , \partial_2^\Gamma C , \partial_3^\Gamma C)} = - \nu {\rm{grad}}_\Gamma C .
\end{align*}
\end{theorem}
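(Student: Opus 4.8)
The plan is to treat the two parts separately: part $(\mathrm{i})$ is a one–variable variational computation followed by surface integration by parts, while part $(\mathrm{ii})$ is a direct differentiation of a quadratic form.

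For part $(\mathrm{i})$, fix $t \in (0,T)$ and note that $\Gamma(t)$ does not depend on $\varepsilon$, so in $E_{TD}[\theta + \varepsilon \varphi](t)$ the only $\varepsilon$–dependence sits in the integrand. By linearity of ${\rm{grad}}_\Gamma$ on scalar functions (Section \ref{sect2}), $|{\rm{grad}}_\Gamma(\theta + \varepsilon\varphi)|^2 = |{\rm{grad}}_\Gamma \theta|^2 + 2\varepsilon\, {\rm{grad}}_\Gamma \theta \cdot {\rm{grad}}_\Gamma \varphi + \varepsilon^2 |{\rm{grad}}_\Gamma \varphi|^2$, which is a polynomial of degree two in $\varepsilon$ with smooth coefficients on the compact surface $\Gamma(t)$; hence one may differentiate under the integral sign to obtain $\frac{d}{d\varepsilon}\big|_{\varepsilon=0} E_{TD}[\theta + \varepsilon\varphi](t) = -\int_{\Gamma(t)} \kappa\, {\rm{grad}}_\Gamma \theta \cdot {\rm{grad}}_\Gamma \varphi\, d\mathcal{H}^2_x$. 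Next I would apply the Leibniz rule ${\rm{div}}_\Gamma(\varphi \kappa\, {\rm{grad}}_\Gamma \theta) = {\rm{grad}}_\Gamma \varphi \cdot \kappa\, {\rm{grad}}_\Gamma \theta + \varphi\, {\rm{div}}_\Gamma(\kappa\, {\rm{grad}}_\Gamma \theta)$ and integrate over $\Gamma(t)$. Since $\kappa\, {\rm{grad}}_\Gamma \theta$ is a tangential field, so is $\varphi \kappa\, {\rm{grad}}_\Gamma \theta$, and the surface divergence theorem on the closed surface $\Gamma(t)$ (which gives $\int_{\Gamma(t)} {\rm{div}}_\Gamma X\, d\mathcal{H}^2_x = \int_{\Gamma(t)} H_\Gamma (X\cdot n)\, d\mathcal{H}^2_x = 0$ when $X$ is tangential) yields the desired integration–by–parts identity and hence the first formula. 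The formula for $E_{SD}$ follows verbatim with $(\kappa, \theta)$ replaced by $(\nu, C)$.

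For part $(\mathrm{ii})$, write $\mathcal{E}_{TD}[\vartheta_1, \vartheta_2, \vartheta_3] = -\frac{1}{2}\kappa(x,t)(\vartheta_1^2 + \vartheta_2^2 + \vartheta_3^2)$; then $\partial \mathcal{E}_{TD}/\partial \vartheta_i = -\kappa(x,t)\vartheta_i$ for $i = 1,2,3$. Evaluating at $(\vartheta_1, \vartheta_2, \vartheta_3) = (\partial_1^\Gamma \theta, \partial_2^\Gamma \theta, \partial_3^\Gamma \theta)$ and recalling that by definition ${\rm{grad}}_\Gamma \theta = {}^t(\partial_1^\Gamma \theta, \partial_2^\Gamma \theta, \partial_3^\Gamma \theta)$ (Section \ref{sect2}), the column vector of these partial derivatives equals $-\kappa\, {\rm{grad}}_\Gamma \theta$; the identity for $\mathcal{E}_{SD}$ is obtained the same way with $(\kappa, \theta)$ replaced by $(\nu, C)$.

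The only step that requires any care is the surface integration by parts in part $(\mathrm{i})$, namely the vanishing of the integral of ${\rm{div}}_\Gamma$ of a tangential field over the closed surface $\Gamma(t)$; this is standard, and the algebraic identities for ${\rm{div}}_\Gamma$ and ${\rm{grad}}_\Gamma$ used here are collected in Section \ref{sect2}. Everything else reduces to one–line computations, so I expect no genuine obstacle.
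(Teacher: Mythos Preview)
Your proposal is correct and follows essentially the same approach as the paper: differentiate under the integral, obtain $-\int_{\Gamma(t)} \kappa\,{\rm grad}_\Gamma\theta\cdot{\rm grad}_\Gamma\varphi\,d\mathcal{H}^2_x$, and then integrate by parts on the closed surface (Lemma~\ref{lem28}) using that $n\cdot{\rm grad}_\Gamma\theta=0$ kills the curvature term; part~$(\mathrm{ii})$ is the same one-line differentiation in both. The only cosmetic difference is that the paper packages this as the special case $e_{\mathcal J}(r)=\kappa r$ (resp.\ $\nu r$) of Theorem~\ref{thm17} rather than writing it out separately.
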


Next we give the generalized surface flux on the evolving surface $\Gamma (t)$.
\begin{theorem}[Variation of dissipation energy/Fluxes on surfaces]\label{thm17} { \ }\\
Let $e_{\mathcal{J}} \in C^1 ([0, \infty))$ or $e_{\mathcal{J}} \in C^1 ((0, \infty))$. Suppose that $e_{\mathcal{J}}$ is a non-negative function. Then the following two assertions hold:\\
\noindent $(\mathrm{i})$ Fix $t$. For each smooth function $f  = f (x,t)$,
\begin{equation*}
E_{GD}[f](t)  := - \int_{\Gamma (t)} \frac{1}{2} e_{\mathcal{J}}( | {\rm{grad}}_\Gamma f |^2) { \ }d \mathcal{H}^2_x.
\end{equation*}
Then for every $f \in C^{2,0} ( \mathcal{S}_T)$ with $| {\rm{grad}}_\Gamma f| \neq 0$ and $\varphi \in C_0^\infty (\Gamma (t))$,
\begin{equation*}
\frac{d}{d \varepsilon} \bigg|_{\varepsilon = 0} E_{GD}[f + \varepsilon \varphi ] (t)= \int_{\Gamma (t)} \{ {\rm{div}}_\Gamma ( e_{\mathcal{J}}' ( | {\rm{grad}}_\Gamma f |^2) {\rm{grad}}_\Gamma f ) \} (x,t) \varphi (x) { \ }d \mathcal{H}^2_x.
\end{equation*}
$(\mathrm{ii})$ Let $f = f (x,t)$ be a smooth function with $| {\rm{grad}}_\Gamma f | \neq 0$. Fix $0 < t <T$ and $x \in \Gamma (t)$. For each $\vartheta_1 , \vartheta_2 , \vartheta_3 \in \mathbb{R}$,
\begin{equation*}
\mathcal{E}_{GD} := - \frac{1}{2} e_{\mathcal{J}}( \vartheta_1^2 + \vartheta_2^2 + \vartheta_3^2) .
\end{equation*}
Then
\begin{equation*}{ }^t
\begin{pmatrix}
\frac{\partial \mathcal{E}_{GD}}{\partial \vartheta_1 } ,\frac{\partial \mathcal{E}_{GD}}{\partial \vartheta_2 }, \frac{\partial \mathcal{E}_{GD}}{\partial \vartheta_3 }
\end{pmatrix}\bigg|_{ ( \vartheta_1 = \partial_1^\Gamma f, \vartheta_2 = \partial_2^\Gamma f, \vartheta_3 = \partial_3^\Gamma f )} = - e_{\mathcal{J}}' (| {\rm{grad}}_\Gamma f |^2) {\rm{grad}}_\Gamma f.
\end{equation*}
\end{theorem}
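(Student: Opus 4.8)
For part $(\mathrm{i})$ the plan is to treat this as an ordinary first variation at the fixed time $t$, since $\Gamma(t)$ does not move under the perturbation $f \mapsto f + \varepsilon \varphi$. First I would write
\begin{equation*}
E_{GD}[f + \varepsilon \varphi](t) = - \frac{1}{2} \int_{\Gamma(t)} e_{\mathcal{J}}\big( | {\rm{grad}}_\Gamma (f + \varepsilon \varphi) |^2 \big) { \ } d \mathcal{H}^2_x,
\end{equation*}
and use linearity of the surface gradient, ${\rm{grad}}_\Gamma (f + \varepsilon \varphi) = {\rm{grad}}_\Gamma f + \varepsilon \, {\rm{grad}}_\Gamma \varphi$, so that $| {\rm{grad}}_\Gamma (f + \varepsilon \varphi) |^2 = | {\rm{grad}}_\Gamma f |^2 + 2 \varepsilon \, {\rm{grad}}_\Gamma f \cdot {\rm{grad}}_\Gamma \varphi + \varepsilon^2 | {\rm{grad}}_\Gamma \varphi |^2$. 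Differentiating under the integral sign and the chain rule then give
\begin{equation*}
\frac{d}{d\varepsilon}\bigg|_{\varepsilon = 0} E_{GD}[f + \varepsilon \varphi](t) = - \int_{\Gamma(t)} e_{\mathcal{J}}'( | {\rm{grad}}_\Gamma f |^2 ) \, {\rm{grad}}_\Gamma f \cdot {\rm{grad}}_\Gamma \varphi { \ } d \mathcal{H}^2_x .
\end{equation*}

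The second step is integration by parts on the closed surface $\Gamma(t)$. Set $Y := e_{\mathcal{J}}'( | {\rm{grad}}_\Gamma f |^2 ) \, {\rm{grad}}_\Gamma f$; this is a tangential vector field, and the product rule for surface divergence gives ${\rm{div}}_\Gamma (\varphi Y) = \varphi \, {\rm{div}}_\Gamma Y + {\rm{grad}}_\Gamma \varphi \cdot Y$. Since $\varphi Y$ is tangential and $\Gamma(t)$ is a closed surface, $\int_{\Gamma(t)} {\rm{div}}_\Gamma (\varphi Y) { \ } d \mathcal{H}^2_x = 0$ (the surface divergence theorem from Section \ref{sect2}), which converts the previous display into $\int_{\Gamma(t)} \varphi \, {\rm{div}}_\Gamma Y { \ } d \mathcal{H}^2_x$, i.e.\ the claimed formula. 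The points that need care here, rather than any genuine obstacle, are: the hypothesis $| {\rm{grad}}_\Gamma f | \neq 0$ together with compactness of $\Gamma(t)$ keeps $| {\rm{grad}}_\Gamma (f + \varepsilon \varphi)|^2$ in a compact subset of $(0,\infty)$ for $\varepsilon$ near $0$, which legitimizes differentiation under the integral and the use of $e_{\mathcal{J}}'$ in the case $e_{\mathcal{J}} \in C^1((0,\infty))$; and the regularity $f \in C^{2,0}(\mathcal{S}_T)$ (with $e_{\mathcal{J}}$ smooth enough, or else interpreting ${\rm{div}}_\Gamma Y$ in the weak sense) ensures the right-hand side is meaningful. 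I would record the relevant closed-surface integration-by-parts identity explicitly before the computation and cite it from Section \ref{sect2}.

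Part $(\mathrm{ii})$ is purely a chain-rule computation. From $\mathcal{E}_{GD} = - \tfrac{1}{2} e_{\mathcal{J}}(\vartheta_1^2 + \vartheta_2^2 + \vartheta_3^2)$ one gets $\partial \mathcal{E}_{GD} / \partial \vartheta_i = - e_{\mathcal{J}}'(\vartheta_1^2 + \vartheta_2^2 + \vartheta_3^2)\, \vartheta_i$ for $i = 1,2,3$. Substituting $(\vartheta_1, \vartheta_2, \vartheta_3) = (\partial_1^\Gamma f, \partial_2^\Gamma f, \partial_3^\Gamma f)$ and using $| {\rm{grad}}_\Gamma f |^2 = (\partial_1^\Gamma f)^2 + (\partial_2^\Gamma f)^2 + (\partial_3^\Gamma f)^2$ yields $\partial \mathcal{E}_{GD} / \partial \vartheta_i |_{\vartheta_j = \partial_j^\Gamma f} = - e_{\mathcal{J}}'( | {\rm{grad}}_\Gamma f |^2 ) \, \partial_i^\Gamma f$; assembling the three components into a column vector gives $- e_{\mathcal{J}}'( | {\rm{grad}}_\Gamma f |^2 ) \, {\rm{grad}}_\Gamma f$, as asserted. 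Overall, the only nontrivial ingredient is the surface divergence theorem on the closed manifold $\Gamma(t)$ used in $(\mathrm{i})$; everything else is routine differentiation.
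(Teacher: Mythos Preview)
Your proposal is correct and follows essentially the same route as the paper: differentiate under the integral via the chain rule to get $-\int_{\Gamma(t)} e_{\mathcal{J}}'(|{\rm grad}_\Gamma f|^2)\,{\rm grad}_\Gamma f\cdot{\rm grad}_\Gamma\varphi\,d\mathcal{H}^2_x$, then apply the surface integration-by-parts lemma (Lemma~\ref{lem28}) on the closed surface, with part~$(\mathrm{ii})$ being the same direct chain-rule substitution. Your write-up is in fact slightly more careful than the paper's, since you make explicit that $Y=e_{\mathcal{J}}'(|{\rm grad}_\Gamma f|^2)\,{\rm grad}_\Gamma f$ is tangential (so the mean-curvature term from Lemma~\ref{lem28} drops) and you address why differentiation under the integral is justified when $e_{\mathcal{J}}\in C^1((0,\infty))$.
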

Proposition \ref{prop313} and \cite[Theorem 2.1]{KS17} give a mathematical validity of the representation of the energy density $e_{\mathcal{J}} (|{\rm{grad}}_\Gamma f |^2)$ for the energy dissipation due to general diffusion.

Applying Proposition \ref{prop13}, Theorems \ref{thm14}-\ref{thm17}, an energetic variational approach (Least Action Principle and Maximum/Minimum Dissipation Principle), the first law of thermodynamics, we can derive our compressible fluid systems \eqref{eq11} and \eqref{eq111}. Under some conditions, we can obtain the enthalpy \eqref{eq18},  the entropy \eqref{eq19}, and the free energy \eqref{eq110}. We easily check that the system \eqref{eq11} satisfies the conservative form \eqref{eq12}. See Section \ref{sect5} for more detailed derivation of our compressible fluid systems.

Let us investigate conservation laws of the compressible fluid system \eqref{eq11}.
\begin{theorem}[Conservation laws]\label{thm18}{ \ }\\
Assume that $\Gamma (t)$ is flowed by the total velocity $v$. Then the system \eqref{eq11} satisfies that for $0 < t_1 < t_2 <T$,
\begin{align}
& \int_{\Gamma (t_2)} \rho (x,t_2) { \ }d \mathcal{H}^2_x = \int_{\Gamma (t_1)} \rho (x,t_1) { \ }d \mathcal{H}^2_x, \label{eq117}\\
& \int_{\Gamma (t_2)} \rho v { \ }d \mathcal{H}^2_x = \int_{\Gamma (t_1)} \rho v { \ }d \mathcal{H}^2_x + \int_{t_1}^{t_2} \int_{\Gamma (\tau)} \rho F { \ }d \mathcal{H}^2_x d \tau,\label{eq118}\\
& \int_{\Gamma (t_2)} e_A{ \ }d \mathcal{H}^2_x = \int_{\Gamma (t_1)} e_A { \ }d \mathcal{H}^2_x + \int_{t_1}^{t_2} \int_{\Gamma (\tau)} \{ \rho F \cdot v + \rho Q_\theta \}  { \ }d \mathcal{H}^2_x d \tau, \label{eq119}\\
& \int_{\Gamma (t_2)} C (x,t_2) { \ }d \mathcal{H}^2_x = \int_{\Gamma (t_1)} C (x,t_1) { \ }d \mathcal{H}^2_x + \int_{t_1}^{t_2} \int_{\Gamma (\tau)} Q_C  { \ }d \mathcal{H}^2_x d \tau , \label{eq120}
\end{align}
and
\begin{multline}\label{eq121}
\int_{\Gamma (t_2)} \{ x \times ( \rho v ) \} ( x , t ) { \ }d \mathcal{H}^2_x \\
= \int_{\Gamma (t_1)} \{ x \times ( \rho v) \} ( x , t) { \ }d \mathcal{H}^2_x + \int_{t_1}^{t_2} \int_{\Gamma (\tau)} \{ x \times (\rho F)\} (x , \tau )  { \ }d \mathcal{H}^2_x d \tau.
\end{multline}
\end{theorem}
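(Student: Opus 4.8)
The plan is to prove each conservation law by integrating the corresponding equation of the conservative form \eqref{eq12} over $\Gamma(t)$ and applying a transport identity that converts $D_t^N$ together with the surface divergence into a time derivative of the integral over the moving surface. More precisely, the key tool is the surface transport theorem (Proposition \ref{prop13}) in its Reynolds form: since $\Gamma(t)$ is flowed by the total velocity $v$, for any smooth scalar or vector quantity $g = g(x,t)$ one has
\begin{equation*}
\frac{d}{dt} \int_{\Gamma(t)} g \ d\mathcal{H}^2_x = \int_{\Gamma(t)} \left\{ \partial_t g + (v,\nabla)g + g\,({\rm{div}}_\Gamma v) \right\} d\mathcal{H}^2_x = \int_{\Gamma(t)} \left\{ D_t^N g + {\rm{div}}_\Gamma(g v) \right\} d\mathcal{H}^2_x,
\end{equation*}
where the last equality uses $v\cdot n = w\cdot n$, the product rule for ${\rm{div}}_\Gamma$, and the fact that the tangential part of $v$ contributes only through ${\rm{div}}_\Gamma(g\,v_{\mathrm{tan}})$ while the normal part gives $D_t^N g$ plus the mean-curvature term that is absorbed into ${\rm{div}}_\Gamma(g\,(v\cdot n)n)$. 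I would first record this identity carefully (it follows from the definition of $D_t^N$ and standard surface calculus already set up in Section \ref{sect2}), and also note the corollary that $\int_{\Gamma(t)} {\rm{div}}_\Gamma q \ d\mathcal{H}^2_x = 0$ for any tangential vector field $q$ on the closed surface $\Gamma(t)$, since $\Gamma(t)$ has no boundary.

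With this machinery in place, each of \eqref{eq117}--\eqref{eq121} is immediate. For \eqref{eq117}, take $g=\rho$: the first equation of \eqref{eq12} reads $D_t^N\rho + {\rm{div}}_\Gamma(\rho v)=0$, so $\frac{d}{dt}\int_{\Gamma(t)}\rho\,d\mathcal{H}^2_x=0$, and integrating in $t$ from $t_1$ to $t_2$ gives the claim. For \eqref{eq118}, take $g=\rho v$ componentwise; the second equation of \eqref{eq12} gives $D_t^N(\rho v) + {\rm{div}}_\Gamma(\rho v\otimes v) - {\rm{div}}_\Gamma S_\Gamma(v,\sigma,\mu,\lambda) = \rho F$, hence $\frac{d}{dt}\int_{\Gamma(t)}\rho v\,d\mathcal{H}^2_x = \int_{\Gamma(t)}\rho F\,d\mathcal{H}^2_x + \int_{\Gamma(t)}{\rm{div}}_\Gamma S_\Gamma\,d\mathcal{H}^2_x$, and the last term vanishes because $S_\Gamma(v,\sigma,\mu,\lambda) = 2\mu D_\Gamma(v) + \lambda P_\Gamma({\rm{div}}_\Gamma v) - P_\Gamma\sigma$ has tangential columns (each is in the image of $P_\Gamma$), so its surface divergence integrates to zero over the closed surface. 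Then integrate in $t$. The argument for \eqref{eq119} uses the third equation of \eqref{eq12} with $g=e_A$, noting $\int_{\Gamma(t)}{\rm{div}}_\Gamma(q_\theta + S_\Gamma v)\,d\mathcal{H}^2_x=0$ (both $q_\theta=\kappa\,{\rm{grad}}_\Gamma\theta$ and $S_\Gamma v$ are tangential), leaving $\rho Q_\theta + \rho F\cdot v$ on the right. Equation \eqref{eq120} uses the fourth equation of \eqref{eq12} with $g=C$ and $\int_{\Gamma(t)}{\rm{div}}_\Gamma q_C\,d\mathcal{H}^2_x=0$.

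For the angular momentum law \eqref{eq121}, take $g = x\times(\rho v)$ and compute its transport derivative. The extra subtlety compared with the others is that the position vector $x$ itself varies along orbits of the flow, so one picks up a term $\frac{dx}{dt}\times(\rho v) = v\times(\rho v) = 0$; thus the transport theorem still applies cleanly with $\partial_t + (v,\nabla)$ acting on $x\times(\rho v)$ producing $x\times(D_t^N(\rho v) + \text{convective terms})$. Combining with the conservative momentum equation and using that $\int_{\Gamma(t)} x\times {\rm{div}}_\Gamma S_\Gamma\,d\mathcal{H}^2_x = 0$ — which requires integrating by parts once on the closed surface and observing that the resulting symmetric pairing $\varepsilon_{ijk}(S_\Gamma)_{jk}$-type term vanishes by symmetry of $S_\Gamma$ (recall $D_\Gamma(v)$, $P_\Gamma({\rm{div}}_\Gamma v)$, and $P_\Gamma\sigma$ are symmetric tensors) — yields $\frac{d}{dt}\int_{\Gamma(t)} x\times(\rho v)\,d\mathcal{H}^2_x = \int_{\Gamma(t)} x\times(\rho F)\,d\mathcal{H}^2_x$, and integration in $t$ finishes. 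I expect the main obstacle to be exactly this last point: verifying that the torque of the surface stress integrates to zero, i.e. establishing the surface integration-by-parts identity $\int_{\Gamma(t)} \big(x\times {\rm{div}}_\Gamma S_\Gamma\big)_i \,d\mathcal{H}^2_x = -\int_{\Gamma(t)} \varepsilon_{ijk}(S_\Gamma)_{kj}\,d\mathcal{H}^2_x$ on a closed surface (being careful that $x$ is not constant and that surface derivatives of $x$ produce $P_\Gamma$, not the identity) and then invoking symmetry of $S_\Gamma$; the remaining four laws are routine once the Reynolds transport identity and the vanishing of surface divergences of tangential fields are in hand.
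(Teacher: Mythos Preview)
Your proposal is correct and follows essentially the same route as the paper's proof: the paper also uses the conservative form \eqref{eq12}, invokes the transport identity $\frac{d}{dt}\int_{\Gamma(t)} g\,d\mathcal{H}^2_x = \int_{\Gamma(t)}\{D_t g + ({\rm div}_\Gamma v)g\}\,d\mathcal{H}^2_x$ (Lemmas \ref{lem27} and \ref{lem39}), eliminates the surface divergence terms via integration by parts on the closed surface using $S_\Gamma n = 0$, and for \eqref{eq121} computes componentwise that $\int_{\Gamma(t)} x\times{\rm div}_\Gamma S_\Gamma\,d\mathcal{H}^2_x$ reduces after integration by parts to $-\int_{\Gamma(t)}[S_\Gamma]_{ji}\,d\mathcal{H}^2_x + \int_{\Gamma(t)}[S_\Gamma]_{ij}\,d\mathcal{H}^2_x = 0$ by symmetry. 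Your Levi--Civita formulation of this last step is just a repackaging of the same computation.
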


Finally, we state barotropic compressible fluid systems. Using a chemical potential, we derive the pressure of barotropic compressible fluid on an evolving surface.
\begin{theorem}[Barotropic compressible fluid]\label{thm19}
Let $p \in C^1 ( ( 0, \infty ) )$. Under the hypotheses of Theorem \ref{thm14}, for each variation $\hat{x}^\varepsilon$,
\begin{equation*}
A_{B} [\hat{x}^\varepsilon ] := - \int_0^T \int_{\Gamma^\varepsilon (t)} \left\{ \frac{1}{2} \rho^\varepsilon (x,t) |v^\varepsilon (x,t)|^2 - p (\rho^\varepsilon (x,t) ) \right\} { \ }d \mathcal{H}_2^x d t. 
\end{equation*}
Then
\begin{equation*}
\frac{d}{d \varepsilon } \bigg|_{\varepsilon = 0} A_{B} [\hat{x}^\varepsilon ] = \int_0^T \int_{\Gamma (t)} \left\{ \rho D_t v + {\rm{grad}}_\Gamma \mathfrak{p} + \mathfrak{p} H_\Gamma n \right\} (x,t) \cdot z (x,t) { \ }d \mathcal{H}_2^x d t,
\end{equation*}
where $\mathfrak{p} = \mathfrak{p} (\rho ) = \rho p'(\rho) - p (\rho)$. Moreover, the two assertions hold:\\
\noindent $(\mathrm{i})$ For every $z \in [C_0^\infty (\mathcal{S}_T)]^3$, assume that 
\begin{equation*}
\int_0^T \int_{\Gamma (t)} \left\{ \rho D_t v + {\rm{grad}}_\Gamma \mathfrak{p} + \mathfrak{p} H_\Gamma n \right\} (x,t) \cdot z (x,t) { \ }d \mathcal{H}_x^2 d t= 0 . 
\end{equation*}
Then $( \rho , v , \mathfrak{p} )$ fulfill
\begin{equation*}
\rho D_t v + {\rm{grad}}_\Gamma \mathfrak{p} + \mathfrak{p} H_\Gamma n = 0 { \ }\text{ on } \mathcal{S}_T .
\end{equation*}
\noindent $(\mathrm{ii})$ For every $z \in [C_0^\infty (\mathcal{S}_T)]^3$ satisfying $z \cdot n =0$, assume that 
\begin{equation*}
\int_0^T \int_{\Gamma (t)} \left\{ \rho D_t v + {\rm{grad}}_\Gamma \mathfrak{p} + \mathfrak{p} H_\Gamma n \right\} (x,t) \cdot z (x,t){ \ }d \mathcal{H}_x^2 d t = 0. 
\end{equation*}
Then $( \rho , v , \mathfrak{p} )$ fulfill
\begin{equation*}
P_\Gamma \rho D_t v + {\rm{grad}}_\Gamma \mathfrak{p} = 0 { \ }\text{ on } \mathcal{S}_T. 
\end{equation*}
\end{theorem}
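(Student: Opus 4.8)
The plan is to reduce everything to Theorem~\ref{thm14} by splitting the barotropic action into a kinetic part and a ``pressure'' part. Write
$A_{B}[\hat{x}^\varepsilon] = A[\hat{x}^\varepsilon] + \tilde{A}[\hat{x}^\varepsilon]$,
where $A[\hat{x}^\varepsilon] = -\int_0^T\!\int_{\Gamma^\varepsilon(t)}\frac12\rho^\varepsilon|v^\varepsilon|^2\,d\mathcal{H}^2_x\,dt$ is exactly the action integral of Theorem~\ref{thm14} and $\tilde{A}[\hat{x}^\varepsilon] = \int_0^T\!\int_{\Gamma^\varepsilon(t)} p(\rho^\varepsilon(x,t))\,d\mathcal{H}^2_x\,dt$. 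Theorem~\ref{thm14} already yields $\frac{d}{d\varepsilon}\big|_{\varepsilon=0}A[\hat{x}^\varepsilon] = \int_0^T\!\int_{\Gamma(t)}\{\rho D_t v\}\cdot z\,d\mathcal{H}^2_x\,dt$, so the whole task is to show
$\frac{d}{d\varepsilon}\big|_{\varepsilon=0}\tilde{A}[\hat{x}^\varepsilon] = \int_0^T\!\int_{\Gamma(t)}\{{\rm{grad}}_\Gamma\mathfrak{p} + \mathfrak{p} H_\Gamma n\}\cdot z\,d\mathcal{H}^2_x\,dt$.

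For $\tilde{A}$, I would transport the integral to the fixed reference surface $\Gamma(0)$ via the flow map $\hat{x}^\varepsilon(\cdot,t)\colon\Gamma(0)\to\Gamma^\varepsilon(t)$. Let $J^\varepsilon(\xi,t)$ be the Jacobian of the induced change of surface measure (the square root of the determinant of the Riemannian metric of Section~\ref{sect3}) and set $R^\varepsilon(\xi,t):=\rho^\varepsilon(\hat{x}^\varepsilon(\xi,t),t)$, so that $\tilde{A}[\hat{x}^\varepsilon] = \int_0^T\!\int_{\Gamma(0)} p(R^\varepsilon)\,J^\varepsilon\,d\mathcal{H}^2_\xi\,dt$. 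Since $\rho^\varepsilon$ is the density of the fluid on $\Gamma^\varepsilon(t)$ and $\Gamma^\varepsilon(0)=\Gamma(0)$, the surface transport theorem (Proposition~\ref{prop13}) gives, for every $\Omega_0\subset\Gamma(0)$, that $\int_{\Omega^\varepsilon(t)}\rho^\varepsilon\,d\mathcal{H}^2_x = \int_{\Omega_0}R^\varepsilon J^\varepsilon\,d\mathcal{H}^2_\xi$ is independent of $t$ and hence equals its value $\int_{\Omega_0}\rho_0\,d\mathcal{H}^2_\xi$ at $t=0$; as $\Omega_0$ is arbitrary, $R^\varepsilon J^\varepsilon\equiv\rho_0$. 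Eliminating $J^\varepsilon$ turns the integrand into $\frac{p(R^\varepsilon)}{R^\varepsilon}\rho_0$, which depends on $\varepsilon$ only through $R^\varepsilon$, while the domain $\Gamma(0)$ and the weight $\rho_0$ are $\varepsilon$-independent. Differentiating under the integral sign and using $\frac{d}{dR}\!\big(\frac{p(R)}{R}\big)=\frac{Rp'(R)-p(R)}{R^2}=\frac{\mathfrak{p}(R)}{R^2}$ gives $\frac{d}{d\varepsilon}\big|_{\varepsilon=0}\tilde{A}=\int_0^T\!\int_{\Gamma(0)}\frac{\mathfrak{p}(R^0)}{(R^0)^2}\,\partial_\varepsilon R^\varepsilon|_{\varepsilon=0}\,\rho_0\,d\mathcal{H}^2_\xi\,dt$, where $R^0(\xi,t)=\rho(\hat{x}(\xi,t),t)$ by the hypotheses of Theorem~\ref{thm14}.

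Next I would compute $\partial_\varepsilon R^\varepsilon|_{\varepsilon=0}$ by differentiating $R^\varepsilon J^\varepsilon\equiv\rho_0$: this gives $\partial_\varepsilon R^\varepsilon|_{\varepsilon=0}=-R^0\,(\partial_\varepsilon J^\varepsilon/J^\varepsilon)|_{\varepsilon=0}$, and a first-variation-of-area computation (as in Section~\ref{sect4}; for fixed $t$ the family $\varepsilon\mapsto\Gamma^\varepsilon(t)$ is generated by the variation field $z(\cdot,t)$ on $\Gamma(t)$ since $\partial_\varepsilon\hat{x}^\varepsilon|_{\varepsilon=0}=\hat{y}=z\circ\hat{x}$) gives $(\partial_\varepsilon J^\varepsilon/J^\varepsilon)|_{\varepsilon=0}=({\rm{div}}_\Gamma z)(\hat{x}(\xi,t),t)$. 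Substituting, using $\rho_0/R^0=J^0$, and changing variables back to $\Gamma(t)$ collapses the expression to $\frac{d}{d\varepsilon}\big|_{\varepsilon=0}\tilde{A}=-\int_0^T\!\int_{\Gamma(t)}\mathfrak{p}(\rho)\,{\rm{div}}_\Gamma z\,d\mathcal{H}^2_x\,dt$. An integration by parts on the closed surface $\Gamma(t)$ — the product rule ${\rm{div}}_\Gamma(\mathfrak{p} z)=\mathfrak{p}\,{\rm{div}}_\Gamma z + z\cdot{\rm{grad}}_\Gamma\mathfrak{p}$ together with the surface divergence theorem in the form stated in Section~\ref{sect2} — converts this into $\int_0^T\!\int_{\Gamma(t)}\{{\rm{grad}}_\Gamma\mathfrak{p}+\mathfrak{p} H_\Gamma n\}\cdot z\,d\mathcal{H}^2_x\,dt$, and adding the kinetic contribution from Theorem~\ref{thm14} proves the displayed variational identity. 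Finally, (i) follows from the fundamental lemma of the calculus of variations applied to arbitrary $z\in[C_0^\infty(\mathcal{S}_T)]^3$ and the continuity of $\rho D_t v + {\rm{grad}}_\Gamma\mathfrak{p}+\mathfrak{p} H_\Gamma n$; for (ii), restricting to tangential $z$ (so $z\cdot n=0$) annihilates the normal term $\mathfrak{p} H_\Gamma n\cdot z$, and the fundamental lemma for tangential test fields gives $P_\Gamma(\rho D_t v + {\rm{grad}}_\Gamma\mathfrak{p})=0$, which is $P_\Gamma\rho D_t v + {\rm{grad}}_\Gamma\mathfrak{p}=0$ since ${\rm{grad}}_\Gamma\mathfrak{p}$ is already tangential.

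The main obstacle — and the conceptual heart — is the reduction step in the second and third paragraphs: recognizing that the continuity equation forces $\rho^\varepsilon J^\varepsilon\equiv\rho_0$, so that the measure distortion $J^\varepsilon$ can be traded for $R^\varepsilon$, and then pairing this with the first-variation-of-area identity. This is exactly the mechanism that makes the chemical potential $p$ produce the pressure $\mathfrak{p}=\rho p'(\rho)-p(\rho)$; the remaining work is bookkeeping and a surface integration by parts, where one must only keep track of the paper's sign convention for the mean curvature $H_\Gamma$.
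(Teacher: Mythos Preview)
Your proposal is correct and follows essentially the same route as the paper's proof (Proposition~4.2 together with the short argument at the end of Section~4.1). The paper also splits $A_B$ into the kinetic part handled by Theorem~\ref{thm14} and the pressure part, pulls the latter back via the flow map, uses the conservation $\rho^\varepsilon\sqrt{J^\varepsilon}=\tilde{\rho}_0$ from Lemma~\ref{lem314}, differentiates in $\varepsilon$ using Lemma~\ref{lem41} (your first-variation-of-area identity $(\partial_\varepsilon J^\varepsilon/J^\varepsilon)|_{\varepsilon=0}={\rm div}_\Gamma z$), and finishes with the surface integration by parts of Lemma~\ref{lem28}; the only cosmetic difference is that the paper keeps $\sqrt{J^\varepsilon}$ as the variable in $p(\tilde{\rho}_0/\sqrt{J^\varepsilon})\sqrt{J^\varepsilon}$ and applies the product rule, whereas you substitute $J^\varepsilon=\rho_0/R^\varepsilon$ first and differentiate $p(R^\varepsilon)/R^\varepsilon$---both produce $-\mathfrak{p}\,{\rm div}_\Gamma z$ on $\Gamma(t)$, and parts~(i) and~(ii) are handled identically.
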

Remark: We prove Theorems \ref{thm14} and \ref{thm19} in Section \ref{sect3}, Theorems \ref{thm15}-\ref{thm17} in Section \ref{sect4}, and Theorem \ref{thm18} in subsection \ref{subsec53}.

Let us state three difficulties in the derivation of our compressible fluid systems on an evolving surface and the ideas to overcome these difficulties. The first difficultly is to drive the pressure of compressible fluid on an evolving surface. In order to derive the pressure term of our compressible fluid systems, we focus our attention on the power density for the work done by the pressure of the fluid and a chemical potential. The second difficulty is to derive viscous term of our compressible fluid systems on an evolving surface. To overcome the difficult point, we calculate variations of both the energy dissipation due to the viscosities and the work done by both the pressure and exterior force. The third difficulty is to investigate the internal energy, enthalpy, entropy, and free energy of the fluid on an evolving surface. To solve the problem, we introduce the notation $D_t^N$ and apply the first and second laws of thermodynamics, where $D_t^N f = \partial_t f + ( v \cdot n ) (n , \nabla ) f $.

Let us explain two essential strategies to analyze fluid flow on an evolving surface. The first one is to apply both a flow map on the evolving surface and the Riemannian metric induced by the flow map. By using them, we deal with functions on the evolving surface and give a mathematical validity of several energies of the fluid on the evolving surface. The second one is to use both an energetic variational approach and the first law of thermodynamics. Combining an energetic variational principle and the thermodynamic theory, we derive our compressible fluid systems on the evolving surface. The energetic variational method of this paper improves the one from Koba-Liu-Giga \cite{KLG}. The paper \cite{KLG} improved the energetic variational approach, which had been studied by Strutt \cite{Str73} and Onsager (\cite{Ons31a}, \cite{Ons31b}), to derive incompressible fluid systems on an evolving surface. Koba-Sato \cite{KS17} applied their energetic variational approaches to derive their non-Newtonian fluid systems in domains.

Let us state the history of the surface stress tensor determined by the \\Boussinesq-Scriven law. Boussinesq \cite{Bou13} first considered the existence of surface fluid. Scriven \cite{Scr60} introduced the surface stress tensor to apply it to arbitrary surfaces. Slattery \cite{Sla64} studied some properties of the surface stress tensor determined by the Boussinesq-Scriven law. After that many researchers have made models of two-phase flow system with interfacial phenomena such as surface tension, surface flow, and phase transition. See Gatignol-Prud'homme \cite{GP01}, Slattery-Sagis-Oh \cite{SSO07}, and the references given there. Bothe and Pr\"{u}ss \cite{BP12} used the Boussinesq-Scriven law to make a two-phase flow system with surface viscosity and surface tension. This paper gives a mathematical validity of the Boussinesq-Scriven law (see Theorem \ref{thm15}, Section \ref{sect3}, and \cite[Theorem 2.1 and Section 3]{KS17}). 

We next explain some mathematical derivations of incompressible fluid systems on a manifold. Arnol'd \cite{Arn66}, \cite{Arn97} applied the Lie group of diffeomorphisms to derive an inviscid incompressible fluid system on a manifold. See also Ebin-Marsden \cite{EM70}. Taylor \cite{Tay92} introduced a viscous incompressible fluid system on a manifold from their physical sense. Mitsumatsu and Yano \cite{MY02} used their energetic approach to derive a viscous incompressible fluid system on a manifold. Arnaudon and Cruzeiro \cite{AC12} applied stochastic variational approach to derive a viscous incompressible fluid system on a manifold. Remark that Taylor \cite{Tay92}, Mitsumatsu-Yano \cite{MY02}, and Arnaudon-Cruzeiro \cite{AC12} used Taylor's strain rate $\{ (\nabla_\mathcal{M} u ) + { }^t (\nabla_{\mathcal{M}} u) \}/2$ with surface divergence-free to derive their systems, where $\nabla_{\mathcal{M}}$ is the covariant derivative.

Finally we state mathematical derivations of fluid systems on an evolving surface. Dziuk and Elliott \cite{DE07} applied the transport theorem (Leibniz formula) on an evolving surface and their surface flux to make several fluid systems on the evolving surface. Koba-Liu-Giga \cite{KLG} derived incompressible fluid systems on an evolving surface by their energetic variational approach. They applied the potential theory (the Helmholtz-Weyl decomposition) to derive the pressure of incompressible fluid flow on an evolving surface.

\section{Preliminaries}\label{sect2}

Let us prepare useful tools to analyze fluid flow on an evolving surface. We first describe evolving surfaces, and then we introduce function spaces and notation such as ${\rm{div}}_\Gamma$, ${\rm{grad}}_\Gamma$, $P_\Gamma$, and $H_\Gamma$. Finally we investigate fundamental properties of surface gradient, surface divergence, an orthogonal projection to a tangent space, surface strain rate tensors, surface stress tensor, material derivatives, and integration by parts on evolving surfaces.

\subsection{Evolving Surfaces}\label{subsec21}
Let us recall evolving surfaces. 
\begin{definition}[$2$-dimensional $C^2$-surfaces in $\mathbb{R}^3$]\label{def21}{ \ }
A set $\Gamma_0$ in $\mathbb{R}^3$ is called a $C^2$-surface in $\mathbb{R}^3$ if for each point $x_0 \in \Gamma_0$ there are $r >0$ and $\phi \in C^2 (B_{r}(x_0))$ such that
\begin{equation*}
\Gamma_0 \cap B_r (x_0) = \{ x = { }^t (x_1,x_2,x_3) \in B_{r}(x_0);{ \ }\phi (x) = 0 \}
\end{equation*}
and that 
\begin{equation*}
\nabla_x \phi = { }^t \left( \frac{\partial \phi}{\partial x_1},  \frac{\partial \phi}{\partial x_2}, \frac{\partial \phi}{\partial x_3} \right) \neq (0,0,0) { \ }\text{ on }B_r (x_0).
\end{equation*}
Here $B_r (x_0) := \{ x \in \mathbb{R}^3; { \ }|x - x_0| < r \}$. In this paper we call a $2$-dimensional $C^2$-surface in $\mathbb{R}^3$ \emph{a $2$-dimensional surface in $\mathbb{R}^3$}.
\end{definition}
\begin{definition}[Evolving $2$-dimensional $C^{2,1}$-surfaces in $\mathbb{R}^3$]\label{def22}{ \ }\\
Let $\Gamma (t) \{ = \{ \Gamma (t) \}_{0 \leq t < T} \}$ be a set in $\mathbb{R}^3$ depending on time $t \in [0,T)$ for some $T \in (0 , \infty]$. A family $\{ \Gamma (t) \}_{0 \leq t < T}$ is called an evolving $2$-dimensional $C^{2,1}$-surface in $\mathbb{R}^3$ on $[0,T)$ if the following two properties hold:
\begin{enumerate}\setlength{\leftskip}{2mm}
\renewcommand{\labelenumi}{(\roman{enumi})}
\item $\Gamma (0)$ is a $2$-dimensional surface in $\mathbb{R}^3$.
\item For each $t_0 \in (0,T)$ and $x_0 \in \Gamma (t_0)$, there are $r_1 ,r_2 >0$ and\\ $\psi \in C^{2,1} ( B_{r_1}(x_0) \times B_{r_2} (t_0))$ such that
\begin{equation*}
\Gamma (t_0) \cap B_{r_1} (x_0) = \{ x = { }^t ( x_1 , x_2 , x_3 ) \in B_{r_1}(x_0);{ \ }\psi (x,t_0) = 0 \}
\end{equation*}
and that 
\begin{equation*}
\nabla_x \psi = { }^t \left( \frac{\partial \psi}{\partial x_1}, \frac{\partial \psi}{\partial x_2}, \frac{\partial \psi}{\partial x_3} \right) \neq (0,0,0) { \ }\text{ on }B_{r_1}(x_0) \times B_{r_2}(t_0) .
\end{equation*}
Here $B_{r_1} (x_0) := \{ x \in \mathbb{R}^3; { \ }|x - x_0| < r_1 \}$, $B_{r_2} (t_0) := \{ t \in \mathbb{R}_+;{ \ } |t - t_0| < r_2 \}$, and
\begin{multline*}
C^{2,1} ( B_{r_1} (x_0) \times B_{r_2} (t_0) ) := \{ f \in C ( B_{r_1} (x_0) \times B_{r_2} (t_0) ) ; \\{ \ } \partial_i f, \partial_j \partial_i f, \partial_t f, \partial_i \partial_t f, \partial_j \partial_i \partial_t f \in C ( B_{r_1} (x_0) \times B_{r_2} (t_0) ) \text{ for }i,j=1,2,3 \} .
\end{multline*}
\end{enumerate}
Throughout this paper we write $\Gamma (t)$ instead of $\{ \Gamma (t) \}_{ 0 \leq t < T}$.
\end{definition}

\begin{definition}[Evolving surfaces]\label{def23}{ \ \ \ \ \ }
Let $\{ \Gamma (t) \}_{0 \leq t < T}$ be an evolving $2$-dimensional $C^{2,1}$-surface in $\mathbb{R}^3$ on $[0,T)$ for some $T \in (0, \infty ]$. We simply call $\Gamma ( t )$ an \emph{evolving $2$-dimensional surface in $\mathbb{R}^3$} on $[0,T)$ if $\Gamma (t)$ is a closed Riemannian $2$-dimensional manifold for each fixed $t \in [0, T )$. 
\end{definition}

\begin{definition}[Variations of an evolving surface]\label{def24}{ \ }\\
Let $\{ \Gamma (t) \}_{0 \leq t < T}$ be an evolving $2$-dimensional surface in $\mathbb{R}^3$ on $[0,T)$ for some $T \in (0,\infty ]$. For each $-1 < \varepsilon < 1$, let $\{ \Gamma^\varepsilon (t) \}_{0 \leq t < T}$ be an evolving $2$-dimensional surface in $\mathbb{R}^3$ on $[0,T)$. We call $\Gamma^\varepsilon ( t )$ a \emph{variation of $\Gamma(t)$} if the following two properties hold:
\begin{enumerate}\setlength{\leftskip}{2mm}
\renewcommand{\labelenumi}{(\roman{enumi})}
\item For each $0 \leq t <T$,
\begin{equation*}
\left| \int_{\Gamma (t) \setminus \Gamma^\varepsilon (t)} 1 {  \ } d\mathcal{H}_x^2 \right| + \left| \int_{\Gamma^\varepsilon (t) \setminus \Gamma (t)} 1 { \ }d \mathcal{H}_x^2 \right| \leq |\varepsilon| .
\end{equation*}
\item For each $0 \leq t <T$ and $-1 < \varepsilon_0 <1$,
\begin{equation*}
\lim_{\varepsilon \to \varepsilon_0}\left| \int_{\Gamma^{\varepsilon_0} (t) \setminus \Gamma^\varepsilon (t)} 1 {  \ } d\mathcal{H}_x^2 \right| + \left| \int_{\Gamma^\varepsilon (t) \setminus \Gamma^{\varepsilon_0} (t)} 1 { \ }d \mathcal{H}_x^2 \right| = 0.
\end{equation*}
\end{enumerate}
Here $d \mathcal{H}_x^2$ denotes the $2$-dimensional Hausdorff measure, that is, $\int_{\Gamma (t)}1 { \ }d \mathcal{H}_x^2$ is the surface area of $\Gamma (t)$. 
\end{definition}
\noindent Note that $\Gamma^\varepsilon (t)|_{\varepsilon = 0} = \Gamma (t)$ by definition.

\subsection{Function Spaces and Notation}\label{subsec22}
We introduce functions on surfaces and evolving surfaces. Let $\Gamma_0$ be a $2$-dimensional surface in $\mathbb{R}^3$, and let $\Gamma (t)$ be an evolving $2$-dimensional $C^{2,1}$-surface in $\mathbb{R}^3$ on $[0,T)$ for some $T \in (0 , \infty ]$. Set
\begin{equation*}
\mathcal{S}_T \equiv \mathcal{S}_{T,\Gamma (t)} := \left\{ (x , t )= { }^t (x_1,x_2,x_3 , t) \in \mathbb{R}^4; { \ } (x , t) \in \bigcup_{0<t <T} \{ \Gamma (t) \times \{ t \} \}  \right\}.
\end{equation*}
For each $m \in \mathbb{N} \cup \{ 0 , \infty \}$ we define
\begin{align*}
C^m ( \Gamma_0 ) := & \{ f:\Gamma_0 \to \mathbb{R};{ \ } g|_{\Gamma_0} =f \text{ for some }g \in C^m ( \mathbb{R}^3 ) \}, \\
C^m_0 (\Gamma_0 ) := & \{ f \in C^m (\Gamma_0 ) ; { \ }\text{supp}f \text{ does not intersect the boundary of } \Gamma_0 \},\\
C ( \mathcal{S}_T ) := &\{ f: \mathcal{S}_T \to \mathbb{R} ; { \ } g|_{\mathcal{S}_T} =f \text{ for some }g\in C(\mathbb{R}^3 \times \mathbb{R})  \} ,\\
C_0 ( \mathcal{S}_T ) :=& \{ f \in C (\mathcal{S}_T ) ; { \ }\text{supp}f \text{ includes } \mathcal{S}_T \text{ and }\\
&\text{supp}f( \cdot , t) \text{ does not intersect the geometric boundary of } \Gamma (t)  \},
\end{align*}
where $C^0 (\mathbb{R}^3) : = C (\mathbb{R}^3)$. Moreover, we write
\begin{align*}
& C^{1,0} ( \mathcal{S}_T ) := \{ f \in C ( \mathcal{S}_T ) ; { \ } \partial_i f \in C ( \mathcal{S}_T ) \text{ for each }i=1,2,3 \} ,\\
& C^{2,1} ( \mathcal{S}_T ) := \{ f \in C^{1,0} ( \mathcal{S}_T ) ; { \ } \partial_j \partial_i f, \partial_t f, \partial_i \partial_t f, \partial_j \partial_i \partial_t f \in C ( \mathcal{S}_T ), i,j=1,2,3 \} ,\\
&C_0^{2,1} ( \mathcal{S}_T ) := C^{2,1} (\mathcal{S}_T) \cap C_0 (\mathcal{S}_T).
\end{align*}
Similarly, we define $C^{m,n} (\mathcal{S}_T)$, $C^{m,n}_0 (\mathcal{S}_T)$, and $C_0^\infty (\mathcal{S}_T) := C^\infty (\mathcal{S}_T) \cap C_0 (\mathcal{S}_T)$, where
\begin{equation*}
C^\infty ( \mathcal{S}_T ) := \{ f: \mathcal{S}_T \to \mathbb{R};{ \ } g|_{\mathcal{S}_T} =f \text{ for some }g \in C^\infty ( \mathbb{R}^3 \times \mathbb{R} ) \}.
\end{equation*}

For $- 1< \varepsilon <1 $, let $\Gamma^\varepsilon (t)$ be a variation of $\Gamma ( t )$. Set
\begin{equation*}
\mathcal{S}^\varepsilon_T \equiv \mathcal{S}_{T,\Gamma^\varepsilon (t)} := \left\{ (x , t )= { }^t (x_1,x_2,x_3 , t) \in \mathbb{R}^4; (x , t) \in \bigcup_{0<t <T} \{ \Gamma^\varepsilon (t) \times \{ t \} \}  \right\}.
\end{equation*}

Let us explain some conventions used in this paper. We use italic characters $i,j,k,\ell,i' j'$ as $1,2,3$, and use Greek characters $\alpha, \beta, \zeta, \eta,\alpha',\beta'$ as $1,2$. Moreover, we often use the following Einstein summation convention:
\begin{equation*}
a_{i j}b_j = \sum_{j=1}^3 a_{i j} b_j, { \ }a_{i j} b_{i j \ell} = \sum_{i, j=1}^3 a_{i j}b_{i j \ell},{ \ }a_{i j} b_{ i \alpha} c_{\alpha \beta} = \sum_{i =1}^3 \sum_{\alpha = 1}^2 a_{i j} b_{i \alpha} c_{\alpha \beta}.
\end{equation*}

Let $\mathcal{X}$ be a set. The symbol $M_{p \times q} ( \mathcal{X}) $ denotes the set of all $p \times q$ matrices whose component belonging to $\mathcal{X}$, that is, $M \in M_{p \times q} ( \mathcal{X}) $ if and only if
\begin{equation*}M=
\begin{pmatrix}
[M]_{11}& [M]_{12}& \cdots & [M]_{1 q}\\
[M]_{21}& [M]_{22} &\cdots & [M]_{2 q}\\
\vdots& \vdots &  & \vdots\\ 
[M]_{p 1}& [M]_{ p 2} & \cdots & [M]_{p q}
\end{pmatrix}
\end{equation*}
and $[M]_{\mathfrak{i} \mathfrak{j}} \in \mathcal{X}$ $(\mathfrak{i}=1,2,\ldots, p, { \ }\mathfrak{j}=1,2,\ldots, q)$, where $[M]_{\mathfrak{i} \mathfrak{j}}$ denotes the $(\mathfrak{i},\mathfrak{j})$-th component of the matrix $M$. Remark that we can write $M = ([M]_{\mathfrak{i} \mathfrak{j}})_{p \times q}$.

Next we introduce important notation. By $n = n (x, t) = { }^t ( n_1 , n_2 , n_3)$ we mean the \emph{unit outer normal vector} of $\Gamma (t)$ at $x \in \Gamma (t)$ for each fixed $t \in [0,T )$. In this paper, we use the following notation:
\begin{align*}
&\partial_i^{\Gamma}:= ( \delta_{i j } - n_i n_j )\partial_j { \ \ }\left( = \sum_{j=1}^3 (\delta_{i j}- n_i n_j )\partial_j \right) , \\
&\nabla_{\Gamma} := { }^t (\partial_1^{\Gamma}, \partial_2^{\Gamma} , \partial_3^{\Gamma}) ,\\
& \Delta_{\Gamma} := (\partial_1^{\Gamma})^2 + (\partial_2^{\Gamma})^2 + (\partial_3^{\Gamma})^2  .
\end{align*}
Here $\delta_{ij}$ is Kronecker's delta. Moreover, for $f = { }^t ( f_1 , f_2 ,f_3 ) \in [ C^{1}( \Gamma ( t ) )]^3$ and $f_{i j}, g \in C^1 ( \Gamma (t))$,
\begin{align*}
{\rm{div}}_\Gamma f & := \partial_1^{\Gamma} f_1 + \partial_2^{\Gamma} f_2 + \partial_3^{\Gamma}f_3,\\
{\rm{div}}_\Gamma (f_{i j})_{3 \times 3} & := { }^t ( \partial_j^{\Gamma} f_{1 j},  \partial_j^{\Gamma} f_{2 j} , \partial_j^{\Gamma}f_{3 j}),\\
{\rm{grad}}_\Gamma g & := \nabla_{\Gamma} g ={ }^t (\partial_1^\Gamma g , \partial_2^\Gamma g , \partial_3^\Gamma g ).
\end{align*}
Let $H_\Gamma$ and $P_\Gamma$ be the mean curvature of $\Gamma (t)$ and the orthogonal projection to a tangent space of $\Gamma (t)$ defined by
\begin{align}
H_\Gamma &= H_\Gamma (x,t) :=  - {\rm{div}}_\Gamma n, \label{eq21} \\
[P_\Gamma]_{i j} &= [P_\Gamma ( x , t )]_{i j} : = \delta_{i j} - n_i n_j{ \ }(i ,j = 1,2,3),\label{eq22}
\end{align}
respectively. By definition, we easily check that $P_\Gamma n = { }^t (0,0,0)$ and $P_\Gamma^2 = P_\Gamma$. Note that $P_\Gamma = I_{3 \times 3} - n \otimes n$ and that $n_1^2 + n_2^2 + n_3^2 = 1$. Note also that $P_\Gamma (\nabla g) = \nabla_\Gamma g$ and that $f = P_\Gamma f + (f \cdot n) n$.

For $- 1< \varepsilon <1 $, let $\Gamma^\varepsilon (t)$ be a variation of $\Gamma ( t )$.  By $n^\varepsilon = n^\varepsilon (x, t) = { }^t ( n^\varepsilon_1 , n_2^\varepsilon , n^\varepsilon_3)$ we mean the \emph{unit outer normal vector} of $\Gamma^\varepsilon (t)$ at $x \in \Gamma^\varepsilon (t)$ for each fixed $t \in [0,T )$. By definition, we see that for $g \in C^1 ( \Gamma^\varepsilon (t) )$,
\begin{equation*}
\partial_i^{\Gamma^\varepsilon}g = ( \delta_{i j } - n_i^\varepsilon n_j^\varepsilon )\partial_j g { \ \ }\left( = \sum_{j=1}^3 (\delta_{i j}- n_i^\varepsilon n_j^\varepsilon )\partial_j g \right).
\end{equation*}
Moreover, for $f = { }^t ( f_1 , f_2 ,f_3 ) \in [ C^{1}( \Gamma^\varepsilon ( t ) )]^3$ and $f_{i j} , g \in C^1 ( \Gamma^\varepsilon (t))$,
\begin{align*}
{\rm{div}}_{\Gamma^\varepsilon} f =& \partial_1^{\Gamma^\varepsilon} f_1 + \partial_2^{\Gamma^\varepsilon} f_2 + \partial_3^{\Gamma^\varepsilon}f_3,\\
{\rm{div}}_{\Gamma^\varepsilon } (f_{i j})_{3 \times 3} = & { }^t ( \partial_j^{\Gamma^\varepsilon} f_{1 j},  \partial_j^{\Gamma^\varepsilon} f_{2 j} , \partial_j^{\Gamma^\varepsilon}f_{3 j}),\\
{\rm{grad}}_{\Gamma^\varepsilon} g = & \nabla_{\Gamma^\varepsilon} g = { }^t (\partial_1^{\Gamma^\varepsilon} g , \partial_2^{\Gamma^\varepsilon} g , \partial_3^{\Gamma^\varepsilon} g ).
\end{align*}
Let $H_{\Gamma^\varepsilon}$ and $P_{\Gamma^\varepsilon}$ be the mean curvature of $\Gamma^\varepsilon (t)$ and the orthogonal projection to a tangent space of $\Gamma^\varepsilon (t)$ defined by
\begin{align}
&H_{\Gamma^\varepsilon} = H_{\Gamma^\varepsilon} (x,t) :=  - {\rm{div}}_{\Gamma^\varepsilon} n^\varepsilon , \label{eq23} \\
&[P_{\Gamma^\varepsilon}]_{i j} = [P_{\Gamma^\varepsilon} ( x , t )]_{i j} : = \delta_{i j} - n_i^\varepsilon n_j^\varepsilon{ \ }(i ,j = 1,2,3).\label{eq24}
\end{align}

Let us introduce three strain rate tensors and one surface stress tensor. For every $f = { }^t (f_1,f_2,f_3) \in [C^1 (\Gamma (t))]^3$ and $g , \mu, \lambda \in C(\Gamma (t))$,
\begin{align*}
D (f) &:= \frac{1}{2} \{ (\nabla f ) + { }^t (\nabla f ) \} ,\\
\mathbb{D}_{\Gamma} (f) &:= \frac{1}{2} \{ (\nabla_{\Gamma} f ) + { }^t (\nabla_{\Gamma} f ) \} ,\\
D_{\Gamma} (f) &:= \frac{1}{2} P_\Gamma \{ (\nabla f ) + { }^t (\nabla f ) \} P_\Gamma ,\\
S_\Gamma (f , g , \mu , \lambda ) & := 2 \mu D_\Gamma (f) + \lambda P_\Gamma ({\rm{div}}_\Gamma f) - P_\Gamma g. 
\end{align*}
We call $D (f)$ a strain rate tensor, $D_\Gamma (f)$ a surface strain rate tensor, and \\
$S_\Gamma (f, g, \mu, \lambda )$ a surface stress tensor. We also call $\mathbb{D}_\Gamma (f)$ a tangential strain rate and $D_\Gamma (f)$ a projected strain rate. By the definition of $P_\Gamma$ and $\nabla_\Gamma$, we find that
\begin{equation*}
D_\Gamma (f) = \frac{1}{2} \{ (P_\Gamma (\nabla_\Gamma f )) + { }^t (P_\Gamma ( \nabla_\Gamma f )) \}.
\end{equation*}
See Slattery-Sagis-Oh \cite{SSO07} and Lemma \ref{lem26}.

\subsection{Properties of Operators for Evolving Surfaces}\label{subsec23}

Let us now study several operators such as ${\rm{div}}_\Gamma$, ${\rm{grad}}_\Gamma$, and $P_\Gamma$, and the strain rate tensors $D (f)$, $\mathbb{D}_\Gamma (f)$, $D_\Gamma (f)$, and the surface stress tensor $S_\Gamma$. Let $\Gamma (t)$ be an evolving $2$-dimensional $C^{2,1}$-surface in $\mathbb{R}^3$ on $[0,T)$ for some $T \in (0,\infty]$, and let $\Gamma^\varepsilon (t)$ is a variation of $\Gamma (t)$.
\begin{lemma}[Properties of surface gradient and divergence]\label{lem25}{ \ }\\
$(\mathrm{i})$ For every $f \in C^1 ( \Gamma (t)) $ and $v = { }^t (v_1,v_2,v_3) \in [ C^1 (\Gamma (t)) ]^3$,
\begin{align}
P_\Gamma {\rm{grad}}_\Gamma f & = {\rm{grad}}_\Gamma f \notag,\\
n \cdot {\rm{grad}}_\Gamma f  & = 0, \notag \\
{\rm{div}}_\Gamma (P_\Gamma f) & = {\rm{grad}}_\Gamma f  + f H_\Gamma n, \label{eq27}\\
P_\Gamma {\rm{div}}_\Gamma ( P_\Gamma f) & = {\rm{grad}}_\Gamma f, \notag \\
{\rm{div}}_\Gamma ((P_\Gamma f) v ) & = ({\rm{grad}}_\Gamma f ) \cdot v + f H_\Gamma (n \cdot v) + f ({\rm{div}}_\Gamma v)  \label{eq29},\\
(v , \nabla ) f & = (v , \nabla_\Gamma ) f + (v \cdot n) (n , \nabla ) f \label{eq210}.
\end{align}
$(\mathrm{ii})$ For every $f \in C^1 ( \Gamma^\varepsilon (t)) $ and $v = { }^t (v_1,v_2,v_3) \in [ C^1 (\Gamma^\varepsilon (t)) ]^3$,
\begin{align*}
P_{\Gamma^\varepsilon} {\rm{grad}}_{\Gamma^\varepsilon} f & = {\rm{grad}}_{\Gamma^\varepsilon} f,\\
n^\varepsilon \cdot {\rm{grad}}_{\Gamma^\varepsilon} f  & = 0,\\
{\rm{div}}_{\Gamma^\varepsilon} (P_{\Gamma^\varepsilon} f ) & = {\rm{grad}}_{\Gamma^\varepsilon} f  + f H_{\Gamma^\varepsilon} n^\varepsilon, \\
P_{\Gamma^\varepsilon} {\rm{div}}_{\Gamma^\varepsilon} (P_{\Gamma^\varepsilon} f) & = {\rm{grad}}_{\Gamma^\varepsilon} f , \\
{\rm{div}}_{\Gamma^\varepsilon} ((P_{\Gamma^\varepsilon} f) v ) & = ({\rm{grad}}_{\Gamma^\varepsilon} f ) \cdot v + f H_{\Gamma^\varepsilon} (n^\varepsilon \cdot v) + f ({\rm{div}}_{\Gamma^\varepsilon} v),\\
(v , \nabla ) f & = (v , \nabla_{\Gamma^\varepsilon} ) f + (v \cdot n^\varepsilon) (n^\varepsilon , \nabla ) f.
\end{align*}
\end{lemma}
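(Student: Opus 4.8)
The plan is to prove every identity by a direct calculation in components, using the Einstein summation convention, the definition $\partial_i^\Gamma = (\delta_{ij}-n_in_j)\partial_j$, the representation $P_\Gamma = I_{3\times3}-n\otimes n$ (so $[P_\Gamma]_{ij}=\delta_{ij}-n_in_j$), the normalization $n_in_i=|n|^2=1$, and $H_\Gamma=-\mathrm{div}_\Gamma n=-\partial_i^\Gamma n_i$ from \eqref{eq21}. Since each $f\in C^1(\Gamma(t))$ is by convention the restriction of a $C^1$ function on $\mathbb{R}^3$, the symbols $\partial_j f$ and $(n,\nabla)f$ below refer to that extension; the tangential quantities are extension-independent, and every identity holds for any admissible extension. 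First I would isolate two elementary facts used throughout: (a) $n_i(\delta_{ij}-n_in_j)=n_j-|n|^2n_j=0$, equivalently $P_\Gamma n=0$ and $n\cdot\nabla_\Gamma g=0$, so that ``contracting a normal factor against a tangential derivative kills it''; and (b) $\partial_i^\Gamma$ is a first-order differential operator, hence obeys the Leibniz rule $\partial_i^\Gamma(ab)=a\,\partial_i^\Gamma b+b\,\partial_i^\Gamma a$.

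With (a) and (b) in hand the six identities of (i) fall out in order. The first two are immediate: $P_\Gamma\mathrm{grad}_\Gamma f=P_\Gamma^2(\nabla f)=P_\Gamma(\nabla f)=\mathrm{grad}_\Gamma f$ using $P_\Gamma(\nabla g)=\nabla_\Gamma g$ and $P_\Gamma^2=P_\Gamma$, and $n\cdot\mathrm{grad}_\Gamma f=n_i\partial_i^\Gamma f=0$ by (a). For \eqref{eq27} (where $P_\Gamma f$ denotes the matrix field $fP_\Gamma$) I compute $[\mathrm{div}_\Gamma(P_\Gamma f)]_i=\partial_j^\Gamma\bigl(f(\delta_{ij}-n_in_j)\bigr)$; the Leibniz rule splits this into $\partial_i^\Gamma f$ plus $-f\bigl(n_i\,\partial_j^\Gamma n_j+n_j\,\partial_j^\Gamma n_i\bigr)$, where $\partial_j^\Gamma n_j=\mathrm{div}_\Gamma n=-H_\Gamma$ while $n_j\,\partial_j^\Gamma n_i=0$ by (a), yielding $\mathrm{div}_\Gamma(P_\Gamma f)=\mathrm{grad}_\Gamma f+fH_\Gamma n$. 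Applying $P_\Gamma$ and using $P_\Gamma n=0$ together with the first identity gives $P_\Gamma\mathrm{div}_\Gamma(P_\Gamma f)=\mathrm{grad}_\Gamma f$. For \eqref{eq29} note $(P_\Gamma f)v=f P_\Gamma v$ has $i$-th component $f(v_i-n_i(n\cdot v))$; differentiating with $\partial_i^\Gamma$, summing over $i$, and using $n_i\partial_i^\Gamma f=0$, $\partial_i^\Gamma v_i=\mathrm{div}_\Gamma v$, and $\partial_i^\Gamma\bigl(n_i(n\cdot v)\bigr)=(n\cdot v)\mathrm{div}_\Gamma n+n_i\partial_i^\Gamma(n\cdot v)=-(n\cdot v)H_\Gamma$ (again by (a)) collects precisely into the asserted three terms. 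Finally \eqref{eq210} is the pointwise identity $v_j\partial_j f=v_j(\delta_{jk}-n_jn_k)\partial_k f+(v_jn_j)(n_k\partial_k f)$, i.e. the splitting of $\nabla f$ along the tangent space and $n$.

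Part (ii) is obtained verbatim from the computation above by replacing $n,\partial_i^\Gamma,\nabla_\Gamma,\mathrm{div}_\Gamma,P_\Gamma,H_\Gamma$ with $n^\varepsilon,\partial_i^{\Gamma^\varepsilon},\nabla_{\Gamma^\varepsilon},\mathrm{div}_{\Gamma^\varepsilon},P_{\Gamma^\varepsilon},H_{\Gamma^\varepsilon}$ (using $|n^\varepsilon|^2=1$); no new idea is needed. There is no genuine obstacle here — the whole lemma is bookkeeping. The one place to stay alert is the consistent handling of summation indices together with the repeated use of fact (a), in particular $n_j\,\partial_j^\Gamma n_i=0$ and $n_i\,\partial_i^\Gamma(\text{scalar})=0$, and remembering the sign convention $\mathrm{div}_\Gamma n=-H_\Gamma$ from \eqref{eq21}, which is exactly what produces the $+fH_\Gamma n$ (rather than $-fH_\Gamma n$) in \eqref{eq27} and \eqref{eq29}.
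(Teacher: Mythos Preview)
Your proof is correct. The paper itself does not give a proof of this lemma --- it simply states ``Since the proof of Lemma \ref{lem25} is not difficult, the proof is left to the reader'' --- and your direct component-wise computation using $n_i(\delta_{ij}-n_in_j)=0$, the Leibniz rule for $\partial_i^\Gamma$, and the sign convention $H_\Gamma=-\mathrm{div}_\Gamma n$ is exactly the routine verification the paper has in mind.
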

Since the proof of Lemma \ref{lem25} is not difficult, the proof is left to the reader.

The following lemma is useful to study the fluid flow on an evolving surface.
\begin{lemma}[Properties of strain and stress tensors]\label{lem26}{ \ }\\
\noindent $(\mathrm{i})$ For all $v = { }^t (v_1,v_2,v_3)$, $\varphi = { }^t ( \varphi_1 , \varphi_2 , \varphi_3 ) \in [C^1 (\Gamma (t))]^3$,
\begin{align}
P_\Gamma D (v) P_\Gamma = & P_\Gamma \mathbb{D}_\Gamma (v) P_\Gamma, \label{eq211}\\
D_\Gamma (v) : D_\Gamma (\varphi ) = & D_\Gamma (v) : \mathbb{D}_\Gamma ( \varphi ) \label{eq212}.
\end{align}
\noindent $(\mathrm{ii})$ For all $v = { }^t (v_1,v_2,v_3) \in [C^2 (\Gamma (t))]^3$ and $g , \mu , \lambda \in C^1 (\Gamma(t))$,
\begin{align}
{\rm{div}}_\Gamma \{ \mu D_\Gamma (v) v \} = & {\rm{div}}_\Gamma \{ \mu D_\Gamma (v) \} \cdot v + \mu D_\Gamma (v): D_{\Gamma}(v)\label{eq213},\\
{\rm{div}}_\Gamma \{ \lambda P_\Gamma ({\rm{div}}_\Gamma v ) v \} = & {\rm{div}}_\Gamma \{ \lambda P_\Gamma ({\rm{div}}_\Gamma v ) \} \cdot v + \lambda |{\rm{div}}_\Gamma v |^2 ,\label{eq214}\\
{\rm{div}}_\Gamma \{ S_\Gamma (v,g, \mu ,\lambda ) v \} & - {\rm{div}}_\Gamma \{ S_\Gamma (v,g,\mu,\lambda )\} \cdot v \notag \\
= & 2 \mu D_\Gamma (v) : D_\Gamma (v) + \lambda |{\rm{div}}_\Gamma v |^2 - g ({\rm{div}}_\Gamma v) .\label{eq215}
\end{align}
\noindent $(\mathrm{iii})$ For all $v = { }^t (v_1,v_2,v_3) \in [C^1 (\Gamma (t))]^3$ and $g , \mu , \lambda \in C (\Gamma(t))$,
\begin{equation}
S_\Gamma (v ,g , \mu , \lambda ) : D_\Gamma (v) = 2 \mu D_\Gamma (v): D_\Gamma (v) + \lambda |{\rm{div}}_\Gamma v |^2 - g ({\rm{div}}_\Gamma v ).\label{eq216}
\end{equation}

\end{lemma}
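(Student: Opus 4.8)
The plan is to prove all three parts by componentwise computation, using only the elementary facts $P_\Gamma={}^tP_\Gamma$, $P_\Gamma^2=P_\Gamma$, the identity $\partial_i^\Gamma=[P_\Gamma]_{ik}\partial_k$, and the observation that each $\partial_i^\Gamma$ is a first–order operator and hence obeys the product rule. The one place I would import external facts is \eqref{eq27} and \eqref{eq29} of Lemma \ref{lem25}, used only for the pressure term of \eqref{eq215}.

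For part $(\mathrm{i})$, I would first record that $\nabla_\Gamma v=P_\Gamma(\nabla v)$ componentwise, i.e. $[\nabla_\Gamma v]_{ij}=\partial_i^\Gamma v_j=[P_\Gamma]_{ik}\partial_k v_j$, so that $\mathbb{D}_\Gamma(v)=\tfrac12\{P_\Gamma(\nabla v)+{}^t(\nabla v)P_\Gamma\}$ while $D(v)=\tfrac12\{(\nabla v)+{}^t(\nabla v)\}$. Sandwiching by $P_\Gamma$ and cancelling with $P_\Gamma^2=P_\Gamma$ gives $P_\Gamma D(v)P_\Gamma=\tfrac12\{P_\Gamma(\nabla v)P_\Gamma+P_\Gamma{}^t(\nabla v)P_\Gamma\}=P_\Gamma\mathbb{D}_\Gamma(v)P_\Gamma$, which is \eqref{eq211}. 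For \eqref{eq212} I would note the general algebraic fact that if $P$ is a symmetric idempotent and the symmetric matrix $A$ satisfies $A=PAP$, then $A:B=\mathrm{tr}(AB)=\mathrm{tr}(PAP\,B)=\mathrm{tr}(A\,PBP)=A:(PBP)$ for every $B$; applying this with $A=D_\Gamma(v)$ (which indeed satisfies $P_\Gamma D_\Gamma(v)P_\Gamma=D_\Gamma(v)$), $B=\mathbb{D}_\Gamma(\varphi)$, and then invoking \eqref{eq211}, we get $D_\Gamma(v):\mathbb{D}_\Gamma(\varphi)=D_\Gamma(v):(P_\Gamma\mathbb{D}_\Gamma(\varphi)P_\Gamma)=D_\Gamma(v):D_\Gamma(\varphi)$.

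For part $(\mathrm{ii})$, I would apply the product rule to $\partial_i^\Gamma$ in ${\rm{div}}_\Gamma\{\mu D_\Gamma(v)v\}=\partial_i^\Gamma(\mu[D_\Gamma(v)]_{ij}v_j)$. The term differentiating the matrix becomes ${\rm{div}}_\Gamma\{\mu D_\Gamma(v)\}\cdot v$ after using the symmetry $[D_\Gamma(v)]_{ij}=[D_\Gamma(v)]_{ji}$ and relabelling the summation indices to match the definition of ${\rm{div}}_\Gamma$ of a matrix field; the term differentiating $v$ is $\mu[D_\Gamma(v)]_{ij}\partial_i^\Gamma v_j=\mu D_\Gamma(v):\mathbb{D}_\Gamma(v)$ (symmetrising the free indices against the symmetric $D_\Gamma(v)$), which equals $\mu D_\Gamma(v):D_\Gamma(v)$ by \eqref{eq212}; this is \eqref{eq213}. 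The same scheme gives \eqref{eq214} for ${\rm{div}}_\Gamma\{\lambda({\rm{div}}_\Gamma v)P_\Gamma v\}$, the only new ingredient being the contraction $[P_\Gamma]_{ij}\partial_i^\Gamma v_j=[P_\Gamma]_{ij}[P_\Gamma]_{ik}\partial_k v_j=[P_\Gamma]_{jk}\partial_k v_j=\partial_j^\Gamma v_j={\rm{div}}_\Gamma v$. Then \eqref{eq215} follows by linearity: writing $S_\Gamma(v,g,\mu,\lambda)=2\mu D_\Gamma(v)+\lambda({\rm{div}}_\Gamma v)P_\Gamma-gP_\Gamma$, I add \eqref{eq213} with $2\mu$ in place of $\mu$, add \eqref{eq214}, and handle the pressure piece with Lemma \ref{lem25}: by \eqref{eq27} and \eqref{eq29}, ${\rm{div}}_\Gamma\{gP_\Gamma v\}-{\rm{div}}_\Gamma\{gP_\Gamma\}\cdot v=g({\rm{div}}_\Gamma v)$, which contributes the $-g({\rm{div}}_\Gamma v)$ with the correct sign.

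For part $(\mathrm{iii})$, I would just expand $S_\Gamma(v,g,\mu,\lambda):D_\Gamma(v)$ termwise; beyond linearity the only identity needed is $P_\Gamma:D_\Gamma(v)={\rm{div}}_\Gamma v$, which follows from $P_\Gamma:D_\Gamma(v)=\mathrm{tr}(P_\Gamma D_\Gamma(v))=\mathrm{tr}(P_\Gamma\mathbb{D}_\Gamma(v))$ (using $D_\Gamma(v)=P_\Gamma\mathbb{D}_\Gamma(v)P_\Gamma$, cyclicity of the trace and $P_\Gamma^2=P_\Gamma$) together with the contraction computed above. This yields \eqref{eq216} immediately. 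I do not expect a serious obstacle: every step is a bounded index manipulation. The only points demanding care are the bookkeeping of which index is summed in the definition of ${\rm{div}}_\Gamma$ of a matrix field — and the matching use of the symmetry of $D_\Gamma(v)$ and of $P_\Gamma$ to pair the Leibniz terms with the stated right–hand sides — and the sign of the pressure contribution in \eqref{eq215}.
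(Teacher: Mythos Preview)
Your proposal is correct. For part $(\mathrm{ii})$ it follows essentially the same Leibniz--rule argument as the paper (including the use of \eqref{eq27}--\eqref{eq29} for the pressure contribution in \eqref{eq215}), so there is nothing to add there.

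For parts $(\mathrm{i})$ and $(\mathrm{iii})$ your route is genuinely more economical than the paper's. The paper proves \eqref{eq211} by writing out all four component formulas $[2P_\Gamma D(v)]_{ij}$, $[2P_\Gamma\mathbb D_\Gamma(v)]_{ij}$, $[2P_\Gamma D(v)P_\Gamma]_{ij}$, $[2P_\Gamma\mathbb D_\Gamma(v)P_\Gamma]_{ij}$ and matching the last two, and then establishes \eqref{eq212} by a page of index computations showing $D_\Gamma(v):\{D_\Gamma(\varphi)-\mathbb D_\Gamma(\varphi)\}=0$ via the identities $n_j\partial_j^\Gamma=0$ and $|n|^2=1$. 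Your argument replaces this with two algebraic observations: $\nabla_\Gamma v=P_\Gamma(\nabla v)$ gives \eqref{eq211} by idempotence of $P_\Gamma$, and then \eqref{eq212} follows in one line from the trace identity $A:B=A:(PBP)$ whenever $A=PAP$ with $P$ a symmetric idempotent, together with \eqref{eq211}. Similarly, for \eqref{eq216} the paper computes $\lambda P_\Gamma({\rm div}_\Gamma v):D_\Gamma(v)$ and $-P_\Gamma g:D_\Gamma(v)$ by expanding all components with $n_i$, whereas you reduce both to the single contraction $P_\Gamma:D_\Gamma(v)={\rm div}_\Gamma v$ via cyclicity of the trace. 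What the paper's approach buys is that it never abstracts away from coordinates (everything is visibly in terms of $n$ and $\partial_j^\Gamma$); what yours buys is brevity and a clearer explanation of \emph{why} \eqref{eq212} holds, namely that $D_\Gamma(v)$ already lives in the range of the projector.
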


\begin{proof}[Proof of Lemma \ref{lem26}] We first prove $(\mathrm{i})$. Fix $v = { }^t (v_1,v_2,v_3)$, $\varphi = { }^t ( \varphi_1 , \varphi_2 , \varphi_3 ) \in [C^1 (\Gamma (t))]^3$. By definition, we check that for each $i,j=1,2,3,$
\begin{align*}
[2 P_\Gamma D (v)]_{i j} & = \partial_i^\Gamma v_j + \partial_j v_i - n_i (n \cdot \partial_j v),\\
[2 P_\Gamma \mathbb{D}_\Gamma (v)]_{i j} &= \partial_i^\Gamma v_j + \partial_j^\Gamma v_i - n_i (n \cdot \partial_j^\Gamma v),\\
[2 P_\Gamma D (v) P_\Gamma ]_{i j} & = \partial_i^{\Gamma} v_j + \partial_j^{\Gamma} v_i - n_i (n \cdot \partial_j^{\Gamma} v ) - n_j ( n \cdot \partial_i^{\Gamma} v ),\\
[2 P_\Gamma \mathbb{D}_\Gamma (v) P_\Gamma ]_{i j} & = \partial_i^{\Gamma} v_j + \partial_j^{\Gamma} v_i - n_i (n \cdot \partial_j^{\Gamma} v ) - n_j ( n \cdot \partial_i^{\Gamma} v ).
\end{align*}
This shows \eqref{eq211}. Next we shall show that
\begin{equation}
D_\Gamma (v) : \{ D_\Gamma (\varphi ) - \mathbb{D}_\Gamma ( \varphi ) \} = 0. \label{eq217}
\end{equation}
We now use the Einstein summation convention. From
\begin{align*}
2 [D_\Gamma (v )]_{i j} = & \partial_i^{\Gamma} v_j + \partial_j^{\Gamma} v_i - n_i (n \cdot \partial_j^{\Gamma} v ) - n_j ( n \cdot \partial_i^{\Gamma} v ),\\
2 [D_\Gamma (\varphi )]_{i j} = & \partial_i^{\Gamma} \varphi_j + \partial_j^{\Gamma} \varphi_i - n_i (n \cdot \partial_j^{\Gamma} \varphi ) - n_j ( n \cdot \partial_i^{\Gamma} \varphi ),\\
2 [\mathbb{D}_\Gamma (\varphi )]_{i j} = & \partial_j^{\Gamma} \varphi_i + \partial_i^{\Gamma} \varphi_j = \partial_i^{\Gamma} \varphi_j + \partial_j^{\Gamma} \varphi_i,
\end{align*}
we find that
\begin{multline}\label{eq218}
4 D_\Gamma (v) : \{ D_\Gamma (\varphi ) - \mathbb{D}_\Gamma ( \varphi ) \} \\
= \{ \partial_i^{\Gamma} v_j + \partial_j^{\Gamma} v_i - n_i (n \cdot \partial_j^{\Gamma} v ) - n_j ( n \cdot \partial_i^{\Gamma} v )  \} \{ - n_i (n \cdot \partial_j^{\Gamma} \varphi ) - n_j ( n \cdot \partial_i^{\Gamma} \varphi ) \}\\
= - ( \partial_i^{\Gamma} v_j + \partial_j^{\Gamma} v_i) \{ n_i (n \cdot \partial_j^{\Gamma} \varphi ) + n_j ( n \cdot \partial_i^{\Gamma} \varphi ) \}\\
+ \{ n_i (n \cdot \partial_j^{\Gamma} v ) + n_j ( n \cdot \partial_i^{\Gamma} v )  \} \{ n_i (n \cdot \partial_j^{\Gamma} \varphi ) + n_j ( n \cdot \partial_i^{\Gamma} \varphi ) \} \\
=: - A_1 + A_2.
\end{multline}
A direct calculation gives
\begin{multline*}
A_1 = (\partial_i^{\Gamma} v_j + \partial_j^{\Gamma} v_i) \{ n_i (n \cdot \partial_j^{\Gamma} \varphi ) + n_j ( n \cdot \partial_i^{\Gamma} \varphi ) \}\\
=\partial_i^{\Gamma} v_j \{ n_i (n \cdot \partial_j^{\Gamma} \varphi ) \} + \partial_j^{\Gamma} v_i \{ n_j (n \cdot \partial_i^{\Gamma} \varphi ) \}\\
+ \partial_i^{\Gamma} v_j \{ n_j (n \cdot \partial_i^{\Gamma} \varphi ) \} + \partial_j^{\Gamma} v_i \{ n_i (n \cdot \partial_j^{\Gamma} \varphi ) \}.
\end{multline*}
Since $n_ j \partial_j^{\Gamma} = n \cdot {\rm{grad}}_\Gamma = 0$, we see that
\begin{equation*}
\partial_i^{\Gamma} v_j \{ n_i (n \cdot \partial_j^{\Gamma} \varphi ) \} + \partial_j^{\Gamma} v_i \{ n_j (n \cdot \partial_i^{\Gamma} \varphi ) \} = n_i \partial_i^{\Gamma} v_j (n \cdot \partial_j^{\Gamma} \varphi )  + n_j \partial_j^{\Gamma} v_i (n \cdot \partial_i^{\Gamma} \varphi ) = 0 .
\end{equation*}
It is easy to check that
\begin{multline*}
 \partial_i^{\Gamma} v_j \{ n_j (n \cdot \partial_i^{\Gamma} \varphi ) \} + \partial_j^{\Gamma} v_i \{ n_i (n \cdot \partial_j^{\Gamma} \varphi ) \} \\
= 2 \{ (n \cdot \partial_1^{\Gamma} v) (n \cdot \partial_1^{\Gamma} \varphi) + (n \cdot \partial_2^{\Gamma} v) (n \cdot \partial_2^{\Gamma} \varphi) + (n \cdot \partial_3^{\Gamma} v) (n \cdot \partial_3^{\Gamma} \varphi) \}.
\end{multline*}
Consequently, we have
\begin{equation}\label{eq219}
A_1 = 2 (n \cdot \partial_j^{\Gamma} v) (n \cdot \partial_j^{\Gamma} \varphi) .
\end{equation}
Next we consider
\begin{multline*}
A_2 = \{ n_i (n \cdot \partial_j^{\Gamma} v) + n_j ( n \cdot \partial_i^{\Gamma} v) \} \{ n_i (n \cdot \partial_j^{\Gamma} \varphi ) + n_j ( n \cdot \partial_i^{\Gamma} \varphi) \}\\
= n_i n_j (n \cdot \partial_j^{\Gamma} v) (n \cdot \partial_i^{\Gamma} \varphi) + n_i n_j (n \cdot \partial_i^{\Gamma} v) (n \cdot \partial_j^{\Gamma} \varphi)\\
+ n_i n_i (n \cdot \partial_j^{\Gamma} v) (n \cdot \partial_j^{\Gamma} \varphi) + n_j n_j (n \cdot \partial_i^{\Gamma} v) (n \cdot \partial_i^{\Gamma} \varphi) .
\end{multline*}
Since $n_j \partial_j^{\Gamma } = 0$, we check that
\begin{equation*}
n_i n_j (n \cdot \partial_j^{\Gamma} v) (n \cdot \partial_i^{\Gamma} \varphi) = n_i (n \cdot n_j \partial_j^{\Gamma} v) (n \cdot \partial_i^{\Gamma} \varphi) = 0.
\end{equation*}
This shows that
\begin{equation*}
n_i n_j (n \cdot \partial_j^{\Gamma} v) (n \cdot \partial_i^{\Gamma} \varphi) + n_i n_j (n \cdot \partial_i^{\Gamma} v) (n \cdot \partial_j^{\Gamma} \varphi) =0.
\end{equation*}
By $n_1^2 + n_2^2 + n_3^2 =1$, we have
\begin{align*}
n_i n_i (n \cdot \partial_j^{\Gamma} v) (n \cdot \partial_j^{\Gamma} \varphi) = (n \cdot \partial_j^{\Gamma} v) (n \cdot \partial_j^{\Gamma} \varphi) .
\end{align*}
As a result, we obtain
\begin{equation}\label{eq220}
A_2 = 2 (n \cdot \partial_j^{\Gamma} v) (n \cdot \partial_j^{\Gamma} \varphi) .
\end{equation}
From \eqref{eq218}, \eqref{eq219}, and \eqref{eq220}, we see \eqref{eq217}. Therefore we have \eqref{eq212}.

Next we show the assertion $(\mathrm{ii})$. Fix $v = { }^t (v_1,v_2,v_3) \in [C^2 (\Gamma (t))]^3$ and $g, \mu , \lambda \in C^1 (\Gamma(t))$. Now we derive \eqref{eq213}. Since
\begin{equation*}{\rm{div}}_\Gamma \{ \mu D_\Gamma (v) v \} ={\rm{div}}_\Gamma
\begin{pmatrix}
\mu [ D_\Gamma (v) ]_{11} v_1  + \mu [ D_\Gamma (v) ]_{12} v_2  + \mu [ D_\Gamma (v) ]_{13} v_3\\
\mu [ D_\Gamma (v) ]_{21} v_1 + \mu [ D_\Gamma (v) ]_{22} v_2 + \mu [ D_\Gamma (v) ]_{23} v_3\\
\mu [ D_\Gamma (v) ]_{31} v_1 + \mu [ D_\Gamma (v) ]_{32} v_2 + \mu [ D_\Gamma (v) ]_{33} v_3
\end{pmatrix}
\end{equation*}
and
\begin{multline*} {\rm{div}}_\Gamma \{ \mu D_\Gamma (v) \} \cdot v = {\rm{div}}_\Gamma
\begin{pmatrix}
\mu [ D_\Gamma (v) ]_{11} & \mu [ D_\Gamma (v) ]_{12} &  \mu [ D_\Gamma (v) ]_{13}\\
\mu [ D_\Gamma (v) ]_{21} & \mu [ D_\Gamma (v) ]_{22} &  \mu [ D_\Gamma (v) ]_{23}\\
\mu [ D_\Gamma (v) ]_{31} & \mu [ D_\Gamma (v) ]_{32} &  \mu [ D_\Gamma (v) ]_{33}
\end{pmatrix} \cdot \begin{pmatrix}
v_1\\
v_2\\
v_3
\end{pmatrix}\\
=
\begin{pmatrix}
\partial_1^\Gamma (\mu [ D_\Gamma (v) ]_{11}) + \partial_2^\Gamma (\mu [ D_\Gamma (v) ]_{12}) + \partial_3^\Gamma (\mu [ D_\Gamma (v) ]_{13})\\
\partial_1^\Gamma (\mu [ D_\Gamma (v) ]_{21}) + \partial_2^\Gamma (\mu [ D_\Gamma (v) ]_{22}) + \partial_3^\Gamma (\mu [ D_\Gamma (v) ]_{23})\\
\partial_1^\Gamma (\mu [ D_\Gamma (v) ]_{31}) + \partial_2^\Gamma (\mu [ D_\Gamma (v) ]_{32}) + \partial_3^\Gamma (\mu [ D_\Gamma (v) ]_{33})
\end{pmatrix} \cdot
\begin{pmatrix}
v_1\\
v_2\\
v_3
\end{pmatrix},
\end{multline*}
we find that
\begin{equation*}
{\rm{div}}_\Gamma \{ \mu D_\Gamma (v) v \} - {\rm{div}}_\Gamma \{ \mu D_\Gamma (v) \} \cdot v = \mu D_\Gamma (v): \mathbb{D}_\Gamma (v).
\end{equation*}
Note that $[D_\Gamma (v) ]_{ji}=[D_\Gamma (v)]_{ij}$ and $[\mathbb{D}_\Gamma (v) ]_{i j} = ( \partial_j^\Gamma v_i + \partial_i^\Gamma v_j )/2$. Using \eqref{eq212}, we have \eqref{eq213}. 

We now attack \eqref{eq214} and \eqref{eq215}. Set $M = \lambda P_\Gamma ({\rm{div}}_\Gamma v)$. It is clear that
\begin{multline*}
({\rm{div}}_\Gamma M) \cdot v = (\partial_1^\Gamma [M]_{11} + \partial_2^\Gamma [M]_{12} + \partial_3^\Gamma [M]_{13}) v_1 \\
+ (\partial_1^\Gamma [M]_{21} + \partial_2^\Gamma [M]_{22} + \partial_3^\Gamma [M]_{23}) v_2 +  (\partial_1^\Gamma [M]_{31} + \partial_2^\Gamma [M]_{32} + \partial_3^\Gamma [M]_{33}) v_3. 
\end{multline*}
Since $[M]_{i j} = [\lambda P_\Gamma ({\rm{div}}_\Gamma v)]_{ij} =\lambda ({\rm{div}}_\Gamma v) (\delta_{i j} - n_i n_j) = [M]_{j i}$ and $n_j \partial_j^ \Gamma =0$, we observe that
\begin{multline*}
{\rm{div}}_\Gamma \{ M v \} = {\rm{div}}_\Gamma 
\begin{pmatrix}
[ M ]_{11} v_1 + [ M ]_{12} v_2 + [ M ]_{13} v_3\\
[ M ]_{21} v_1 + [ M ]_{22} v_2 + [ M ]_{23} v_3\\
[ M ]_{31} v_1 + [ M ]_{32} v_2 + [ M ]_{33} v_3
\end{pmatrix}\\
= ({\rm{div}}_\Gamma M ) \cdot v + \lambda ({\rm{div}}_\Gamma v ) (\partial_1^\Gamma v_1 + \partial_2^\Gamma v_2 + \partial_3^\Gamma v).
\end{multline*}
Therefore we have \eqref{eq214}. Note that $[M]_{ij} \partial^\Gamma_j v_i = \lambda ({\rm{div}}_\Gamma v) \delta_{i j} \partial^\Gamma_i v_i$ for each fixed $i$. From \eqref{eq27} and \eqref{eq29}, we have
\begin{equation}
{\rm{div}}_\Gamma \{ (P_\Gamma g ) v \} - {\rm{div}}_\Gamma \{ P_\Gamma g \} \cdot v = g ({\rm{div}}_\Gamma v ).\label{eq221}
\end{equation}
Using \eqref{eq213}, \eqref{eq214}, and \eqref{eq221}, we derive \eqref{eq215}.

Finally, we show $(\mathrm{iii})$. Fix $v = { }^t (v_1,v_2,v_3) \in [C^1 (\Gamma (t))]^3$ and $g , \mu , \lambda \in C (\Gamma(t))$. Since $n_j \partial^\Gamma_j =0$, we see that
\begin{multline*}
\lambda P_\Gamma ({\rm{div}}_\Gamma v ) : D_\Gamma (v) \\
= \lambda ({\rm{div}}_\Gamma v ) (\delta_{i j} -n_i n_j) \{ \partial_i^{\Gamma} v_j + \partial_j^{\Gamma} v_i - n_i (n \cdot \partial_j^{\Gamma} v ) - n_j ( n \cdot \partial_i^{\Gamma} v )\} /2 \\
= \lambda ({\rm{div}}_\Gamma v ) \delta_{i j} \{ \partial_i^{\Gamma} v_j + \partial_j^{\Gamma} v_i - n_i (n \cdot \partial_j^{\Gamma} v ) - n_j ( n \cdot \partial_i^{\Gamma} v )\} /2 = \lambda |{\rm{div}}_\Gamma v |^2
\end{multline*}
and that
\begin{multline*}
- P_\Gamma g : D_\Gamma (v) = - g (\delta_{i j} - n_i n_j ) \{ \partial_i^{\Gamma} v_j + \partial_j^{\Gamma} v_i - n_i (n \cdot \partial_j^{\Gamma} v ) - n_j ( n \cdot \partial_i^{\Gamma} v )\} /2 \\
= - g ({\rm{div}}_\Gamma v ).
\end{multline*}
Therefore we find that
\begin{align*}
S_\Gamma (v,g,\mu , \lambda ) : D_\Gamma (v) = & ( 2 \mu D_\Gamma (v) + \lambda P_\Gamma ({\rm{div}}_\Gamma v ) - P_\Gamma g ) : D_\Gamma (v)\\
= & 2 \mu D_\Gamma (v) : D_\Gamma (v) + \lambda |{\rm{div}}_\Gamma v |^2 - g ({\rm{div}}_\Gamma v ).
\end{align*}
Therefore the lemma follows.
 \end{proof}

Next we study material derivatives on an evolving surface.
\begin{lemma}[Material derivatives on evolving surface]\label{lem27}{ \ }\\
Let $v = { }^t (v_1,v_2,v_3) \in [C^{1,1} (\mathcal{S}_T)]^3$. For every $f \in C^{1,1} (\mathcal{S}_T)$,
\begin{align*}
D_t f &:= \partial_t f + (v , \nabla ) f ,\\
D_t^N f & := \partial_t f + (v \cdot n )(n, \nabla ) f ,\\
D_t^\Gamma f & := \partial_t f + (v , \nabla_{\Gamma}) f . 
\end{align*}
Then for every $f \in C^{1,1}(\mathcal{S}_T)$,
\begin{align}
D_t^N f + {\rm{div}}_\Gamma (f v) & = D_t f + ({\rm{div}}_\Gamma v ) f, \label{eq222}\\
D_t^N ( f v) + {\rm{div}}_\Gamma (f v \otimes v) & = \{ D_t f + ({\rm{div}}_\Gamma v ) f \} v + f D_t v \label{eq223}.
\end{align}
\end{lemma}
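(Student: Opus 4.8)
The plan is to reduce both identities to the tangential splitting of the convective derivative given by \eqref{eq210} together with the Leibniz rule for $\partial_i^\Gamma$. First I would record the elementary identity
\[
D_t f = D_t^N f + (v,\nabla_\Gamma) f \qquad \text{on } \mathcal{S}_T .
\]
This is immediate: subtracting the definitions of $D_t f$ and $D_t^N f$ gives $D_t f - D_t^N f = (v,\nabla)f - (v\cdot n)(n,\nabla)f$, and \eqref{eq210} says precisely that the right-hand side equals $(v,\nabla_\Gamma)f$.

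Next I would establish the surface product rule
\[
{\rm{div}}_\Gamma(fv) = ({\rm{grad}}_\Gamma f)\cdot v + f\,{\rm{div}}_\Gamma v = (v,\nabla_\Gamma)f + ({\rm{div}}_\Gamma v) f ,
\]
which follows straight from the definition ${\rm{div}}_\Gamma(fv) = \partial_i^\Gamma(f v_i) = (\partial_i^\Gamma f)v_i + f\,\partial_i^\Gamma v_i$, using that each $\partial_i^\Gamma = (\delta_{ij}-n_in_j)\partial_j$ is an $x$-dependent linear combination of the ordinary partials and hence obeys the Leibniz rule. Adding $D_t^N f$ to both sides of this product rule and substituting the splitting identity for $D_t^N f + (v,\nabla_\Gamma)f = D_t f$ yields \eqref{eq222}.

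For \eqref{eq223} I would apply \eqref{eq222} componentwise with $f$ replaced by the scalar $f v_i$, $i=1,2,3$, which gives
\[
D_t^N(f v_i) + {\rm{div}}_\Gamma\bigl((f v_i)v\bigr) = D_t(f v_i) + ({\rm{div}}_\Gamma v)(f v_i) .
\]
Since $D_t = \partial_t + (v,\nabla)$ is a first-order linear derivation, $D_t(f v_i) = (D_t f)v_i + f(D_t v)_i$, so the right-hand side equals $\{D_t f + ({\rm{div}}_\Gamma v)f\}v_i + f(D_t v)_i$, the $i$-th component of the claimed right-hand side; the $i$-th component of the left-hand side is $\bigl(D_t^N(fv)\bigr)_i + \bigl({\rm{div}}_\Gamma(fv\otimes v)\bigr)_i$ by the definitions of $D_t^N$ acting entrywise on vectors and of ${\rm{div}}_\Gamma$ acting row-wise on matrices. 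Collecting the three components proves \eqref{eq223}.

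There is essentially no serious obstacle here: once \eqref{eq210} is available the computation is entirely formal. The only points requiring a little care are (i) confirming that $\partial_i^\Gamma$ genuinely satisfies the Leibniz rule — it does, being an $x$-dependent linear combination of the $\partial_j$ — and (ii) keeping the vector/matrix bookkeeping straight in \eqref{eq223}, i.e.\ remembering that $D_t^N$ and ${\rm{div}}_\Gamma$ act componentwise on $fv$ and row-wise on $fv\otimes v$, so that the scalar identity \eqref{eq222} may be invoked one component at a time.
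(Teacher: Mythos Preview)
Your argument is correct and, for \eqref{eq222}, essentially identical to the paper's: both use the splitting $D_t f = D_t^N f + (v,\nabla_\Gamma)f$ from \eqref{eq210} together with the surface product rule ${\rm div}_\Gamma(fv) = (v,\nabla_\Gamma)f + f\,{\rm div}_\Gamma v$.

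For \eqref{eq223} the paper takes a slightly different route: rather than invoking the scalar identity \eqref{eq222} componentwise with $f$ replaced by $fv_i$, it expands $D_t^N(fv)$ and ${\rm div}_\Gamma(fv\otimes v)$ directly via the Leibniz rules for $D_t^N$ and $\partial_i^\Gamma$, regroups to obtain $\{D_t^N f + (v,\nabla_\Gamma)f + ({\rm div}_\Gamma v)f\}v + f\{D_t^N v + (v,\nabla_\Gamma)v\}$, and then applies the splitting identity to each brace. Your componentwise reduction is arguably cleaner, since it reuses \eqref{eq222} and avoids re-doing the tangential/normal decomposition; the paper's direct expansion, on the other hand, makes the intermediate grouping explicit. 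Both are straightforward and neither requires any idea beyond \eqref{eq210} and the Leibniz rule.
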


\begin{proof}[Proof of Lemma \ref{lem27}]
From \eqref{eq210}, we see that
\begin{align*}
D_t^N f + {\rm{div}}_\Gamma ( f v ) = & \{ \partial_t f + (v ,\nabla - \nabla_{\Gamma}) f \} + \{ ( v , \nabla_{\Gamma}) f + ({\rm{div}}_\Gamma v) f \}\\
= & D_t f + ({\rm{div}}_\Gamma v ) f
\end{align*}
and that
\begin{multline*}
D_t^N (f v ) + {\rm{div}}_\Gamma ( f v \otimes v) \\
= \{ ( D_t^N f ) v + f (D_t^N v ) \} + ( \{ (v,\nabla_{\Gamma} ) f \} v + f (v ,\nabla_{\Gamma})v + f ({\rm{div}}_\Gamma v) v )\\
= \{ D_t^N f + (v ,\nabla_{\Gamma} ) f +({\rm{div}}_\Gamma v) f \} v + f \{ D_t^N v + (v , \nabla_{\Gamma} ) v\}\\
=\{ D_t f + ({\rm{div}}_\Gamma v ) f \} v + f D_t v  .
\end{multline*}
Therefore the lemma follows. Note that $[f v \otimes v]_{i j} = f v_i v_j$.
 \end{proof}

Finally we state integration by parts on surfaces.
\begin{lemma}[Integration by parts on surfaces]\label{lem28}{ \ }\\
Let $\Gamma_0$ be $2$-dimensional surface in $\mathbb{R}^3$. Then two assertions hold;\\
$(\mathrm{i})$ For each $f \in C^1 (\Gamma_0 )$, $g \in C^1_0 (\Gamma_0)$, $\varphi \in [ C^1_0 ( \Gamma_0 )]^3$, and $m \in \{ 1, 2, 3 \}$,
\begin{align*}
\int_{\Gamma_0} \{ (\partial_m^{\Gamma_0} f )g \}(x) { \ }d  \mathcal{H}_x^2 & = - \int_{\Gamma_0} \{ f (\partial_m^{\Gamma_0} g)\} (x){ \ }d \mathcal{H}_x^2 - \int_{\Gamma_0} \{ H_{\Gamma_0} n_m f g\} (x) { \ }d \mathcal{H}_x^2,\\
\int_{\Gamma_0} \{ f ({\rm{div}}_\Gamma \varphi )  \}(x) { \ }d  \mathcal{H}_x^2 & = - \int_{\Gamma_0} \{ {\rm{grad}}_\Gamma f + f H_{\Gamma_0} n \} (x) \cdot \varphi (x){ \ }d \mathcal{H}_x^2.
\end{align*}
$(\mathrm{ii})$ Assume that $\Gamma_0$ is a closed surface. Then  for each $f , g \in C^1 (\Gamma_0 )$, $\varphi \in [ C^1 ( \Gamma_0 )]^3$, and $m \in \{ 1, 2, 3 \}$,
\begin{align*}
\int_{\Gamma_0} \{ (\partial_m^{\Gamma_0} f )g \}(x) { \ }d  \mathcal{H}_x^2 & = - \int_{\Gamma_0} \{ f (\partial_m^{\Gamma_0} g)\} (x){ \ }d \mathcal{H}_x^2 - \int_{\Gamma_0} \{ H_{\Gamma_0} n_m f g\} (x) { \ }d \mathcal{H}_x^2,\\
\int_{\Gamma_0} \{ f ( {\rm{div}}_\Gamma \varphi ) \} (x) { \ }d  \mathcal{H}_x^2 & = - \int_{\Gamma_0} \{ {\rm{grad}}_\Gamma f + f H_{\Gamma_0} n \} (x) \cdot \varphi (x){ \ }d \mathcal{H}_x^2.
\end{align*}
Here $n = n (x) = { }^t (n_1 , n_2 ,n_3)$ denotes the unit outer normal vector at $x \in \Gamma_0$, and $H_{\Gamma_0}$ is the mean curvature defined by $H_{\Gamma_0} = - {\rm{div}}_{\Gamma_0} n$.
\end{lemma}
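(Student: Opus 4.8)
The plan is to reduce both assertions to the single surface Gauss--Green (divergence) identity
\begin{equation*}
\int_{\Gamma_0} {\rm{div}}_{\Gamma_0}\psi \, d\mathcal{H}^2_x = - \int_{\Gamma_0} H_{\Gamma_0} (\psi \cdot n ) \, d\mathcal{H}^2_x ,
\end{equation*}
which I claim holds for every $\psi = {}^t(\psi_1,\psi_2,\psi_3) \in [C^1_0(\Gamma_0)]^3$ when $\Gamma_0$ is an arbitrary $2$-dimensional surface (this covers part $(\mathrm{i})$), and for every $\psi \in [C^1(\Gamma_0)]^3$ when $\Gamma_0$ is closed (this covers part $(\mathrm{ii})$). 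Granting this identity, the two parts are pure bookkeeping: applying it to $\psi = f\varphi$ and using the Leibniz rule ${\rm{div}}_{\Gamma_0}(f\varphi) = {\rm{grad}}_{\Gamma_0} f \cdot \varphi + f\,{\rm{div}}_{\Gamma_0}\varphi$ (valid since $\partial_m^{\Gamma_0}$ is a first-order operator) together with $(f\varphi)\cdot n = f(\varphi\cdot n)$ yields at once the second displayed formula of each part; specialising further to $\varphi = g e_m$, with $e_m$ the $m$-th standard basis vector of $\mathbb{R}^3$, gives ${\rm{div}}_{\Gamma_0}(g e_m) = \partial_m^{\Gamma_0} g$, ${\rm{grad}}_{\Gamma_0} f \cdot (g e_m) = (\partial_m^{\Gamma_0} f) g$, and $(g e_m)\cdot n = g n_m$, hence the first displayed formula. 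The support/closedness hypotheses on $g$ and $\varphi$ are exactly what keep $f\varphi$ and $g e_m$ inside the admissible class.

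So the only real content is the Gauss--Green identity itself, and I would prove it via the orthogonal decomposition $\psi = P_{\Gamma_0}\psi + (\psi\cdot n) n$, handling the two pieces separately. For the normal piece, with $\eta := \psi\cdot n$, a direct pointwise computation using $n\cdot{\rm{grad}}_{\Gamma_0}\eta = 0$ (Lemma \ref{lem25}) and the convention $H_{\Gamma_0} = -{\rm{div}}_{\Gamma_0} n$ gives
\begin{equation*}
{\rm{div}}_{\Gamma_0}(\eta n) = ({\rm{grad}}_{\Gamma_0}\eta)\cdot n + \eta\,{\rm{div}}_{\Gamma_0} n = - H_{\Gamma_0}(\psi\cdot n),
\end{equation*}
which already accounts for the entire right-hand side. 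For the tangential piece, $P_{\Gamma_0}\psi$ is a tangential $C^1$ vector field on $\Gamma_0$ (compactly supported in case $(\mathrm{i})$), and it remains to show $\int_{\Gamma_0}{\rm{div}}_{\Gamma_0}(P_{\Gamma_0}\psi)\,d\mathcal{H}^2_x = 0$. This is the divergence theorem on the closed Riemannian manifold $\Gamma_0$ (or on a compact subsurface containing $\mathrm{supp}\,\psi$ in its interior, in case $(\mathrm{i})$), once one checks that ${\rm{div}}_{\Gamma_0}$ restricted to a tangential field agrees with the intrinsic Riemannian divergence — a standard local computation in a graph chart — after which it is Stokes' theorem and the boundary term vanishes. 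Alternatively, the whole identity can be quoted from the literature on PDEs on surfaces (e.g. \cite{DE07}, \cite{KLG}, or the references in \cite{SSO07}).

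The main obstacle is thus precisely this Gauss--Green identity: reconciling the extrinsic operator ${\rm{div}}_{\Gamma_0}$, built from the tangential derivatives $\partial_i^{\Gamma_0}$, with the intrinsic geometry so that Stokes' theorem applies, and bookkeeping the sign convention $H_{\Gamma_0} = -{\rm{div}}_{\Gamma_0} n$ correctly (so that, for instance, a round sphere with outward normal carries negative mean curvature in this paper's convention). Once that identity is granted, the proof of the lemma is a two-line specialisation, and parts $(\mathrm{i})$ and $(\mathrm{ii})$ differ only in whether the absence of a boundary term is forced by compact support of $g,\varphi$ or by $\Gamma_0$ being closed.
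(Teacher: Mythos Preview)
Your reduction is correct, and the bookkeeping (Leibniz rule, specialisation $\varphi = g e_m$, orthogonal decomposition of $\psi$ into tangential and normal parts) is all accurate, including the sign of the mean-curvature term. Note, however, that the paper itself does not give a proof of Lemma~\ref{lem28}: immediately after the statement it simply records that ``The proof of Lemma~\ref{lem28} is found in \cite[Chapter 2]{Sim83} and \cite{KLG}.'' So there is no in-paper argument to compare against; your sketch is essentially the standard proof one finds in those references (Simon derives the surface divergence theorem via the tangential/normal splitting you describe), and your final ``alternatively, quote from the literature'' option is in fact exactly what the paper does.
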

\noindent The proof of Lemma \ref{lem28} is found in \cite[Chapter 2]{Sim83} and \cite{KLG}.

\section{Flow Maps and Riemannian Metrics}\label{sect3}

This section has two purposes. The first one is to derive the continuity equation for the fluid on an evolving surface. The second one is to investigate a mathematical validity of our energy densities for compressible fluid on the evolving surface. To achieve these purposes, we make use of a flow map on the evolving surface and the Riemannian metric induced by the flow map.

In subsection \ref{subsec31} we introduce a flow map on an evolving surface and the Riemannian metric defined by the flow map. In subsection \ref{subsec32}, by using a flow map and the Riemannian metric, we investigate the orthogonal projection $P_\Gamma$ and surface divergence ${\rm{div}}_\Gamma$. In subsection \ref{subsec33} we derive the continuity equation for the fluid on an evolving surface to prove Proposition \ref{prop13}. In subsection \ref{subsec34} we study the representation of the kinetic energy, dissipation energies, and work for compressible fluid on an evolving surface.

\subsection{Flow Maps and Riemannian Metrics on Evolving Surfaces}\label{subsec31}

We first introduce a flow map on an evolving surface and the Riemannian metric defined by the flow map. Then we study surface area integral by applying the flow map and the Riemannian metric. We also consider a flow map on a variation of the evolving surface and the Riemannian metric induced by the flow map.

\begin{definition}[Flow map on an evolving surface]{ \ }\\
Let $\Gamma (t)$ be an evolving $2$-dimensional surface in $\mathbb{R}^3$ on $[0,T)$ for some $T \in (0, \infty ]$. Let $x = { }^t (x_1, x_2 , x_3) \in [ C^\infty ( \mathbb{R}^4) ]^3$. We call $x = \hat{x} ( \xi , t )$ a \emph{flow map} on $\Gamma ( t )$ if the three properties hold:\\
$( \mathrm{i} )$ for every $\xi \in \Gamma ( 0 )$
\begin{equation*}
\hat{x} ( \xi , 0 ) = \xi,
\end{equation*}
$( \mathrm{ii})$ for all $\xi \in \Gamma (0)$ and $0 \leq t < T$
\begin{equation*}
\hat{x} ( \xi , t ) \in \Gamma ( t ),
\end{equation*}
$(\mathrm{iii})$ for each $0 \leq t < T$
\begin{equation*}
\hat{x} ( \cdot , t): \Gamma (0) \to \Gamma (t) \text{ is bijective}.
\end{equation*}
The mapping $\xi \mapsto \hat{x} ( \xi , t )$ is called a \emph{flow map} on $\Gamma (t)$, while the mapping $t \mapsto \hat{x} ( \xi , t )$ is called an \emph{orbit} starting from $\xi$.
\end{definition}

\begin{definition}[Velocity determined by a flow map]\label{def32}{ \ \ \ }
Let $\Gamma (t)$ be an evolving $2$-dimensional surface in $\mathbb{R}^3$ on $[0,T)$ for some $T \in (0, \infty ]$. Let $\hat{x} = \hat{x} ( \xi , t )$ be a flow map on $\Gamma ( t )$. Suppose that there is a smooth function $v = v (x , t) = { }^t (v_1 , v_2 , v_3 )$ such that for $\xi \in \Gamma (0)$ and $0<t <T$,
\begin{equation*}
\frac{d \hat{x}}{d t} = \hat{x}_t ( \xi , t ) = v ( \hat{x} ( \xi , t ) ,t).
\end{equation*}
We call the vector-valued function $v$ the \emph{velocity determined by the flow map} $\hat{x} ( \xi , t )$.
\end{definition}

Let us now study fundamental properties of a flow map on an evolving surface and the velocity determined by the flow map. Let $\Gamma (t)$ be a given evolving $2$-dimensional surface in $\mathbb{R}^3$ on $[0,T)$ for some $T \in (0, \infty ]$. Let $\hat{x} = \hat{x} ( \xi , t )$ be a flow map on $\Gamma ( t )$, and let $v = v (x,t)$ be the velocity determined by the flow map $\hat{x}$, i.e. for every $\xi \in \Gamma (0)$ and $0<t <T$,
\begin{equation*}
\begin{cases}
\frac{d \hat{x}}{d t} (\xi, t) = v (\hat{x} ( \xi , t ) , t),\\
\hat{x} (\xi, 0) = \xi .
\end{cases}
\end{equation*}
We assume that $v$ is the total velocity. From now on we write $\Gamma_0 = \Gamma ( 0 )$. By the bijection of the flow map, we see that $\Gamma (t)$ is expressed as follows:
\begin{equation*}
\Gamma (t) = \{ x = { }^t (x_1, x_2 , x_3 ) \in \mathbb{R}^3;{ \ } x = \hat{x}( \xi , t ), {  \ } \xi \in \Gamma_0 \}.
\end{equation*}
\noindent Since $\Gamma_0 (= \Gamma (0))$ is a closed Riemannian $2$-dimensional manifold, there are $\Gamma_m \subset \Gamma_0$, $\Phi_m \in C^\infty (\mathbb{R}^2)$, $U_m \subset \mathbb{R}^2$, $\Psi_m \in C^\infty (\mathbb{R}^3)$, $(m=1,2,\cdots,N)$ such that
\begin{align*}
\bigcup_{m=1}^N \Gamma_m = \Gamma_0,\\
\Gamma_m = \Phi_m (U_m),\\
{\rm{supp}} \Psi_m \subset \Gamma_m,\\
\| \Psi_m \|_{L^\infty} = 1,\\
\sum_{m=1}^N \Psi_m = 1 \text{ on } \Gamma_0 .
\end{align*}
This is a partition of unity. Fix $\xi \in \Gamma_0$. Assume that $\xi \in \Gamma_m$ for some $m \in \{1,2, \cdots, N \}$. Since we can write $\xi = \Phi_m (X)$ for some $X = { }^t (X_1,X_2) \in U_m \subset \mathbb{R}^2$, we set
\begin{equation*}
\tilde{x} = \tilde{x} ( X , t ) = \hat{x} (\Phi_m (X), t) (= \hat{x} ( \xi , t ) ).
\end{equation*}
Then
\begin{equation*}
\begin{cases}
\frac{d \tilde{x}}{d t} = \tilde{x}_t ( X , t) = v ( \tilde{x} (X,t) ,t),\\
\tilde{ x }|_{t=0} = \Phi_m ( X ) (= \xi).
\end{cases}
\end{equation*}
Now we write
\begin{align*}
\Phi := \Phi_m \text{ if } \xi \in \Gamma_m . 
\end{align*}
Then for each $\xi \in \Gamma_0$ and $0 < t < T$,
\begin{equation*}
\begin{cases}
\frac{d \tilde{x}}{d t} =\tilde{x}_t ( X , t) = v ( \tilde{x} (X,t) ,t),\\
\tilde{ x }|_{t=0} = \Phi ( X ) (= \xi ).
\end{cases}
\end{equation*}
We also call $\tilde{x} (X,t)$ a \emph{flow map} on $\Gamma (t)$. For the map $\tilde{x} = \tilde{x} (X,t )$, we define
\begin{equation*}
J = J (X,t) = {\rm{det}} \{ { }^t (\nabla_X \tilde{x}) (\nabla_X \tilde{x}) \} ,
\end{equation*}
where
\begin{equation*}
\nabla_X \tilde{x} = 
\begin{pmatrix}
\frac{\partial \tilde{x}_1}{\partial X_1} & \frac{\partial \tilde{x}_1}{\partial X_2}\\
\frac{\partial \tilde{x}_2}{\partial X_1} & \frac{\partial \tilde{x}_2}{\partial X_2}\\
\frac{\partial \tilde{x}_3}{\partial X_1} & \frac{\partial \tilde{x}_3}{\partial X_2}
\end{pmatrix}.
\end{equation*}
We assume that $J >0$. Indeed, we can choose a coordinate such that $J> 0$ from the definition of evolving surfaces (see Definitions \ref{def21}-\ref{def23}).

Next we study surface area integral by applying the flow maps $\hat{x} ( \xi , t )$ and $\tilde{x}(X,t)$ on $\Gamma ( t )$. For each $f (\cdot,\cdot) \in C ( \mathbb{R}^3 \times \mathbb{R})$, we find that
\begin{equation}\label{eq31}
\int_{\Gamma (t)} f (x,t) { \ }d \mathcal{H}_{x}^2 = \int_U \tilde{\Psi} (X) f ( \tilde{x} (X,t) , t) \sqrt{ J (X,t)  } { \ }d X.
\end{equation}
Here
\begin{multline*}
 \int_U \tilde{\Psi} (X) f ( \tilde{x} (X,t), t ) \sqrt{ J ( X , t )  } { \ } d X\\
 := \sum_{m=1}^N \int_{U_m} \Psi_m (\Phi_m (X)) f ( \tilde{x} (X,t) , t ) \sqrt{ J (X , t )} { \ } d X .
\end{multline*}
Since
\begin{equation*}
\Gamma (t) = \{ x \in \mathbb{R}^3; { \ }x = \hat{x} ( \xi , t ) , { \ }\xi \in \Gamma_0 \},
\end{equation*}
we use the change of variables and usual surface area integral to check that for $0<t<T$
\begin{multline*}
\int_{\Gamma (t)} f (x,t) { \ } d \mathcal{H}_{x}^2 = \int_{\Gamma_0} f (\hat{x} ( \xi , t ),t )  \det ({\nabla_\xi \hat{x}}) { \ }d \mathcal{H}_\xi^2 \\
= \sum_{m = 1}^N \int_{\Gamma_m} \Psi_m ( \xi ) f ( \hat{x} ( \xi , t ) , t ) \det (\nabla_{\xi} \hat{x}) { \ } d \mathcal{H}_{\xi}^2\\
= \sum_{m=1}^N \int_{U_m} \Psi_m (\Phi_m (X)) f (x(\Phi_m (X) , t ) , t ) \det (\nabla_{\xi} \hat{x}) \sqrt{\det ({ }^t \nabla_{X} \Phi_m \nabla_{X} \Phi_m )} { \ } d X\\
= \sum_{m=1}^N \int_{U_m} \Psi_m (\Phi_m (X)) f ( \tilde{x} (X,t) , t ) \sqrt{ J ( X , t ) } { \ } d X.
\end{multline*}
Here we used the fact that
\begin{equation*}
\left|
\begin{pmatrix}
\frac{ \partial \tilde{x}_1}{\partial X_1}\\
\frac{ \partial \tilde{x}_2}{\partial X_1}\\
\frac{ \partial \tilde{x}_3}{\partial X_1}
\end{pmatrix} \times
\begin{pmatrix}
\frac{ \partial \tilde{x}_1}{\partial X_2}\\
\frac{ \partial \tilde{x}_2}{\partial X_2}\\
\frac{ \partial \tilde{x}_3}{\partial X_2}
\end{pmatrix}
\right| = \sqrt{ {\rm{det}}\{ { }^t (\nabla_X \tilde{x} )(\nabla_X \tilde{x})\} }. 
\end{equation*}
Note that
\begin{equation*}
\begin{pmatrix}
\frac{ \partial \tilde{x}_1}{\partial X_1}\\
\frac{ \partial \tilde{x}_2}{\partial X_1}\\
\frac{ \partial \tilde{x}_3}{\partial X_1}
\end{pmatrix} \times
\begin{pmatrix}
\frac{ \partial \tilde{x}_1}{\partial X_2}\\
\frac{ \partial \tilde{x}_2}{\partial X_2}\\
\frac{ \partial \tilde{x}_3}{\partial X_2}
\end{pmatrix} =
\begin{pmatrix}
\frac{ \partial \tilde{x}_2}{\partial X_1} \frac{ \partial \tilde{x}_3}{\partial X_2} -  \frac{ \partial \tilde{x}_2}{\partial X_2} \frac{ \partial \tilde{x}_3}{\partial X_1} \\
\frac{ \partial \tilde{x}_3}{\partial X_1} \frac{ \partial \tilde{x}_1}{\partial X_2} -  \frac{ \partial \tilde{x}_3}{\partial X_2} \frac{ \partial \tilde{x}_1}{\partial X_1} \\
\frac{ \partial \tilde{x}_1}{\partial X_1} \frac{ \partial \tilde{x}_2}{\partial X_2} -  \frac{ \partial \tilde{x}_1}{\partial X_2} \frac{ \partial \tilde{x}_2}{\partial X_1} 
\end{pmatrix}
\end{equation*}
and that
\begin{multline*}
{ }^t (\nabla_X \tilde{x}) (\nabla_X \tilde{x}) \\=
\begin{pmatrix}
\frac{ \partial \tilde{x}_1}{\partial X_1}  \frac{ \partial \tilde{x}_1}{\partial X_1} + \frac{ \partial \tilde{x}_2}{\partial X_1} \frac{ \partial \tilde{x}_2}{\partial X_1} + \frac{ \partial \tilde{x}_3}{\partial X_1}  \frac{ \partial \tilde{x}_3}{\partial X_1} & \frac{ \partial \tilde{x}_1}{\partial X_1}  \frac{ \partial \tilde{x}_1}{\partial X_2} + \frac{ \partial \tilde{x}_2}{\partial X_1} \frac{ \partial \tilde{x}_2}{\partial X_2} + \frac{ \partial \tilde{x}_3}{\partial X_1}  \frac{ \partial \tilde{x}_3}{\partial X_2}   \\
\frac{ \partial \tilde{x}_1}{\partial X_2}  \frac{ \partial \tilde{x}_1}{\partial X_1} + \frac{ \partial \tilde{x}_2}{\partial X_2} \frac{ \partial \tilde{x}_2}{\partial X_1} + \frac{ \partial \tilde{x}_3}{\partial X_2}  \frac{ \partial \tilde{x}_3}{\partial X_1} & \frac{ \partial \tilde{x}_1}{\partial X_2}  \frac{ \partial \tilde{x}_1}{\partial X_2} + \frac{ \partial \tilde{x}_2}{\partial X_2} \frac{ \partial \tilde{x}_2}{\partial X_2} + \frac{ \partial \tilde{x}_3}{\partial X_2}  \frac{ \partial \tilde{x}_3}{\partial X_2}   
\end{pmatrix} .
\end{multline*}
Thus, we have \eqref{eq31}. Remark that $U_m$, $\Psi_m$, $\Phi_m$, $J$ are independent of $f$.

Next we introduce the Riemannian metric induced by the flow map $\tilde{x} (X,t)$. For the flow map $\tilde{x} = \tilde{x} (X , t)$ on $\Gamma (t)$,
\begin{equation*}
g_\alpha =g_\alpha (X,t) := { }^t \left(\frac{\partial \tilde{x}_1}{\partial X_\alpha},\frac{\partial \tilde{x}_2}{\partial X_\alpha},\frac{\partial \tilde{x}_3}{\partial X_\alpha} \right).
\end{equation*}
Write
\begin{equation*}
g_{\alpha \beta}=g_{\alpha \beta} (X,t) := g_\alpha \cdot g_\beta = \frac{\partial \tilde{x}_i}{\partial X_\alpha} \frac{\partial \tilde{x}_i}{\partial X_\beta}= \sum_{i = 1}^3 \frac{\partial \tilde{x}_i}{\partial X_\alpha} \frac{\partial \tilde{x}_i}{\partial X_\beta}.
\end{equation*}
Set
\begin{align*}
&( g^{\alpha \beta} )_{2 \times 2} := ((g_{\alpha \beta})_{2 \times 2} )^{-1}, \text{that is, } \begin{pmatrix}
g^{11} & g^{12}\\
g^{21} & g^{22}
\end{pmatrix} := \begin{pmatrix}
g_{11} & g_{12}\\
g_{21} & g_{22}
\end{pmatrix}^{-1},\\
&g^\alpha := g^{\alpha \beta}g_\beta = g^{\alpha 1}g_1 + g^{\alpha 2}g_2,\\
&\acute{g}_\alpha := \frac{d}{d t} g_\alpha  = \frac{\partial v}{\partial X_\alpha} = { }^t \left( \frac{\partial v_1}{\partial X_\alpha} , \frac{\partial v_2}{\partial X_\alpha} ,  \frac{\partial v_3}{\partial X_\alpha} \right).
\end{align*}
It is easy to check that $g_{\beta \alpha} = g_{ \alpha \beta}$, $g^{\beta \alpha} = g^{\alpha \beta}$, $g^{\alpha \beta} = g^\alpha \cdot g^\beta$, $g^\alpha \cdot g_\beta = \delta_{\alpha \beta}$,
\begin{align*}
&g_\alpha = g_{\alpha \beta}g^\beta = g_{\alpha 1}g^1 + g_{\alpha 2}g^2,\\
&\acute{g}_{\alpha \beta} = \acute{g}_\alpha \cdot g_\beta + g_\alpha \cdot \acute{g}_\beta,\\
&\acute{g}_\alpha = \frac{\partial v}{\partial X_\alpha} = \frac{\partial \tilde{x}_i}{\partial X_\alpha}\frac{\partial v}{\partial \tilde{x}_i}= \sum_{i= 1}^3 \frac{\partial \tilde{x}_i}{\partial X_\alpha}\frac{\partial v}{\partial \tilde{x}_i},\\
&\acute{g}_\alpha \cdot g_\beta = \frac{\partial \tilde{x}_i}{\partial X_\alpha}\frac{\partial v_j}{\partial \tilde{x}_i} \frac{\partial \tilde{x}_j}{\partial X_\beta}= \sum_{i , j = 1}^3 \frac{\partial \tilde{x}_i}{\partial X_\alpha}\frac{\partial v_j}{\partial \tilde{x}_i}\frac{\partial \tilde{x}_j}{\partial X_\beta},\\
&\acute{g}_{\alpha \beta} = \frac{\partial \tilde{x}_i}{\partial X_\alpha}\left( \frac{\partial v_j}{\partial \tilde{x}_i} + \frac{\partial v_i}{\partial \tilde{x}_j} \right)\frac{\partial \tilde{x}_j}{\partial X_\beta} = 2\sum_{i, j =1}^3 \frac{\partial \tilde{x}_i}{\partial X_\alpha} [D (v) ]_{i j} \frac{\partial \tilde{x}_j}{\partial X_\beta} ,
\end{align*}
where $\delta_{\alpha \beta}$ is Kronecker's delta. Indeed, we see at once that
\begin{multline*}
\acute{g}_\alpha \cdot g_\beta = { }^t \left( \frac{\partial v_1}{\partial X_\alpha} , \frac{\partial v_2}{\partial X_\alpha} ,  \frac{\partial v_3}{\partial X_\alpha} \right) \cdot { }^t \left(\frac{\partial \tilde{x}_1}{\partial X_\beta},\frac{\partial \tilde{x}_2}{\partial X_\beta},\frac{\partial \tilde{x}_3}{\partial X_\beta} \right)\\
= { }^t \left( \sum_{i=1}^3 \frac{\partial \tilde{x}_i}{\partial X_\alpha}\frac{\partial v_1}{\partial \tilde{x}_i} , \sum_{i=1}^3 \frac{\partial \tilde{x}_i}{\partial X_\alpha}\frac{\partial v_2}{\partial \tilde{x}_i}  ,  \sum_{i=1}^3 \frac{\partial \tilde{x}_i}{\partial X_\alpha}\frac{\partial v_3}{\partial \tilde{x}_i}  \right) \cdot { }^t \left(\frac{\partial \tilde{x}_1}{\partial X_\beta},\frac{\partial \tilde{x}_2}{\partial X_\beta},\frac{\partial \tilde{x}_3}{\partial X_\beta} \right)\\
= \sum_{i , j = 1}^3 \frac{\partial \tilde{x}_i}{\partial X_\alpha}\frac{\partial v_j}{\partial \tilde{x}_i}\frac{\partial \tilde{x}_j}{\partial X_\beta} = \frac{\partial \tilde{x}_i}{\partial X_\alpha}\frac{\partial v_j}{\partial \tilde{x}_i}\frac{\partial \tilde{x}_j}{\partial X_\beta}
\end{multline*}
and that $ g^\alpha \cdot g_\beta = (g^{\alpha 1} g_1 + g^{\alpha 2} g_2 ) \cdot g_\beta = g^{\alpha 1 } g_{1 \beta} + g^{\alpha 2} g_{2 \beta} = \delta_{\alpha \beta}$.
Remark that
\begin{equation*}
\text{(R.H.S.) of }\eqref{eq31} = \int_U \tilde{\Psi} (X) f ( \tilde{x} (X,t) , t) \sqrt{ {\rm{det}} \begin{pmatrix} g_{11} & g_{12}\\ g_{21} & g_{22} \end{pmatrix} } { \ }d X.
\end{equation*}
See \cite{Cia05} and \cite{Jos11} for differential geometry and the Riemannian manifold.

Next we introduce a flow map on a variation of the evolving surface $\Gamma (t)$.
\begin{definition}[Flow map on a variation of an evolving surface]\label{def33}{ \ }\\
Let $\Gamma (t)$ be an evolving $2$-dimensional surface in $\mathbb{R}^3$ on $[0,T)$ for some $T \in (0, \infty ]$. For $- 1 < \varepsilon <1$, let $\Gamma^\varepsilon (t)$ be a variation of $\Gamma (t)$. Let $\hat{x}^\varepsilon = { }^t ( x_1^\varepsilon , x_2^\varepsilon , x_3^\varepsilon ) \in [ C^\infty ( \mathbb{R}^4) ]^3$. We call $\hat{x}^\varepsilon = \hat{x}^\varepsilon ( \xi , t )$ a \emph{flow map} on $\Gamma^\varepsilon ( t )$ if the three properties hold:\\
$( \mathrm{i} )$ for every $\xi \in \Gamma_0 (= \Gamma ( 0 ))$
\begin{equation*}
\hat{x}^\varepsilon ( \xi , 0 ) = \xi,
\end{equation*}
$( \mathrm{ii})$ for all $\xi \in \Gamma_0$ and $0 \leq t < T$
\begin{equation*}
\hat{x}^\varepsilon ( \xi , t) \in \Gamma^\varepsilon ( t ),
\end{equation*}
$(\mathrm{iii})$ for each $0 \leq t < T$
\begin{equation*}
\hat{x}^\varepsilon ( \cdot , t): \Gamma_0 \to \Gamma^\varepsilon (t) \text{ is bijective}.
\end{equation*}
\end{definition}
Note that from the property $(\mathrm{iii})$ we can write
\begin{equation*}
\Gamma^\varepsilon (t) = \{  x \in \mathbb{R}^3;{ \ }x = \hat{x}^\varepsilon (\xi , t ) , { \ }\xi \in \Gamma_0 \} .
\end{equation*}

\begin{definition}[Velocity determined by a flow map on $\Gamma^\varepsilon (t)$]\label{def34}{ \ }\\
Let $\Gamma (t)$ be a given evolving $2$-dimensional surface in $\mathbb{R}^3$ on $[0,T)$ for some $T \in (0, \infty ]$. For $- 1 < \varepsilon <1$, let $\Gamma^\varepsilon (t)$ be a variation of $\Gamma (t)$. Let $\hat{x}^\varepsilon = \hat{x}^\varepsilon (\xi ,t)$ be a flow map on $\Gamma^\varepsilon ( t )$. Suppose that there is a smooth function $v^\varepsilon = v^\varepsilon (x , t) = { }^t (v_1^\varepsilon , v_2^\varepsilon , v_3^\varepsilon )$ such that for $\xi \in \Gamma (0)$ and $0<t <T$,
\begin{equation*}
\frac{d \hat{x}^\varepsilon}{d t} = \hat{x}^\varepsilon_t ( \xi , t ) = v^\varepsilon ( \hat{x}^\varepsilon ( \xi ,t ) ,t) .
\end{equation*}
We call the vector-valued function $v^\varepsilon$ the \emph{velocity} determined by the flow map $\hat{x}^\varepsilon (\xi ,t)$.
\end{definition}

For $- 1< \varepsilon <1 $, let $\Gamma^\varepsilon (t)$ be a variation of $\Gamma ( t )$. Let $\hat{x}^\varepsilon = \hat{x}^\varepsilon (\xi ,t)$ be a flow map on $\Gamma^\varepsilon ( t )$, and let $v^\varepsilon = v^\varepsilon (x,t)$ be the velocity determined by the flow map $\hat{x}^\varepsilon$, i.e. for $\xi \in \Gamma (0)$ and $0<t <T$,
\begin{equation*}
\begin{cases}
\frac{d \hat{x}^\varepsilon}{d t} (\xi, t) = v^\varepsilon (\hat{x}^\varepsilon ( \xi, t ) , t),\\
\hat{x}^\varepsilon (\xi, 0) = \xi .
\end{cases}
\end{equation*}
By the same way as in the previous argument, we write
\begin{equation*}
\tilde{x}^\varepsilon ( X, t) := \hat{x}^\varepsilon (\Phi_m (X),t).
\end{equation*}
We also call $\tilde{x}^\varepsilon (X,t)$ a flow map on $\Gamma^\varepsilon (t)$.
Set
\begin{equation*}
J^\varepsilon = J^\varepsilon (X,t) = {\rm{det}} \{ { }^t (\nabla_X \tilde{x}^\varepsilon ) ( \nabla_X \tilde{x}^\varepsilon ) \},
\end{equation*}
where
\begin{equation*}
\nabla_X \tilde{x}^\varepsilon = 
\begin{pmatrix}
\frac{\partial \tilde{x}^\varepsilon_1}{\partial X_1} & \frac{\partial \tilde{x}^\varepsilon_1}{\partial X_2}\\
\frac{\partial \tilde{x}^\varepsilon_2}{\partial X_1} & \frac{\partial \tilde{x}^\varepsilon_2}{\partial X_2}\\
\frac{\partial \tilde{x}^\varepsilon_3}{\partial X_1} & \frac{\partial \tilde{x}^\varepsilon_3}{\partial X_2}
\end{pmatrix}.
\end{equation*}
We assume that $J^\varepsilon >0$. We see at once that for $f \in C (\mathbb{R}^4)$
\begin{equation}\label{eq32}
\int_{\Gamma^\varepsilon (t)} f (x ,t) { \ } d \mathcal{H}_{x}^2 = \int_U \tilde{\Psi} (X) f ( \tilde{x}^\varepsilon (X,t) , t) \sqrt{ J^\varepsilon (X,t) } { \ }d X.
\end{equation}
Here
\begin{multline*}
\int_U \tilde{\Psi} (X) f ( \tilde{x}^\varepsilon (X,t), t ) \sqrt{J^\varepsilon (X,t)} { \ } d X\\
 := \sum_{m=1}^N \int_{U_m} \Psi_m (\Phi_m (X)) f ( \tilde{x}^\varepsilon (X,t) , t ) \sqrt{J^\varepsilon (X,t)} { \ } d X.
\end{multline*}

For each flow map $\tilde{x}^\varepsilon = \tilde{x}^\varepsilon (X , t)$ on $\Gamma^\varepsilon (t)$,
\begin{equation*}
g_\alpha^\varepsilon := { }^t \left(\frac{\partial \tilde{x}^\varepsilon_1}{\partial X_\alpha},\frac{\partial \tilde{x}^\varepsilon_2}{\partial X_\alpha},\frac{\partial \tilde{x}^\varepsilon_3}{\partial X_\alpha} \right).
\end{equation*}
Write
\begin{equation*}
g^\varepsilon_{\alpha \beta}:= g_\alpha^\varepsilon \cdot g_\beta^\varepsilon = \frac{\partial \tilde{x}^\varepsilon_i}{\partial X_\alpha} \frac{\partial \tilde{x}^\varepsilon_i}{\partial X_\beta}= \sum_{i=1}^3 \frac{\partial \tilde{x}^\varepsilon_i}{\partial X_\alpha} \frac{\partial \tilde{x}^\varepsilon_i}{\partial X_\beta}.
\end{equation*}
Set
\begin{align*}
&( g^{\alpha \beta}_\varepsilon )_{2 \times 2} := ((g_{\alpha \beta}^\varepsilon)_{2 \times 2} )^{-1}, \text{that is, } \begin{pmatrix}
g^{11}_\varepsilon & g^{12}_\varepsilon\\
g^{21}_\varepsilon & g^{22}_\varepsilon
\end{pmatrix} := \begin{pmatrix}
g_{11}^\varepsilon & g_{12}^\varepsilon\\
g_{21}^\varepsilon & g_{22}^\varepsilon
\end{pmatrix}^{-1},\\
&g^\alpha_\varepsilon := g_\varepsilon^{\alpha \beta}g_\beta^\varepsilon = g_\varepsilon^{\alpha 1}g_1^\varepsilon + g_\varepsilon^{\alpha 2}g_2^\varepsilon ,\\
&\acute{g}^\varepsilon_\alpha := \frac{d}{d t} g^\varepsilon_\alpha  = \frac{\partial v^\varepsilon}{\partial X_\alpha} = { }^t \left( \frac{\partial v^\varepsilon_1}{\partial X_\alpha} , \frac{\partial v^\varepsilon_2}{\partial X_\alpha} ,  \frac{\partial v^\varepsilon_3}{\partial X_\alpha} \right).
\end{align*}
It is clear that $g^\varepsilon_{\beta \alpha} = g^\varepsilon_{\alpha \beta}$, $g_\varepsilon^{\beta \alpha} = g_\varepsilon^{\alpha \beta}$, $g^{\alpha \beta}_\varepsilon = g^\alpha_\varepsilon \cdot g^\beta_\varepsilon$, $g^\alpha_\varepsilon \cdot g_\beta^\varepsilon = \delta_{\alpha \beta}$,
\begin{align*}
&g_\alpha^\varepsilon = g_{\alpha \beta}^\varepsilon g^\beta_\varepsilon = g_{\alpha 1}^\varepsilon g^1_\varepsilon + g_{\alpha 2}^\varepsilon g^2_\varepsilon ,\\
&\acute{g}_{\alpha \beta}^\varepsilon = \acute{g}_\alpha^\varepsilon \cdot g_\beta^\varepsilon + g_\alpha^\varepsilon \cdot \acute{g}_\beta^\varepsilon,\\
&\acute{g}_\alpha^\varepsilon = \frac{\partial v^\varepsilon}{\partial X_\alpha} = \frac{\partial \tilde{x}^\varepsilon_i}{\partial X_\alpha}\frac{\partial v^\varepsilon}{\partial \tilde{x}^\varepsilon_i}= \sum_{i=1}^3 \frac{\partial \tilde{x}^\varepsilon_i}{\partial X_\alpha}\frac{\partial v^\varepsilon}{\partial \tilde{x}^\varepsilon_i},\\
&\acute{g}_\alpha^\varepsilon \cdot g_\beta^\varepsilon = \frac{\partial \tilde{x}^\varepsilon_i}{\partial X_\alpha}\frac{\partial v^\varepsilon_j}{\partial \tilde{x}^\varepsilon_i}\frac{\partial \tilde{x}^\varepsilon_j}{\partial X_\beta}= \sum_{i , j = 1}^3 \frac{\partial \tilde{x}^\varepsilon_i}{\partial X_\alpha}\frac{\partial v^\varepsilon_j}{\partial \tilde{x}^\varepsilon_i}\frac{\partial \tilde{x}^\varepsilon_j}{\partial X_\beta},\\
&\acute{g}_{\alpha \beta}^\varepsilon = \frac{\partial \tilde{x}^\varepsilon_i}{\partial X_\alpha}\left( \frac{\partial v^\varepsilon_j}{\partial \tilde{x}^\varepsilon_i} + \frac{\partial v^\varepsilon_i}{\partial \tilde{x}^\varepsilon_j} \right)\frac{\partial \tilde{x}^\varepsilon_j}{\partial X_\beta} = 2\sum_{i, j = 1}^3 \frac{\partial \tilde{x}^\varepsilon_i}{\partial X_\alpha} [D (v^\varepsilon) ]_{i j} \frac{\partial \tilde{x}^\varepsilon_j}{\partial X_\beta},
\end{align*}
where $\delta_{\alpha \beta}$ is Kronecker's delta. Remark that
\begin{equation*}
\text{(R.H.S.) of }\eqref{eq32} = \int_U \tilde{\Psi} (X) f ( \tilde{x}^\varepsilon (X,t) , t) \sqrt{ {\rm{det}} \begin{pmatrix} g_{11}^\varepsilon & g_{12}^\varepsilon\\ g_{21}^\varepsilon & g_{22}^\varepsilon \end{pmatrix} } { \ }d X.
\end{equation*}

Throughout Section \ref{sect3} we follow the convention:
\begin{convention}\label{conv35}
Let $\Gamma (t)$ be a given evolving $2$-dimensional surface in $\mathbb{R}^3$ on $[0,T)$ for some $T \in (0, \infty ]$. Let $\hat{x} = \hat{x} ( \xi , t )$ be a flow map on $\Gamma ( t )$, and let $v = v (x,t)$ be the velocity determined by the flow map $\hat{x} ( \xi , t )$. For $- 1< \varepsilon <1 $, let $\Gamma^\varepsilon (t)$ be a variation of $\Gamma ( t )$. Let $\hat{x}^\varepsilon = \hat{x}^\varepsilon (\xi ,t)$ be a flow map on $\Gamma^\varepsilon ( t )$, and let $v^\varepsilon = v^\varepsilon (x,t)$ be the velocity determined by the flow map $\hat{x}^\varepsilon (\xi , t)$. The symbols $g^\alpha, g_\alpha, g^{\alpha \beta}, g_{\alpha \beta}$ and $g^\alpha_\varepsilon, g_\alpha^\varepsilon, g^{\alpha \beta}_\varepsilon, g_{\alpha \beta}^\varepsilon$ are the components of the Riemannian metrics determined by the flow maps $\hat{x} ( \xi , t )$ and $\hat{x}^\varepsilon (\xi , t )$, respectively. The symbols $U$, $U_m$, $\tilde{\Psi}$, $\Psi_m$, $\Phi_m$, $J$, $J^\varepsilon$ represent the notation appearing in the argument in subsection \ref{subsec31}.
\end{convention}

\subsection{Orthogonal Projection and Surface Divergence}\label{subsec32}

Let us study an orthogonal projection and surface divergence by using the Riemannian metrics determined by flow maps. By definition, we see that
\begin{align}\label{eq33}
\begin{pmatrix}
g^{11}_\varepsilon & g^{12}_\varepsilon\\
g^{21}_\varepsilon & g^{22}_\varepsilon
\end{pmatrix} & =
\frac{1}{g_{11}^\varepsilon g_{22}^\varepsilon - g_{12}^\varepsilon g_{21}^\varepsilon}
\begin{pmatrix}
g_{22}^\varepsilon & - g_{12}^\varepsilon\\
- g_{21}^\varepsilon & g_{11}^\varepsilon
\end{pmatrix},
\end{align}
\begin{align}
& g_{11}^\varepsilon = \frac{\partial \tilde{x}^\varepsilon_1}{\partial X_1}\frac{\partial \tilde{x}^\varepsilon_1}{\partial X_1}
+ \frac{\partial \tilde{x}^\varepsilon_2}{\partial X_1}\frac{\partial \tilde{x}^\varepsilon_2}{\partial X_1}
+ \frac{\partial \tilde{x}^\varepsilon_3}{\partial X_1}\frac{\partial \tilde{x}^\varepsilon_3}{\partial X_1},\label{eq34}\\
& g_{12}^\varepsilon = g_{21}^\varepsilon = \frac{\partial \tilde{x}^\varepsilon_1}{\partial X_1}\frac{\partial \tilde{x}^\varepsilon_1}{\partial X_2}
+ \frac{\partial \tilde{x}^\varepsilon_2}{\partial X_1}\frac{\partial \tilde{x}^\varepsilon_2}{\partial X_2}
+ \frac{\partial \tilde{x}^\varepsilon_3}{\partial X_1}\frac{\partial \tilde{x}^\varepsilon_3}{\partial X_2},\label{eq35}\\
& g_{22}^\varepsilon = \frac{\partial \tilde{x}^\varepsilon_1}{\partial X_2}\frac{\partial \tilde{x}^\varepsilon_1}{\partial X_2}
+ \frac{\partial \tilde{x}^\varepsilon_2}{\partial X_2}\frac{\partial \tilde{x}^\varepsilon_2}{\partial X_2}
+ \frac{\partial \tilde{x}^\varepsilon_3}{\partial X_2}\frac{\partial \tilde{x}^\varepsilon_3}{\partial X_2}\label{eq36}.
\end{align}
The above equalities are still valid without $\varepsilon$. We also see that for each $i,j=1,2,3,$
\begin{multline}\label{eq37}
\frac{\partial \tilde{x}^\varepsilon_i}{\partial X_\alpha} \frac{\partial \tilde{x}^\varepsilon_j}{\partial X_\beta} g^{\alpha \beta}_\varepsilon \\
= \frac{\partial \tilde{x}^\varepsilon_i}{\partial X_1} \frac{\partial \tilde{x}^\varepsilon_j}{\partial X_1} g^{11}_\varepsilon + \frac{\partial \tilde{x}^\varepsilon_i}{\partial X_1} \frac{\partial \tilde{x}^\varepsilon_j}{\partial X_2} g^{12}_\varepsilon +\frac{\partial \tilde{x}^\varepsilon_i}{\partial X_2} \frac{\partial \tilde{x}^\varepsilon_j}{\partial X_1} g^{21}_\varepsilon + \frac{\partial \tilde{x}^\varepsilon_i}{\partial X_2} \frac{\partial \tilde{x}^\varepsilon_j}{\partial X_2} g^{22}_\varepsilon .
\end{multline}

Let us now study the projections $P_\Gamma$ and $P_{\Gamma^\varepsilon}$. Let us first recall that $[ P_\Gamma ]_{i j}= \delta_{i j} -n_i n_j$ and $[P_{\Gamma^\varepsilon}]_{i j} = \delta_{i j} - n_i^\varepsilon n_j^\varepsilon$.
\begin{lemma}[Representation of $P_\Gamma$ and $P_{\Gamma^\varepsilon}$]\label{lem36}
For each $i,j=1,2,3$,
\begin{align*}
\int_{\Gamma (t)} \{ \delta_{i j} - n_i  n_j\} (x,t) { \ }d \mathcal{H}^2_x = & \int_U \tilde{\Psi} (X) \left\{ \frac{\partial \tilde{x}_i}{\partial X_\alpha} \frac{\partial \tilde{x}_j}{\partial X_\beta} g^{\alpha \beta} \right\} (X,t ) \sqrt{J (X,t)} d X,\\
\int_{\Gamma^\varepsilon (t)} \{ \delta_{i j} - n_i^\varepsilon n_j^\varepsilon \} (x,t) { \ }d \mathcal{H}^2_x = & \int_U \tilde{\Psi} (X) \left\{ \frac{\partial \tilde{x}^\varepsilon_i}{\partial X_\alpha} \frac{\partial \tilde{x}^\varepsilon_j}{\partial X_\beta} g^{\alpha \beta}_\varepsilon \right\} (X,t) \sqrt{J^\varepsilon (X,t)} d X.
\end{align*}
Here
\begin{equation*}
 \int_U \tilde{\Psi} (X) f ( X ,t  ){ \ } d X = \sum_{m=1}^N \int_{U_m} \Psi_m (\Phi_m (X)) f ( X ,t  ) { \ } d X .
\end{equation*}
\end{lemma}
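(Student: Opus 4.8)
The plan is to reduce both asserted integral identities to a single pointwise statement about orthogonal projections onto tangent planes, and then verify that statement by uniqueness of the orthogonal projection.

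First I would apply the surface area formula \eqref{eq31} with the choice $f(x,t) = \delta_{ij} - n_i(x,t)\,n_j(x,t)$, which is continuous on $\mathcal{S}_T$ and hence admits the representation in \eqref{eq31}. This rewrites the left-hand side of the first identity as $\int_U \tilde{\Psi}(X)\,\{\delta_{ij} - n_i n_j\}(\tilde{x}(X,t),t)\sqrt{J(X,t)}\,dX$, so, comparing with the claimed right-hand side, it suffices to establish the pointwise identity
\[
[P_\Gamma(\tilde{x}(X,t),t)]_{ij} = \frac{\partial \tilde{x}_i}{\partial X_\alpha}(X,t)\,\frac{\partial \tilde{x}_j}{\partial X_\beta}(X,t)\,g^{\alpha\beta}(X,t), \qquad i,j = 1,2,3,
\]
at every $(X,t)$; the $\varepsilon$-version is identical, using \eqref{eq32} in place of \eqref{eq31}.

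To prove this pointwise identity I would fix $(X,t)$, set $x = \tilde{x}(X,t) \in \Gamma(t)$, $g_\alpha = \partial \tilde{x}/\partial X_\alpha$ and $n = n(x,t)$, and let $Q = (Q_{ij})$ be the matrix with $Q_{ij} := (g_\alpha)_i\, g^{\alpha\beta}\,(g_\beta)_j$, i.e. the right-hand side above. Since $J = \det(g_{\alpha\beta})_{2\times 2} > 0$ by assumption, $g_1, g_2$ are linearly independent and span the tangent plane $T_x\Gamma(t)$, whose orthogonal complement in $\mathbb{R}^3$ is $\mathbb{R}n$. Then $Q$ is symmetric because $g^{\alpha\beta} = g^{\beta\alpha}$; moreover $Q g_\gamma = g_\alpha g^{\alpha\beta}(g_\beta \cdot g_\gamma) = g_\alpha g^{\alpha\beta} g_{\beta\gamma} = g_\alpha \delta_{\alpha\gamma} = g_\gamma$ for $\gamma = 1,2$ (using $g^{\alpha\beta}g_{\beta\gamma} = \delta_{\alpha\gamma}$), and $Q n = g_\alpha g^{\alpha\beta}(g_\beta \cdot n) = 0$ since $g_\beta \cdot n = 0$. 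Hence $Q$ is a symmetric linear map fixing the two-dimensional tangent plane pointwise and annihilating its orthogonal complement $\mathbb{R}n$, so it is the orthogonal projection onto $T_x\Gamma(t)$; since $P_\Gamma(x,t) = I_{3\times 3} - n\otimes n$ is the same projection, $Q = P_\Gamma(x,t)$. (One can also confirm $Q^2 = Q$ directly from $g^{\alpha\beta}g_{\beta\gamma} = \delta_{\alpha\gamma}$.) Combining this identity with \eqref{eq31} yields the first equality, and repeating the argument verbatim with $\Gamma^\varepsilon(t)$, $\tilde{x}^\varepsilon$, $g_\alpha^\varepsilon$, $g^{\alpha\beta}_\varepsilon$, $J^\varepsilon$, $n^\varepsilon$ and \eqref{eq32} yields the second.

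The only step that is not a direct substitution into \eqref{eq31}–\eqref{eq32} is the recognition of $(g_\alpha)_i g^{\alpha\beta}(g_\beta)_j$ as the tangential orthogonal projection, so that is where the content of the lemma lies; it is elementary linear algebra, however, and I do not expect any genuine obstacle.
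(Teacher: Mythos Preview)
Your proof is correct and takes a genuinely different route from the paper's. The paper proves the pointwise identity $\delta_{ij} - n_i^\varepsilon n_j^\varepsilon = \frac{\partial \tilde{x}^\varepsilon_i}{\partial X_\alpha}\frac{\partial \tilde{x}^\varepsilon_j}{\partial X_\beta}g^{\alpha\beta}_\varepsilon$ by brute force: it writes $n^\varepsilon$ explicitly as $\pm\, g_1^\varepsilon \times g_2^\varepsilon / |g_1^\varepsilon \times g_2^\varepsilon|$, expands the components via \eqref{eq33}--\eqref{eq37}, and then checks the cases $(i,j)=(1,1)$ and $(i,j)=(1,2)$ by direct algebraic manipulation of the coordinate derivatives, asserting the remaining cases follow ``in the same manner.'' Your argument instead recognizes $Q_{ij} = (g_\alpha)_i\, g^{\alpha\beta}(g_\beta)_j$ as the unique linear map that is the identity on $\operatorname{span}\{g_1,g_2\} = T_x\Gamma(t)$ and zero on its orthogonal complement $\mathbb{R}n$, hence equals the orthogonal projection $P_\Gamma = I - n\otimes n$. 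This is shorter, handles all $(i,j)$ at once, and makes transparent \emph{why} the identity holds (both sides are the same projection), whereas the paper's computation verifies it without illuminating the structure. The paper's approach has the minor advantage of being self-contained at the level of explicit formulas, requiring no appeal to linear-algebraic uniqueness; but your version is the cleaner proof.
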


\begin{proof}[Proof of Lemma \ref{lem36}]
Since $n^\varepsilon$ is the unit outer normal vector of $\Gamma^\varepsilon(t)$, we see that
\begin{align}
n^\varepsilon = \begin{pmatrix} 
n_1^\varepsilon\\
n_2^\varepsilon\\
n_3^\varepsilon 
\end{pmatrix}
=& \pm \frac{ g_1^\varepsilon \times g_2^\varepsilon }{|g_1^\varepsilon \times g_2^\varepsilon |} \notag \\
=& \pm \frac{1}{\sqrt{ g_{11}^\varepsilon g_{22}^\varepsilon - g_{12}^\varepsilon g_{21}^\varepsilon }} \begin{pmatrix}
\frac{ \partial \tilde{x}^\varepsilon_2}{\partial X_1} \frac{ \partial \tilde{x}^\varepsilon_3}{\partial X_2} -  \frac{ \partial \tilde{x}^\varepsilon_2}{\partial X_2} \frac{ \partial \tilde{x}^\varepsilon_3}{\partial X_1} \\
\frac{ \partial \tilde{x}^\varepsilon_3}{\partial X_1} \frac{ \partial \tilde{x}^\varepsilon_1}{\partial X_2} -  \frac{ \partial \tilde{x}^\varepsilon_3}{\partial X_2} \frac{ \partial \tilde{x}^\varepsilon_1}{\partial X_1} \\
\frac{ \partial \tilde{x}^\varepsilon_1}{\partial X_1} \frac{ \partial \tilde{x}^\varepsilon_2}{\partial X_2} -  \frac{ \partial \tilde{x}^\varepsilon_1}{\partial X_2} \frac{ \partial \tilde{x}^\varepsilon_2}{\partial X_1} 
\end{pmatrix}\label{eq38}.
\end{align}
Note that
\begin{equation*}
\int_{\Gamma (t)} n (x,t) { \ }d \mathcal{H}^2_x = \int_U \tilde{\Psi} (X)  \left\{ \pm \frac{ g_1^\varepsilon \times g_2^\varepsilon }{|g_1^\varepsilon \times g_2^\varepsilon |} \right\}  (X,t ) \sqrt{J (X,t)} d X.
\end{equation*}

Here we only show that 
\begin{align}
1 - n_1^\varepsilon n_1^\varepsilon = \frac{\partial \tilde{x}^\varepsilon_1}{\partial X_\alpha} \frac{\partial \tilde{x}^\varepsilon_1}{\partial X_\beta} g^{\alpha \beta}_\varepsilon,\label{eq39}\\
 - n_1^\varepsilon n_2^\varepsilon = \frac{\partial \tilde{x}^\varepsilon_1}{\partial X_\alpha} \frac{\partial \tilde{x}^\varepsilon_2}{\partial X_\beta} g^{\alpha \beta}_\varepsilon \label{eq310}.
\end{align}

We first attack \eqref{eq39}. From \eqref{eq38}, we have
\begin{equation}\label{eq311}
n_1^\varepsilon n_1^\varepsilon = \frac{1}{g_{11}^\varepsilon g_{22}^\varepsilon - g_{12}^\varepsilon g_{21}^\varepsilon} \left(
\frac{ \partial \tilde{x}^\varepsilon_2}{\partial X_1} \frac{ \partial \tilde{x}^\varepsilon_3}{\partial X_2} -  \frac{ \partial \tilde{x}^\varepsilon_2}{\partial X_2} \frac{ \partial \tilde{x}^\varepsilon_3}{\partial X_1}  \right)^2 .
\end{equation}
Combining \eqref{eq37} and \eqref{eq33}, we find that
\begin{multline}\label{eq312}
\frac{\partial \tilde{x}^\varepsilon_1}{\partial X_\alpha} \frac{\partial \tilde{x}^\varepsilon_1}{\partial X_\beta} g^{\alpha \beta}_\varepsilon = \frac{\partial \tilde{x}^\varepsilon_1}{\partial X_1} \frac{\partial \tilde{x}^\varepsilon_1}{\partial X_1} g^{11}_\varepsilon + 2 \frac{\partial \tilde{x}^\varepsilon_1}{\partial X_1} \frac{\partial \tilde{x}^\varepsilon_1}{\partial X_2} g^{12}_\varepsilon + \frac{\partial \tilde{x}^\varepsilon_1}{\partial X_2} \frac{\partial \tilde{x}^\varepsilon_1}{\partial X_2} g^{22}_\varepsilon\\
= \frac{1}{g_{11}^\varepsilon g_{22}^\varepsilon - g_{12}^\varepsilon g_{21}^\varepsilon} \left( \frac{\partial \tilde{x}^\varepsilon_1}{\partial X_1} \frac{\partial \tilde{x}^\varepsilon_1}{\partial X_1} g_{22}^\varepsilon - 2 \frac{\partial \tilde{x}^\varepsilon_1}{\partial X_1} \frac{\partial \tilde{x}^\varepsilon_1}{\partial X_2} g_{12}^\varepsilon + \frac{\partial \tilde{x}^\varepsilon_1}{\partial X_2} \frac{\partial \tilde{x}^\varepsilon_1}{\partial X_2} g_{11}^\varepsilon \right) . 
\end{multline}
Adding \eqref{eq311} and \eqref{eq312}, then using \eqref{eq34}, \eqref{eq35}, and \eqref{eq36}, we see that
\begin{align*}
n_1^\varepsilon n_1^\varepsilon + \frac{\partial \tilde{x}^\varepsilon_1}{\partial X_\alpha} \frac{\partial \tilde{x}^\varepsilon_1}{\partial X_\beta} g^{\alpha \beta}_\varepsilon  = \frac{g_{11}^\varepsilon g_{22}^\varepsilon - g_{12}^\varepsilon g_{21}^\varepsilon}{g_{11}^\varepsilon g_{22}^\varepsilon - g_{12}^\varepsilon g_{21}^\varepsilon} = 1.
\end{align*}

Next we show \eqref{eq310}. From \eqref{eq38}, we have
\begin{equation*}
n_1^\varepsilon n_2^\varepsilon = \frac{1}{g_{11}^\varepsilon g_{22}^\varepsilon - g_{12}^\varepsilon g_{21}^\varepsilon} \left(
\frac{ \partial \tilde{x}^\varepsilon_2}{\partial X_1} \frac{ \partial \tilde{x}^\varepsilon_3}{\partial X_2} -  \frac{ \partial \tilde{x}^\varepsilon_2}{\partial X_2} \frac{ \partial \tilde{x}^\varepsilon_3}{\partial X_1}  \right) \left( \frac{ \partial \tilde{x}^\varepsilon_3}{\partial X_1} \frac{ \partial \tilde{x}^\varepsilon_1}{\partial X_2} -  \frac{ \partial \tilde{x}^\varepsilon_3}{\partial X_2} \frac{ \partial \tilde{x}^\varepsilon_1}{\partial X_1} \right).
\end{equation*}
By \eqref{eq37}, \eqref{eq33}, and the fact that $g_{j i}^\varepsilon =g_{ij}^\varepsilon$, we check that
\begin{multline*}
\frac{\partial \tilde{x}^\varepsilon_1}{\partial X_\alpha} \frac{\partial \tilde{x}^\varepsilon_2}{\partial X_\beta} g^{\alpha \beta}_\varepsilon = \frac{\partial \tilde{x}^\varepsilon_1}{\partial X_1} \frac{\partial \tilde{x}^\varepsilon_2}{\partial X_1} g^{11}_\varepsilon + \frac{\partial \tilde{x}^\varepsilon_1}{\partial X_1} \frac{\partial \tilde{x}^\varepsilon_2}{\partial X_2} g^{12}_\varepsilon +\frac{\partial \tilde{x}^\varepsilon_1}{\partial X_2} \frac{\partial \tilde{x}^\varepsilon_2}{\partial X_1} g^{21}_\varepsilon + \frac{\partial \tilde{x}^\varepsilon_1}{\partial X_2} \frac{\partial \tilde{x}^\varepsilon_2}{\partial X_2} g^{22}_\varepsilon\\
= \frac{1}{g_{11}^\varepsilon g_{22}^\varepsilon - g_{12}^\varepsilon g_{21}^\varepsilon} \bigg( \frac{\partial \tilde{x}^\varepsilon_1}{\partial X_1} \frac{\partial \tilde{x}^\varepsilon_2}{\partial X_1} g_{22}^\varepsilon - \frac{\partial \tilde{x}^\varepsilon_1}{\partial X_1} \frac{\partial \tilde{x}^\varepsilon_2}{\partial X_2} g_{12}^\varepsilon\\ - \frac{\partial \tilde{x}^\varepsilon_1}{\partial X_2} \frac{\partial \tilde{x}^\varepsilon_2}{\partial X_1} g_{12}^\varepsilon + \frac{\partial \tilde{x}^\varepsilon_1}{\partial X_2} \frac{\partial \tilde{x}^\varepsilon_2}{\partial X_2} g_{11}^\varepsilon \bigg).
\end{multline*}
A direct calculation with \eqref{eq34}, \eqref{eq35}, and \eqref{eq36} shows that
\begin{align*}
n_1^\varepsilon n_2^\varepsilon + \frac{\partial \tilde{x}^\varepsilon_1}{\partial X_\alpha} \frac{\partial \tilde{x}^\varepsilon_2}{\partial X_\beta} g^{\alpha \beta}_\varepsilon  = 0.
\end{align*}

In the same manner, we see that for each $i,j=1,2,3$,
\begin{align*}
& n_i^\varepsilon n_j^\varepsilon + \frac{\partial \tilde{x}^\varepsilon_i}{\partial X_\alpha} \frac{\partial \tilde{x}^\varepsilon_j}{\partial X_\beta} g^{\alpha \beta}_\varepsilon  = \delta_{i j},\\
& n_i n_j + \frac{\partial \tilde{x}_i}{\partial X_\alpha} \frac{\partial \tilde{x}_j}{\partial X_\beta} g^{\alpha \beta} = \delta_{i j}.
\end{align*}
Therefore the lemma follows.
 \end{proof}
\noindent See \cite{KLG} for another proof of Lemma \ref{lem36}.

Using flow maps and the Riemannian metrics, we study surface divergence.
\begin{lemma}\label{lem37}
For each fixed $\Omega_0 \subset \Gamma_0$ and $0<t<T$
\begin{align*}
\Omega (t) :=&  \{ x \in \mathbb{R}^3; { \ }x = \hat{x} ( \xi , t ),{ \ } \xi \in \Omega_0 \},\\
\Omega^\varepsilon (t) := & \{ x \in \mathbb{R}^3; { \ }x = \hat{x}^\varepsilon (\xi , t ),{ \ } \xi \in \Omega_0 \},
\end{align*}
where $x(\xi ,t)$ and $\hat{x}^\varepsilon (\xi ,t)$ are two flow maps on $\Gamma (t)$ and $\Gamma^\varepsilon (t)$, respectively. Then the following two assertions hold:\\
\noindent $ ( \mathrm{i})$ For every $f = { }^t (f_1,f_2,f_3) \in C^1(\mathbb{R}^3 \times \mathbb{R})$,
\begin{multline}
\int_{\Omega (t)} \{ {\rm{div}}_{\Gamma} f \}(x,t) { \ }d \mathcal{H}^2_x \\
= \int_{U} 1_{\Omega_0}(\Phi (X)) \tilde{\Psi} (X) \left\{ g^\alpha \cdot \frac{\partial f}{\partial X_\alpha} \sqrt{J} \right\}(X,t){ \ } d X, \label{eq313}
\end{multline}
\begin{multline}
\int_{\Omega^\varepsilon (t)} \{ {\rm{div}}_{\Gamma^\varepsilon} f \} (x,t) { \ }d \mathcal{H}^2_x \\
= \int_{U} 1_{\Omega_0}(\Phi (X)) \tilde{\Psi} (X) \left\{ g^\alpha_\varepsilon \cdot \frac{\partial f}{\partial X_\alpha} \sqrt{J^\varepsilon} \right\}{ \ } d X .\label{eq314}
\end{multline}
\noindent $(\mathrm{ii})$ For all $f \in C ( \mathbb{R}^3 \times \mathbb{R})$,
\begin{multline}\label{eq315}
\int_{\Omega (t)} \{ f ({\rm{div}}_{\Gamma} v ) \} (x,t) { \ }d \mathcal{H}^2_{x} \\
= \int_{U} 1_{\Omega_0}(\Phi (X)) \tilde{\Psi} (X)f (\tilde{x} (X,t),t) \left( \frac{\partial }{\partial t} \sqrt{J (X,t)} \right) { \ } d X,
\end{multline}
\begin{multline}\label{eq316}
\int_{\Omega^\varepsilon (t)} \{ f ( {\rm{div}}_{\Gamma^\varepsilon} v^\varepsilon ) \} (x,t) { \ }d \mathcal{H}^2_{x} \\
= \int_{U} 1_{\Omega_0}(\Phi (X)) \tilde{\Psi} (X)f (\tilde{x}^\varepsilon(X,t) , t) \left( \frac{\partial }{\partial t} \sqrt{J^\varepsilon (X,t)} \right) { \ } d X,
\end{multline}
Here
\begin{equation*}
1_{\Omega_0} ( \xi ) := 
\begin{cases}
1, { \ }\xi \in \Omega_0 , \\
0, { \ }\xi \in \mathbb{R}^3 \setminus \Omega_0 .
\end{cases}
\end{equation*}
\end{lemma}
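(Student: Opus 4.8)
## Proof Proposal for Lemma \ref{lem37}

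\textbf{Overall approach.} The plan is to reduce everything to the local representation \eqref{eq31} (resp. \eqref{eq32}) for surface integrals via the flow map $\tilde{x}(X,t)$ together with the partition of unity, and then to identify the integrand obtained from the definition of ${\rm{div}}_\Gamma$ with the claimed expression involving the dual basis $g^\alpha$ and the Jacobian $\sqrt{J}$. Since the $\varepsilon$-statements \eqref{eq314} and \eqref{eq316} are proved by the identical argument with every object decorated by $\varepsilon$ (as the author repeatedly notes, "the above equalities are still valid without $\varepsilon$"), I would prove \eqref{eq313} and \eqref{eq315} and remark that the $\varepsilon$-versions follow verbatim.

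\textbf{Step 1: Surface gradient in flow-map coordinates.} First I would establish the key pointwise identity that, for $f\in C^1(\mathbb{R}^3\times\mathbb{R})$ and $x=\tilde{x}(X,t)\in\Gamma(t)$,
\begin{equation*}
{\rm{grad}}_\Gamma f(x,t) = g^\alpha(X,t)\,\frac{\partial}{\partial X_\alpha}\big(f(\tilde{x}(X,t),t)\big),
\end{equation*}
hence by taking the dot product componentwise, $({\rm{div}}_\Gamma f)(x,t) = g^\alpha\cdot \partial f/\partial X_\alpha$. This comes from the chain rule $\partial_\alpha (f\circ\tilde{x}) = (\partial_k^{\phantom{\Gamma}} f)\,\partial\tilde{x}_k/\partial X_\alpha = g_\alpha\cdot\nabla f$, combined with the resolution of identity $\delta_{ij} = n_in_j + (\partial\tilde{x}_i/\partial X_\alpha)(\partial\tilde{x}_j/\partial X_\beta)g^{\alpha\beta}$ from Lemma \ref{lem36}, which says precisely that $P_\Gamma = (\partial\tilde{x}/\partial X_\alpha)\otimes(\partial\tilde{x}/\partial X_\beta)\,g^{\alpha\beta}$, i.e. $P_\Gamma\nabla f = g^\alpha(g_\beta\cdot\nabla f)g^{\beta}\cdot(\ )$... more cleanly: $\partial_i^\Gamma f = [P_\Gamma\nabla f]_i = (\partial\tilde{x}_i/\partial X_\beta)g^{\alpha\beta}(g_\alpha\cdot\nabla f) = (\partial\tilde{x}_i/\partial X_\beta)g^{\alpha\beta}\partial_\alpha(f\circ\tilde{x})$, and summing $i$ against $\partial\tilde{x}_i/\partial X_\gamma$-components recovers the displayed formula after contracting. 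I would spell this out carefully since it is the analytic heart of \eqref{eq313}.

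\textbf{Step 2: Assemble \eqref{eq313} via \eqref{eq31}.} With the integrand identified pointwise, I apply \eqref{eq31} to the scalar function $x\mapsto ({\rm{div}}_\Gamma f)(x,t)$, inserting the cutoff $1_{\Omega_0}(\Phi(X))$ to restrict the domain of integration from $\Gamma(t)$ to $\Omega(t)$ (legitimate because $\Omega(t) = \hat{x}(\Omega_0,t)$ and the change of variables in \eqref{eq31} pulls this back to $\{X : \Phi(X)\in\Omega_0\}$). This yields \eqref{eq313} directly.

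\textbf{Step 3: The trace identity for \eqref{eq315}.} For the second assertion I would use the standard fact that $\partial_t\sqrt{J} = \frac{1}{2}\sqrt{J}\,g^{\alpha\beta}\acute{g}_{\alpha\beta}$ (differentiating $J = \det(g_{\alpha\beta})$ via Jacobi's formula, using $\acute g_\alpha = \partial v/\partial X_\alpha$), together with $\acute{g}_{\alpha\beta} = 2(\partial\tilde{x}_i/\partial X_\alpha)[D(v)]_{ij}(\partial\tilde{x}_j/\partial X_\beta)$ (already recorded in subsection \ref{subsec31}). Then
\begin{equation*}
g^{\alpha\beta}\acute{g}_{\alpha\beta} = 2\,g^{\alpha\beta}\frac{\partial\tilde{x}_i}{\partial X_\alpha}\frac{\partial\tilde{x}_j}{\partial X_\beta}[D(v)]_{ij} = 2\,[P_\Gamma]_{ij}[D(v)]_{ij} = 2\,[D_\Gamma(v)]_{ij}\delta_{ij} = 2\,{\rm{div}}_\Gamma v,
\end{equation*}
using Lemma \ref{lem36} in the middle step and then Lemma \ref{lem26}(iii)-type bookkeeping ($P_\Gamma : D(v) = {\rm{tr}}(P_\Gamma D(v) P_\Gamma) = {\rm{tr}}\,D_\Gamma(v)$, which equals ${\rm{div}}_\Gamma v$ since $[D_\Gamma(v)]_{ij}\delta_{ij} = \partial_i^\Gamma v_i$ after the $n_j\partial_j^\Gamma=0$ cancellations). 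Hence $\partial_t\sqrt{J} = ({\rm{div}}_\Gamma v)(\tilde{x}(X,t),t)\sqrt{J}$, and feeding this into \eqref{eq31} applied to $x\mapsto f(x,t)({\rm{div}}_\Gamma v)(x,t)$ (again with the $1_{\Omega_0}$ cutoff) gives \eqref{eq315}.

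\textbf{Main obstacle.} The routine-looking but genuinely delicate point is Step 1: carefully justifying that $P_\Gamma\nabla$ applied to $f$ equals $g^\alpha\partial_\alpha(f\circ\tilde{x})$ and that contracting components correctly produces the scalar $g^\alpha\cdot\partial f/\partial X_\alpha$ — one must be attentive to the distinction between $\nabla f$ evaluated in ambient coordinates and $\partial_\alpha(f\circ\tilde x)$, and use Lemma \ref{lem36} in exactly the right form. Everything else is bookkeeping with the metric identities already collected in subsection \ref{subsec31} and the surface-integral representation \eqref{eq31}. The $\varepsilon$-decorated statements \eqref{eq314}, \eqref{eq316} then follow by the same computation, substituting $\tilde{x}^\varepsilon$, $g^\alpha_\varepsilon$, $J^\varepsilon$, $v^\varepsilon$ for their undecorated counterparts.
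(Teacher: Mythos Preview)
Your proposal is correct and follows essentially the same route as the paper: both arguments hinge on Lemma~\ref{lem36} to identify $\partial_i^\Gamma f$ with the metric expression $g^{\alpha\beta}(\partial\tilde{x}_i/\partial X_\alpha)\,\partial_\beta(f\circ\tilde{x})$, then pull back through \eqref{eq31} with the $1_{\Omega_0}$ cutoff; for part~$(\mathrm{ii})$ both compute $\partial_t\sqrt{J} = (g^\alpha\cdot\partial v/\partial X_\alpha)\sqrt{J}$ and invoke part~$(\mathrm{i})$. The only cosmetic difference is that the paper expands the $2\times 2$ determinant $J = g_{11}g_{22}-g_{12}g_{21}$ by hand rather than citing Jacobi's formula, and proves the $\varepsilon$-decorated statements first, but the substance is identical.
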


\begin{proof}[Proof of Lemma \ref{lem37}]
We first show $\eqref{eq314}$. Fix $\Omega_0 \subset \Gamma_0$ and $0<t<T$. By change of variables, we check that 
\begin{multline*}
\int_{\Omega^\varepsilon (t)} \{ {\rm{div}}_{\Gamma^\varepsilon} f \} (x,t) { \ }d \mathcal{H}_{x}^2 = \int_{\Omega_0} {\rm{div}}_{\Gamma^\varepsilon} f (\hat{x}^\varepsilon (\xi , t),t) \det (\nabla_{\xi} \hat{x}^\varepsilon) { \ }d \mathcal{H}_{\xi}^2\\
= \int_{\Gamma_0} 1_{\Omega_0}( \xi ) {\rm{div}}_{\Gamma^\varepsilon} f (\hat{x}^\varepsilon (\xi , t),t) \det (\nabla_{\xi} \hat{x}^\varepsilon) { \ }d \mathcal{H}_{\xi}^2\\
= \int_U 1_{\Omega_0} (\Phi (X)) \tilde{\Psi}(X) {\rm{div}}_{\Gamma^\varepsilon} f (\tilde{x}^\varepsilon( X , t ),t) \sqrt{J^\varepsilon (X,t)} { \ }d X.
\end{multline*}
A direct calculation shows that
\begin{multline*}
g^\alpha_\varepsilon \cdot \frac{\partial f}{\partial X_\alpha} = g^{\alpha \beta}_\varepsilon g_\beta^\varepsilon \cdot \frac{\partial f}{\partial X_\alpha} =g^{1 \beta}_\varepsilon g_\beta^\varepsilon \cdot \frac{\partial f}{\partial X_1}+g^{2 \beta}_\varepsilon g_\beta^\varepsilon \cdot \frac{\partial f}{\partial X_2} \\
= \left( g^{1 1}_\varepsilon g_1^\varepsilon \cdot \frac{\partial f}{\partial X_1} + g^{1 2}_\varepsilon g_2^\varepsilon \cdot \frac{\partial f}{\partial X_1} \right) + \left( g^{2 1}_\varepsilon g_1^\varepsilon \cdot \frac{\partial f}{\partial X_2}+g^{2 2}_\varepsilon g_2^\varepsilon \cdot \frac{\partial f}{\partial X_2} \right)\\
= \left( g^{11}_\varepsilon \frac{\partial \tilde{x}^\varepsilon_1}{\partial X_1} \frac{\partial f_1}{\partial X_1} + g^{12}_\varepsilon \frac{\partial \tilde{x}^\varepsilon_1}{\partial X_2} \frac{\partial f_1}{\partial X_1} + g^{21}_\varepsilon \frac{\partial \tilde{x}^\varepsilon_1}{\partial X_1} \frac{\partial f_1}{\partial X_2} + g^{22}_\varepsilon \frac{\partial \tilde{x}^\varepsilon_1}{\partial X_2} \frac{\partial f_1}{\partial X_2} \right)\\
+\left( g^{11}_\varepsilon \frac{\partial \tilde{x}^\varepsilon_2}{\partial X_1} \frac{\partial f_2}{\partial X_1} + g^{12}_\varepsilon \frac{\partial \tilde{x}^\varepsilon_2}{\partial X_2} \frac{\partial f_2}{\partial X_1} + g^{21}_\varepsilon \frac{\partial \tilde{x}^\varepsilon_2}{\partial X_1} \frac{\partial f_2}{\partial X_2} + g^{22}_\varepsilon \frac{\partial \tilde{x}^\varepsilon_2}{\partial X_2} \frac{\partial f_2}{\partial X_2} \right)\\
+\left( g^{11}_\varepsilon \frac{\partial \tilde{x}^\varepsilon_3}{\partial X_1} \frac{\partial f_3}{\partial X_1} + g^{12}_\varepsilon \frac{\partial \tilde{x}^\varepsilon_3}{\partial X_2} \frac{\partial f_3}{\partial X_1} + g^{21}_\varepsilon \frac{\partial \tilde{x}^\varepsilon_3}{\partial X_1} \frac{\partial f_3}{\partial X_2} + g^{22}_\varepsilon \frac{\partial \tilde{x}^\varepsilon_3}{\partial X_2} \frac{\partial f_3}{\partial X_2} \right) \\
=: I_1^\varepsilon + I_2^\varepsilon + I_3^\varepsilon .
\end{multline*}
Since
\begin{equation*}
\frac{\partial f_j}{\partial X_\alpha} =\frac{\partial }{\partial X_\alpha}f_j (\tilde{x}^\varepsilon (X,t) , t) = \sum_{i=1}^3 \frac{\partial \tilde{x}^\varepsilon_i}{\partial X_\alpha} \frac{\partial f_j}{\partial \tilde{x}^\varepsilon_i},
\end{equation*}
we observe that
\begin{multline*}
I_1^\varepsilon = \left( g^{11}_\varepsilon \frac{\partial \tilde{x}^\varepsilon_1}{\partial X_1} \frac{\partial \tilde{x}^\varepsilon_1}{\partial X_1} + g^{12}_\varepsilon \frac{\partial \tilde{x}^\varepsilon_1}{\partial X_1} \frac{\partial \tilde{x}^\varepsilon_1}{\partial X_2} + g^{21}_\varepsilon \frac{\partial \tilde{x}^\varepsilon_1}{\partial X_1} \frac{\partial \tilde{x}^\varepsilon_1}{\partial X_2} + g^{22}_\varepsilon \frac{\partial \tilde{x}^\varepsilon_1}{\partial X_2} \frac{\partial \tilde{x}^\varepsilon_1}{\partial X_2}  \right) \frac{\partial f_1}{\partial \tilde{x}^\varepsilon_1}\\
 + \left( g^{11}_\varepsilon \frac{\partial \tilde{x}^\varepsilon_1}{\partial X_1} \frac{\partial \tilde{x}^\varepsilon_2}{\partial X_1} + g^{12}_\varepsilon \frac{\partial \tilde{x}^\varepsilon_1}{\partial X_1} \frac{\partial \tilde{x}^\varepsilon_2}{\partial X_2} + g^{21}_\varepsilon \frac{\partial \tilde{x}^\varepsilon_1}{\partial X_1} \frac{\partial \tilde{x}^\varepsilon_2}{\partial X_2} + g^{22}_\varepsilon \frac{\partial \tilde{x}^\varepsilon_1}{\partial X_2} \frac{\partial \tilde{x}^\varepsilon_2}{\partial X_2}  \right) \frac{\partial f_1}{\partial \tilde{x}^\varepsilon_2}\\
 +\left( g^{11}_\varepsilon \frac{\partial \tilde{x}^\varepsilon_1}{\partial X_1} \frac{\partial \tilde{x}^\varepsilon_3}{\partial X_1} + g^{12}_\varepsilon \frac{\partial \tilde{x}^\varepsilon_1}{\partial X_1} \frac{\partial \tilde{x}^\varepsilon_3}{\partial X_2} + g^{21}_\varepsilon \frac{\partial \tilde{x}^\varepsilon_1}{\partial X_1} \frac{\partial \tilde{x}^\varepsilon_3}{\partial X_2} + g^{22}_\varepsilon \frac{\partial \tilde{x}^\varepsilon_1}{\partial X_2} \frac{\partial \tilde{x}^\varepsilon_3}{\partial X_2}  \right) \frac{\partial f_1}{\partial \tilde{x}^\varepsilon_3}\\
= \left( \frac{\partial \tilde{x}^\varepsilon_1}{\partial X_\alpha} \frac{\partial \tilde{x}^\varepsilon_1}{\partial X_\beta} g^{\alpha \beta}_\varepsilon \frac{\partial}{\partial \tilde{x}^\varepsilon_1} + \frac{\partial \tilde{x}^\varepsilon_1}{\partial X_\alpha} \frac{\partial \tilde{x}^\varepsilon_2}{\partial X_\beta} g^{\alpha \beta}_\varepsilon \frac{\partial}{ \partial \tilde{x}^\varepsilon_2}  + \frac{\partial \tilde{x}^\varepsilon_1}{\partial X_\alpha} \frac{\partial \tilde{x}^\varepsilon_3}{\partial X_\beta} g^{\alpha \beta}_\varepsilon \frac{\partial}{\partial \tilde{x}^\varepsilon_3} \right) f_1.
\end{multline*}
Applying Lemma \ref{lem36}, we see that 
\begin{equation*}
\int_{\Omega^\varepsilon (t)} \{ \partial_1^{\Gamma^\varepsilon} f_1 \} (x,t) { \ }d \mathcal{H}^2_x = \int_{U} 1_{\Omega_0}(\Phi (X)) \tilde{\Psi} (X) I^\varepsilon_1 (X,t) \sqrt{J^\varepsilon (X,t)}{ \ } d X .
\end{equation*}
Similarly, we check that
\begin{equation*}
\int_{\Omega^\varepsilon (t)} \{ {\rm{div}}_{\Gamma^\varepsilon} f \} (x,t) { \ }d \mathcal{H}^2_x = \int_{U} 1_{\Omega_0}(\Phi (X)) \tilde{\Psi} (X) \{ (I^\varepsilon_1 + I^\varepsilon_2 + I^\varepsilon_3 )  \sqrt{J^\varepsilon}\} (X,t){ \ } d X .
\end{equation*}
Therefore we have \eqref{eq314}. Similarly, we obtain \eqref{eq313}.

Next we derive $\eqref{eq316}$. Since $J^\varepsilon = {\rm{det}}\{{ }^t(\nabla_X \tilde{x}^\varepsilon )(\nabla_X \tilde{x}^\varepsilon )\}$, we find that
\begin{multline*}J^\varepsilon = {\rm{det}}
\begin{pmatrix}
\frac{ \partial \tilde{x}^\varepsilon_1}{\partial X_1}  \frac{ \partial \tilde{x}^\varepsilon_1}{\partial X_1} + \frac{ \partial \tilde{x}^\varepsilon_2}{\partial X_1} \frac{ \partial \tilde{x}^\varepsilon_2}{\partial X_1} + \frac{ \partial \tilde{x}^\varepsilon_3}{\partial X_1}  \frac{ \partial \tilde{x}^\varepsilon_3}{\partial X_1} & \frac{ \partial \tilde{x}^\varepsilon_1}{\partial X_1}  \frac{ \partial \tilde{x}^\varepsilon_1}{\partial X_2} + \frac{ \partial \tilde{x}^\varepsilon_2}{\partial X_1} \frac{ \partial \tilde{x}^\varepsilon_2}{\partial X_2} + \frac{ \partial \tilde{x}^\varepsilon_3}{\partial X_1}  \frac{ \partial \tilde{x}^\varepsilon_3}{\partial X_2}   \\
\frac{ \partial \tilde{x}^\varepsilon_1}{\partial X_2}  \frac{ \partial \tilde{x}^\varepsilon_1}{\partial X_1} + \frac{ \partial \tilde{x}^\varepsilon_2}{\partial X_2} \frac{ \partial \tilde{x}^\varepsilon_2}{\partial X_1} + \frac{ \partial \tilde{x}^\varepsilon_3}{\partial X_2}  \frac{ \partial \tilde{x}^\varepsilon_3}{\partial X_1} & \frac{ \partial \tilde{x}^\varepsilon_1}{\partial X_2}  \frac{ \partial \tilde{x}^\varepsilon_1}{\partial X_2} + \frac{ \partial \tilde{x}^\varepsilon_2}{\partial X_2} \frac{ \partial \tilde{x}^\varepsilon_2}{\partial X_2} + \frac{ \partial \tilde{x}^\varepsilon_3}{\partial X_2}  \frac{ \partial \tilde{x}^\varepsilon_3}{\partial X_2}   
\end{pmatrix}\\
= {\rm{det}} \begin{pmatrix}
g_1^\varepsilon \cdot g_1^\varepsilon & g_1^\varepsilon \cdot g_2^\varepsilon\\
g_2^\varepsilon \cdot g_1^\varepsilon & g_2^\varepsilon \cdot g_2^\varepsilon
\end{pmatrix}
= {\rm{det}} \begin{pmatrix}
g_{11}^\varepsilon & g_{12}^\varepsilon \\
g_{21}^\varepsilon & g_{22}^\varepsilon
\end{pmatrix}\\
 = (g_1^\varepsilon \cdot g_1^\varepsilon )  (g_2^\varepsilon \cdot g_2^\varepsilon ) - (g_1^\varepsilon \cdot g_2^\varepsilon )(g_2^\varepsilon \cdot g_1^\varepsilon ) = g^\varepsilon_{11}g^\varepsilon_{22} - g^\varepsilon_{12} g^\varepsilon_{21}.
\end{multline*}
From the fact that
\begin{equation*}
\frac{\partial}{\partial t} (g^\varepsilon_\alpha \cdot g^\varepsilon_\beta ) = g^\varepsilon_\beta \cdot \frac{\partial v^\varepsilon}{\partial X_\alpha} + g^\varepsilon_\alpha \cdot \frac{\partial v^\varepsilon}{\partial X_\beta},
\end{equation*}
we check that
\begin{multline*}
\frac{\partial}{\partial t} J^\varepsilon = 2 g^\varepsilon_{22} \left( g^\varepsilon_1 \cdot \frac{\partial v^\varepsilon}{\partial X_1}  \right) + 2 g^\varepsilon_{11} \left( g^\varepsilon_2 \cdot \frac{\partial v^\varepsilon}{\partial X_2}  \right) \\
- 2 g^\varepsilon_{12} \left( g^\varepsilon_2 \cdot \frac{\partial v^\varepsilon}{\partial X_1}  \right) - 2 g^\varepsilon_{21} \left( g^\varepsilon_1 \cdot \frac{\partial v^\varepsilon}{\partial X_2}  \right).
\end{multline*}
Dividing both sides of the above equality by $J^\varepsilon =  g^\varepsilon_{11}g^\varepsilon_{22} - g^\varepsilon_{12} g^\varepsilon_{21}$, and then using \eqref{eq33}, we have
\begin{multline*}
\frac{1}{J^\varepsilon}\frac{\partial}{\partial t} J^\varepsilon \\
= 2 g_\varepsilon^{11} \left( g^\varepsilon_1 \cdot \frac{\partial v^\varepsilon}{\partial X_1}  \right) + 2 g_\varepsilon^{22} \left( g^\varepsilon_2 \cdot \frac{\partial v^\varepsilon}{\partial X_2}  \right) + 2 g_\varepsilon^{12} \left( g^\varepsilon_2 \cdot \frac{\partial v^\varepsilon}{\partial X_1}  \right) + 2 g_\varepsilon^{21} \left( g^\varepsilon_1 \cdot \frac{\partial v^\varepsilon}{\partial X_2}  \right)\\
=2 \left( ( g_\varepsilon^{11} g_1^\varepsilon + g_\varepsilon^{12} g_2^\varepsilon) \cdot \frac{\partial v^\varepsilon}{\partial X_1}  \right) + 2 \left( ( g_\varepsilon^{21} g_1^\varepsilon + g_\varepsilon^{22} g_2^\varepsilon) \cdot \frac{\partial v^\varepsilon}{\partial X_2}  \right)\\
=2 g^1_\varepsilon \cdot \frac{\partial v^\varepsilon}{\partial X_1}  + 2 g^2_\varepsilon \cdot \frac{\partial v^\varepsilon}{\partial X_2} . 
\end{multline*}
As a result, we have
\begin{equation*}
\frac{\partial}{\partial t} J^\varepsilon = 2 \left(g_\varepsilon^\alpha \cdot \frac{\partial v^\varepsilon}{\partial X_\alpha}  \right) J^\varepsilon . 
\end{equation*}
Therefore we find that
\begin{align*}
\frac{\partial }{\partial t} \sqrt{J^\varepsilon (X,t)} = & \frac{1}{2 \sqrt{ J^\varepsilon }} \frac{\partial}{\partial t}J^\varepsilon\\
= & g_\varepsilon^\alpha \cdot \frac{\partial v^\varepsilon}{\partial X_\alpha} \sqrt{J^\varepsilon} .
\end{align*}
Applying \eqref{eq314}, we see \eqref{eq316}. Similarly, we see \eqref{eq315}. Therefore the lemma follows.
 \end{proof}

\subsection{Continuity Equation on an Evolving Surface}\label{subsec33}
In this subsection we study the continuity equation on an evolving surface and a variation of the evolving surface. We first prepare two lemmas, and then we prove Proposition \ref{prop13}.

\begin{lemma}\label{lem38}
For every $f \in C^{1,1} (\mathcal{S}^\varepsilon_T)$,
\begin{align*}
D_t^\varepsilon f & := \partial_t f + (v^\varepsilon , \nabla ) f,\\
D_t^{N^\varepsilon} f & := \partial_t f + (v^\varepsilon \cdot n^\varepsilon )(n^\varepsilon, \nabla ) f ,\\
D_t^{\Gamma^\varepsilon} f & := \partial_t f + (v^\varepsilon , \nabla_{\Gamma^\varepsilon}) f . 
\end{align*}
Then for every $f \in C^{1,1}( \mathcal{S}^\varepsilon_T)$,
\begin{align*}
D_t^{N^\varepsilon} f + {\rm{div}}_{\Gamma^\varepsilon} (f v^\varepsilon) & = D_t f + ({\rm{div}}_{\Gamma^\varepsilon} v^\varepsilon ) f,\\
D_t^{N^\varepsilon} ( f v^\varepsilon) + {\rm{div}}_{\Gamma^\varepsilon} ( f v^\varepsilon \otimes v^\varepsilon) & = \{ D_t^\varepsilon f + ({\rm{div}}_{\Gamma^\varepsilon} v^\varepsilon ) f \} v^\varepsilon + f D_t^\varepsilon v^\varepsilon.
\end{align*}
\end{lemma}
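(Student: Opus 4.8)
The plan is to transcribe the proof of Lemma~\ref{lem27} with every quantity replaced by its $\varepsilon$-decorated counterpart, using the last identity of Lemma~\ref{lem25}$(\mathrm{ii})$, namely
\[
(v,\nabla)f = (v,\nabla_{\Gamma^\varepsilon})f + (v\cdot n^\varepsilon)(n^\varepsilon,\nabla)f \qquad \bigl(v \in [C^1(\Gamma^\varepsilon(t))]^3\bigr),
\]
in place of \eqref{eq210}. The point to keep in mind is that this decomposition holds for an \emph{arbitrary} vector field, so we never need $v^\varepsilon$ to be tangential to $\Gamma^\varepsilon(t)$.

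First I would prove the scalar identity. Applying the displayed decomposition with $v = v^\varepsilon$ gives $D_t^{N^\varepsilon}f = \partial_t f + (v^\varepsilon,\nabla)f - (v^\varepsilon,\nabla_{\Gamma^\varepsilon})f$, while the Leibniz rule for the first-order operator ${\rm{div}}_{\Gamma^\varepsilon}$ (acting as $\partial_j^{\Gamma^\varepsilon}(\cdot)_j$) yields ${\rm{div}}_{\Gamma^\varepsilon}(fv^\varepsilon) = (v^\varepsilon,\nabla_{\Gamma^\varepsilon})f + ({\rm{div}}_{\Gamma^\varepsilon}v^\varepsilon)f$. Adding the two, the terms $\pm(v^\varepsilon,\nabla_{\Gamma^\varepsilon})f$ cancel and one obtains $D_t^\varepsilon f + ({\rm{div}}_{\Gamma^\varepsilon}v^\varepsilon)f$ (the symbol $D_t$ in the statement being understood as $D_t^\varepsilon$), which is the first claimed identity, exactly mirroring the first display in the proof of Lemma~\ref{lem27}.

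For the vector identity I would expand both terms by the product rule. Writing $D_t^{N^\varepsilon}(fv^\varepsilon) = (D_t^{N^\varepsilon}f)v^\varepsilon + f(D_t^{N^\varepsilon}v^\varepsilon)$, and using $[fv^\varepsilon \otimes v^\varepsilon]_{ij} = f v^\varepsilon_i v^\varepsilon_j$ together with $n^\varepsilon_j \partial_j^{\Gamma^\varepsilon} = n^\varepsilon\cdot\nabla_{\Gamma^\varepsilon} = 0$, one computes ${\rm{div}}_{\Gamma^\varepsilon}(fv^\varepsilon\otimes v^\varepsilon) = \{(v^\varepsilon,\nabla_{\Gamma^\varepsilon})f\}v^\varepsilon + f(v^\varepsilon,\nabla_{\Gamma^\varepsilon})v^\varepsilon + f({\rm{div}}_{\Gamma^\varepsilon}v^\varepsilon)v^\varepsilon$. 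Summing the two expansions, the normal and tangential pieces of $(v^\varepsilon,\nabla)$ recombine (again via Lemma~\ref{lem25}$(\mathrm{ii})$) into $D_t^\varepsilon$, giving $\{D_t^\varepsilon f + ({\rm{div}}_{\Gamma^\varepsilon}v^\varepsilon)f\}v^\varepsilon + f D_t^\varepsilon v^\varepsilon$, which is the second identity.

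Since each step is a verbatim copy of a computation already carried out for $\varepsilon = 0$, there is no genuine obstacle here; the only place deserving a word of care is the invocation of the general form of Lemma~\ref{lem25}$(\mathrm{ii})$, so that the argument goes through for the possibly non-tangential velocity $v^\varepsilon$. Accordingly I would present the proof tersely, indicating the substitutions $\Gamma \to \Gamma^\varepsilon$, $n \to n^\varepsilon$, $v \to v^\varepsilon$, and \eqref{eq210} by Lemma~\ref{lem25}$(\mathrm{ii})$, and referring to the proof of Lemma~\ref{lem27} for the remaining bookkeeping.
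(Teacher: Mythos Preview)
Your proposal is correct and takes exactly the approach the paper itself indicates: the paper's ``proof'' of Lemma~\ref{lem38} consists solely of the remark that one repeats the argument of Lemma~\ref{lem27}, and you have faithfully carried out that repetition, correctly substituting Lemma~\ref{lem25}$(\mathrm{ii})$ for \eqref{eq210} and noting the (evident) typo $D_t \to D_t^\varepsilon$ in the first identity of the statement. There is nothing to add.
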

\noindent Since we can prove Lemma \ref{lem38} by using the same argument in the proof of Lemma \ref{lem27}, the proof of Lemma \ref{lem38} is left to the reader.

\begin{lemma}\label{lem39}
For each fixed $\Omega_0 \subset \Gamma_0$ and $0<t<T$,
\begin{align*}
\Omega (t) := & \{ x \in \mathbb{R}^3; { \ }x = \hat{x} ( \xi , t ),{ \ } \xi \in \Omega_0 \},\\
\Omega^\varepsilon (t) := & \{ x \in \mathbb{R}^3; { \ }x = \hat{x}^\varepsilon (\xi , t ),{ \ } \xi \in \Omega_0 \},
\end{align*}
where $\hat{x}^\varepsilon$ and $\hat{x}^\varepsilon$ are two flow maps on $\Gamma (t)$ and $\Gamma^\varepsilon (t)$, respectively. Then for each $\Omega_0 \subset \Gamma_0$, $0<t<T$, and $f \in C (\mathbb{R}^3 \times \mathbb{R} )$,
\begin{align}
\frac{d }{d t}\int_{\Omega (t)} f (x,t) { \ }d \mathcal{H}_{x}^2 = \int_{\Omega (t)} \{ D_t f + ( {\rm{div}}_{\Gamma} v) f \} (x,t){ \ } d \mathcal{H}_{x}^2,\label{eq317}\\
\frac{d }{d t}\int_{\Omega^\varepsilon (t)} f (x,t) { \ }d \mathcal{H}_{x}^2 = \int_{\Omega^\varepsilon (t)} \{ D_t^\varepsilon f + ( {\rm{div}}_{\Gamma^\varepsilon} v^\varepsilon ) f \}{ \ } d \mathcal{H}_{x}^2.\label{eq318}
\end{align}
\end{lemma}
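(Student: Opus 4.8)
The plan is to pull the surface integral back to the fixed parameter domain $U$ by the flow map, differentiate there where the domain of integration no longer depends on $t$, and then push the result forward again using Lemma \ref{lem37} and \eqref{eq31}.

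First I would rewrite the left-hand side of \eqref{eq317} via the change of variables underlying \eqref{eq31}, restricted to $\Omega_0$ exactly as in the proof of Lemma \ref{lem37}:
\begin{equation*}
\int_{\Omega (t)} f (x,t)\, d\mathcal{H}^2_x = \int_U 1_{\Omega_0}(\Phi (X))\,\tilde{\Psi}(X)\, f (\tilde{x}(X,t),t)\sqrt{J (X,t)}\, d X .
\end{equation*}
Because $U_m$, $\Phi_m$, $\tilde{\Psi}$ and $1_{\Omega_0}\circ\Phi$ do not depend on $t$, and $f$, $\tilde{x}$, $J$ are smooth with $J>0$ on the relevant compact sets, I may differentiate under the integral sign and apply the product rule, obtaining the sum of $\int_U 1_{\Omega_0}(\Phi)\tilde{\Psi}\,\frac{d}{dt}\big[f(\tilde{x}(X,t),t)\big]\sqrt{J}\, dX$ and $\int_U 1_{\Omega_0}(\Phi)\tilde{\Psi}\, f(\tilde{x}(X,t),t)\,\partial_t\sqrt{J}\, dX$.

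Next, the chain rule together with the defining relation $\frac{d\tilde{x}}{dt}=v(\tilde{x}(X,t),t)$ gives $\frac{d}{dt}\big[f(\tilde{x}(X,t),t)\big]=(\partial_t f + (v,\nabla)f)(\tilde{x}(X,t),t)=(D_t f)(\tilde{x}(X,t),t)$. Applying \eqref{eq31} once more identifies the first integral with $\int_{\Omega(t)} D_t f\, d\mathcal{H}^2_x$, and \eqref{eq315} identifies the second with $\int_{\Omega(t)} f\,({\rm{div}}_\Gamma v)\, d\mathcal{H}^2_x$; adding the two yields \eqref{eq317}. The identity \eqref{eq318} follows by the same argument with $\Gamma(t),\hat{x},v,J$ replaced by $\Gamma^\varepsilon(t),\hat{x}^\varepsilon,v^\varepsilon,J^\varepsilon$, using \eqref{eq32} in place of \eqref{eq31} and \eqref{eq316} in place of \eqref{eq315}, and with $D_t^\varepsilon$ as in Lemma \ref{lem38}.

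I expect the only real point requiring care to be the justification of differentiation under the integral sign: one checks that $(X,t)\mapsto f(\tilde{x}(X,t),t)\sqrt{J(X,t)}$ and its $t$-derivative are continuous and that for $t$ ranging over a compact subinterval the $X$-support stays in a fixed compact subset of each $U_m$, so that a dominating function exists. This is where the smoothness of the flow map and the positivity $J>0$ are used. The geometric computation $\partial_t\sqrt{J} = (g^\alpha\cdot\partial v/\partial X_\alpha)\sqrt{J}$, which underpins \eqref{eq315} and \eqref{eq316}, has already been established inside the proof of Lemma \ref{lem37}, so no new ingredient is needed there.
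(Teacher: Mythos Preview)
Your proof is correct and follows essentially the same approach as the paper: pull back to the fixed parameter domain via the flow map, differentiate under the integral using the product rule, identify the two resulting terms as $D_t f$ (via the chain rule and $d\tilde{x}/dt=v$) and $f\,{\rm div}_\Gamma v$ (via \eqref{eq315}/\eqref{eq316}), and then push forward again. The paper's proof is slightly terser and proves \eqref{eq318} first, but the argument is otherwise identical; your added remark on justifying differentiation under the integral sign is a welcome detail that the paper leaves implicit.
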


\begin{proof}[Proof of Lemma \ref{lem39}]
We first derive \eqref{eq318}. Fix $\Omega_0 \subset \Gamma_0$, $0<t<T$ and $f \in C(\mathbb{R}^3 \times \mathbb{R} )$. By definition, we observe that
\begin{multline*}
\int_{\Omega^\varepsilon (t)} f (x,t) { \ }d \mathcal{H}_{x}^2 = \int_{\Omega_0} f (\hat{x}^\varepsilon (\xi , t),t) \det (\nabla_{\xi} \hat{x}^\varepsilon) { \ }d \mathcal{H}_{\xi}^2\\
= \int_{\Gamma_0} 1_{\Omega_0}( \xi ) f (\hat{x}^\varepsilon (\xi , t),t) \det ( \nabla_{\xi} \hat{x}^\varepsilon) { \ }d \mathcal{H}_{\xi}^2\\
= \int_U 1_{\Omega_0} (\Phi (X)) \tilde{\Psi}(X) f (\tilde{x}^\varepsilon( X , t ),t) \sqrt{J^\varepsilon (X,t)} { \ }d X.
\end{multline*}
Here
\begin{equation*}
1_{\Omega_0} ( \xi ) = 
\begin{cases}
1, { \ }\xi \in \Omega_0 , \\
0, { \ }\xi \in \mathbb{R}^3 \setminus \Omega_0 .
\end{cases}
\end{equation*}
Using the assertion $(\mathrm{ii})$ of Lemma \ref{lem37}, we check that
\begin{multline*}
\frac{d }{d t}\int_{\Omega^\varepsilon (t)} f (x,t) { \ }d \mathcal{H}_{x}^2 \\
= \int_U 1_{\Omega_0} (\Phi (X)) \tilde{\Psi}(X) \left( \frac{d}{d t} f (\tilde{x}^\varepsilon( X , t ),t) \right) \sqrt{J^\varepsilon (X,t)} { \ }d X\\
+ \int_U 1_{\Omega_0} (\Phi (X)) \tilde{\Psi}(X) f (\tilde{x}^\varepsilon( X , t ),t) \left( \frac{ \partial }{\partial t} \sqrt{J^\varepsilon (X,t)} \right) { \ }d X\\
= \int_{\Omega^\varepsilon (t)} \{ D_t^\varepsilon f + ({\rm{div}}_{\Gamma^\varepsilon} v^\varepsilon ) f  \} (x,t){ \ }d \mathcal{H}_{x}^2.
\end{multline*}
Therefore we have \eqref{eq318}. Similarly, we see \eqref{eq317}.
 \end{proof}
Lemma \ref{lem39} attacks Proposition \ref{prop13}.
\begin{proof}[Proof of Proposition \ref{prop13}]
We only prove the assertion $(\mathrm{ii})$ of Proposition \ref{prop13}. Fix $t \in (0,T)$ and $\Omega^\varepsilon ( t) \subset \Gamma^\varepsilon (t)$. Since the flow map $\hat{x}^\varepsilon (\xi, t)$ is bijective, there is $\Omega_0 \subset \Gamma_0$ such that
\begin{equation*}
\Omega^\varepsilon (t) = \{ x \in \mathbb{R}^3; { \ }x = \hat{x}^\varepsilon (\xi , t ),{ \ } \xi \in \Omega_0 \} .
\end{equation*}
From Lemma \ref{lem39}, we see that
\begin{equation*}
\frac{d }{d t}\int_{\Omega^\varepsilon (t)} \rho^\varepsilon (x,t) { \ }d \mathcal{H}_{x}^2 = \int_{\Omega^\varepsilon (t)} \{ D_t^\varepsilon \rho^\varepsilon + ( {\rm{div}}_{\Gamma^\varepsilon} v^\varepsilon ) \rho^\varepsilon \} (x,t) { \ } d \mathcal{H}^2_{x}.
\end{equation*}
Since $\Omega^\varepsilon (t)$ is arbitrary, we conclude that
\begin{equation*}
D_t^\varepsilon \rho^\varepsilon + ({\rm{div}}_{\Gamma^\varepsilon} v^\varepsilon) \rho^\varepsilon = 0 \text{ on } \mathcal{S}^\varepsilon_T .
\end{equation*}
Therefore we have the continuity equation for the fluid on $\Gamma^\varepsilon (t)$. Similarly, we can prove Proposition \ref{prop13}.
 \end{proof}
See also \cite{Bet86}, \cite{GSW89}, \cite{DE07}, \cite{KLG}, and \cite{SSO07} for the proof of Proposition \ref{prop13}.

\subsection{Representation of Density, Kinetic Energy, Dissipation Energies, Work for the Fluid on an Evolving Surface}\label{subsec34}

In this subsection we study the representation of the kinetic energy, dissipation energies, and work for the fluid on an evolving surface by applying the Riemannian metric induced by a flow map. From now we follow the convention:
\begin{convention}\label{conv310}
Assume that $\rho = \rho (x,t)$, $\rho^\varepsilon = \rho^\varepsilon (x,t)$, $\rho_0 = \rho_0 (x)$, $\sigma = \sigma (x,t)$, $e = e (x,t)$, $\theta = \theta (x,t)$, $h = h (x,t)$, $s = s (x,t)$, $e_F = e_F (x,t)$, $F = F (x,t) = { }^t (F_1 , F_2, F_3)$, $\mu = \mu (x,t)$, $\lambda = \lambda (x,t)$, $\kappa = \kappa (x,t)$, $\nu = \nu (x,t)$, $C = C (x,t)$, $C_\theta = C_\theta (x,t)$, $Q_\theta = Q_\theta (x,t)$, $Q_C = Q_C (x,t)$, $\mathcal{F}_1 = \mathcal{F}_1(x,t) $, $\mathcal{F}_2 = \mathcal{F}_2 (x,t)$ are smooth functions. Moreover, $p \in C^1 ( (0, \infty ))$ or $p \in C^1 ( [0, \infty ))$.
\end{convention}

Based on Proposition \ref{prop13}, we assume that $ \rho $ and $\rho^\varepsilon$ satisfies
\begin{equation*}
\begin{cases}
D_t \rho + ({\rm{div}}_\Gamma v) \rho =0 & \text{ on } \mathcal{S}_T,\\
\rho |_{t = 0} = \rho_0 & \text{ on }\Gamma (0),
\end{cases}
\end{equation*}
\begin{equation*}
\begin{cases}
D_t^\varepsilon \rho^\varepsilon + ({\rm{div}}_{\Gamma^\varepsilon} v^\varepsilon ) \rho^\varepsilon =0 & \text{ on } \mathcal{S}^\varepsilon_T,\\
\rho^\varepsilon |_{t = 0} = \rho_0 & \text{ on }\Gamma (0) .
\end{cases}
\end{equation*}

The aim of this subsection is to prove the following three propositions.
\begin{proposition}[Representation of Energy densities $(\mathrm{I})$]\label{prop311}
Set
\begin{align*}
\tilde{\rho}_0 (X) = & \rho_0 (\tilde{x}(X,0)) \sqrt{J(X,0)} ,\\
\mathcal{K}(e_K)  =& \frac{1}{2} \frac{\tilde{\rho}_0 (X) }{\sqrt{J (X,t)}} \tilde{x}_t (X, t) \cdot \tilde{x}_t (X,t) ,\\
\mathcal{K}(e_A) =& \frac{\tilde{\rho}_0 (X) }{\sqrt{J (X,t)}} \left\{ \frac{1}{2} \tilde{x}_t (X, t) \cdot \tilde{x}_t (X,t) + e (\tilde{x} (X,t), t) \right\},\\
\mathcal{K}(e_B) =&\frac{1}{2}  \frac{\tilde{\rho}_0 (X) }{\sqrt{J (X,t)}} \{ \tilde{x}_t (X, t) \cdot \tilde{x}_t (X,t)\} - p \left( \frac{\tilde{\rho}_0 (X) }{\sqrt{J (X,t)}} \right),\\
\mathcal{K}(e_{W_2}) =&\frac{\tilde{\rho}_0 (X) }{\sqrt{J (X,t)}}  \tilde{x}_t (X,t) \cdot F(\tilde{x} (X,t), t) .
\end{align*}
Then
\begin{align*}
\int_{\Gamma (t)} \left\{ \frac{1}{2} \rho | v |^2 \right\} (x,t) { \ }d \mathcal{H}_{x}^2 = \int_U \tilde{\Psi} (X) \mathcal{K}(e_K) \sqrt{ J (X,t)  } { \ }d X,\\
\int_{\Gamma (t)} \left\{ \frac{1}{2} \rho | v |^2 + \rho e \right\} (x,t) { \ }d \mathcal{H}_{x}^2 = \int_U \tilde{\Psi} (X) \mathcal{K}(e_A)\sqrt{ J (X,t)  } { \ }d X,\\
\int_{\Gamma (t)} \left\{ \frac{1}{2} \rho | v |^2 - p (\rho) \right\} (x,t) { \ }d \mathcal{H}_{x}^2 = \int_U \tilde{\Psi} (X) \mathcal{K}(e_B)\sqrt{ J (X,t)  } { \ }d X,\\
\int_{\Gamma (t)} \{ \rho F \cdot v \} (x,t) { \ }d \mathcal{H}_{x}^2 = \int_U \tilde{\Psi} (X) \mathcal{K}(e_{W_2}) \sqrt{ J (X,t)  } { \ }d X.
\end{align*}
Here
\begin{equation*}
 \int_U \tilde{\Psi} (X) f ( X ,t  ){ \ } d X = \sum_{m=1}^N \int_{U_m} \Psi_m (\Phi_m (X)) f ( X ,t  ) { \ } d X .
\end{equation*}
\end{proposition}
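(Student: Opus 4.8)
The plan is to reduce all four identities to a single pointwise statement along the orbits of the flow map, namely that $\rho(\tilde{x}(X,t),t)\sqrt{J(X,t)}$ is independent of $t$, and then to substitute this together with $v(\tilde{x}(X,t),t)=\tilde{x}_t(X,t)$ (from Definition \ref{def32}) into the surface area formula \eqref{eq31}.

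First I would record the auxiliary identity
\begin{equation*}
({\rm{div}}_\Gamma v)(\tilde{x}(X,t),t)\,\sqrt{J(X,t)} = \frac{\partial}{\partial t}\sqrt{J(X,t)}.
\end{equation*}
This follows by comparing two representations of $\int_{\Omega(t)}\{f\,({\rm{div}}_\Gamma v)\}\,d\mathcal{H}^2_x$ for an arbitrary $\Omega_0\subset\Gamma_0$ and a test function $f$: applying the area formula \eqref{eq31} (with an indicator, as in the proof of Lemma \ref{lem37}) to the continuous function $f\,{\rm{div}}_\Gamma v$ gives $\int_U 1_{\Omega_0}(\Phi(X))\tilde{\Psi}(X)f(\tilde{x}(X,t),t)({\rm{div}}_\Gamma v)(\tilde{x}(X,t),t)\sqrt{J(X,t)}\,dX$, while \eqref{eq315} of Lemma \ref{lem37} gives $\int_U 1_{\Omega_0}(\Phi(X))\tilde{\Psi}(X)f(\tilde{x}(X,t),t)(\partial_t\sqrt{J(X,t)})\,dX$. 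Since $\Omega_0$ and $f$ are arbitrary, the two integrands coincide, which is the claimed identity.

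Next I would show $\partial_t\{\rho(\tilde{x}(X,t),t)\sqrt{J(X,t)}\}=0$. By the chain rule and $\tilde{x}_t=v(\tilde{x},t)$, one has $\frac{d}{dt}\rho(\tilde{x}(X,t),t)=(\partial_t\rho+(v,\nabla)\rho)(\tilde{x}(X,t),t)=(D_t\rho)(\tilde{x}(X,t),t)$. Combining this with the product rule and the auxiliary identity above,
\begin{equation*}
\frac{\partial}{\partial t}\{\rho(\tilde{x},t)\sqrt{J}\} = (D_t\rho)(\tilde{x},t)\sqrt{J} + \rho(\tilde{x},t)\frac{\partial}{\partial t}\sqrt{J} = \{D_t\rho + ({\rm{div}}_\Gamma v)\rho\}(\tilde{x},t)\sqrt{J} = 0
\end{equation*}
by the continuity equation for $\rho$ on $\mathcal{S}_T$. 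Evaluating at $t=0$, where $\tilde{x}(X,0)=\Phi(X)$ and $\rho|_{t=0}=\rho_0$, yields $\rho(\tilde{x}(X,t),t)\sqrt{J(X,t)}=\rho_0(\tilde{x}(X,0))\sqrt{J(X,0)}=\tilde{\rho}_0(X)$, hence $\rho(\tilde{x}(X,t),t)=\tilde{\rho}_0(X)/\sqrt{J(X,t)}$.

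Finally, each of the four displayed equalities is obtained by applying \eqref{eq31} to the relevant continuous function on $\Gamma(t)$ (namely $\tfrac12\rho|v|^2$, $\tfrac12\rho|v|^2+\rho e$, $\tfrac12\rho|v|^2-p(\rho)$, and $\rho F\cdot v$), substituting $v(\tilde{x}(X,t),t)=\tilde{x}_t(X,t)$ and $\rho(\tilde{x}(X,t),t)=\tilde{\rho}_0(X)/\sqrt{J(X,t)}$, and recognizing the resulting integrand as $\tilde{\Psi}(X)\mathcal{K}(\cdot)\sqrt{J(X,t)}$; for the $e_B$ case one additionally uses $p(\rho(\tilde{x},t))=p(\tilde{\rho}_0(X)/\sqrt{J(X,t)})$. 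I expect the only delicate point to be the justification of the pointwise identity for ${\rm{div}}_\Gamma v$ from the integral identities of Lemma \ref{lem37} — that is, the "arbitrary $\Omega_0$ and $f$" localization argument together with the bookkeeping of the partition of unity encoded in $\tilde{\Psi}$ — while the remaining steps are direct substitutions.
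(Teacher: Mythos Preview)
Your proposal is correct and follows essentially the same route as the paper: the paper packages your conservation step $\partial_t\{\rho(\tilde{x},t)\sqrt{J}\}=0$ as the $\mathcal{F}=0$ case of Lemma~\ref{lem314}, and then Lemma~\ref{lem315} carries out exactly the substitution into \eqref{eq31} that you describe. One small simplification: the pointwise identity $\partial_t\sqrt{J}=({\rm div}_\Gamma v)(\tilde{x},t)\sqrt{J}$ is established \emph{directly} in the proof of Lemma~\ref{lem37} (by computing $\partial_t J$ in terms of the metric and recognizing $g^\alpha\cdot\partial v/\partial X_\alpha$), so you can bypass the localization-from-integrals argument you flagged as delicate.
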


\begin{proposition}[Representation of Energy densities $(\mathrm{II})$]\label{prop312}
Set
\begin{align*}
\mathcal{K}(e_{W_1}) = & \frac{1}{2}\sigma (\tilde{x} (X,t) , t ) (\acute{g}_{\alpha \beta} g^{\alpha \beta})(X,t),\\
\mathcal{K} (e_D) =& \frac{1}{2} \left\{ \frac{1}{2} \mu (\tilde{x} (X,t) , t) ( \acute{g}_{\alpha \beta}\acute{g}_{\zeta \eta} g^{\alpha \zeta}g^{\beta \eta} ) + \frac{1}{4} \lambda (\tilde{x} (X,t) , t) (\acute{g}_{\alpha \beta} \acute{g}_{\zeta \eta} g^{\alpha \beta} g^{\zeta \eta}) \right\} ,\\
\mathcal{K}(e_{TD}) =& \frac{1}{2} \kappa (\tilde{x}(X,t), t)\left\{ g^{\alpha \beta}  \frac{\partial \theta }{\partial X_\alpha} \frac{\partial \theta }{\partial X_\beta} \right\} (X,t),\\
\mathcal{K}(e_{SD}) =& \frac{1}{2} \nu (\tilde{x}(X,t), t) \left\{ g^{\alpha \beta} \frac{\partial C}{\partial X_\alpha} \frac{\partial C }{\partial X_\beta} \right\} (X,t).
\end{align*}
Then
\begin{align*}
\int_{\Gamma (t)} \{ \sigma ({\rm{div}_\Gamma} v )\} (x,t) { \ }d \mathcal{H}_{x}^2 = \int_U \tilde{\Psi} (X)\mathcal{K}(e_{W_1})\sqrt{ J (X,t)  } { \ }d X,\\
\int_{\Gamma (t)} \frac{1}{2} \{ 2 \mu | P_\Gamma D (v) P_\Gamma |^2 + \lambda |{\rm{div}}_\Gamma v |^2 \} { \ }d \mathcal{H}_{x}^2 = \int_U \tilde{\Psi} (X) \mathcal{K}(e_D) \sqrt{ J (X,t)  } { \ }d X,\\
\int_{\Gamma (t)} \left\{ \frac{1}{2}\kappa | {\rm{grad}}_\Gamma \theta |^2 \right\}(x,t)  { \ }d \mathcal{H}_{x}^2 = \int_U \tilde{\Psi} (X)\mathcal{K}(e_{TD}) \sqrt{ J (X,t)  } { \ }d X,\\
\int_{\Gamma (t)} \left\{ \frac{1}{2} \nu |{\rm{grad}_\Gamma } C |^2 \right\} (x,t) { \ }d \mathcal{H}_{x}^2 = \int_U \tilde{\Psi} (X)\mathcal{K}(e_{SD}) \sqrt{ J (X,t)  } { \ }d X.
\end{align*}
\end{proposition}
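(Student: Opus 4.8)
The plan is to push everything forward to the parameter domain $U$ via the flow map $\tilde{x}(X,t)$ and the induced Riemannian metric $g_{\alpha\beta}$, exactly as in the proofs of Lemmas \ref{lem36} and \ref{lem37}. The three ingredients I would quote are: (a) the pointwise identity $\delta_{ij}-n_in_j = \frac{\partial\tilde{x}_i}{\partial X_\alpha}\frac{\partial\tilde{x}_j}{\partial X_\beta}g^{\alpha\beta}$ established inside the proof of Lemma \ref{lem36} (not merely its integrated form); (b) the chain-rule relations from subsection \ref{subsec31}, namely $\frac{\partial}{\partial X_\alpha}\big(f(\tilde{x}(X,t),t)\big) = \frac{\partial\tilde{x}_i}{\partial X_\alpha}(\partial_i f)(\tilde{x}(X,t),t)$, $\acute{g}_\alpha = \frac{\partial v}{\partial X_\alpha}$, and $\acute{g}_{\alpha\beta} = 2\frac{\partial\tilde{x}_i}{\partial X_\alpha}[D(v)]_{ij}\frac{\partial\tilde{x}_j}{\partial X_\beta}$; and (c) the surface-area formula \eqref{eq31}. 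Once each integrand is rewritten in terms of $g^{\alpha\beta}$, $\acute{g}_{\alpha\beta}$, and the pulled-back coefficient functions, \eqref{eq31} closes each of the four identities at once.

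For the thermal- and surface-diffusion terms I would first observe $|{\rm{grad}}_\Gamma f|^2 = (\partial_i^\Gamma f)(\partial_i^\Gamma f) = [P_\Gamma]_{jk}(\partial_j f)(\partial_k f)$, using the symmetry of $P_\Gamma$ and $P_\Gamma^2 = P_\Gamma$. Substituting (a) and collapsing $\frac{\partial\tilde{x}_j}{\partial X_\alpha}\partial_j f$ into $\frac{\partial f}{\partial X_\alpha}$ gives, after pullback, $|{\rm{grad}}_\Gamma f|^2 = g^{\alpha\beta}\frac{\partial f}{\partial X_\alpha}\frac{\partial f}{\partial X_\beta}$; taking $f=\theta$ and $f=C$ and multiplying by $\frac{1}{2}\kappa$ resp.\ $\frac{1}{2}\nu$ reproduces $\mathcal{K}(e_{TD})$ and $\mathcal{K}(e_{SD})$, and \eqref{eq31} finishes these cases. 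For the pressure-work term, the same substitution gives ${\rm{div}}_\Gamma v = [P_\Gamma]_{ij}\partial_j v_i = g^{\alpha\beta}(g_\alpha\cdot\acute{g}_\beta)$; symmetrising in $\alpha,\beta$ and using $\acute{g}_{\alpha\beta} = \acute{g}_\alpha\cdot g_\beta + g_\alpha\cdot\acute{g}_\beta$ turns this into $\frac{1}{2}g^{\alpha\beta}\acute{g}_{\alpha\beta}$, so $\sigma({\rm{div}}_\Gamma v)$ is the pullback of $\mathcal{K}(e_{W_1})$ and \eqref{eq31} applies. (One may instead use \eqref{eq315} together with the identity $\partial_t\sqrt{J} = (g^\alpha\cdot\acute{g}_\alpha)\sqrt{J}$ extracted from the proof of Lemma \ref{lem37}.)

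The viscous term needs more care. Since $D_\Gamma(v) = P_\Gamma D(v)P_\Gamma$ by definition, $|P_\Gamma D(v)P_\Gamma|^2 = D_\Gamma(v):D_\Gamma(v)$. Expanding $[D_\Gamma(v)]_{ij} = [P_\Gamma]_{ik}[D(v)]_{k\ell}[P_\Gamma]_{\ell j}$, contracting with a second copy, and applying $[P_\Gamma]_{ik}[P_\Gamma]_{im} = [P_\Gamma]_{km}$ twice collapses the double contraction to $[P_\Gamma]_{km}[P_\Gamma]_{\ell n}[D(v)]_{k\ell}[D(v)]_{mn}$. Inserting (a) for both projection factors and recognising the combination $\frac{\partial\tilde{x}_k}{\partial X_\alpha}[D(v)]_{k\ell}\frac{\partial\tilde{x}_\ell}{\partial X_\zeta} = \frac{1}{2}\acute{g}_{\alpha\zeta}$ twice gives $D_\Gamma(v):D_\Gamma(v) = \frac{1}{4}\acute{g}_{\alpha\beta}\acute{g}_{\zeta\eta}g^{\alpha\zeta}g^{\beta\eta}$ after relabelling the dummy Greek indices. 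Together with $|{\rm{div}}_\Gamma v|^2 = \frac{1}{4}(\acute{g}_{\alpha\beta}g^{\alpha\beta})^2$ from the previous paragraph, the integrand $\frac{1}{2}\{2\mu|P_\Gamma D(v)P_\Gamma|^2 + \lambda|{\rm{div}}_\Gamma v|^2\}$ is exactly the pullback of $\mathcal{K}(e_D)$, and \eqref{eq31} completes the proof. I expect the bookkeeping in this last contraction — keeping track of which pair of Greek indices is raised by which metric factor, so that the result matches the stated index pattern in $\mathcal{K}(e_D)$ — to be the only genuinely delicate point; everything else reduces to direct substitution.
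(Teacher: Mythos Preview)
Your proposal is correct and follows essentially the same route as the paper's proof (which is carried out in Lemma \ref{lem316}): both arguments rest on the pointwise identity $[P_\Gamma]_{ij} = \dfrac{\partial\tilde{x}_i}{\partial X_\alpha}\dfrac{\partial\tilde{x}_j}{\partial X_\beta}g^{\alpha\beta}$ from the proof of Lemma \ref{lem36}, the relation $\acute{g}_{\alpha\beta} = 2\dfrac{\partial\tilde{x}_i}{\partial X_\alpha}[D(v)]_{ij}\dfrac{\partial\tilde{x}_j}{\partial X_\beta}$, and the area formula \eqref{eq31}. The only cosmetic difference is direction: the paper starts from the metric expressions $\acute{g}_{\alpha\beta}g^{\alpha\beta}$ and $\tfrac{1}{4}\acute{g}_{\alpha\beta}\acute{g}_{\zeta\eta}g^{\alpha\zeta}g^{\beta\eta}$ and identifies them with ${\rm div}_\Gamma v$ and ${\rm Tr}\{D(v)P_\Gamma D(v)P_\Gamma\}$ (then invokes trace cyclicity and $P_\Gamma^2=P_\Gamma$ to reach $|D_\Gamma(v)|^2$), whereas you start from the surface quantities and contract down --- your use of $[P_\Gamma]_{ik}[P_\Gamma]_{im}=[P_\Gamma]_{km}$ is exactly the index form of that same trace-cyclicity step.
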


\begin{proposition}[Representation of Energy density $(\mathrm{III})$]\label{prop313}
Assume that $e_{\mathcal{J}} \in C^1 ((0, \infty ))$ or $e_{\mathcal{J}} \in C^1 ([0, \infty ))$. Set
\begin{equation*}
\mathcal{K}(e_{\mathcal{J}}) = \frac{1}{2} e_{\mathcal{J}} \left( \left\{ g^{\alpha \beta} \frac{\partial f}{\partial X_\alpha} \frac{\partial f}{\partial X_\beta} \right\} (X,t)  \right).
\end{equation*}
Then
\begin{equation*}
\int_{\Gamma (t)} \frac{1}{2} e_{\mathcal{J}} ( | {\rm{grad}}_\Gamma f (x,t) |^2 ) { \ } d \mathcal{H}^2_x = \int_U \tilde{\Psi}(X) \mathcal{K} (e_{\mathcal{J}}) \sqrt{J (X,t)} d X .
\end{equation*}
\end{proposition}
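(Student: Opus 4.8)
The plan is to reduce Proposition \ref{prop313} to the surface area integral formula \eqref{eq31} together with one pointwise identity that expresses $|{\rm{grad}}_\Gamma f|^2$ through the components $g^{\alpha\beta}$ of the Riemannian metric induced by the flow map $\tilde{x}(X,t)$. Since \eqref{eq31} already gives, for any $\phi \in C(\mathbb{R}^3\times\mathbb{R})$,
\[
\int_{\Gamma(t)}\phi(x,t)\,d\mathcal{H}^2_x=\int_U\tilde{\Psi}(X)\,\phi(\tilde{x}(X,t),t)\sqrt{J(X,t)}\,dX,
\]
it suffices to prove
\[
|{\rm{grad}}_\Gamma f(\tilde{x}(X,t),t)|^2=\Bigl\{g^{\alpha\beta}\tfrac{\partial f}{\partial X_\alpha}\tfrac{\partial f}{\partial X_\beta}\Bigr\}(X,t),
\]
where $\partial f/\partial X_\alpha$ denotes $\partial[f(\tilde{x}(X,t),t)]/\partial X_\alpha$, and then to apply $\tfrac12 e_{\mathcal J}(\cdot)$ to both sides and take $\phi=\tfrac12 e_{\mathcal J}(|{\rm{grad}}_\Gamma f|^2)$ in the display above. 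This is exactly the computation that underlies the $e_{TD}$ and $e_{SD}$ representations in Proposition \ref{prop312}, with the quadratic function there replaced by the general $C^1$-function $e_{\mathcal J}$, so the present proof will mimic that one.

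To establish the pointwise identity, first recall that ${\rm{grad}}_\Gamma f=\nabla_\Gamma f$ with $\partial_i^\Gamma f=(\delta_{ij}-n_in_j)\partial_j f$, so that, using $P_\Gamma^2=P_\Gamma$ and the symmetry of $P_\Gamma$ from Section \ref{sect2},
\[
|{\rm{grad}}_\Gamma f|^2=\sum_{i=1}^3(\partial_i^\Gamma f)^2=(\delta_{jk}-n_jn_k)\,\partial_j f\,\partial_k f.
\]
Next I would invoke the identity proved inside Lemma \ref{lem36}, namely $n_in_j+\tfrac{\partial\tilde{x}_i}{\partial X_\alpha}\tfrac{\partial\tilde{x}_j}{\partial X_\beta}g^{\alpha\beta}=\delta_{ij}$, to replace $\delta_{jk}-n_jn_k$ by $\tfrac{\partial\tilde{x}_j}{\partial X_\alpha}\tfrac{\partial\tilde{x}_k}{\partial X_\beta}g^{\alpha\beta}$, and then apply the chain rule $\tfrac{\partial f}{\partial X_\alpha}=\tfrac{\partial\tilde{x}_j}{\partial X_\alpha}\partial_j f$ (valid since $f$ is smooth and $\tilde{x}$ is a smooth flow map) to obtain
\[
|{\rm{grad}}_\Gamma f(\tilde{x}(X,t),t)|^2=g^{\alpha\beta}\Bigl(\tfrac{\partial\tilde{x}_j}{\partial X_\alpha}\partial_j f\Bigr)\Bigl(\tfrac{\partial\tilde{x}_k}{\partial X_\beta}\partial_k f\Bigr)=g^{\alpha\beta}\tfrac{\partial f}{\partial X_\alpha}\tfrac{\partial f}{\partial X_\beta}.
\]

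Finally, applying $\tfrac12 e_{\mathcal J}$ to this pointwise identity and using \eqref{eq31} with $\phi(x,t)=\tfrac12 e_{\mathcal J}(|{\rm{grad}}_\Gamma f(x,t)|^2)$ — which is continuous on $\mathcal{S}_T$ because $e_{\mathcal J}$ is continuous, with the understanding that $|{\rm{grad}}_\Gamma f|\neq 0$ in the case $e_{\mathcal J}\in C^1((0,\infty))$ — yields precisely
\[
\int_{\Gamma(t)}\tfrac12 e_{\mathcal J}(|{\rm{grad}}_\Gamma f(x,t)|^2)\,d\mathcal{H}^2_x=\int_U\tilde{\Psi}(X)\,\mathcal{K}(e_{\mathcal J})\sqrt{J(X,t)}\,dX.
\]
I do not expect a substantial obstacle here: the only points needing care are the correct Einstein-summation bookkeeping in the step that uses Lemma \ref{lem36} and the (routine) check that the integrand is continuous so that \eqref{eq31} applies, both of which are handled exactly as in the proofs of Propositions \ref{prop311} and \ref{prop312}.
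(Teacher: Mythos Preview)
Your proposal is correct and follows essentially the same approach as the paper: establish the pointwise identity $|{\rm grad}_\Gamma f|^2 = g^{\alpha\beta}\,\partial_{X_\alpha}f\,\partial_{X_\beta}f$ via the representation of $P_\Gamma$ from Lemma~\ref{lem36} and the chain rule, then feed this into the surface area formula~\eqref{eq31}. The paper packages the pointwise identity as~\eqref{eq329} in Lemma~\ref{lem316} (computing $\partial_i^\Gamma f$ first and then contracting), whereas you simplify $|\nabla_\Gamma f|^2$ to $(\delta_{jk}-n_jn_k)\partial_jf\,\partial_kf$ before invoking Lemma~\ref{lem36}; this is a slightly more direct bookkeeping choice but not a different argument.
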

In order to prove the three propositions, we prepare several lemmas. Let us first study the representation of the density of the fluid on an evolving surface.
\begin{lemma}[Representation of density]\label{lem314}
Let $f_0 = f_0 (x)$, $f = f (x,t)$ and $\mathcal{F} = \mathcal{F} (x,t)$ be three smooth functions. Then two assertions hold:\\
$(\mathrm{i})$ Suppose that
\begin{equation*}
\begin{cases}
D_t f + ({ \rm{div} }_\Gamma v) f = \mathcal{F} \text{ on } \mathcal{S}_T,\\
f |_{t = 0} = f_0.
\end{cases}
\end{equation*}
Then for each $X$ and $t$,
\begin{multline*}
f (\tilde{x} (X, t ) , t )  \\
= \frac{f_0 (\tilde{x}(X,0) )\sqrt{J(X,0)}}{\sqrt{J (X,t )}} + \frac{1}{\sqrt{J (X, t)}} \int_0^t \mathcal{F} (\tilde{x}(X, \tau) , \tau) \sqrt{J (X , \tau )} { \ }d \tau .
\end{multline*}
\noindent $(\mathrm{ii})$ Suppose that
\begin{equation*}
\begin{cases}
D_t^\varepsilon f + ({ \rm{div} }_{\Gamma^\varepsilon} v^\varepsilon) f = \mathcal{F} \text{ on } \mathcal{S}^\varepsilon_T,\\
f |_{t = 0} = f_0.
\end{cases}
\end{equation*}
Then for each $X$ and $t$,
\begin{multline}\label{eq319}
f (\tilde{x}^\varepsilon (X, t ) , t ) = \frac{f_0 (\tilde{x} (X,0) )\sqrt{J(X,0)}}{\sqrt{J^\varepsilon (X,t )}}\\
 + \frac{1}{\sqrt{J^\varepsilon (X, t)}} \int_0^t \mathcal{F} (\tilde{x}^\varepsilon (X, \tau) , \tau) \sqrt{J^\varepsilon (X , \tau )} { \ }d \tau.
\end{multline}
\end{lemma}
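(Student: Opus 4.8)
The plan is to prove both assertions by the same argument, which reduces the transport equation on $\mathcal{S}_T$ (resp.\ on $\mathcal{S}_T^\varepsilon$) to a linear scalar ordinary differential equation in $t$ along the orbits $t\mapsto\tilde{x}(X,t)$ (resp.\ $t\mapsto\tilde{x}^\varepsilon(X,t)$) of the flow map. Since $(\mathrm{i})$ is obtained from $(\mathrm{ii})$ by deleting every superscript $\varepsilon$, I would spell out only $(\mathrm{ii})$.

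The first step is to record the pointwise Jacobian transport identity
\[
\frac{\partial}{\partial t}\sqrt{J^\varepsilon(X,t)}=({\rm div}_{\Gamma^\varepsilon}v^\varepsilon)(\tilde{x}^\varepsilon(X,t),t)\,\sqrt{J^\varepsilon(X,t)}.
\]
This is essentially already in the proof of Lemma~\ref{lem37}: there it is shown that $\partial_t\sqrt{J^\varepsilon}=g_\varepsilon^\alpha\cdot(\partial v^\varepsilon/\partial X_\alpha)\,\sqrt{J^\varepsilon}$, and comparing \eqref{eq314} with $f=v^\varepsilon$ against the area formula \eqref{eq32} (in its localized form, for arbitrary $\Omega_0\subset\Gamma_0$) shows that $g_\varepsilon^\alpha\cdot(\partial v^\varepsilon/\partial X_\alpha)=({\rm div}_{\Gamma^\varepsilon}v^\varepsilon)(\tilde{x}^\varepsilon(X,t),t)$; alternatively one can read the identity off directly by comparing \eqref{eq316} with \eqref{eq32} and invoking the arbitrariness of $\Omega_0$ and of the integrand.

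The second step is to introduce $\Theta(X,t):=f(\tilde{x}^\varepsilon(X,t),t)\,\sqrt{J^\varepsilon(X,t)}$ and differentiate it in $t$. By Definition~\ref{def34} one has $d\tilde{x}^\varepsilon/dt=v^\varepsilon(\tilde{x}^\varepsilon(X,t),t)$, so the chain rule gives $\frac{d}{dt}\big[f(\tilde{x}^\varepsilon(X,t),t)\big]=(\partial_t f+(v^\varepsilon,\nabla)f)(\tilde{x}^\varepsilon(X,t),t)=(D_t^\varepsilon f)(\tilde{x}^\varepsilon(X,t),t)$. Combining this with the Jacobian identity and with the hypothesis $D_t^\varepsilon f+({\rm div}_{\Gamma^\varepsilon}v^\varepsilon)f=\mathcal{F}$ yields
\[
\frac{\partial}{\partial t}\Theta(X,t)=\big\{D_t^\varepsilon f+({\rm div}_{\Gamma^\varepsilon}v^\varepsilon)f\big\}(\tilde{x}^\varepsilon(X,t),t)\,\sqrt{J^\varepsilon(X,t)}=\mathcal{F}(\tilde{x}^\varepsilon(X,t),t)\,\sqrt{J^\varepsilon(X,t)}.
\]
Integrating this in $t$ from $0$ to $t$, using $\tilde{x}^\varepsilon(X,0)=\Phi(X)\in\Gamma_0$ and $J^\varepsilon(X,0)=J(X,0)$ together with $f|_{t=0}=f_0$ to evaluate $\Theta(X,0)=f_0(\tilde{x}(X,0))\,\sqrt{J(X,0)}$, and finally dividing through by $\sqrt{J^\varepsilon(X,t)}>0$, gives exactly \eqref{eq319}; assertion $(\mathrm{i})$ then follows by erasing the $\varepsilon$'s.

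I do not expect a real obstacle here: the only point requiring care is the pointwise Jacobian transport identity $\partial_t\sqrt{J}=({\rm div}_\Gamma v)\,\sqrt{J}$ along a flow-map orbit, and this has already been established inside the proof of Lemma~\ref{lem37}; the rest is the elementary observation that $t\mapsto f(\tilde{x}(X,t),t)\,\sqrt{J(X,t)}$ solves a first-order linear scalar ODE in $t$ whose zero-order coefficient vanishes and whose source is $\mathcal{F}(\tilde{x}(X,t),t)\,\sqrt{J(X,t)}$.
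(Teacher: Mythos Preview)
Your proposal is correct and follows essentially the same route as the paper: both define the auxiliary quantity $f(\tilde{x}^\varepsilon(X,t),t)\sqrt{J^\varepsilon(X,t)}$ (the paper calls it $G^\varepsilon$, you call it $\Theta$), differentiate in $t$ using the Jacobian transport identity from the proof of Lemma~\ref{lem37} together with the chain rule and the hypothesis, and then integrate. Your write-up is slightly more explicit about extracting the pointwise identity $\partial_t\sqrt{J^\varepsilon}=({\rm div}_{\Gamma^\varepsilon}v^\varepsilon)\sqrt{J^\varepsilon}$ from Lemma~\ref{lem37}, but the argument is otherwise identical.
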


\begin{proof}[Proof of Lemma \ref{lem314}]
We only prove $(\mathrm{ii})$. Fix $X$. Set
\begin{equation*}
G^\varepsilon (X, t) = f (\tilde{x}^\varepsilon (X, t ) , t ) \sqrt{J^\varepsilon (X,t )}.
\end{equation*}
Applying Lemma \ref{lem37} and the assumption, we check that
\begin{align*}
\frac{d G^\varepsilon }{d t} = & \left\{ \left( \frac{d \tilde{x}^\varepsilon}{d t} , \nabla_{\tilde{x}_\varepsilon} \right) f + \partial_t f \right\} \sqrt{J^\varepsilon} + f \frac{d}{d t} \sqrt{J^\varepsilon}\\
 = & \{ D_t^\varepsilon f + ({\rm{div}}_{\Gamma^\varepsilon} v^\varepsilon ) f \} \sqrt{J^\varepsilon} = \mathcal{F} \sqrt{J^\varepsilon}. 
\end{align*}
Integrating with respect to time, we have
\begin{equation*}
G^\varepsilon (t)  = G^\varepsilon (0) + \int_0^t \mathcal{F} (\tilde{x}^\varepsilon (X, \tau ) , \tau )\sqrt{J^\varepsilon (X, \tau)} { \ }d \tau.
\end{equation*}
This implies that
\begin{multline*}
f (\tilde{x}^\varepsilon (X, t ) , t ) \sqrt{J^\varepsilon (X,t )}\\
 = f_0 (\tilde{x}^\varepsilon (X,0) )\sqrt{J^\varepsilon (X,0)} + \int_0^t \mathcal{F} (\tilde{x}^\varepsilon (X, \tau) , \tau) \sqrt{J^\varepsilon (X , \tau )} { \ }d \tau.
\end{multline*}
Since $\tilde{x}^\varepsilon (X, 0) = \tilde{x} (X, 0) = \Phi (X) = \xi$, we have \eqref{eq319}. Similarly, we see $(\mathrm{i})$. Therefore the lemma follows.
 \end{proof}

Let us now study the representation of several energies for the fluid on an evolving surface. 
\begin{lemma}[Representation of energies (I)]\label{lem315}{ \ }\\
\noindent $(\mathrm{i})$ 
\begin{align}
\int_{\Gamma (t)} \rho (x,t) | v (x,t) |^2  { \ }d \mathcal{H}_{x}^2 =& \int_U \tilde{\Psi} (X) \tilde{\rho}_0 (X) | \tilde{x}_t (X,t) |^2 { \ }d X, \label{eq320}\\
\int_{\Gamma (t)} \rho (x,t) e (x,t) { \ }d \mathcal{H}_{x}^2 =& \int_U \tilde{\Psi} (X) \tilde{\rho}_0 (X) e( \tilde{x} (X,t) ,t){ \ }d X,\label{eq321}
\end{align}
\begin{multline}
\int_{\Gamma (t)} \rho (x,t) F (x,t) \cdot v (x,t) { \ }d \mathcal{H}_{x}^2 \\
= \int_U \tilde{\Psi} (X) \tilde{\rho}_0 (X) F (\tilde{x}(X,t) , t) \cdot  \tilde{x}_t (X,t) { \ }d X,
\end{multline}
\begin{equation}
\int_{\Gamma (t)} p( \rho (x,t) ) { \ }d \mathcal{H}_{x}^2 = \int_U \tilde{\Psi} (X) p \left( \frac{ \tilde{\rho}_0 (X)}{\sqrt{J (X,t)}} \right) \sqrt{J (X,t)}{ \ }d X.
\end{equation}
\noindent $(\mathrm{ii})$ 
\begin{align}
\int_{\Gamma^\varepsilon (t)} \rho^\varepsilon (x,t) | v^\varepsilon (x,t) |^2  { \ }d \mathcal{H}_{x}^2 = \int_U \tilde{\Psi} (X) \tilde{\rho}_0 (X) | \tilde{x}^\varepsilon_t (X, t) |^2 { \ }d X,\label{eq324}\\
\int_{\Gamma^\varepsilon (t)} p( \rho^\varepsilon (x,t) ) { \ }d \mathcal{H}_{x}^2 = \int_U \tilde{\Psi} (X) p \left( \frac{ \tilde{\rho}_0 (X)}{\sqrt{J^\varepsilon (X,t)}} \right) \sqrt{J^\varepsilon (X,t)}{ \ }d X.\label{eq325}
\end{align}
Here
\begin{equation*}
\tilde{\rho}_0 (X) = \rho_0 (\tilde{x}(X,0)) \sqrt{J(X,0)}.
\end{equation*}
\end{lemma}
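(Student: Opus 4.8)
The plan is to reduce every surface integral to an integral over the parameter domain $U$ by means of the area formula \eqref{eq31} (respectively \eqref{eq32} for the variation), and then to eliminate $\rho$ (respectively $\rho^\varepsilon$) using the density representation of Lemma \ref{lem314} in the special case $\mathcal{F}\equiv 0$. Since $\rho$ and $\rho^\varepsilon$ solve the homogeneous continuity equations $D_t\rho+({\rm{div}}_\Gamma v)\rho=0$ and $D_t^\varepsilon\rho^\varepsilon+({\rm{div}}_{\Gamma^\varepsilon}v^\varepsilon)\rho^\varepsilon=0$ with common initial datum $\rho_0$, Lemma \ref{lem314} yields
\begin{equation*}
\rho(\tilde{x}(X,t),t)=\frac{\tilde{\rho}_0(X)}{\sqrt{J(X,t)}},\qquad
\rho^\varepsilon(\tilde{x}^\varepsilon(X,t),t)=\frac{\tilde{\rho}_0(X)}{\sqrt{J^\varepsilon(X,t)}},
\end{equation*}
where $\tilde{\rho}_0(X)=\rho_0(\tilde{x}(X,0))\sqrt{J(X,0)}$. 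Alongside this I will use $v(\tilde{x}(X,t),t)=\tilde{x}_t(X,t)$ and $v^\varepsilon(\tilde{x}^\varepsilon(X,t),t)=\tilde{x}^\varepsilon_t(X,t)$, which are exactly the defining properties of the velocities determined by the flow maps (Definitions \ref{def32} and \ref{def34}).

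First I would prove \eqref{eq320}. Applying \eqref{eq31} to $f(x,t)=\rho(x,t)|v(x,t)|^2$ gives $\int_{\Gamma(t)}\rho|v|^2\,d\mathcal{H}^2_x=\int_U\tilde{\Psi}(X)\,\rho(\tilde{x}(X,t),t)\,|v(\tilde{x}(X,t),t)|^2\sqrt{J(X,t)}\,dX$; substituting the density formula and $v(\tilde{x}(X,t),t)=\tilde{x}_t(X,t)$, the factor $\sqrt{J(X,t)}$ cancels the $1/\sqrt{J(X,t)}$ from $\rho$, leaving $\int_U\tilde{\Psi}(X)\tilde{\rho}_0(X)|\tilde{x}_t(X,t)|^2\,dX$. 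The identities \eqref{eq321} and the one for $\rho F\cdot v$ are obtained in the same way: for $\rho e$ the internal energy is simply evaluated along the flow map and left untouched, and for $\rho F\cdot v$ I again replace $v$ by $\tilde{x}_t$ and cancel the Jacobian factors. For $\int_{\Gamma(t)}p(\rho)\,d\mathcal{H}^2_x$ I apply \eqref{eq31} to $f=p(\rho)$ and substitute $\rho(\tilde{x}(X,t),t)=\tilde{\rho}_0(X)/\sqrt{J(X,t)}$ inside $p$, this time retaining the weight $\sqrt{J(X,t)}$, which gives the stated formula.

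Assertion $(\mathrm{ii})$ follows by repeating the argument verbatim with $\Gamma(t),\tilde{x},J,v,\eqref{eq31}$ replaced by $\Gamma^\varepsilon(t),\tilde{x}^\varepsilon,J^\varepsilon,v^\varepsilon,\eqref{eq32}$ and using the $\varepsilon$-version of Lemma \ref{lem314}. There is essentially no obstacle beyond bookkeeping; the only points requiring a little care are that the partition-of-unity sum absorbed into the $\tilde{\Psi}$-notation is carried along unchanged throughout, and that $\tilde{\rho}_0$ is genuinely independent of $\varepsilon$ — indeed $\tilde{x}^\varepsilon(X,0)=\Phi(X)=\tilde{x}(X,0)$ forces $J^\varepsilon(X,0)=J(X,0)$, so the same $\tilde{\rho}_0$ appears in both $(\mathrm{i})$ and $(\mathrm{ii})$.
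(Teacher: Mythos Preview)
Your proposal is correct and follows essentially the same approach as the paper: pull back each integral to the parameter domain via the area formula \eqref{eq31}/\eqref{eq32}, invoke Lemma~\ref{lem314} with $\mathcal{F}\equiv 0$ to write $\rho(\tilde{x}(X,t),t)=\tilde{\rho}_0(X)/\sqrt{J(X,t)}$ (and similarly for $\rho^\varepsilon$), use $v(\tilde{x}(X,t),t)=\tilde{x}_t(X,t)$, and cancel or retain the Jacobian factor as appropriate. Your remark that $\tilde{\rho}_0$ is $\varepsilon$-independent because $\tilde{x}^\varepsilon(X,0)=\tilde{x}(X,0)$ is exactly the point the paper uses implicitly.
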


\begin{proof}[Proof of Lemma \ref{lem315}]
We first show \eqref{eq324}. From Lemma \ref{lem314}, we check that
\begin{multline*}
\int_{\Gamma^\varepsilon (t)} \rho^\varepsilon (x,t) | v^\varepsilon ( x, t )|^2 { \ }d \mathcal{H}^2_x \\
= \int_{\Gamma (0)} \rho^\varepsilon (\hat{x}^\varepsilon (\xi , t),t) | v^\varepsilon ( \hat{x}^\varepsilon (\xi, t), t )|^2 {\rm{det}}(\nabla_\xi \hat{x}^\varepsilon) { \ }d \mathcal{H}^2_\xi\\
=\int_U \tilde{\Psi} (X) \rho^\varepsilon (\tilde{x}^\varepsilon (X,t), t) \tilde{x}_t^\varepsilon (X,t) \cdot \tilde{x}^\varepsilon_t (X,t) \sqrt{ J^\varepsilon (X,t)  } { \ }d X\\
=\int_U \tilde{\Psi} (X) \tilde{\rho}_0 (X) \left\{ \tilde{x}^\varepsilon_t (X,t) \cdot \tilde{x}^\varepsilon_t (X,t) \right\} { \ }d X.
\end{multline*}
Therefore we have \eqref{eq324}. Similarly, we see \eqref{eq320}-\eqref{eq325}.
 \end{proof}

\begin{lemma}[Representation of energies (II)]\label{lem316}{ \ }\\
Let $f = f (x,t) \in C^{1,0}(\mathcal{S}_T)$. Then
\begin{multline}
\int_{\Gamma (t)} \{ \sigma ({\rm{div}_\Gamma} v ) \} (x,t) { \ }d \mathcal{H}_{x}^2 \\
= \int_U \tilde{\Psi} (X) \sigma (\tilde{x}(X,t) , t) \left\{ \frac{1}{2}  \acute{g}_{\alpha \beta} g^{\alpha \beta} \right\} (X,t) \sqrt{ J (X,t)  } { \ }d X,\label{eq326}
\end{multline}
\begin{multline}
\int_{\Gamma (t)} \{ \lambda |{\rm{div}}_\Gamma v |^2 \} (x,t) { \ }d \mathcal{H}_{x}^2\\
 = \int_U \tilde{\Psi} (X) \lambda (\tilde{x} (X,t) , t) \left\{ \frac{1}{4} \acute{g}_{\alpha \beta}\acute{g}_{\zeta \eta} g^{\alpha \beta}g^{\zeta \eta} \right\} (X,t) \sqrt{ J (X,t)  } { \ }d X,\label{eq327}
\end{multline}
\begin{multline}\label{eq328}
\int_{\Gamma (t)} \{ \mu | P_\Gamma D (v ) P_\Gamma |^2 \} (x,t) { \ }d \mathcal{H}_{x}^2 \\
=  \int_U \tilde{\Psi} (X) \mu (\tilde{x} (X,t) , t) \left\{ \frac{1}{4}\acute{g}_{\alpha \beta}\acute{g}_{\zeta \eta} g^{\alpha \zeta}g^{\beta \eta} \right\} (X,t) \sqrt{ J (X,t)  } { \ }d X,
\end{multline}
\begin{multline}\label{eq329}
\int_{\Gamma (t)} \{ | {\rm{grad}}_\Gamma f |^2 \} (x,t) { \ }d \mathcal{H}_{x}^2 \\
=  \int_U \tilde{\Psi} (X) \left\{ g^{\alpha \beta} \frac{\partial f }{\partial X_\alpha} \frac{\partial f }{\partial X_\beta} \right\} (X,t) \sqrt{ J (X,t)  } { \ }d X.
\end{multline}

\end{lemma}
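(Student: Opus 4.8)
The plan is, for each of the four surface quantities, to pull it back along the flow map $\tilde{x} = \tilde{x}(X,t)$, rewrite the resulting pointwise expression in terms of the metric data $g^{\alpha\beta}$ and $\acute{g}_{\alpha\beta}$, and then integrate by means of the surface-area formula \eqref{eq31}. Throughout write $b_{i\alpha} := \partial\tilde{x}_i/\partial X_\alpha$, so that $g_\alpha = {}^t(b_{1\alpha},b_{2\alpha},b_{3\alpha})$ and $g_{\alpha\beta} = b_{i\alpha}b_{i\beta}$. The two facts that carry the computation are: (a) the pointwise identity $[P_\Gamma]_{ij}\circ\tilde{x} = b_{i\alpha}b_{j\beta}g^{\alpha\beta}$, which is exactly what is proved pointwise inside the proof of Lemma \ref{lem36} (the displayed identity $n_i n_j + \frac{\partial\tilde{x}_i}{\partial X_\alpha}\frac{\partial\tilde{x}_j}{\partial X_\beta}g^{\alpha\beta} = \delta_{ij}$ there); and (b) the chain-rule identities already recorded in subsection \ref{subsec31}, namely $b_{j\alpha}(\partial_j\phi)\circ\tilde{x} = \partial_\alpha(\phi\circ\tilde{x})$ for smooth $\phi$, $\acute{g}_\alpha = \partial v/\partial X_\alpha$, and consequently $\acute{g}_{\alpha\beta} = \acute{g}_\alpha\cdot g_\beta + g_\alpha\cdot\acute{g}_\beta = 2\,b_{k\alpha}\big([D(v)]_{k\ell}\circ\tilde{x}\big)b_{\ell\beta}$.

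For \eqref{eq329}: since $P_\Gamma$ is symmetric with $P_\Gamma^2 = P_\Gamma$, one has $|{\rm{grad}}_\Gamma f|^2 = (\partial_i^\Gamma f)(\partial_i^\Gamma f) = [P_\Gamma]_{ij}[P_\Gamma]_{ik}(\partial_j f)(\partial_k f) = [P_\Gamma]_{jk}(\partial_j f)(\partial_k f)$; pulling back and using (a) and (b) turns this into $g^{\alpha\beta}\,\partial_\alpha(f\circ\tilde{x})\,\partial_\beta(f\circ\tilde{x})$, whence \eqref{eq31} gives \eqref{eq329}. For \eqref{eq326}: I would apply \eqref{eq315} of Lemma \ref{lem37} to get $\int_{\Gamma(t)}\sigma({\rm{div}}_\Gamma v)\,d\mathcal{H}^2_x = \int_U\tilde{\Psi}\,\sigma(\tilde{x},t)\,(\partial_t\sqrt{J})\,dX$, and then use $\partial_t\sqrt{J} = (g^\alpha\cdot\partial_\alpha v)\sqrt{J}$ (established in the proof of Lemma \ref{lem37}) together with $g^\alpha\cdot\partial_\alpha v = g^{\alpha\beta}(g_\beta\cdot\acute{g}_\alpha) = \tfrac12\,\acute{g}_{\alpha\beta}g^{\alpha\beta}$, the last step using symmetry of $g^{\alpha\beta}$. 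Equivalently one may argue directly that ${\rm{div}}_\Gamma v = [P_\Gamma]_{ij}\partial_j v_i$ pulls back to $b_{i\alpha}g^{\alpha\beta}[\acute{g}_\beta]_i = g^{\alpha\beta}(g_\alpha\cdot\acute{g}_\beta) = \tfrac12\acute{g}_{\alpha\beta}g^{\alpha\beta}$. Squaring this identity and invoking \eqref{eq31} immediately yields \eqref{eq327}.

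For \eqref{eq328}, which is the substantive case, note first that $|P_\Gamma D(v)P_\Gamma|^2 = |D_\Gamma(v)|^2 = [D_\Gamma(v)]_{ij}[D_\Gamma(v)]_{ij}$ with $[D_\Gamma(v)]_{ij} = [P_\Gamma]_{ik}[D(v)]_{k\ell}[P_\Gamma]_{\ell j}$ by definition. Summing the repeated index $i$ via $[P_\Gamma]_{ik}[P_\Gamma]_{im} = [P_\Gamma]_{km}$ and the repeated index $j$ via $[P_\Gamma]_{\ell j}[P_\Gamma]_{pj} = [P_\Gamma]_{\ell p}$ collapses the four projections to two: $|D_\Gamma(v)|^2 = [P_\Gamma]_{km}[P_\Gamma]_{\ell p}[D(v)]_{k\ell}[D(v)]_{mp}$. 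Pulling back and substituting $[P_\Gamma]_{km}\circ\tilde{x} = b_{k\alpha}b_{m\zeta}g^{\alpha\zeta}$, $[P_\Gamma]_{\ell p}\circ\tilde{x} = b_{\ell\beta}b_{p\eta}g^{\beta\eta}$, then regrouping, I get $g^{\alpha\zeta}g^{\beta\eta}\big(b_{k\alpha}[D(v)]_{k\ell}b_{\ell\beta}\big)\big(b_{m\zeta}[D(v)]_{mp}b_{p\eta}\big)$ (all composed with $\tilde{x}$); by identity (b) each parenthesis equals $\tfrac12\acute{g}_{\alpha\beta}$, resp. $\tfrac12\acute{g}_{\zeta\eta}$, so the pull-back of $|P_\Gamma D(v)P_\Gamma|^2$ is $\tfrac14\,\acute{g}_{\alpha\beta}\acute{g}_{\zeta\eta}g^{\alpha\zeta}g^{\beta\eta}$, and \eqref{eq31} finishes \eqref{eq328}.

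The routine ingredients are the chain rule, the pull-back formulas for $P_\Gamma$ and $\sqrt{J}$ from Section \ref{sect3}, and the surface-area identity \eqref{eq31}. The only place demanding genuine care is \eqref{eq328}: one must keep the Einstein index contractions straight, use $P_\Gamma^2 = P_\Gamma$ to reduce four projections to two, and exploit the symmetry of $[D(v)]$ and of $g^{\alpha\beta}$ so as to land exactly on the factor $\tfrac14\acute{g}_{\alpha\beta}\acute{g}_{\zeta\eta}g^{\alpha\zeta}g^{\beta\eta}$ --- and, analogously in \eqref{eq326}--\eqref{eq327}, to recognize $g^{\alpha\beta}(g_\alpha\cdot\acute{g}_\beta)$ as $\tfrac12\acute{g}_{\alpha\beta}g^{\alpha\beta}$ rather than leaving it unsymmetrized.
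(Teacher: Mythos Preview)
Your proposal is correct and follows essentially the same route as the paper: both arguments hinge on the pointwise identity $[P_\Gamma]_{ij}\circ\tilde{x} = b_{i\alpha}b_{j\beta}g^{\alpha\beta}$ (established inside the proof of Lemma~\ref{lem36}), the formula $\tfrac12\acute{g}_{\alpha\beta} = b_{i\alpha}[D(v)]_{ij}b_{j\beta}$, the chain rule, and the surface-area identity \eqref{eq31}. The only cosmetic difference is that for \eqref{eq328} the paper passes through the trace expression ${\rm Tr}\{D(v)P_\Gamma D(v)P_\Gamma\}$ and invokes cyclicity, whereas you write out the same contraction directly in components --- these are the same computation.
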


\begin{proof}[Proof of lemma \ref{lem316}] A direct calculation shows that
\begin{align*}
\acute{g}_{\alpha \beta} g^{\alpha \beta} = & (\acute{g}_\alpha \cdot g_\beta + \acute{g}_\beta \cdot g_\alpha) g^{\alpha \beta} \\
= & \acute{g}_\alpha \cdot g^\alpha + \acute{g}_\beta \cdot g^{\beta}\\
= & \frac{\partial v}{\partial X_\alpha}  \cdot g^\alpha + \frac{\partial v}{\partial X_\beta} \cdot g^{\beta} (= 2 {\rm{div}}_\Gamma v) .
\end{align*}
From Lemma \ref{lem37}, we see that
\begin{equation*}
\int_U \tilde{\Psi} (X)\sigma (\tilde{x} (X,t) ,t) \left( \frac{1}{2} \acute{g}_{\alpha \beta} g^{\alpha \beta} \right) \sqrt{ J (X,t)  } { \ }d X = \int_{\Gamma (t)} \{ \sigma ({\rm{div}_\Gamma} v )\} (x,t) { \ }d \mathcal{H}_{x}^2
\end{equation*}
and that
\begin{multline*}
\int_U \tilde{\Psi} (X)\lambda (\tilde{x} (X,t),t) \left( \frac{1}{2} \acute{g}_{\alpha \beta} g^{\alpha \beta} \right) \left( \frac{1}{2} \acute{g}_{\zeta \eta} g^{\zeta \eta} \right) \sqrt{ J (X,t)  } { \ }d X \\
= \int_{\Gamma (t)} \{ \lambda |{\rm{div}_\Gamma} v |^2 \} (x,t) { \ }d \mathcal{H}_{x}^2 .
\end{multline*}
Thus, we have \eqref{eq326} and \eqref{eq327}. Since
\begin{align*}
\frac{1}{2} \acute{g}_{\alpha \beta} = \frac{\partial \tilde{x}_i}{\partial X_\alpha} [D (v) ]_{i j} \frac{\partial \tilde{x}_j}{\partial X_\beta} \text{ and }[P_{\Gamma}]_{i j } = \frac{\partial \tilde{x}_i}{\partial X_\alpha} \frac{\partial \tilde{x}_j}{\partial X_\beta} g^{\alpha \beta},
\end{align*}
we observe that
\begin{multline*}
\frac{1}{4} \acute{g}_{\alpha \beta} \acute{g}_{\zeta \eta} g^{\alpha \zeta} g^{\beta \eta} = \frac{\partial \tilde{x}_i}{\partial X_\alpha} [D (v) ]_{i j} \frac{\partial \tilde{x}_j}{\partial X_\beta}g^{\alpha \zeta} g^{\beta \eta} \frac{\partial \tilde{x}_k}{\partial X_\zeta} [D (v) ]_{k \ell} \frac{\partial \tilde{x}_\ell}{\partial X_\eta}\\
= [D (v) ]_{i j}  \frac{\partial \tilde{x}_i}{\partial X_\alpha} \frac{\partial \tilde{x}_k}{\partial X_\zeta} g^{\alpha \zeta} [D (v) ]_{k \ell} \frac{\partial \tilde{x}_j}{\partial X_\beta} \frac{\partial \tilde{x}_\ell}{\partial X_\eta} g^{\beta \eta}\\
= [D (v)]_{i j} [P_\Gamma ]_{i k} [D (v)]_{k \ell} [P_\Gamma ]_{j \ell} = {\text{Tr}} \{ D (v) P_\Gamma D (v) P_\Gamma \}.
\end{multline*}
From the fact that $[P_\Gamma D (v) P_\Gamma ]_{i j} = \{ \partial_i^{\Gamma} v_j + \partial_j^{\Gamma} v_i - n_i (n \cdot \partial_j^{\Gamma} v ) - n_j ( n \cdot \partial_i^{\Gamma} v )\}/2$, we find that $P_\Gamma D (v) P_\Gamma$ is a symmetric matrix. Since $P_\Gamma D (v) P_\Gamma$ is a symmetric matrix and $P_\Gamma P_\Gamma = P_\Gamma$, we check that
\begin{align*}
(P_\Gamma D (v) P_\Gamma ) : (P_\Gamma D (v) P_\Gamma ) = & {\text{Tr}} \{ (P_\Gamma D (v) P_\Gamma ) (P_\Gamma D (v) P_\Gamma) \}\\
= & {\text{Tr}} \{ P_\Gamma D (v) P_\Gamma D (v) P_\Gamma \}\\
= & {\text{Tr}} \{ D (v) P_\Gamma D (v) P_\Gamma \}.
\end{align*}
Thus, we see \eqref{eq328}. Since
\begin{equation*}
\frac{\partial}{\partial X_\beta} f (\tilde{x} (X,t) , t) = \frac{\partial \tilde{x}_j }{\partial X_\beta } \frac{\partial f }{\partial \tilde{x}_j},
\end{equation*}
we apply Lemma \ref{lem36} to see that for each $i =1,2,3$,
\begin{align*}
\partial_i^\Gamma f & = \frac{\partial \tilde{x}_i}{\partial X_\alpha} \frac{\partial \tilde{x}_j}{\partial X_\beta} g^{\alpha \beta} \frac{\partial f}{\partial \tilde{x}_j}\\
& = g^{\alpha \beta}  \frac{\partial \tilde{x}_i}{\partial X_\alpha} \frac{\partial f}{\partial X_\beta}.
\end{align*}
By definition, we observe that
\begin{align*}
| {\rm{grad}}_\Gamma f|^2 = (\partial_i^\Gamma f ) (\partial_i^\Gamma f) & = \left( g^{\alpha \beta}  \frac{\partial \tilde{x}_i}{\partial X_\alpha} \frac{\partial f}{\partial X_\beta} \right) \left( g^{\zeta \eta}  \frac{\partial \tilde{x}_i}{\partial X_\zeta} \frac{\partial f}{\partial X_\eta} \right)\\
& = g^{\alpha \beta} (g^{\zeta} \cdot g^{\eta}) ( g_{\alpha} \cdot g_{\zeta} )\frac{\partial f}{\partial X_\beta} \frac{\partial f}{\partial X_\eta}\\
& = g^{\alpha \beta} (g^{\eta} \cdot g_{\alpha} )\frac{\partial f}{\partial X_\beta} \frac{\partial f}{\partial X_\eta} .
\end{align*}
Moreover, we see that
\begin{align*}
g^{\alpha \beta} (g^{\eta} \cdot g_{\alpha} )\frac{\partial f}{\partial X_\beta} \frac{\partial f}{\partial X_\eta}& = g^{\alpha \beta} \frac{\partial f}{\partial X_\beta}  \delta_{\eta \alpha} \frac{\partial \tilde{x}_j }{\partial X_\eta} \frac{\partial f}{\partial \tilde{x}_j}\\
& = g^{\alpha \beta} \frac{\partial f}{\partial X_\beta} \frac{\partial \tilde{x}_j }{\partial X_\alpha} \frac{\partial f}{\partial \tilde{x}_j}=  g^{\alpha \beta} \frac{\partial f}{\partial X_\beta} \frac{\partial f}{\partial X_\alpha}.
\end{align*}
Here $\delta_{\eta \alpha}$ is Kronecker's delta. Therefore we have \eqref{eq329}.
 \end{proof}

\begin{proof}[Proof of Propositions \ref{prop311}-\ref{prop313}]
From Lemmas \ref{lem315} and \ref{lem316}, we have Propositions \ref{prop311}-\ref{prop313}.
 \end{proof}

\section{Variations of the Kinetic, Dissipation Energies, and Work}\label{sect4}

In this section we study variations of several energies for compressible fluid on an evolving surface. Throughout Section \ref{sect4} we follow Conventions \ref{conv35} and \ref{conv310}.

In subsection \ref{subsec41} we consider variation of the flow map to the action integral to prove Theorems \ref{thm14} and \ref{thm19}. In subsection \ref{subsec42} we calculate variation of the velocity to the dissipation energies and work for the fluid on the evolving surface to prove Theorems \ref{thm15}, \ref{thm16}, and \ref{thm17}.

\subsection{Variation of the Flow Maps to Action Integral}\label{subsec41}
Let us study variation of the flow map to the action integral. We call $\rho = \rho (x,t)$ and $\rho^\varepsilon = \rho^\varepsilon (x,t)$ the two \emph{densities} of the fluid on $\Gamma (t)$ and $\Gamma^\varepsilon (t)$, respectively, if $\rho$ and $\rho^\varepsilon$ satisfy
\begin{equation*}
\begin{cases}
D_t \rho + ( {\rm{div}}_\Gamma v ) \rho  = 0 &\text{ on } \mathcal{S}_T,\\
\rho |_{t = 0} = \rho_0 &\text{ on }  \Gamma ( 0 ),
\end{cases}
\begin{cases}
D_t^\varepsilon \rho^\varepsilon + ( {\rm{div}}_{\Gamma^\varepsilon} v^\varepsilon ) \rho^\varepsilon  = 0 &\text{ on } \mathcal{S}^\varepsilon_T,\\
\rho^\varepsilon |_{t = 0} = \rho_0 &\text{ on }  \Gamma ( 0 ).
\end{cases}
\end{equation*}
Suppose there are $\hat{y} \in [ C^\infty ( \mathbb{R}^3 \times \mathbb{R})]^3$ and $z \in [C^\infty (\mathcal{S}_T)]^3$ such that for every $\xi \in \Gamma_0 (=\Gamma(0))$ and $0 \leq t < T$,
\begin{align*} 
&  \hat{x}^\varepsilon ( \xi ,t) \bigg|_{\varepsilon = 0}= \hat{x} ( \xi , t ),\\
& v^\varepsilon (\hat{x}^\varepsilon ( \xi ,t ) ,t )\bigg|_{\varepsilon = 0} = v (\hat{x} ( \xi , t ),t),\\
& \frac{d}{d \varepsilon} \bigg|_{\varepsilon = 0} \hat{x}^\varepsilon ( \xi ,t) = \hat{y} ( \xi , t ),\\
& z (\hat{x} ( \xi , t ) , t) = \hat{y} ( \xi, t ).
\end{align*}
For each variation $\hat{x}^\varepsilon = \hat{x}^\varepsilon (\xi , t)$ of the flow map $x = \hat{x} ( \xi , t )$, we define the action integral $A_B [\hat{x}^\varepsilon]$ by
\begin{equation*}
A_B [ \hat{x}^\varepsilon ] = - \int_0^T \int_{\Gamma^\varepsilon (t)} \bigg( \frac{1}{2} \rho^\varepsilon (x,t) |v^\varepsilon (x , t)|^2 - p(\rho^\varepsilon (x,t) ) \bigg) { \ }d \mathcal{H}_x^2 d t,
\end{equation*}
where $\rho^\varepsilon$ is the density of the fluid on $\Gamma^\varepsilon (t)$ and $v^\varepsilon$ is the determined by the flow map $\hat{x}^\varepsilon (\xi , t)$. Note that 
\begin{equation*}
\Gamma^\varepsilon (t) = \{  x \in \mathbb{R}^3;{ \ }x = \hat{x}^\varepsilon (\xi , t ) , { \ }\xi \in \Gamma_0 \} .
\end{equation*}

We begin by discussing some properties of $\hat{y} = \hat{y} (\xi , t)$.
\begin{lemma}\label{lem41}
Set 
\begin{equation*}
\tilde{y} (X,t) = \hat{y} ( \Phi (X),t) = \hat{y} (\xi , t)
\end{equation*}
for $\xi \in \Gamma_0$ and $0 \leq t < T$. Then
\begin{align}
& \tilde{y} (X,0) = \hat{y} (\xi, 0)  = 0,\label{eq41}\\
& \frac{1}{J}\frac{d}{d \varepsilon} \bigg|_{\varepsilon =0} J^\varepsilon = 2 g^\alpha \cdot \frac{ \partial \tilde{y}}{\partial X_\alpha}. \label{eq42}
\end{align}
\end{lemma}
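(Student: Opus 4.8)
The plan is to establish \eqref{eq41} and \eqref{eq42} by straightforward differentiation in $\varepsilon$ at $\varepsilon=0$, using only the smoothness of $\hat{x}^\varepsilon$ in $(\varepsilon,\xi,t)$ and the computations already carried out in the proof of Lemma \ref{lem37}.

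For \eqref{eq41}: by property $(\mathrm{i})$ in Definition \ref{def33}, $\hat{x}^\varepsilon(\xi,0)=\xi$ for every $\varepsilon\in(-1,1)$ and every $\xi\in\Gamma_0$. The right-hand side is independent of $\varepsilon$, so differentiating in $\varepsilon$ and setting $\varepsilon=0$ gives $\hat{y}(\xi,0)=\frac{d}{d\varepsilon}\big|_{\varepsilon=0}\hat{x}^\varepsilon(\xi,0)=0$, and since $\tilde{y}(X,0)=\hat{y}(\Phi(X),0)$ this is exactly \eqref{eq41}.

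For \eqref{eq42}: first I would record that $\tilde{x}^\varepsilon(X,t)=\hat{x}^\varepsilon(\Phi(X),t)$ satisfies $\tilde{x}^\varepsilon|_{\varepsilon=0}=\tilde{x}$, hence $g_\alpha^\varepsilon|_{\varepsilon=0}=g_\alpha$, $g_{\alpha\beta}^\varepsilon|_{\varepsilon=0}=g_{\alpha\beta}$, $J^\varepsilon|_{\varepsilon=0}=J$, and, interchanging the $\varepsilon$- and $X_\alpha$-derivatives (legitimate by smoothness), $\frac{d}{d\varepsilon}\big|_{\varepsilon=0}g_\alpha^\varepsilon=\frac{\partial}{\partial X_\alpha}\frac{d}{d\varepsilon}\big|_{\varepsilon=0}\tilde{x}^\varepsilon=\frac{\partial\tilde{y}}{\partial X_\alpha}$. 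Consequently
\[
\frac{d}{d\varepsilon}\Big|_{\varepsilon=0}g_{\alpha\beta}^\varepsilon=\frac{\partial\tilde{y}}{\partial X_\alpha}\cdot g_\beta+g_\alpha\cdot\frac{\partial\tilde{y}}{\partial X_\beta}.
\]
Then, exactly as in the proof of Lemma \ref{lem37} where $J^\varepsilon=g_{11}^\varepsilon g_{22}^\varepsilon-g_{12}^\varepsilon g_{21}^\varepsilon$, I would differentiate this $2\times2$ determinant in $\varepsilon$, divide by $J=g_{11}g_{22}-g_{12}g_{21}$, and invoke \eqref{eq33} to get
\[
\frac{1}{J}\frac{d}{d\varepsilon}\Big|_{\varepsilon=0}J^\varepsilon=g^{\alpha\beta}\,\frac{d}{d\varepsilon}\Big|_{\varepsilon=0}g_{\alpha\beta}^\varepsilon=g^{\alpha\beta}\Big(\frac{\partial\tilde{y}}{\partial X_\alpha}\cdot g_\beta+g_\alpha\cdot\frac{\partial\tilde{y}}{\partial X_\beta}\Big).
\]
Finally, using $g^\alpha=g^{\alpha\beta}g_\beta$ and the symmetry $g^{\alpha\beta}=g^{\beta\alpha}$, both terms on the right equal $g^\alpha\cdot\frac{\partial\tilde{y}}{\partial X_\alpha}$, which yields \eqref{eq42}.

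There is no genuine obstacle here; the whole argument is the $\varepsilon$-analogue of the $t$-computation already completed in the proof of Lemma \ref{lem37}, with $\tilde{y}$ playing the role of $v^\varepsilon$ and $\frac{\partial\tilde{y}}{\partial X_\alpha}$ the role of $\acute{g}_\alpha^\varepsilon$. The only points deserving a word of justification are the interchange of $\partial_\varepsilon$ with $\partial_{X_\alpha}$ and with the determinant, both of which follow at once from the assumed smoothness of $\hat{x}^\varepsilon$ as a function of $(\varepsilon,\xi,t)\in(-1,1)\times\Gamma(0)\times[0,T)$.
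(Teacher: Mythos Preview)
Your proposal is correct and follows essentially the same approach as the paper: both arguments obtain \eqref{eq41} from $\hat{x}^\varepsilon(\xi,0)=\xi$ being independent of $\varepsilon$, and both obtain \eqref{eq42} by differentiating $J^\varepsilon=g_{11}^\varepsilon g_{22}^\varepsilon-g_{12}^\varepsilon g_{21}^\varepsilon$ in $\varepsilon$, dividing by $J$, and using \eqref{eq33} together with $g^\alpha=g^{\alpha\beta}g_\beta$. The only cosmetic difference is that you package the determinant derivative as the Jacobi-type identity $\tfrac{1}{J}\tfrac{d}{d\varepsilon}\big|_{\varepsilon=0}J^\varepsilon=g^{\alpha\beta}\tfrac{d}{d\varepsilon}\big|_{\varepsilon=0}g_{\alpha\beta}^\varepsilon$, whereas the paper expands the four terms explicitly before regrouping; the content is identical.
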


\begin{proof}[Proof of Lemma \ref{lem41}]
We first show \eqref{eq41}. Since
\begin{equation*}
\hat{x}^\varepsilon (\xi, 0) - \hat{x} (\xi , 0) = \xi - \xi =0,
\end{equation*}
we find that
\begin{equation*}
\frac{d}{d \varepsilon }\bigg|_{\varepsilon = 0}\hat{x}^\varepsilon (\xi, 0) = 0 = \hat{y} (\xi ,0 ) = \tilde{y} (X,0).
\end{equation*}

Next we prove \eqref{eq42}. From $J^\varepsilon = g_{11}^\varepsilon g_{22}^\varepsilon - g_{12}^\varepsilon g_{21}^\varepsilon$ and $g^\varepsilon_{\alpha \beta} = g^\varepsilon_\alpha \cdot g^\varepsilon_\beta$, we have
\begin{multline*}
\frac{d}{d \varepsilon} \bigg|_{\varepsilon =0} J^\varepsilon = 2 \left( g_1 \cdot \frac{\partial \tilde{y}}{\partial X_1} \right) g_{22} + 2 \left( g_2 \cdot \frac{\partial \tilde{y}}{\partial X_2} \right) g_{11} \\- 2 \left( g_1 \cdot \frac{\partial \tilde{y}}{\partial X_2} \right) g_{21} - 2 \left( g_2 \cdot \frac{\partial \tilde{y}}{\partial X_1} \right) g_{12}.
\end{multline*}
Since $(g^{\alpha \beta})_{2 \times 2} = (g_{\alpha \beta})_{2 \times 2}^{-1}$ and $g^{\beta \alpha} = g^{\alpha \beta}$, we see that
\begin{multline*}
\frac{1}{ J}\frac{d}{d \varepsilon} \bigg|_{\varepsilon =0} J^\varepsilon = 2 \left( g_1 \cdot \frac{\partial \tilde{y}}{\partial X_1} \right) g^{11} + 2 \left( g_2 \cdot \frac{\partial \tilde{y}}{\partial X_2} \right) g^{22} \\+ 2 \left( g_1 \cdot \frac{\partial \tilde{y}}{\partial X_2} \right) g^{21} + 2 \left( g_2 \cdot \frac{\partial \tilde{y}}{\partial X_1} \right) g^{12}\\ 
= 2 \left( (g_1 g^{11} + g_2 g^{12}) \cdot \frac{\partial \tilde{y}}{\partial X_1} \right) + 2 \left( (g_1 g^{21} + g_2 g^{22}) \cdot \frac{\partial \tilde{y}}{\partial X_2} \right)\\
=  2 \left( g^1 \cdot \frac{\partial \tilde{y}}{\partial X_1} \right) + 2 \left( g^2 \cdot \frac{\partial \tilde{y}}{\partial X_2} \right).
\end{multline*}
Thus, we have \eqref{eq42}.
 \end{proof}

Let us attack variation of the flow map to the action integral.
\begin{proposition}\label{prop42}
Assume for every $\xi \in \Gamma_0$ and $0 \leq t < T$,
\begin{equation*}
\rho^\varepsilon(\hat{x}^\varepsilon (\xi,t), t)|_{\varepsilon =0} = \rho (x( \xi, t), t) .
\end{equation*}
Then
\begin{equation}\label{eq43}
\frac{d }{ d \varepsilon } \bigg|_{\varepsilon = 0} A_B [ \hat{x}^\varepsilon ] = \int_0^T \int_{\Gamma (t)} \{ \rho D_t v + {\rm{grad}}_\Gamma \mathfrak{p} + \mathfrak{p} H_\Gamma n \} (x,t) \cdot z (x,t){ \ }d \mathcal{H}^2_{x} d t
\end{equation}
where $\mathfrak{p} = \rho p'( \rho ) - p (\rho)$.
\end{proposition}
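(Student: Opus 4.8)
The plan is to transport the action integral to the fixed parameter domain $U$ by means of the flow maps $\tilde{x}^\varepsilon(X,t)$, differentiate in $\varepsilon$ at $\varepsilon=0$ under the integral sign, and then transport the resulting expression back to $\Gamma(t)$. Differentiating under the integral is legitimate because, after the pull-back, the domain $U$ and the weights $\tilde\Psi$, $\tilde\rho_0$ no longer depend on $\varepsilon$ and the integrand is smooth in $(\varepsilon,X,t)$. Concretely, dividing \eqref{eq324} by two and using \eqref{eq325} of Lemma \ref{lem315} gives
\begin{equation*}
A_B[\hat{x}^\varepsilon] = - \int_0^T \int_U \tilde\Psi(X) \left\{ \frac{1}{2}\tilde\rho_0(X)\,|\tilde{x}^\varepsilon_t(X,t)|^2 - p\!\left(\frac{\tilde\rho_0(X)}{\sqrt{J^\varepsilon(X,t)}}\right)\sqrt{J^\varepsilon(X,t)}\right\} dX\,dt .
\end{equation*}

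Since $\frac{d}{d\varepsilon}\big|_{\varepsilon=0}\tilde{x}^\varepsilon = \tilde y$ and hence $\frac{d}{d\varepsilon}\big|_{\varepsilon=0}\tilde{x}^\varepsilon_t = \tilde y_t$, the kinetic part contributes $\tilde\rho_0\,\tilde x_t\cdot\tilde y_t$. For the pressure part, set $r^\varepsilon=\tilde\rho_0/\sqrt{J^\varepsilon}$; formula \eqref{eq42} of Lemma \ref{lem41} gives $\frac{d}{d\varepsilon}\big|_{\varepsilon=0}\sqrt{J^\varepsilon} = \big(g^\alpha\cdot\frac{\partial\tilde y}{\partial X_\alpha}\big)\sqrt{J}$, and a short chain-rule computation yields
\[
\frac{d}{d\varepsilon}\bigg|_{\varepsilon=0}\Big[p(r^\varepsilon)\sqrt{J^\varepsilon}\Big] = -\sqrt{J}\,\big(r^0 p'(r^0)-p(r^0)\big)\,g^\alpha\cdot\frac{\partial\tilde y}{\partial X_\alpha} = -\sqrt{J}\,\mathfrak p(\rho)\,g^\alpha\cdot\frac{\partial\tilde y}{\partial X_\alpha},
\]
where Lemma \ref{lem314}$(\mathrm{i})$ with $\mathcal F\equiv 0$ identifies $r^0=\tilde\rho_0/\sqrt{J(X,t)}=\rho(\tilde x(X,t),t)$ and $\mathfrak p(\rho)=\rho p'(\rho)-p(\rho)$. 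Therefore $\frac{d}{d\varepsilon}\big|_{\varepsilon=0}A_B[\hat x^\varepsilon]=-\int_0^T\!\int_U\tilde\Psi\big(\tilde\rho_0\,\tilde x_t\cdot\tilde y_t + \sqrt{J}\,\mathfrak p(\rho)\,g^\alpha\cdot\frac{\partial\tilde y}{\partial X_\alpha}\big)\,dX\,dt$.

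It remains to recognize the two terms as pull-backs of surface integrals. For the first term I integrate by parts in $t$; the temporal boundary term at $t=0$ vanishes by \eqref{eq41} (the one at $t=T$ vanishing under the standing assumptions on the variation), and using $\tilde\rho_0=\rho(\tilde x,t)\sqrt{J}$, $\tilde x_{tt}=\frac{d}{dt}v(\tilde x(X,t),t)=(D_tv)(\tilde x,t)$ and $\tilde y=z(\tilde x,t)$, the area formula \eqref{eq31} turns it into $\int_0^T\!\int_{\Gamma(t)}\rho\,D_tv\cdot z\,d\mathcal H^2_x\,dt$. For the second term, the identity $\partial_i^\Gamma f = g^{\alpha\beta}\frac{\partial\tilde x_i}{\partial X_\alpha}\frac{\partial f}{\partial X_\beta}$ established in the proof of Lemma \ref{lem316}, applied componentwise to $z$, gives $g^\alpha\cdot\frac{\partial\tilde y}{\partial X_\alpha} = ({\rm{div}}_\Gamma z)(\tilde x(X,t),t)$, so by \eqref{eq31} the term equals $-\int_0^T\!\int_{\Gamma(t)}\mathfrak p(\rho)\,{\rm{div}}_\Gamma z\,d\mathcal H^2_x\,dt$; the closed-surface integration-by-parts formula Lemma \ref{lem28}$(\mathrm{ii})$ with $f=\mathfrak p(\rho)$ and $\varphi=z$ converts this to $\int_0^T\!\int_{\Gamma(t)}\big({\rm{grad}}_\Gamma\mathfrak p + \mathfrak p H_\Gamma n\big)\cdot z\,d\mathcal H^2_x\,dt$. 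Adding the two contributions gives \eqref{eq43}.

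Most of this is routine bookkeeping. The step I expect to require the most care is differentiating $p(\tilde\rho_0/\sqrt{J^\varepsilon})\sqrt{J^\varepsilon}$ in $\varepsilon$ so that precisely the combination $\mathfrak p=\rho p'(\rho)-p(\rho)$ is produced, together with the identification of $g^\alpha\cdot\frac{\partial\tilde y}{\partial X_\alpha}$ with ${\rm{div}}_\Gamma z$. A minor technical point is that Lemma \ref{lem28}$(\mathrm{ii})$ applied to $f=\mathfrak p(\rho)$ needs $\mathfrak p(\rho)\in C^1(\Gamma(t))$, so one should assume $p$ sufficiently smooth (or argue by approximation) and take the variation so that the temporal boundary terms vanish.
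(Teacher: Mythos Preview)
Your proof is correct and follows essentially the same route as the paper: pull back $A_B$ to $U$ via Lemma~\ref{lem315}, differentiate in $\varepsilon$ (using Lemma~\ref{lem41} for the Jacobian variation), integrate by parts in $t$ for the kinetic term, identify $g^\alpha\cdot\partial_{X_\alpha}\tilde y$ with ${\rm div}_\Gamma z$, and finish with Lemma~\ref{lem28}. The only cosmetic difference is that the paper cites Lemma~\ref{lem37} (equation~\eqref{eq313}) directly for that last identification, whereas you invoke the pointwise formula for $\partial_i^\Gamma f$ from the proof of Lemma~\ref{lem316}; both yield the same result.
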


\begin{proof}[Proof of Proposition \ref{prop42}] Set $\tilde{\rho}_0 (X) = \rho_0 (\tilde{x}(X,0)) \sqrt{J(X,0)}$. From Lemma \ref{lem315}, we find that
\begin{multline*}
\int_{\Gamma^\varepsilon (t)} \left\{ \frac{1}{2} \rho^\varepsilon (x,t) | v (x,t) |^2 - p (\rho^\varepsilon (x,t)) \right\} { \ }d \mathcal{H}_{x}^2\\
 = \int_U \tilde{\Psi} (X) \frac{1}{2} \tilde{\rho}_0 (X) | \tilde{x}^\varepsilon_t (X, t) |^2 d X - \int_U \tilde{\Psi} (X)  p \left( \frac{\tilde{\rho}_0 (X)}{\sqrt{J^\varepsilon (X,t)}} \right) \sqrt{J^\varepsilon (X,t)} d X.
\end{multline*}
A direct calculation yields
\begin{multline*}
\frac{d}{ d \varepsilon} \bigg|_{\varepsilon = 0} \int_0^T \int_U \tilde{\Psi} (X) \frac{1}{2} \tilde{\rho}_0 (X) | \tilde{x}^\varepsilon_t (X, t) |^2 d X d t\\
= \int_0^T \int_U \tilde{\Psi} (X) \tilde{\rho}_0 (X)  \tilde{y}_t (X, t) \cdot \tilde{x}_t (X,t)  d X d t = \text{ (R.H.S.)}.
\end{multline*}
Integrating by parts with \eqref{eq41} and using Proposition \ref{prop311}, we observe that
\begin{multline*}
\text{ (R.H.S.)}= \int_0^T \int_U \tilde{\Psi} (X) \tilde{\rho}_0 (X)  \tilde{y}_t (X, t) \cdot v (\tilde{x}(X,t) , t ) d X d t \\
= - \int_0^T \int_U \tilde{\Psi} (X) \tilde{\rho}_0 (X) \tilde{y} (X,t) \cdot \frac{d}{dt} [v (\tilde{x}(X,t) , t ) ] d X d t\\
= - \int_0^T \int_U \tilde{\Psi} (X) \frac{\tilde{\rho}_0 (X)}{\sqrt{J (X,t)}} [\{ D_t v \} ( \tilde{x}(X,t) , t ) ] \cdot \tilde{y} (X,t) \sqrt{J (X,t)} d X d t\\
= - \int_0^T \int_{\Gamma (t)} \rho (x,t) D_t v (x,t) \cdot z (x,t) { \ }d \mathcal{H}^2_x d t.
\end{multline*}
Therefore we see that
\begin{equation*}
\frac{d}{d \varepsilon}\bigg|_{\varepsilon = 0} \int_0^T \int_{\Gamma^\varepsilon (t)} \left\{ \frac{1}{2} \rho^\varepsilon |v^\varepsilon |^2 \right\} (x,t) { \ } d \mathcal{H}^2_x d t = - \int_0^T \int_{\Gamma (t)} \{ \rho D_t v \cdot z \} (x,t) { \ }d \mathcal{H}^2_x d t .
\end{equation*}
On the other hand, a direct calculation shows that
\begin{multline*}
\frac{d}{ d \varepsilon} \int_U \tilde{\Psi} (X)  p \left( \frac{\tilde{\rho}_0 (X)}{\sqrt{J^\varepsilon (X,t)}} \right) \sqrt{J^\varepsilon (X,t)} d X\\
= \int_U \tilde{\Psi} (X)  p' \left( \frac{\tilde{\rho}_0 (X)}{\sqrt{J^\varepsilon (X,t)}} \right) \left( \frac{d}{d \varepsilon} \frac{\tilde{\rho}_0 (X)}{\sqrt{J^\varepsilon (X,t)}}  \right) \sqrt{J^\varepsilon (X,t)} d X\\
+ \int_U \tilde{\Psi} (X)  p \left( \frac{\tilde{\rho}_0 (X)}{\sqrt{J^\varepsilon (X,t)}} \right) \left( \frac{d}{d \varepsilon}\sqrt{J^\varepsilon (X,t)} \right) d X\\
= - \frac{1}{2} \int_U \tilde{\Psi} (X)  p' \left( \frac{\tilde{\rho}_0 (X)}{\sqrt{J^\varepsilon (X,t)}} \right) \left( \frac{\tilde{\rho}_0 (X)}{\sqrt{J^\varepsilon (X,t)}}  \right) \left( \frac{1}{J^\varepsilon}\frac{d}{d \varepsilon} J^\varepsilon \right) \sqrt{J^\varepsilon (X,t)} d X\\
+ \frac{1}{2} \int_U \tilde{\Psi} (X)  p \left( \frac{\tilde{\rho}_0 (X)}{\sqrt{J^\varepsilon (X,t)}} \right) \left( \frac{1}{J^\varepsilon}\frac{d}{d \varepsilon} J^\varepsilon \right) \sqrt{J^\varepsilon (X,t)}  d X.
\end{multline*}
Applying Lemma \ref{lem37} and \eqref{eq42}, we see that
\begin{multline*}
\frac{d}{ d \varepsilon} \bigg|_{\varepsilon = 0} \int_U \tilde{\Psi} (X)  p \left( \frac{\tilde{\rho}_0 (X)}{\sqrt{J^\varepsilon (X,t)}} \right) \sqrt{J^\varepsilon (X,t)} d X\\
= - \int_U \tilde{\Psi} (X)  p' \left( \frac{\tilde{\rho}_0 (X)}{\sqrt{J (X,t)}} \right) \left( \frac{\tilde{\rho}_0 (X)}{\sqrt{J (X,t)}}  \right) \left( g^\alpha \cdot \frac{\partial \tilde{y}}{\partial X_\alpha} \right) \sqrt{J (X,t)} d X\\
+ \int_U \tilde{\Psi} (X)  p \left( \frac{\tilde{\rho}_0 (X)}{\sqrt{J (X,t)}} \right) \left( g^\alpha \cdot \frac{\partial \tilde{y}}{\partial X_\alpha} \right) \sqrt{J (X,t)}  d X\\
= \int_{\Gamma (t)} \{ - p'(\rho) \rho + p (\rho) \} (x,t) \{ {\rm{div}}_\Gamma z \} (x,t){ \ }d \mathcal{H}^2_x. 
\end{multline*}
Note that $z ( \hat{x} ( \xi , t ) , t) = \hat{y} (\xi , t )$. Since $\Gamma (t)$ is a closed surface, we use integration by parts (Lemma \ref{lem28}) to check that
\begin{equation*}
\int_{\Gamma (t)} \{ (- p'(\rho ) \rho + p) {\rm{div}}_\Gamma z \}(x,t) { \ }d \mathcal{H}^2_x = \int_{\Gamma (t)} \{ ({\rm{grad}}_\Gamma \mathfrak{p} + \mathfrak{p} H_\Gamma n) \cdot z \} (x,t){ \ }d \mathcal{H}^2_x,
\end{equation*}
where $\mathfrak{p} = \rho p' (\rho) - p (\rho)$. Thus, we see that
\begin{multline*}
\frac{d }{ d \varepsilon } \bigg|_{\varepsilon = 0} \int_0^T \int_{\Gamma^\varepsilon (t)} - p (\rho^\varepsilon (x,t) ) { \ }d \mathcal{H}^2_x d t\\
= - \int_0^T \int_{\Gamma (t)} \{ {\rm{grad}}_\Gamma \mathfrak{p} + \mathfrak{p} H_\Gamma n \} (x,t) \cdot z (x,t){ \ }d \mathcal{H}^2_{x} d t.
\end{multline*}
Therefore Proposition \ref{prop42} is proved.
 \end{proof}
Let us complete the proof of Theorems \ref{thm14} and \ref{thm19}.
\begin{proof}[Proof of Theorems \ref{thm14} and \ref{thm19}]
Since the former part of Theorems \ref{thm14} and \ref{thm19} has been already proved by Proposition \ref{prop42}, we give the proof of the assertion $(\mathrm{ii})$ of Theorem \ref{thm19}. Assume that for each $z \in [C_0^\infty ( \mathcal{S}_T )]^3$ satisfying $z \cdot n =0$,
\begin{equation*}
\int_0^T \int_{\Gamma (t)} \{ \rho D_t v+ {\rm{grad}}_\Gamma \mathfrak{p} + \mathfrak{p} H_\Gamma n \} ( x ,t ) \cdot z (x,t){ \ }d \mathcal{H}^2_{x} d t = 0.
\end{equation*}
From the fact that for $f = f (x,t) = { }^t (f_1 , f_2, f_3)$,
\begin{align*}
\int_0^T \int_{\Gamma (t)} f (x,t) \cdot z (x,t){ \ }d \mathcal{H}^2_{x} d t & = \int_0^T \int_{\Gamma (t)}  f (x,t) \cdot \{ P_\Gamma z \} (x ,t){ \ }d \mathcal{H}^2_{x} d t\\ 
& = \int_0^T \int_{\Gamma (t)} \{ P_\Gamma f \} (x,t) \cdot z (x ,t){ \ }d \mathcal{H}^2_{x} d t,
\end{align*}
we conclude that
\begin{equation*}
P_\Gamma \rho D_t v + {\rm{grad}}_\Gamma \mathfrak{p}  = 0.
\end{equation*}
Note that $P_\Gamma ( \mathfrak{p} H_\Gamma n ) = { }^t ( 0 , 0 , 0 )$. Therefore Theorems \ref{thm14} and \ref{thm19} are proved. 
 \end{proof}

\subsection{Variation of the Velocity to Dissipation Energies and Work}\label{subsec42}
In this subsection we consider variation of the velocity to the dissipation energies and work for the fluid on the evolving surface $\Gamma (t)$ to prove Theorems \ref{thm15}, \ref{thm16}, and \ref{thm17}. We first attack the following lemma.
\begin{lemma}\label{lem43}
For each $t \in (0, T)$ and $V = V (x,t) = { }^t (V_1 , V_2, V_3) \in [ C^\infty (\mathcal{S}_T) ]^3$,
\begin{align*}
E_1 [V] (t) := & \int_{\Gamma (t)} \{ ({\rm{div}}_\Gamma V) \sigma \} (x,t) { \ }d \mathcal{H}_x^2,\\
E_2 [V] (t) := & \int_{\Gamma (t)}  \{ \rho F \cdot V \} (x,t) { \ }d \mathcal{H}_x^2,\\
E_3 [V] (t) := & \int_{\Gamma (t)} \left\{ \frac{1}{2}\lambda | {\rm{div}}_\Gamma V |^2 \right\} (x,t){ \ }d \mathcal{H}_x^2,\\
E_4 [V] (t) := & \int_{\Gamma (t)} \left\{ \frac{1}{2} \mu | D_\Gamma (V ) |^2 \right\} (x,t) { \ }d \mathcal{H}_x^2.
\end{align*}
Then for all $0 < t < T$ and $\varphi = { }^t (\varphi_1 , \varphi_2 , \varphi_3) \in [ C_0^\infty ( \Gamma (t) )]^3$
\begin{align}
\frac{d}{d \varepsilon} \bigg|_{ \varepsilon = 0} E_1 [ v + \varepsilon \varphi] (t) = & - \int_{\Gamma ( t )} \{ {\rm{div}}_\Gamma (P_\Gamma \sigma ) \} (x,t) \cdot \varphi (x) { \ }d \mathcal{H}_x^2,\label{eq44}\\
\frac{d}{d \varepsilon} \bigg|_{ \varepsilon = 0} E_2 [ v + \varepsilon \varphi] (t) = & \int_{\Gamma ( t )} \{ \rho F \} (x,t) \cdot \varphi (x) { \ }d \mathcal{H}_x^2,\label{eq45}\\
\frac{d}{d \varepsilon} \bigg|_{ \varepsilon = 0} E_3 [ v + \varepsilon \varphi] (t) = & - \int_{\Gamma ( t )} \{ {\rm{div}}_\Gamma ( \lambda P_\Gamma ({\rm{div}}_\Gamma v ) ) \} (x,t) \cdot \varphi (x){ \ }d \mathcal{H}_x^2,\label{eq46}\\
\frac{d}{d \varepsilon} \bigg|_{ \varepsilon = 0} E_4 [ v + \varepsilon \varphi] (t) = & - \int_{\Gamma ( t )} \{ {\rm{div}}_\Gamma ( \mu D_\Gamma (v) ) \} (x,t) \cdot \varphi (x) { \ }d \mathcal{H}_x^2,\label{eq47}
\end{align}
Here $D_\Gamma ( V ) = P_\Gamma D ( V ) P_\Gamma$ and $D ( V ) = \{ (\nabla V ) + { }^t (\nabla V )\}/2$.
\end{lemma}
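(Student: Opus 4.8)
The plan is to use that in this lemma the surface $\Gamma(t)$ is held fixed while only the velocity is perturbed; hence each $\varepsilon$-derivative may be computed directly under the integral sign, and the whole lemma reduces to a first-variation computation followed by one integration by parts on the closed surface $\Gamma(t)$. For $E_1$ and $E_2$, which are linear in $V$, the first variations are immediate:
\[
\tfrac{d}{d\varepsilon}\big|_{\varepsilon=0}E_1[v+\varepsilon\varphi](t)=\int_{\Gamma(t)}(\,{\rm{div}}_\Gamma\varphi\,)\,\sigma\,d\mathcal{H}^2_x,\qquad \tfrac{d}{d\varepsilon}\big|_{\varepsilon=0}E_2[v+\varepsilon\varphi](t)=\int_{\Gamma(t)}\rho F\cdot\varphi\,d\mathcal{H}^2_x,
\]
the second of which is already \eqref{eq45}. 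For the quadratic functionals $E_3$ and $E_4$, linearity of ${\rm{div}}_\Gamma$ and of $V\mapsto D_\Gamma(V)$ gives $|{\rm{div}}_\Gamma(v+\varepsilon\varphi)|^2=|{\rm{div}}_\Gamma v|^2+2\varepsilon({\rm{div}}_\Gamma v)({\rm{div}}_\Gamma\varphi)+O(\varepsilon^2)$ and the analogue for $|D_\Gamma(\cdot)|^2=D_\Gamma(\cdot):D_\Gamma(\cdot)$, so the first variations are $\int_{\Gamma(t)}\lambda({\rm{div}}_\Gamma v)({\rm{div}}_\Gamma\varphi)\,d\mathcal{H}^2_x$ and $\int_{\Gamma(t)}\mu\,D_\Gamma(v):D_\Gamma(\varphi)\,d\mathcal{H}^2_x$.

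For \eqref{eq44} and \eqref{eq46} I would apply the scalar integration by parts on the closed surface $\Gamma(t)$ (Lemma \ref{lem28}(ii)) to the scalar fields $\sigma$ and $g:=\lambda\,{\rm{div}}_\Gamma v$: in both cases $\int_{\Gamma(t)}g\,({\rm{div}}_\Gamma\varphi)\,d\mathcal{H}^2_x=-\int_{\Gamma(t)}\{{\rm{grad}}_\Gamma g+gH_\Gamma n\}\cdot\varphi\,d\mathcal{H}^2_x$, and then rewrite ${\rm{grad}}_\Gamma g+gH_\Gamma n={\rm{div}}_\Gamma(P_\Gamma g)$ via \eqref{eq27}. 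Taking $g=\sigma$ gives \eqref{eq44} and $g=\lambda\,{\rm{div}}_\Gamma v$ gives \eqref{eq46}.

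The genuine work is \eqref{eq47}. Here I would first invoke \eqref{eq212} to replace $D_\Gamma(v):D_\Gamma(\varphi)$ by $D_\Gamma(v):\mathbb{D}_\Gamma(\varphi)$, and then use the symmetry of $D_\Gamma(v)=P_\Gamma D(v)P_\Gamma$ to drop the symmetrization, so that $D_\Gamma(v):D_\Gamma(\varphi)=[D_\Gamma(v)]_{ij}\,\partial_i^\Gamma\varphi_j$. Applying Lemma \ref{lem28}(ii) componentwise to each term $\mu[D_\Gamma(v)]_{ij}\,\partial_i^\Gamma\varphi_j$ produces a bulk contribution $-\partial_i^\Gamma(\mu[D_\Gamma(v)]_{ij})\varphi_j$ together with a mean-curvature contribution $-H_\Gamma n_i\,\mu[D_\Gamma(v)]_{ij}\varphi_j$. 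The key point is that the latter vanishes identically: since $P_\Gamma n={}^t(0,0,0)$ and $P_\Gamma$ is symmetric, $n_i[D_\Gamma(v)]_{ij}=n_i[P_\Gamma]_{ik}[D(v)P_\Gamma]_{kj}=[P_\Gamma n]_k[D(v)P_\Gamma]_{kj}=0$. By symmetry of $D_\Gamma(v)$ the surviving bulk term equals $-[{\rm{div}}_\Gamma(\mu D_\Gamma(v))]_j\varphi_j$, which is \eqref{eq47}.

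I expect the main obstacle to be only bookkeeping in the $E_4$ step: carrying the index contractions correctly, making sure \eqref{eq212} is applied so that the symmetrized $\mathbb{D}_\Gamma(\varphi)$ appears (hence componentwise integration by parts goes through cleanly), and verifying that the curvature term drops. The interchange of $\tfrac{d}{d\varepsilon}$ with the integral is harmless here because $\Gamma(t)$ does not depend on $\varepsilon$ and all data are smooth with $\varphi$ compactly supported.
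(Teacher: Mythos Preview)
Your proposal is correct and follows essentially the same route as the paper: compute the first variation under the integral, then integrate by parts on the closed surface via Lemma~\ref{lem28} and rewrite ${\rm grad}_\Gamma g + gH_\Gamma n = {\rm div}_\Gamma(P_\Gamma g)$ using \eqref{eq27}; for $E_4$, pass from $D_\Gamma(v):D_\Gamma(\varphi)$ to $D_\Gamma(v):\mathbb{D}_\Gamma(\varphi)$ via \eqref{eq212} and then integrate by parts componentwise, checking that the curvature term $n_i[D_\Gamma(v)]_{ij}$ vanishes. Your verification of this last point via $n_i[P_\Gamma]_{ik}=[P_\Gamma n]_k=0$ is actually a bit cleaner than the paper's explicit coordinate computation, but the structure of the argument is identical.
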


\begin{proof}[Proof of Lemma \ref{lem43}]
Fix $t \in (0 , T )$ and $\varphi = { }^t ( \varphi_1, \varphi_2 , \varphi_3) \in [ C_0^\infty ( \Gamma ( t ) ) ]^3$. We first show \eqref{eq44} and \eqref{eq45}. Using integration by parts (Lemma \ref{lem28}) and Lemma \ref{lem25}, we see that
\begin{align*}
\frac{d}{d \varepsilon} \bigg|_{ \varepsilon = 0} E_1 [ v + \varepsilon \varphi] (t) =& \int_{\Gamma (t)} \sigma (x,t) {\rm{div}}_\Gamma \varphi (x) { \ }d \mathcal{H}_x^2,\\
= & - \int_{\Gamma (t)} \{ {\rm{grad}}_\Gamma \sigma + \sigma H_\Gamma n \} (x,t) \cdot \varphi (x){ \ }d \mathcal{H}_x^2,\\
= & - \int_{\Gamma (t)} \{ {\rm{div}}_\Gamma (P_\Gamma \sigma ) \} (x,t) \cdot \varphi (x) { \ }d \mathcal{H}_x^2.
\end{align*}
Thus, we have \eqref{eq44}. A direct calculation gives
\begin{equation*}
\frac{d}{d \varepsilon} \bigg|_{ \varepsilon = 0} E_2 [ v + \varepsilon \varphi] (t)= \int_{\Gamma ( t )} \rho (x,t) F(x,t) \cdot \varphi (x) { \ }d \mathcal{H}_x^2,
\end{equation*}
which is \eqref{eq45}.

Next we prove \eqref{eq46}. Using integration by parts and \eqref{eq27}, we check that
\begin{multline*}
\frac{d}{d \varepsilon} \bigg|_{ \varepsilon = 0} E_3 [ v + \varepsilon \varphi] (t) = \int_{\Gamma (t)} \{ \lambda( {\rm{div}}_\Gamma v ) ( {\rm{div}}_\Gamma \varphi ) \} (x,t){ \ } d \mathcal{H}^2_x \\
= - \int_{\Gamma (t)} (\lambda {\rm{div}}_\Gamma v) H_\Gamma n \cdot \varphi { \ } d \mathcal{H}^2_x - \int_{\Gamma (t)} {\rm{grad}}_\Gamma ( \lambda {\rm{div}}_\Gamma v) \cdot \varphi { \ } d \mathcal{H}_x^2\\
= - \int_{\Gamma (t)} \{ {\rm{div}}_\Gamma ( \lambda P_\Gamma ({\rm{div}}_\Gamma v ) ) \} (x,t) \cdot \varphi (x) { \ }d \mathcal{H}^2_x.   
\end{multline*}
Therefore we see \eqref{eq46}.

Finally we prove \eqref{eq47}. It is clear that
\begin{equation*}
\frac{d}{d \varepsilon} \bigg|_{ \varepsilon = 0} E_4 [ v + \varepsilon \varphi] (t)= \int_{\Gamma ( t )} \mu D_\Gamma (v) : D_\Gamma ( \varphi ) { \ }d \mathcal{H}_x^2.
\end{equation*}
Since
\begin{equation*}
{\rm{div}}_\Gamma \{ \mu D_\Gamma (v) \} \cdot \varphi = ( \partial_1^\Gamma \{ \mu [D_\Gamma (v)]_{i 1} \} + \partial_\Gamma^2 \{ \mu [D_\Gamma (v)]_{i 2}\} + \partial_3^\Gamma \{ \mu [D_\Gamma (v)]_{i 3} \} )\varphi_i ,
\end{equation*}
we apply Lemma \ref{lem28} to see that
\begin{multline*}
\int_{\Gamma (t)} {\rm{div}}_\Gamma \{ \mu D_\Gamma (v) \} \cdot \varphi { \ }d \mathcal{H}^2_x = - \int_{\Gamma (t)} \mu D_\Gamma (v) : \mathbb{D}_\Gamma (\varphi) { \ }d \mathcal{H}^2_x \\
- \int_{\Gamma (t)} (H_\Gamma n_1 [\mu D_\Gamma (v)]_{i 1} + H_\Gamma n_2 [\mu D_\Gamma (v)]_{i 2} + H_\Gamma n_3 [\mu D_\Gamma (v)]_{i3}) \varphi_i { \ }d \mathcal{H}^2_x.
\end{multline*}
By definition, we observe that
\begin{align*}
n_j [D_\Gamma (v)]_{i j} & = n_1 \{\partial^\Gamma_i v_1 + \partial^\Gamma_1 v_i - n_i (n \cdot \partial_1^\Gamma v) - n_1 (n \cdot \partial_i^\Gamma v )\}\\
&{ \ \ \ } + n_2 \{\partial^\Gamma_i v_2 + \partial^\Gamma_2 v_i - n_i (n \cdot \partial_2^\Gamma v) - n_2 (n \cdot \partial_i^\Gamma v )\}\\
&{ \ \ \ } + n_3 \{\partial^\Gamma_i v_3 + \partial^\Gamma_3 v_i - n_i (n \cdot \partial_3^\Gamma v) - n_3 (n \cdot \partial_i^\Gamma v )\}\\
& = (n \cdot \partial_i^\Gamma v) - (n_1^2+ n_2^2 + n_3^2) (n \cdot \partial_i^\Gamma v ) = 0 .
\end{align*}
Since $D_\Gamma (v) : \mathbb{D}_\Gamma (v) = D_\Gamma (v) : D_\Gamma (\varphi)$ by Lemma \ref{lem26}, we have
\begin{equation*}
\int_{\Gamma (t)} {\rm{div}}_\Gamma \{ \mu D_\Gamma (v) \} \cdot \varphi { \ }d \mathcal{H}^2_x = - \int_{\Gamma (t)} \mu D_\Gamma (v) : D_\Gamma (\varphi) { \ }d \mathcal{H}^2_x.
\end{equation*}
Therefore we see \eqref{eq47}.
 \end{proof}

\begin{proof}[Proof of Theorem \ref{thm15}]
Applying Lemma \ref{lem43}, we have Theorem \ref{thm15}. Remark that if $\varphi \cdot n =0$ then $P_\Gamma \varphi = \varphi$.
 \end{proof}

Finally, we prove Theorems \ref{thm16} and \ref{thm17}. Since we can prove Theorem \ref{thm16} by applying Theorem \ref{thm17}, we only prove Theorem \ref{thm17}.
\begin{proof}[Proof of Theorem \ref{thm17}]
Let $e_{\mathcal{J}} \in C^1 ([0, \infty ))$ or $e_{\mathcal{J}} \in C^1 ((0,\infty))$. Fix $f \in C^{2,0} (\mathcal{S}_T)$ with $| {\rm{grad}}_\Gamma f | \neq 0$. Suppose that $e_{\mathcal{J}}$ is a non-negative function. Since
\begin{equation*}
E_{GD}[f + \varepsilon \varphi ] (t) = - \int_{\Gamma (t)} \frac{1}{2} e_{\mathcal{J}} (| {\rm{grad}}_\Gamma ( f + \varepsilon \varphi ) |^2 ) { \ }d \mathcal{H}^2_x,
\end{equation*}
we use integration by parts to see that
\begin{align*}
\frac{d}{d \varepsilon} \bigg|_{\varepsilon = 0} E_{GD}[f + \varepsilon \varphi ] (t) & = - \int_{\Gamma (t)} e_{\mathcal{J}}' (| {\rm{grad}}_\Gamma f |^2 ){\rm{grad}}_\Gamma f \cdot {\rm{grad}}_\Gamma \varphi { \ }d \mathcal{H}^2_x\\
& = \int_{\Gamma (t)} {\rm{div}}_\Gamma \{ e_{\mathcal{J}}' (| {\rm{grad}}_\Gamma f |^2 ){\rm{grad}}_\Gamma f \} \varphi { \ }d \mathcal{H}^2_x.
\end{align*}
Since
\begin{equation*}{ }^t
\begin{pmatrix}
\frac{\partial \mathcal{E}_{GD}}{\partial \vartheta_1 } ,\frac{\partial \mathcal{E}_{GD}}{\partial \vartheta_2 }, \frac{\partial \mathcal{E}_{GD}}{\partial \vartheta_3 }
\end{pmatrix} = - e_{\mathcal{J}}' ( \vartheta_1^2 + \vartheta_2^2 + \vartheta_3^2 ) { }^t (\vartheta_1 , \vartheta_2 , \vartheta_3 ),
\end{equation*}
we find that
\begin{equation*}{ }^t
\begin{pmatrix}
\frac{\partial \mathcal{E}_{GD}}{\partial \vartheta_1 } ,\frac{\partial \mathcal{E}_{GD}}{\partial \vartheta_2 }, \frac{\partial \mathcal{E}_{GD}}{\partial \vartheta_3 }
\end{pmatrix}\bigg|_{ ( \vartheta_1 = \partial_1^\Gamma f, \vartheta_2 = \partial_2^\Gamma f, \vartheta_3 = \partial_3^\Gamma f )} = - e_{\mathcal{J}}' (| {\rm{grad}}_\Gamma f |^2) {\rm{grad}}_\Gamma f.
\end{equation*}
Therefore Theorem \ref{thm17} is proved.
 \end{proof}

\section{Energetic Variational Approaches for Compressible Fluid Systems}\label{sect5}{ \ }\\

In this section we apply our energetic variational approaches, the thermodynamic theory, Proposition \ref{prop13}, and Theorems \ref{thm14}-\ref{thm19} to make mathematical models of compressible fluid flow on an evolving surface. Moreover, we derive the two generalized heat and diffusion systems on an evolving surface from an energetic point of view. Throughout Section \ref{sect5} we follow Conventions \ref{conv35} and \ref{conv310}.

The outline of this section is as follows: We apply our energetic variational approaches to derive the full compressible fluid system \eqref{eq11}, the tangential compressible fluid system \eqref{eq111}, the non-canonical compressible fluid system \eqref{eq112}, the barotropic compressible fluid systems \eqref{eq113} and \eqref{eq114}, and the generalized heat and diffusion systems \eqref{eq115} and \eqref{eq116} in subsections \ref{subsec51}, \ref{subsec54}, \ref{subsec55}, \ref{subsec56}, and \ref{subsec57}, respectively. In subsection \ref{subsec52}, we study the enthalpy, entropy, free energy, and conservative form of the system \eqref{eq11}. In subsection \ref{subsec53}, we study conservation laws of the system \eqref{eq11} to prove Theorem \ref{thm18}.

\subsection{Energetic Variational Approach for Full Compressible Fluid System}\label{subsec51}

Let us apply our energetic variational approaches to derive the full compressible fluid system \eqref{eq11} on the evolving surface $\Gamma (t)$. We assume that $\Gamma (t)$ is flowed by the total velocity $v$. We set the energy densities for compressible fluid as in Assumption \ref{ass12}. Based on Proposition \ref{prop13}, we set the continuity equation on the evolving surface as follows:
\begin{equation}\label{eq51}
D_t \rho + ({\rm{div}}_\Gamma v) \rho = 0 \text{ on }\mathcal{S}_T.
\end{equation}

We first derive the momentum equation of the full compressible fluid system. Applying an energetic variational approach with Theorems \ref{thm14} and \ref{thm15}, we have the following momentum equation:
\begin{equation}\label{eq52}
\rho D_t v = {\rm{div}}_\Gamma S_\Gamma (v , \sigma , \mu , \lambda ) + \rho F \text{ on }\mathcal{S}_T,
\end{equation}
where $S_\Gamma (v , \sigma , \mu , \lambda) = 2 \mu D_\Gamma (v) + \lambda P_\Gamma ({\rm{div}}_\Gamma v ) - P_\Gamma \sigma$. Here we consider variations on the total velocity $v$. More precisely, we set the action integral $A[ \hat{x}]$, the dissipation energy $E_D[v]$, and the work $E_W[v]$ as follows:
\begin{align}
A [ \hat{x}] & = - \int_0^T \int_{\Gamma (t)} \left\{ \frac{1}{2} \rho |v|^2 \right\} (x,t) { \ } d \mathcal{H}^2_x d t, \label{eq53}\\
E_D[ v ] &= - \int_{\Gamma (t)} \left\{ \frac{1}{2} ( 2 \mu |D_\Gamma (v)|^2 + \lambda |{\rm{div}}_\Gamma v |^2 ) \right\} (x,t) { \ }d \mathcal{H}^2_x,\label{eq54}\\
E_W [v] & = \int_{\Gamma (t)} \{  ({\rm{div}}_\Gamma v ) \sigma + \rho F \cdot v \} (x,t) { \ }d \mathcal{H}^2_x.\label{eq55}
\end{align}
Applying Theorems \ref{thm14} and \ref{thm15}, we consider variations ${d }/{d \varepsilon}|_{\varepsilon = 0} A[\hat{x}^\varepsilon]$ with \\$d \hat{x}^\varepsilon /d t = v^\varepsilon$, ${d}/{d \varepsilon}|_{\varepsilon = 0} E_D [v + \varepsilon \varphi]$, and ${d}/{d \varepsilon}|_{\varepsilon = 0} E_{W}[v + \varepsilon \varphi]$ to have
\begin{align}
\frac{\delta A}{\delta \hat{x}} = & \rho D_t v ,\label{eq56}\\
\frac{\delta E_D}{\delta v} = & {\rm{div}}_\Gamma (2 \mu D_\Gamma (v) + \lambda P_\Gamma ({\rm{div}}_\Gamma v)), \label{eq57}\\
\frac{\delta E_W}{\delta v} = &- {\rm{div}}_\Gamma (P_\Gamma \sigma ) + \rho F. \label{eq58}
\end{align}
We assume the following energetic variational principle:
\begin{equation*}
\frac{\delta A}{\delta \hat{x}} = \frac{\delta E_{D+W}}{\delta v} \left( = \frac{\delta E_D}{\delta v} + \frac{\delta E_W}{\delta v} \right).
\end{equation*}
This is \eqref{eq52}.

Secondly, we study the internal energy of compressible fluid. From Theorem \ref{thm16}, we have the following forces:
\begin{align*}
\frac{\delta E_{TD}}{\delta \theta} = {\rm{div}}_\Gamma (\kappa {\rm{grad}}_\Gamma \theta ),\\
\frac{\delta E_{SD}}{\delta C} = {\rm{div}}_\Gamma (\nu {\rm{grad}}_\Gamma C).
\end{align*}
Set
\begin{align*}
q_\theta & = \kappa {\rm{grad}}_\Gamma \theta ,\\
q_C &  = \nu {\rm{grad}}_\Gamma C.
\end{align*}
In order to use the the first law of thermodynamics, we consider the energy dissipation due to the viscosities and the work done by the pressure. Using \eqref{eq52} and integration by parts (Lemma \ref{lem28}), we check that
\begin{multline*}
\frac{d}{d t}\int_{\Gamma (t)} \left\{ \frac{1}{2} \rho |v|^2 \right\} (x,t) { \ }d \mathcal{H}^2_x =  \int_{\Gamma (t)} \rho D_t v \cdot v { \ } d\mathcal{H}^2_x\\
= \int_{\Gamma (t)} ({\rm{div}}_\Gamma S_\Gamma (v,\sigma , \mu , \lambda ) + \rho F) \cdot v { \ } d\mathcal{H}^2_x\\
= \int_{\Gamma (t)} \{ - (2 \mu |D_\Gamma (v)|^2 + \lambda |{\rm{div}}_\Gamma v |^2 ) + ({\rm{div}}_\Gamma v ) \sigma + \rho F \cdot v \} (x,t) { \ } d\mathcal{H}^2_x.
\end{multline*}
Integrating with respect to time, we find that for $0 < t_1 < t_2 < T$,
\begin{multline*}
\int_{\Gamma (t_2)} \frac{1}{2} \rho | v|^2 { \ }d \mathcal{H}^2_x + \int_{t_1}^{t_2} \int_{\Gamma ( \tau )} \{ (2 \mu |D_\Gamma (v)|^2 + \lambda |{\rm{div}}_\Gamma v |^2 ) - ({\rm{div}}_\Gamma v ) \sigma \} { \ }d \mathcal{H}^2_x d \tau \\
=  \int_{\Gamma (t_1)} \frac{1}{2} \rho | v|^2 { \ }d \mathcal{H}^2_x + \int_{t_1}^{t_2} \int_{\Gamma ( \tau )} \rho F \cdot v { \ }d \mathcal{H}^2_x d \tau .
\end{multline*}
This shows that $(2 \mu |D_\Gamma (v)|^2 + \lambda |{\rm{div}}_\Gamma v |^2 )$ is the density for the energy dissipation due to the viscosities and that $ ({\rm{div}}_\Gamma v ) \sigma $ is the work done by the pressure of our compressible fluid system. Set
\begin{equation*}
\tilde{e}_D = 2 \mu |D_\Gamma (v)|^2 + \lambda |{\rm{div}}_\Gamma v |^2 .
\end{equation*}
Since $\rho$ satisfies \eqref{eq51}, we use Lemma \ref{lem39} and an argument in the proof of Proposition \ref{prop13} to see that for $\Omega (t) \subset \Gamma (t)$ flowed by the total velocity $v$,
\begin{align*}
\frac{d}{d t} \int_{\Omega (t)} \{ \rho e \} (x,t) { \ }d \mathcal{H}^2_x = & \int_{\Omega (t)} \{ D_t \left( \rho e \right) + \rho e({\rm{div}}_\Gamma v ) \} (x,t) { \ }d \mathcal{H}^2_x\\
= & \int_{\Omega (t)} \{ \rho D_t e \} (x,t) { \ }d \mathcal{H}^2_x.
\end{align*}
Now we apply the first law of thermodynamics, that is, we assume that for arbitrary $\Omega ( t ) \subset \Gamma (t)$ flowed by the total velocity $v$,
\begin{equation*}
\frac{d}{d t} \int_{\Omega (t)} \{ \rho e \} (x,t) { \ }d \mathcal{H}^2_x = \int_{\Omega (t)} \left\{ \frac{\delta E_{TD}}{\delta \theta} + \rho Q_\theta + \tilde{e}_D - ({\rm{div}}_\Gamma v ) \sigma \right\} (x,t) { \ }d \mathcal{H}^2_x.
\end{equation*}
Then we have
\begin{equation*}
\rho D_t e = {\rm{div}}_\Gamma q_\theta+ \rho Q_\theta + e_D - ({\rm{div}}_\Gamma v) \sigma \text{ on } \mathcal{S}_T.
\end{equation*}
This is equivalent to
\begin{equation}\label{eq59}
\rho D_t e + ({\rm{div}}_\Gamma v ) \sigma = {\rm{div}}_\Gamma q_\theta + \rho Q_\theta + \tilde{e}_D \text{ on } \mathcal{S}_T.
\end{equation}

Finally we derive the diffusion system of our compressible fluid system. We assume that the change of rate of the concentration $C$ equals to the force derived from a variation of the energy dissipation due to surface diffusion, that is for arbitrary $\Omega ( t ) \subset \Gamma (t)$ flowed by the velocity $v$,
\begin{equation*}
\frac{d}{d t} \int_{\Omega (t)} C { \ } d \mathcal{H}^2_x = \int_{\Omega (t)} \left( \frac{ \delta E_{SD}}{\delta C} + Q_C \right) { \ } d \mathcal{H}_x^2.
\end{equation*}
From Proposition \ref{prop13}, we obtain
\begin{equation}\label{eq510}
D_t C + ({\rm{div}}_\Gamma v) C = {\rm{div}}_\Gamma q_C + Q_C \text{ on } \mathcal{S}_T.
\end{equation}
Combining \eqref{eq51}, \eqref{eq52}, \eqref{eq59}, and \eqref{eq510}, we have the full compressible fluid system \eqref{eq11}.

\subsection{On Full Compressible Fluid System}\label{subsec52}

In this section we study conservative form of the full compressible fluid system \eqref{eq11}, and investigate the enthalpy, entropy, and free energy of the system \eqref{eq11}.

We first consider the total energy. Set $e_A = \{ \rho |v|^2 \}/2 + \rho e$. We apply Lemmas \ref{lem26} and \ref{lem27} to observe that
\begin{multline*}
D_t^N e_A + {\rm{div}}_\Gamma (e_A v ) = D_t e_A + ({\rm{div}}_\Gamma v) e_A \\
= \left( \frac{1}{2}|v|^2 D_t \rho + \rho D_t v \cdot v + e D_t \rho + \rho D_t e \right) + ({\rm{div}}_\Gamma v ) \left( \frac{1}{2} \rho |v|^2  + \rho e \right)\\
=  \rho D_t v \cdot v + \rho D_t e\\
= \{ {\rm{div}}_\Gamma \{ S_\Gamma (v , \sigma , \mu , \lambda ) \} \cdot v + \rho F \cdot v \} + \{ {\rm{div}}_\Gamma q_\theta+ \rho Q_\theta + \tilde{e}_D - ({\rm{div}}_\Gamma v ) \sigma \} \\
= {\rm{div}}_\Gamma q_\theta+ \rho Q_\theta + \rho F \cdot v + {\rm{div}}_\Gamma \{ S_\Gamma (v,\sigma , \mu , \lambda ) v \}.
\end{multline*}
Therefore we have 
\begin{equation*}
D_t^N e_A + {\rm{div}}_\Gamma ( e_A v - q_\theta - S_\Gamma (v , \sigma , \mu, \lambda ) v ) = \rho Q_\theta + \rho F \cdot v .
\end{equation*}
Similarly, we see that the system \eqref{eq11} satisfies the conservative form \eqref{eq12}.

Next we investigate the enthalpy, entropy, and free energy of the system \eqref{eq11}. Assume that $\rho , \theta , \mu , \lambda , \kappa $ are positive functions. Suppose that
\begin{equation}\label{eq511}
D_t e = \theta D_t s - \sigma D_t \left( \frac{1}{\rho} \right) \text{ on } \mathcal{S}_T.
\end{equation}
Set $h = e + \sigma / \rho$ and $e_F = e - \theta s $. By the definition of $h$, we find that
\begin{align*}
\rho D_t h = & \rho D_t e + \rho D_t \left( \frac{\sigma}{\rho} \right)\\
= & \{ {\rm{div}}_\Gamma q_\theta+ \rho Q_\theta + \tilde{e}_D - ({\rm{div}}_\Gamma v ) \sigma \} + \{ ({\rm{div}}_\Gamma v ) \sigma + D_t \sigma \}\\
= & {\rm{div}}_\Gamma q_\theta + \rho Q_\theta + \tilde{e}_D + D_t \sigma .
\end{align*}
It is easy to check that
\begin{equation*}
D_t^N ( \rho h ) + {\rm{div}}_\Gamma ( \rho h v - q_\theta ) - \rho Q_\theta = \tilde{e}_D + D_t \sigma .
\end{equation*}
Since $\rho$ satisfies \eqref{eq51}, we see that
\begin{equation*}
D_t \left( \frac{1}{\rho} \right) = - \frac{D_t \rho }{\rho^2} = \frac{{\rm{div}}_\Gamma v}{\rho}.
\end{equation*}
By \eqref{eq511}, we check that
\begin{align*}
\theta \rho D_t s = & \rho D_t e + \sigma \rho D_t \left( \frac{1}{\rho} \right) \\
= & ( {\rm{div}}_\Gamma q_\theta + \rho Q_\theta + \tilde{e}_D - ({\rm{div}}_\Gamma v ) \sigma ) + ({\rm{div}}_\Gamma v ) \sigma \\
= & {\rm{div}}_\Gamma q_\theta + \rho Q_\theta + \tilde{e}_D  . 
\end{align*}
Therefore we have
\begin{equation*}
\theta \rho D_t s = {\rm{div}}_\Gamma q_\theta+ \rho Q_\theta + \tilde{e}_D. 
\end{equation*}
Using Lemma \ref{lem27}, we observe that
\begin{align*}
D_t^N (\rho s) +{\rm{div}}_\Gamma (\rho s v ) & = D_t (\rho s) + ({\rm{div}}_\Gamma v) (\rho s )\\
& = \rho D_t s + \{ D_t \rho + ({\rm{div}}_\Gamma v ) \rho \} s\\
 & = \rho D_t s = \frac{1}{\theta} \left( {\rm{div}}_\Gamma q_\theta+ \rho Q_\theta + \tilde{e}_D \right) .
\end{align*}
A direct calculation gives
\begin{align*}
{\rm{div}}_\Gamma \left( - \frac{q_\theta }{\theta} \right) = {\rm{div}}_\Gamma \left( - \frac{ \kappa {\rm{grad}}_\Gamma \theta }{\theta} \right) = - \frac{ {\rm{div}}_\Gamma ( \kappa  {\rm{grad}}_\Gamma \theta )}{\theta} + \frac{\kappa |{\rm{grad}}_\Gamma \theta |^2}{\theta^2}.
\end{align*}
Consequently, we see that
\begin{equation*}
D_t^N (\rho s) +{\rm{div}}_\Gamma (\rho s v ) - {\rm{div}}_\Gamma \left( \frac{ q_\theta }{\theta} \right) - \frac{ \rho Q_\theta }{\theta}  = \frac{ \tilde{e}_D }{\theta} + \frac{\kappa |{\rm{grad}}_\Gamma \theta |^2}{\theta^2} \geq 0.
\end{equation*}
Since $e_F = e - \theta s$, we check that
\begin{align*}
\rho D_t e_F + \rho s D_t \theta = & (\rho D_t e - \rho s D_t \theta - \rho \theta D_t s ) + \rho s D_t \theta\\
= & \rho D_t e - \rho \theta D_t s\\
= & - ({\rm{div}}_\Gamma v ) \sigma .
\end{align*}
By \eqref{eq216}, we see that
\begin{equation*}
\rho D_t e_F + \rho s D_t \theta - S_\Gamma (v , \sigma , \mu, \lambda) : D_\Gamma (v) = - \tilde{e}_D \leq 0.
\end{equation*}
Therefore we have the enthalpy, entropy, and free energy of the system \eqref{eq11}.

\subsection{Conservation Laws of Full Compressible Fluid System}\label{subsec53}

In this section, we investigate conservation laws of the system \eqref{eq11} to prove Theorem \ref{thm18}.
\begin{proof}[Proof of Theorem \ref{thm18}] 
Assume that $\Gamma (t)$ is flowed by the total velocity $v$. We first show \eqref{eq117}. Since $\rho $ satisfies
\begin{equation*}
D_t^N \rho +{\rm{div}}_\Gamma (\rho v) = 0,
\end{equation*}
we use Lemmas \ref{lem27} and \ref{lem39} to see that
\begin{equation*}
\frac{d}{d t} \int_{\Gamma (t)} \rho (x,t) {  \ } d\mathcal{H}^2_x = 0.
\end{equation*}
Integrating with respect to time, we see that for $0 <t_1 < t_2 < T$,
\begin{equation*}
\int_{\Gamma (t_2)} \rho (x, t_2) { \ } d \mathcal{H}^2_x = \int_{\Gamma (t_1)} \rho (x, t_1) { \ } d \mathcal{H}^2_x.
\end{equation*}

Next we show \eqref{eq118}. From
\begin{equation*}
D_t^N (\rho v) + {\rm{div}}_\Gamma (\rho v \otimes v) = {\rm{div}}_\Gamma S_\Gamma (v,\sigma , \mu , \lambda  ) + \rho F,
\end{equation*}
we check that
\begin{align*}
\frac{d}{d t} \int_{\Gamma (t)} \{ \rho v \} (x,t) { \ } d\mathcal{H}^2_x = & \int_{\Gamma (t)} \{  D_t^N (\rho v) + {\rm{div}}_\Gamma (\rho v \otimes v) \}(x,t) { \ } d\mathcal{H}^2_x\\
= & \int_{\Gamma (t)} \{ {\rm{div}}_\Gamma S_\Gamma (v , \sigma , \mu , \lambda ) + \rho F \} (x,t) { \ }d \mathcal{H}^2_x.
\end{align*}
Using integration by parts (Lemma \ref{lem28}), we find that
\begin{align*}
\frac{d}{d t} \int_{\Gamma (t)} \{ \rho v \} (x,t) { \ } d\mathcal{H}^2_x = \int_{\Gamma (t)} \{\rho F \} (x,t) { \ }d \mathcal{H}^2_x.
\end{align*}
Remark that $\Gamma (t)$ is a closed surface and that $S_\Gamma (v , \sigma , \mu , \lambda ) \cdot n =0$. Integrating with respect to time, we see the law of conservation of momentum.

Next we derive the law of conservation of angular momentum \eqref{eq121}. Since $D_t x = 2 v$ and $v \times v = 0$, we find that
\begin{align*}
\frac{d}{d t} \int_{\Gamma (t)} x \times (\{ \rho v \} (x,t)) { \ } d \mathcal{H}^2_x & = \int_{\Gamma (t)} x \times \rho D_t v { \ }d \mathcal{H}^2_x\\
& = \int_{\Gamma (t)} x \times \{ {\rm{div}}_\Gamma S_\Gamma (v , \sigma , \mu , \lambda ) + \rho F \}{ \ }d \mathcal{H}^2_x.  
\end{align*}
Set $M = S_\Gamma (v , \sigma , \mu , \lambda )$. It is clear that
\footnotesize
\begin{multline*}
x \times M = \\
\begin{pmatrix}
x_2 ( \partial_1^\Gamma [M]_{31} + \partial_2^\Gamma [M]_{32} + \partial_3^\Gamma [M]_{33} ) - x_3 ( \partial_1^\Gamma [M]_{21} + \partial_2^\Gamma [M]_{22} + \partial_3^\Gamma [M]_{23} ) \\
x_3 ( \partial_1^\Gamma [M]_{11} + \partial_2^\Gamma [M]_{12} + \partial_3^\Gamma [M]_{13} ) - x_1 ( \partial_1^\Gamma [M]_{31} + \partial_2^\Gamma [M]_{32} + \partial_3^\Gamma [M]_{33} ) \\
x_1 ( \partial_1^\Gamma [M]_{21} + \partial_2^\Gamma [M]_{22} + \partial_3^\Gamma [M]_{23} ) - x_2 ( \partial_1^\Gamma [M]_{11} + \partial_2^\Gamma [M]_{12} + \partial_3^\Gamma [M]_{13} ) 
\end{pmatrix}.
\end{multline*}\normalsize
We now prove that for each $i,j =1,2,3,$
\begin{multline*}
\int_{\Gamma (t)} \{ x_i ( \partial_1^\Gamma [M]_{j 1} + \partial_2^\Gamma [M]_{j 2} + \partial_3^\Gamma [M]_{j 3} ) \\- x_j ( \partial_1^\Gamma [M]_{i 1} + \partial_2^\Gamma [M]_{i 2} + \partial_3^\Gamma [M]_{i 3} ) \} { \ }d \mathcal{H}^2_x = 0.
\end{multline*}
Fix $i$ and $j$. Using the integration by parts and the fact that $n \cdot M = { }^t (0,0,0)$, we see that
\begin{multline*}
\int_{\Gamma (t)} \{ x_i ( \partial_1^\Gamma [M]_{j 1} + \partial_2^\Gamma [M]_{j 2} + \partial_3^\Gamma [M]_{j 3} ) { \ }d \mathcal{H}^2_x\\
=  - \int_{\Gamma (t)} \{ (n_i + H x_i) ( n_1 [M]_{j 1} + n_2 [M]_{j 2} + n_3 [M]_{j 3} ) + [M]_{j i} \} { \ }d \mathcal{H}^2_x\\
= - \int_{\Gamma (t)} [M]_{j i} { \ }d \mathcal{H}^2_x.
\end{multline*}
Since $[M]_{j i } = [M]_{i j}$, we check that
\begin{multline*}
\int_{\Gamma (t)} \{ x_i ( \partial_1^\Gamma [M]_{j 1} + \partial_2^\Gamma [M]_{j 2} + \partial_3^\Gamma [M]_{j 3} ) \\
- x_j ( \partial_1^\Gamma [M]_{i 1} + \partial_2^\Gamma [M]_{i 2} + \partial_3^\Gamma [M]_{i 3} ) \} { \ }d \mathcal{H}^2_x\\
= - \int_{\Gamma (t)} [M]_{j i} { \ }d \mathcal{H}^2_x + \int_{\Gamma (t)} [M]_{i j} { \ }d \mathcal{H}^2_x = 0.
\end{multline*}
Therefore we conclude that
\begin{equation*}
\int_{\Gamma (t)} x \times {\rm{div}}_\Gamma S_\Gamma (v , \sigma , \mu , \lambda ) { \ }d \mathcal{H}^2_x = { }^t ( 0 , 0 , 0) .
\end{equation*}
As a result, we have
\begin{equation*}
\frac{d}{d t} \int_{\Gamma (t)} x \times ( \rho v ) { \ }d \mathcal{H}^2_x = \int_{\Gamma (t)} x \times (\rho F) { \ } d \mathcal{H}^2_x. 
\end{equation*}
Integrating with respect to time, we find that
\begin{multline*}
\int_{\Gamma (t_2)} x \times \{ \rho v \} (x, t_2) { \ } d \mathcal{H}^2_x\\
 = \int_{\Gamma (t_1)} x \times \{ \rho v \} (x, t_1) { \ } d \mathcal{H}^2_x + \int_{t_1}^{t_2} \int_{\Gamma (t_1)} x \times \{ \rho F \} (x, \tau ) { \ } d \mathcal{H}^2_x  d \tau .
\end{multline*}

Finally, we prove \eqref{eq119} and \eqref{eq120}. Since $e_A$ and $C$ satisfy
\begin{align*}
D_t^N e_A + {\rm{div}}_\Gamma (e_A v) & = {\rm{div}} (S_\Gamma (v,\sigma , \mu , \lambda ) v ) + \rho Q_\theta +  \rho F \cdot v,\\
D_t^N C + {\rm{div}}_\Gamma (C v) & = {\rm{div}}_\Gamma q_C + Q_C,
\end{align*}
we use the previous argument to deduce \eqref{eq119} and \eqref{eq120}. Therefore Theorem \ref{thm18} is proved.
 \end{proof}

\subsection{Energetic Variational Approach for Tangential Compressible Fluid System}\label{subsec54}

Let us apply our energetic variational approaches to derive the tangential compressible fluid system \eqref{eq111} on the evolving surface $\Gamma (t)$. We assume that $\Gamma (t)$ is flowed by the total velocity $v$. We set the energy densities for compressible fluid as in Assumption \ref{ass12}. Based on Proposition \ref{prop13}, we set the continuity equation on the evolving surface as follows:
\begin{equation}\label{eq512}
D_t \rho + ({\rm{div}}_\Gamma v) \rho = 0 \text{ on }\mathcal{S}_T.
\end{equation}

We first derive the momentum equation of the tangential compressible fluid system. Applying an energetic variational principle with Theorems \ref{thm14} and \ref{thm15}, we derive the following momentum equation:
\begin{equation}\label{eq513}
P_\Gamma \rho D_t v = P_\Gamma {\rm{div}}_\Gamma S_\Gamma (v , \sigma , \mu , \lambda ) + P_\Gamma \rho F \text{ on }\mathcal{S}_T.
\end{equation}
Here we consider variations on the tangential part of the total velocity $v$. More precisely, we assume the following energetic variational principle:
\begin{equation*}
\frac{\delta A}{\delta \hat{x}} \bigg|_{z \cdot n = 0} = \frac{\delta E_{D+W}}{\delta v} \bigg|_{\varphi \cdot n =0}.
\end{equation*}
That is,
\begin{equation*}
P_\Gamma \rho D_t v = P_\Gamma {\rm{div}}_\Gamma S_\Gamma (v , \sigma , \mu , \lambda ) + P_\Gamma \rho F,
\end{equation*}
where $A$, $E_D$, and $E_W$ are the action integral, dissipation energy, and work defined as in Subsection \ref{subsec51}. On the basis of Propositions \ref{prop312}, \ref{prop313}, and Theorem \ref{thm16}, we set $q_\theta = \kappa {\rm{grad}}_\Gamma \theta$ and $q_C = \nu {\rm{grad}}_\Gamma C$. 

Next we study the internal energy. Assume that $v \cdot n =0$. Applying \eqref{eq513} and integration by parts (Lemma \ref{lem28}), we check that
\begin{multline*}
\frac{d}{d t}\int_{\Gamma (t)} \left\{ \frac{1}{2} \rho |v|^2 \right\} (x,t) { \ }d \mathcal{H}^2_x =  \int_{\Gamma (t)} P_\Gamma \rho D_t v \cdot v { \ } d\mathcal{H}^2_x\\
= \int_{\Gamma (t)} ({\rm{div}}_\Gamma S_\Gamma (v,\sigma , \mu , \lambda ) + P_\Gamma \rho F) \cdot v { \ } d\mathcal{H}^2_x\\
= \int_{\Gamma (t)} \{ - (2 \mu |D_\Gamma (v)|^2 + \lambda |{\rm{div}}_\Gamma v |^2 ) + ({\rm{div}}_\Gamma v ) \sigma + P_\Gamma \rho F \cdot v \} (x,t) { \ } d\mathcal{H}^2_x.
\end{multline*}
Note that $P_\Gamma v = v$. Integrating with respect to time, we find that for $0 < t_1 < t_2 < T$,
\begin{multline*}
\int_{\Gamma (t_2)} \frac{1}{2} \rho | v|^2 { \ }d \mathcal{H}^2_x + \int_{t_1}^{t_2} \int_{\Gamma ( \tau )} \{ (2 \mu |D_\Gamma (v)|^2 + \lambda |{\rm{div}}_\Gamma v |^2 ) - ({\rm{div}}_\Gamma v ) \sigma \} { \ }d \mathcal{H}^2_x d \tau \\
=  \int_{\Gamma (t_1)} \frac{1}{2} \rho | v|^2 { \ }d \mathcal{H}^2_x + \int_{t_1}^{t_2} \int_{\Gamma ( \tau )} P_\Gamma \rho F \cdot v { \ }d \mathcal{H}^2_x d \tau .
\end{multline*}
This shows that $(2 \mu |D_\Gamma (v)|^2 + \lambda |{\rm{div}}_\Gamma v |^2 )$ is the density for the energy dissipation due to the viscosities and that $\sigma ({\rm{div}}_\Gamma v ) $ is the work done by the pressure of our compressible fluid system. Set
\begin{equation*}
\tilde{e}_D = 2 \mu |D_\Gamma (v)|^2 + \lambda |{\rm{div}}_\Gamma v |^2 .
\end{equation*}
Applying the first law of thermodynamics, we obtain
\begin{equation}\label{eq514}
\rho D_t e + ({\rm{div}}_\Gamma v ) \sigma = {\rm{div}}_\Gamma q_\theta+ \rho Q_\theta + \tilde{e}_D \text{ on } \mathcal{S}_T.
\end{equation}
By an similar argument to derive the diffusion system in subsection \ref{subsec51}, we have
\begin{equation}\label{eq515}
D_t C + ({\rm{div}}_\Gamma v) C = {\rm{div}}_\Gamma q_C + Q_C.
\end{equation}
Therefore we have the tangential compressible fluid system \eqref{eq111} on the evolving surface by combining \eqref{eq512}-\eqref{eq515}.

\subsection{Derivation of Non-canonical Compressible Fluid System}\label{subsec55}

Let us consider compressible fluid flow on the evolving surface $\Gamma (t)$ from a different point of view. Based on Proposition \ref{prop13} we admit \eqref{eq51}. We set the action integral $A[\hat{x}]$ and the work $E_{W}[v]$ defined by \eqref{eq53} and \eqref{eq55}, respectively. We set the dissipation energy $E_D[u]$ as follows:
\begin{align*}
E_D[ u ] = - \int_{\Gamma (t)} \frac{1}{2} \{ 2 \mu |D_\Gamma (u)|^2 + \lambda |{\rm{div}}_\Gamma u |^2 \} (x,t) { \ }d \mathcal{H}^2_x.
\end{align*}
Using arguments similar to those in the proof of Theorems \ref{thm14} and \ref{thm15}, we have \eqref{eq56}, \eqref{eq58}, and
\begin{equation*}
\frac{\delta E_D}{\delta u} = P_\Gamma {\rm{div}}_\Gamma (2 \mu D_\Gamma (v) + \lambda P_\Gamma ({\rm{div}}_\Gamma v ) ) = P_\Gamma {\rm{div}}_\Gamma S_{\Gamma } (u,0, \mu , \lambda ).
\end{equation*}
We assume the following energetic variational principle : 
\begin{equation*}
\frac{\delta A [ \hat{x} ]}{\delta \hat{x}} = \frac{\delta E_D[u]}{\delta u} + \frac{\delta E_{W}[v]}{\delta v}
\end{equation*}
to have
\begin{equation*}
\rho D_t v = P_\Gamma {\rm{div}}_\Gamma S_{\Gamma } (u,0, \mu , \lambda ) - {\rm{grad}}_\Gamma \sigma  - \sigma H_\Gamma n + \rho F.
\end{equation*}
Therefore we have the non-canonical compressible fluid system \eqref{eq112} on the evolving surface.

\subsection{Derivation of Barotropic Compressible Fluid Systems}\label{subsec56}

Let us consider the barotropic compressible fluid flow on the evolving surface $\Gamma (t)$. Based on Proposition \ref{prop13}, we set the continuity equation on an evolving surface as follows:
\begin{equation*}
D_t \rho + ({\rm{div}}_\Gamma v) \rho = 0 \text{ on }\mathcal{S}_T.
\end{equation*}
On the basis of Proposition \ref{prop311}, we set the total energy $e_B$ as follows:
\begin{equation*}
e_B = \frac{1}{2} \rho | v |^2 - p (\rho).
\end{equation*}
From Theorem \ref{thm19}, we obtain the two barotropic compressible fluid systems \eqref{eq113} and \eqref{eq114}.

\subsection{Derivation of Generalized Heat and Diffusion Systems}\label{subsec57}

Let us derive the generalized heat and diffusion systems on the evolving surface $\Gamma (t)$. From Proposition \ref{prop13}, we set the continuity equation on the evolving surface as follows:
\begin{equation*}
D_t \rho + ({\rm{div}}_\Gamma v) \rho = 0 \text{ on }\mathcal{S}_T.
\end{equation*}
Let $e_{\mathcal{J}_1} , e_{\mathcal{J}_2} \in C^1 ([0, \infty ))$ or $e_{\mathcal{J}_1}, e_{\mathcal{J}_2} \in C^1 ((0, \infty))$. Suppose that $e_{\mathcal{J}_1}$, $e_{\mathcal{J}_2}$ are two non-negative function. Based on Proposition \ref{prop313} we set the following energy densities:
\begin{equation*}
e_{TD} = e_{ \mathcal{J}_1 } (|{\rm{grad}}_\Gamma \theta |^2) \text{ and } e_{SD} = e_{ \mathcal{J}_2} (|{\rm{grad}}_\Gamma C |^2).
\end{equation*}
From Theorem \ref{thm17}, we have the following forces:
\begin{align*}
\frac{\delta E_{TD}}{\delta \theta} = & {\rm{div}}_\Gamma \{ e_{\mathcal{J}_1}' (| {\rm{grad}}_\Gamma \theta |^2) {\rm{grad}}_\Gamma \theta \},\\
\frac{\delta E_{SD}}{\delta C} =& {\rm{div}}_\Gamma \{ e_{\mathcal{J}_2}' (| {\rm{grad}}_\Gamma C |^2) {\rm{grad}}_\Gamma C \}.
\end{align*}
Let $\mathcal{F}_1 = \mathcal{F}_1 (x,t)$ and $\mathcal{F}_2 = \mathcal{F}_2 ( x, t) $ be two smooth functions. Assume that for every $\Omega (t) \subset \Gamma (t)$ flowed by the total velocity $v$,
\begin{align*}
& \frac{d}{d t} \int_{\Omega (t)} (\rho C_\theta \theta ) (x,t) { \ }d \mathcal{H}^2_x = \int_{\Omega (t)} \left\{  \frac{\delta E_{TD}}{\delta \theta} + \rho Q_\theta + \mathcal{F}_1 \right\} { \ } d \mathcal{H}^2_x,\\
& \frac{d}{d t} \int_{\Omega (t)} C (x,t) { \ }d \mathcal{H}^2_x = \int_{\Omega (t)} \left\{ \frac{\delta E_{SD}}{\delta C} + Q_C + \mathcal{F}_2 \right\} { \ } d \mathcal{H}^2_x.
\end{align*}
Then we have
\begin{align*}
\rho D_t (C_\theta \theta ) & = {\rm{div}}_\Gamma \{ e_{\mathcal{J}_1}' (| {\rm{grad}}_\Gamma \theta |^2) {\rm{grad}}_\Gamma \theta \} + \rho Q_\theta + \mathcal{F}_1 \text{ on } \mathcal{S}_T,\\
D_t C + ({\rm{div}}_\Gamma v ) C & = {\rm{div}}_\Gamma \{ e_{\mathcal{J}_2}' (| {\rm{grad}}_\Gamma C |^2) {\rm{grad}}_\Gamma C \} + Q_C + \mathcal{F}_2 \text{ on } \mathcal{S}_T.
\end{align*}
Therefore we have the generalized heat system \eqref{eq115} and diffusion system \eqref{eq116} on an evolving surface.

\begin{flushleft}
{\bf{Acknowledgments}}
The author would like to thank Professor Chun Liu for valuable discussions about energetic variational approaches. The author would like to acknowledge the hospitality and supports of Penn State University (U.S.A.) during his visits. Part of this work was done when the author was an assistant professor at Waseda University (Japan), whose hospitality and kindness are gratefully acknowledged. This work was partly supported by the Japan Society for the Promotion of Science (JSPS) KAKENHI Grant Numbers JP25887048 and JP15K17580.
\end{flushleft}

\begin{flushleft}
{\bf{Conflict of interest :}} The author declares that they have no conflict of interest.
\end{flushleft}

\end{document}